\numberwithin{equation}{section}
\numberwithin{figure}{section}
\theoremstyle{plain}
\newtheorem{theorem}{Theorem}
\newtheorem{corollary}[theorem]{Corollary}
\newtheorem{lemma}[theorem]{Lemma}
\newtheorem{observation}[theorem]{Observation}
\newtheorem{claim}[theorem]{Claim}
\theoremstyle{remark}
\newtheorem{remark}[theorem]{Remark}
\theoremstyle{definition}
\newtheorem{definition}[theorem]{Definition}
 \newcommand{\Sally}[1]{\textbf{\color{magenta}[Sally: #1]}}
\algnewcommand{\LeftComment}[1]{\(\triangleright\) #1}
\global\long\def\defeq{\stackrel{\mathrm{{\scriptscriptstyle def}}}{=}}%
\global\long\def\norm#1{\left\Vert #1\right\Vert }%
\newcommand{\energy}[2]{\mathcal{E}_{#1}(#2)}
\def\eps{\varepsilon}
\global\long\def\R{\mathbb{R}}%
\global\long\def\diag{\mathrm{diag}}%
\global\long\def\mm{\mathbf{M}}%
\global\long\def\ml{\mathbf{L}}%
\global\long\def\mi{\mathbf{I}}%
\global\long\def\mj{\mathbf{J}}%
\global\long\def\mzero{\mathbf{0}}%
\global\long\def\ma{\mathbf{A}}%
\global\long\def\md{\mathbf{D}}%
\global\long\def\mx{\mathbf{X}}%
\global\long\def\mproj{\mathbf{P}}%
\global\long\def\omm{\overline{\mm}}%
\global\long\def\mt{\mathbf{T}}%
\global\long\def\sc{\mathbf{Sc}}%
\global\long\def\tsc{\widetilde{\mathbf{Sc}}}%
\global\long\def\mb{\mathbf{B}}%
\global\long\def\mw{\mathbf{W}}%
\global\long\def\mmu{\mathbf{U}}%
\global\long\def\mpi{\mathbf{\Pi}}%
\global\long\def\mga{\mathbf{\Gamma}}%
\global\long\def\mphi{\mathbf{\Phi}}
\global\long\def\epssc{{\epsilon_{\mproj}}}
\global\long\def\epslevel{{\delta}}
\global\long\def\ct{\mathcal{T}}%
\global\long\def\cf{\mathcal{F}}
\global\long\def\sketchlen{w} 
\newcommand{\zprev}{{\vz^{(\mathrm{step})}}}
\newcommand{\tz}{\vu}
\newcommand{\zsum}{{\vz^{(\mathrm{sum})}}}
\newcommand{\vnew}{\vv^{(\mathrm{new})}}
\global\long\def\ot{\overline{t}}%
    \renewcommand*{\bm}[1]{#1}%
\global\long\def\vx{\bm{x}}%
\global\long\def\vz{\bm{z}}%
\global\long\def\vd{\bm{d}}%
\global\long\def\vf{\bm{f}}%
\global\long\def\vq{\bm{q}}%
\global\long\def\vb{\bm{b}}%
\global\long\def\vc{\bm{c}}%
\global\long\def\vf{\bm{f}}%
\global\long\def\vs{\bm{s}}%
\global\long\def\vv{\bm{v}}%
\global\long\def\vw{\bm{w}}%
\global\long\def\vv{\bm{v}}%
\global\long\def\vy{\bm{y}}%
\global\long\def\vl{\bm{l}}%
\global\long\def\vu{\bm{u}}%
\global\long\def\vx{\bm{x}}%
\global\long\def\ox{\overline{\vx}}%
\global\long\def\of{\overline{\vf}}%
\global\long\def\os{\overline{\vs}}%
\global\long\def\tf{\tilde{\bm{f}}}%
\global\long\def\pf{\bm{f}^{\perp}}%
\global\long\def\ts{\tilde{\bm{s}}}%
\global\long\def\uf{\hat{\bm{f}}}%
\global\long\def\otilde{\widetilde{O}}%
\global\long\def\O{\widetilde{O}}%
\global\long\def\new{{(\mathrm{new})}}
\global\long\def\old{{(\mathrm{old})}}
\global\long\def\init{{(\mathrm{init})}}
\global\long\def\flow{{(\mathrm{flow})}}
\global\long\def\slack{{(\mathrm{slack})}}
\global\long\def\vbeta{\bm{\beta}}%
\global\long\def\vzero{\bm{0}}%
\global\long\def\vone{\bm{1}}%
\global\long\def\region{H}
\newcommand{\sepConst}{\alpha}
\newcommand{\bdry}[1]{\partial #1}
\newcommand{\pathT}[1]{\mathcal{P}_{\mathcal{T}}(#1)}
\global\long\def\collN{\mathcal{H}}
\newcommand{\child}[2]{\mathrm{c}_{#2}(#1)}
\newcommand{\sep}[1]{S(#1)}
\newcommand{\elim}[1]{{F_{#1}}}
\newcommand{\graphclass}{\mathcal{C}}
\newcommand{\nfrac}[2]{\nicefrac{1}{2}}
\newcommand{\flowSketch}{\texttt{bar\_f}}
\newcommand{\slackSketch}{\texttt{bar\_s}}
\newcommand{\maintainRep}{\texttt{maintainRep}}
\global\long\def\range#1{\mathrm{Range}(#1)}%
\global\long\def\poly{\mathrm{poly}}%
\global\long\def\ker#1{\mathrm{Kernel}(#1)}%
\global\long\def\septime{K^{1-\sepConst} m^\sepConst}
\let\ref\cref
\author{
	Sally Dong \\ University of Washington \\ sallyqd@uw.edu \and 
	Yu Gao \\ Georgia Tech \\  ygao380@gatech.edu 
	\and 
	Gramoz Goranci
	\footnote{Part of this work was done while the author was at University of Toronto.}
\\ University of Glasgow \\ gramoz.goranci@glasgow.ac.uk 
	\and
	Yin Tat Lee \thanks{Supported by NSF awards CCF-1749609, DMS-1839116, DMS-2023166, CCF-2105772, a Microsoft Research Faculty Fellowship, a Sloan Research Fellowship, and a Packard Fellowship.} \\ University of Washington \\ yintat@uw.edu
	\and
	Richard Peng \thanks{Supported by NSF award CCF-1846218 and CCF-2106444. Part of this work was done while the author was at Georgia Tech.}
	\\ University of Waterloo \\ y5peng@uwaterloo.ca
	\and
	Sushant Sachdeva \thanks{Supported by a Discovery grant
          awarded by NSERC, and an Ontario Early Researcher Award.} \\ University of Toronto \\ sachdeva@cs.toronto.edu 
	\and 
	Guanghao Ye\footnote{Supported by an MIT Presidential Fellowship. Part of this work was done while the author was a student at the University of Washington.}
	\\ Massachusetts Institute of Technology \\ ghye@mit.edu
}
\begin{document}
\title{Nested Dissection Meets IPMs: \\
  Planar Min-Cost Flow in Nearly-Linear Time\footnote{A preliminary
    version of this work was published at SODA 2022.}}
\date{}
\maketitle
\begin{abstract}
	We present a nearly-linear time algorithm for finding a minimum-cost
	flow in planar graphs with polynomially bounded integer costs and
	capacities. The previous fastest algorithm for this problem is
	based on interior point methods (IPMs) and works for general sparse
	graphs in $O(n^{1.5}\text{poly}(\log n))$ time [Daitch-Spielman,
	STOC'08].

	Intuitively, $\Omega(n^{1.5})$ is a natural runtime barrier for IPM-based
	methods, since they require $\sqrt{n}$ iterations, each
	routing a possibly-dense electrical flow.
	To break this barrier, we develop a new implicit representation for
	flows based on generalized nested-dissection [Lipton-Rose-Tarjan,
	JSTOR'79] and approximate Schur complements [Kyng-Sachdeva,
	FOCS'16]. This implicit representation permits us to design a data
	structure to route an electrical flow with sparse demands in roughly
	$\sqrt{n}$ update time, resulting in a total running time of
	$O(n\cdot\text{poly}(\log n))$.

	Our results immediately extend to all families of separable graphs.
\end{abstract}
\newpage

\tableofcontents
\newpage

\section{Introduction}

The minimum cost flow problem on planar graphs is a foundational
problem in combinatorial optimization studied since the 1950's. It has
diverse applications including network design, VLSI layout, and
computer vision. The seminal paper of Ford and Fulkerson in the 1950's
\cite{ford1956maximal} presented an $O(n^{2})$ time algorithm for the
special case of max-flow on $s,t$-planar graphs, i.e., planar graphs
with both the source and sink lying on the same face. 
%
Over the decades since, a number of nearly-linear time max-flow
algorithms have been developed for special graph classes, including
undirected planar graphs by Reif, and Hassin-Johnson
\cite{reif1983minimum, hassin1985n}, planar graphs by
Borradaile-Klein \cite{BK09:journal}, 
and finally bounded genus graphs by Chambers-Erickson-Nayyeri
\cite{CEN12:journal}.  However, for the more general min-cost flow
problem, there is no known result specializing on planar graphs with
better guarantees than on general graphs.
In this paper, we present the first nearly-linear time algorithm for min-cost flow on planar graphs:

\begin{restatable}[Main result]{theorem}{mainthm}\label{thm:mincostflow}
	Let $G=(V,E)$ be a directed planar graph with $n$ vertices and $m$ edges.
	Assume that the demands $\vd$, edge capacities $\vu$ and costs $\vc$ are all integers and bounded by $M$ in absolute value. 
	Then there is an algorithm that computes a minimum cost flow satisfying demand $\vd$ in $\O(n\log M)$ \footnote{Throughout the paper, we use $\otilde(f(n))$ to denote $O(f(n) \log^{O(1)}f(n))$.} expected time.
\end{restatable}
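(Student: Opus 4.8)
The plan is to use the interior point method (IPM) framework of Daitch--Spielman, which reduces min-cost flow on $G$ to solving $\O(\sqrt{m})$ linear systems, each of the form: given a weight vector $\vw > 0$ (changing slowly between iterations) and a sparse demand, route the corresponding electrical flow in the graph $G$ with resistances $\vw$. The key insight is that we do not need to compute these flows explicitly (which would cost $\Omega(m)$ per iteration and break the linear-time barrier); instead, we maintain an \emph{implicit representation} of the current flow and slack vectors that can be updated in $\O(\sqrt{m})$ amortized time per IPM step, and from which we can extract the actual flow at the end in $\O(m)$ total time. Since there are $\O(\sqrt{m})$ iterations, this yields the claimed $\O(m \log M) = \O(n \log M)$ bound (using $m = O(n)$ for planar graphs).

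First, I would set up the IPM: following the robust/lazy-update IPM analysis (as in Daitch--Spielman or later refinements), show that it suffices to maintain the flow $\vf$ and slack $\vs$ approximately, with $\ell_\infty$-type guarantees on the rescaled iterates, and that the per-step change to $\vw$ has bounded $\ell_2$ norm in the appropriate norm. Each step requires computing $\mproj \vh$ (or a close approximation) where $\mproj = \mb^\top (\mb \mw^{-1} \mb^\top)^{-1} \mb \mw^{-1}$ is the weighted projection onto the cycle/cut space and $\vh$ is a sparse update vector. Second --- and this is the technical heart --- I would build the data structure for maintaining this projection. The idea is to use generalized nested dissection (Lipton--Rose--Tarjan) to recursively partition the planar graph via balanced separators into a \emph{separator tree} $\ct$ of depth $O(\log n)$, so that eliminating vertices bottom-up produces Schur complements supported on the separators. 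Since exact Schur complements densify, I would replace each with a spectral sparsifier via the approximate Schur complement routine of Kyng--Sachdeva, keeping every intermediate graph sparse (size proportional to its separator boundary). This gives an implicit representation of $(\mb \mw^{-1} \mb^\top)^{-1}$ as a composition of $O(\log n)$ levels of sparse operators.

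Third, I would implement the operations on this representation. For a sparse demand, routing an electrical flow touches only the root-to-leaf paths in $\ct$ through the relevant leaves; since each level contributes $\O(\septime)$-type work and there are $O(\log n)$ levels, one solve costs roughly $\O(\sqrt{m})$. For the slowly-changing weights, only $\O(\sqrt{m})$ coordinates of $\vw$ change per step (after the IPM's implicit-to-explicit bookkeeping), so only the affected root-to-leaf paths of the separator tree need their sparsifiers recomputed; I would amortize the rebuild cost and re-sparsification using the stability of approximate Schur complements under small weight perturbations. To reconcile the approximations (sparsifier error, lazy updates), I would maintain separate sketched representations \flowSketch{} and \slackSketch{} of the true flow and slack, periodically flushing accumulated error back into explicit corrections, and verify the IPM's robustness tolerates the compounded $\O(1)$-factor spectral errors. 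The main obstacle I anticipate is precisely this error management across recursion levels: ensuring that the approximate Schur complements compose without error blowup over $O(\log n)$ levels \emph{and} that the dynamic maintenance --- re-sparsifying only along affected paths while the weights drift --- stays consistent with the representation used by the IPM, all within the $\O(\sqrt{m})$ per-step budget. Finally, at termination I would convert the implicit flow to an explicit one in $\O(m)$ time and round to an exact optimal integral flow via standard techniques, completing the proof.
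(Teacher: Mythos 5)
Your overall architecture matches the paper's: a robust IPM with $\O(\sqrt{m})$ steps, a separator-tree/nested-dissection representation of $(\mb^{\top}\mw\mb)^{-1}$ with sparsified Schur complements at every node, root-to-leaf updates for sparse weight changes, and sketch-based maintenance of $\of,\os$. But there is one step in your plan that would fail, and it is the step the paper identifies as its central difficulty. You propose to handle the approximation error in the projection by ``periodically flushing accumulated error back into explicit corrections.'' The IPM requires the iterates to be \emph{exactly} feasible at every step: $\mb^{\top}\vf=\vb$ for the flow and $\vs\in\{\vc-\mb\vy\}$ for the slack. If you apply an approximate projection $\widetilde{\mproj}_{\vw}\vv$ to both, the flow update is not exactly a circulation, and the excess demand cannot be computed (let alone rerouted) without touching $\Omega(m)$ coordinates, since the update is global through the implicit representation; deferring the fix to the end forces an augmenting-path cleanup whose cost is polynomial in the capacities rather than logarithmic. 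The paper's resolution is to use \emph{two different} approximate projections: a slack operator $\widetilde{\mproj}_{\vw}=\mw^{1/2}\mb\widetilde{\ml}^{-1}\mb^{\top}\mw^{1/2}$ whose range lies exactly in $\range{\mw^{1/2}\mb}$ by construction, and a flow operator built from a recursive demand decomposition $\vd=\sum_{H}\ml^{(H)}\vz|_{F_H}$ along the separator tree, each piece routed exactly within its region, so that $\mb^{\top}\mw^{1/2}\tf=\mb^{\top}\mw^{1/2}\vv$ holds \emph{exactly} even though $\tf$ only approximates $\mproj_{\vw}\vv$ (with an energy argument controlling the error). Any single operator that is simultaneously exactly feasible for both primal and dual must be the exact electrical projection, so some such asymmetric construction is unavoidable; your proposal is missing this idea entirely.

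Two smaller points. First, your claim that ``only $\O(\sqrt{m})$ coordinates of $\vw$ change per step'' is not correct and does not directly give the bound: the per-step cost is $\O(\sqrt{mK_k})$ and the unbatched guarantee $\sum_k K_k=O(m)$ only yields $\O(m^{5/4})$ total. You need the dyadic batching of updates (comparing against the iterate $2^{\ell}$ steps ago for each $\ell$ dividing $k$) so that $K_k=\O(2^{2\ell_k})$ and $\sum_k\sqrt{K_k}=\O(\sqrt{m})$. Second, the IPM does not tolerate ``compounded $O(1)$-factor spectral errors''; the step accuracy must be $\alpha=O(1/\log m)$, so each of the $O(\log n)$ levels of approximate Schur complements must be computed to accuracy $\epssc=O(1/\log^{2}m)$ --- harmless for the runtime since it enters as $\epssc^{-2}$, but necessary to state. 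You also omit the standard but required setup (auxiliary $s,t$ vertices and the high-cost $t\to s$ edge, removal of unusable edges to certify an interior point, and the two-phase centering to find an initial feasible pair).
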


Our algorithm is fairly general and uses the planarity assumption
minimally. It builds on a combination of interior point methods (IPMs),
approximate Schur complements, and nested-dissection, 
with the latter being the only component that exploits planarity.
Specifically, we require that for any subgraph of the input graph with $k$
vertices, we can find an $O(\sqrt{k})$-sized balanced vertex separator
in nearly-linear time.
As a result, the algorithm naturally generalizes to all graphs
with small separators:
Given a class $\graphclass$ of graphs closed under taking subgraphs, we say it is \emph{$\sepConst$-separable} if there are constants $0< b < 1$ and $c > 0$ such that every graph in $\graphclass$ with $n$ vertices and $m$ edges has a balanced vertex separator with at most $c m^\sepConst$ vertices, and both components obtained after removing the separator have at most $bm$ edges. 
Then, our algorithm generalizes as follows:

\begin{restatable}[Separable min-cost flow]{corollary}{separableMincostflow}\label{thm:separable_mincostflow}
  	Let $\mathcal{C}$ be an $\sepConst$-separable graph class such that we can compute a balanced separator for any graph in $\mathcal{C}$ with $m$ edges in $s(m)$ time for some convex
  	 function $s$.  
  	Given a graph $G\in \mathcal{C}$ with $n$ vertices and $m$ edges, integer demands $\vd$, edge capacities $\vu$ and costs $\vc$, all bounded by $M$ in absolute value,
  	there is an algorithm that computes a minimum cost flow on $G$ satisfying demand $\vd$ in $\O((m+m^{1/2+\sepConst})\log M+s(m))$ expected time.
  \end{restatable}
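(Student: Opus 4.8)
The plan is to observe that the algorithm proving \ref{thm:mincostflow} uses planarity in exactly one place --- the recursive construction, via generalized nested dissection, of a \emph{separator tree} for the input graph --- and that everything built on top of it (the reduction of min-cost flow to a linear program, the robust interior-point method solving the LP, and the data structures maintaining an implicit electrical flow together with approximate Schur complements along the tree) has a correctness proof and a complexity bound referring to that tree only through (i) the sizes of the separators it produces and (ii) the closure of the graph class under taking subgraphs. To prove the corollary we re-run that argument, replacing the planar $O(\sqrt k)$-separator routine by the given $s(\cdot)$-time, $O(k^\sepConst)$-separator oracle and recomputing the resource bounds with $\sepConst$ in place of $\tfrac12$. (As a sanity check, $\sepConst=\tfrac12$, $s(m)=\O(m)$, and $m=O(n)$ for planar $G$ recover the $\O(n\log M)$ bound of \ref{thm:mincostflow}.)

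\emph{Building the separator tree.} Since the class is closed under subgraphs, we recursively split: given a $k$-edge subgraph, call the oracle to obtain a balanced separator of at most $c\,k^\sepConst$ vertices whose removal leaves pieces of at most $bk$ edges, recurse on each piece, and stop once a piece has $\O(1)$ edges. Balance makes the tree have height $O(\log m)$; a node responsible for a $k$-edge subgraph owns $O(k^\sepConst)$ separator vertices, so two geometric sums give (a) the \emph{boundary} of any region (the union of the separators owned by its ancestors) has $O(m^\sepConst)$ vertices, and (b) the separators over the whole tree have total size $\O(m)$. The construction time is $\sum_v s(m_v)$ over tree nodes $v$, where $m_v$ is the edge count at $v$; since $s$ is convex (and we may assume $s(0)=0$), $s(x)\le\tfrac{x}{m}s(m)$ on $[0,m]$, and since the $m_v$ at any fixed level sum to $O(m)$, the total is $O(s(m)\log m)=\O(s(m))$.

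\emph{Running the IPM over the tree.} The reduction of directed min-cost flow with integer $\vd,\vu,\vc$ to a box-constrained LP, the robust path-following IPM solving it in $\O(\sqrt m\log M)$ iterations, and the final $\O(m)$-time rounding to an exact integral optimum are all oblivious to graph structure and apply unchanged. In each iteration the data structure updates the implicit flow/slack representation and the approximate Schur complements stored along the separator tree under a sparse perturbation; every cost in that analysis --- sketch lengths, the cost of routing a sparse demand through the tree, the cost of re-approximating the Schur complements of affected nodes --- is a fixed power of $\log m$ times a sum of separator and boundary sizes along affected root-to-node paths, hence $\O(m^\sepConst)$ per iteration by the previous paragraph; over $\O(\sqrt m\log M)$ iterations this is $\O(m^{1/2+\sepConst}\log M)$. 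The remaining work is independent of $\sepConst$ and is $\O(m\log M)$, as in \ref{thm:mincostflow}. Adding the $\O(s(m))$ for building the tree yields the claimed $\O((m+m^{1/2+\sepConst})\log M+s(m))$ expected running time; the randomness (hence ``expected'') is inherited from the approximate-Schur-complement and Laplacian-solver primitives exactly as in \ref{thm:mincostflow}.

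The substantive point, and the main obstacle, is the audit implicit in the first paragraph: one must check that no lemma in the data-structure layer covertly uses a feature special to planar separators --- a separator being a simple cycle, a planar embedding inducing a consistent cyclic order on a boundary, or Schur complements onto boundaries staying sparse for a planarity-specific reason. The algorithm is engineered so that only the size, balance, and subgraph-closure properties are used, so the audit succeeds, but it is where essentially all the care is needed; the rest is the mechanical replacement of $m^{1/2}$ by $m^{\sepConst}$, affecting only the $m^{1/2+\sepConst}$ term and the polylogarithmic factors hidden inside $\O(\cdot)$.
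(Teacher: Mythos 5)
Your overall strategy matches the paper's: the only planarity-specific ingredient is the separator routine, so one rebuilds the separator tree using the $\sepConst$-separator oracle, audits that the IPM and data-structure layers depend on the tree only through separator sizes, balance, and subgraph-closure, and recomputes the bounds. Your tree construction and its $O(s(m)\log m)$ cost via convexity of $s$ agree with \cref{thm:separableDecompT}, and your ``audit'' framing is exactly the paper's point in \cref{sec:separable}.

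The gap is in the per-iteration accounting. You assert each IPM iteration costs $\otilde(m^{\sepConst})$ because the cost is a sum of boundary/separator sizes ``along affected root-to-node paths,'' and then multiply by the $\otilde(\sqrt m\log M)$ iteration count. But an iteration that updates $K$ coordinates touches the \emph{union} of $K$ root-to-leaf paths, and the total boundary/separator mass of that union is $\otilde(K^{1-\sepConst}m^{\sepConst})$ --- this is \cref{lem:separableBoundChangeCost}, whose proof replaces the Cauchy--Schwarz step of the planar case by H\"older's inequality --- not $\otilde(m^{\sepConst})$. The batched IPM makes $K_k=\otilde(2^{2\ell_k})$ coordinate changes at step $k$ (with $\ell_k$ the number of trailing zeros of $k$), so $K_k$ can be as large as $\Theta(T^2)=\otilde(m)$ and individual iterations can cost $\otilde(m^{\sepConst}\cdot m^{1-\sepConst})=\otilde(m)$. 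The claimed total therefore requires the amortized bound $\sum_{k\le T}K_k^{1-\sepConst}=\otilde(T)$, which the paper obtains from a second application of H\"older's inequality, $\sum_{k=1}^{T}2^{2(1-\sepConst)\ell_k}\le T^{2\sepConst-1}\bigl(\sum_{k=1}^{T}2^{\ell_k}\bigr)^{2-2\sepConst}=\otilde(T)$, valid when $\sepConst\ge 1/2$ (the case $\sepConst\le 1/2$ falls back to the planar computation). Without these two H\"older steps the $m^{1/2+\sepConst}\log M$ term does not follow from your argument; with them, your proof coincides with the paper's.
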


Beyond the study of structured graphs, we believe our paper is of
broader interest. The study of efficient optimization algorithms on
geometrically structured graphs is a topic at the intersection of
computational geometry, graph theory, combinatorial optimization, and
scientific computing, that has had a profound impact on each of these
areas. Connections between planarity testing and $3$-vertex
connectivity motivated the study of depth-first search algorithms
\cite{tarjan1971efficient}, and using geometric structures to find
faster solvers for structured linear systems provided foundations of
Laplacian algorithms as well as combinatorial scientific computing
\cite{lipton1979generalized, gremban1996combinatorial}.
Several surprising insights from our nearly-linear time algorithm are:
\begin{enumerate}
\item We are able to design a data structure for maintaining a
  feasible primal-dual (flow/slack) solution that allows sublinear time updates --
  requiring $\otilde(\sqrt{nK})$
  time for a batch update consisting of updating the flow value of $K$ edges.
  This ends up not being a bottleneck for the overall performance
  because the interior point method only takes roughly $\sqrt{n}$
  iterations and makes $K$-sparse updates roughly $\sqrt{n / K}$
  times, resulting in a total running time of $\otilde(n)$.
\item We show that the subspace constraints on the feasible primal-dual solutions 
can be maintained implicitly under dynamic updates to the solutions.
%
%
  This circumvents the need to track the infeasibility of primal
  solutions (flows), which was required in previous works.
\end{enumerate}

We hope our result provides both a host of new tools for devising
algorithms for separable graphs, as well as insights on how to further
improve such algorithms for general graphs.

\subsection{Previous work}

The min-cost flow problem is well studied in both structured graphs
and general graphs. 
\cref{tab:runtime} summarizes the best algorithms for different settings prior to this work.

\begin{table}[H]
\begin{centering}
\begin{tabular}{|c|c|c|}
\hline 
Min-cost flow & Time bound & Reference\tabularnewline
\hline 
\hline 
Strongly polytime & $O(m^{2}\log n+mn\log^2 n)$ & \cite{orlin1988faster}\tabularnewline
\hline 
Weakly polytime & $\O((m + n^{3/2}) \log^2 M)$ & \cite{BrandLLSSSW21} \tabularnewline
\hline 
Unit-capacity & $m^{\frac{4}{3}+o(1)}\log M$ & \cite{axiotis2020circulation}\tabularnewline
\hline 
\textbf{Planar graph} & $\O(n\log M)$ & \textbf{this paper}\tabularnewline
\hline 
Unit-capacity planar graph & $O(n^{4/3}\log M)$ & \cite{KS19}\tabularnewline
\hline 
Graph with treewidth $\tau$ & $\O(n\tau^{2}\log M)$ & \cite{treeLP}\tabularnewline
\hline 
Outerplanar graph & $O(n\log^{2}n)$ & \cite{kaplan2013min}\tabularnewline
\hline 
Unidirectional, bidirectional cycle & $O(n)$, $O(n\log n)$ & \cite{vaidyanathan2010fast}\tabularnewline
\hline 
\end{tabular}
\par\end{centering}
\caption{Fastest known exact algorithms for the min-cost flow problem, ordered by the generality of the result. Here, $n$ is the number of vertices,
$m$ is the number of edges, and $M$ is the maximum of edge capacity and cost value. 
After the preliminary version of this work was published at SODA 2022, the best weakly polytime algorithm was improved to $\O(m^{1 + o(1)} \log^2 M)$ by \cite{almostlinearmincostflow}.
\label{tab:runtime}}
%

\end{table}

\paragraph{Min-cost flow / max-flow on general graphs.}

Here, we focus on recent exact max-flow and min-cost flow algorithms.
For an earlier history, we refer the reader to the monographs \cite{king1994faster,ahuja1988network}.
For the approximate max-flow problem, we refer the reader to the recent
papers \cite{christiano2011electrical,sherman2013nearly,kelner2014almost,sherman2017area,sidford2018coordinate,bernstein2021deterministic}. 

To understand the recent progress, we view the max-flow problem as
finding a unit $s,t$-flow with minimum $\ell_{\infty}$-norm, 
and the shortest path problem as finding a unit $s,t$-flow with minimum $\ell_{1}$-norm.
Prior to 2008, almost all max-flow algorithms reduced
this $\ell_{\infty}$ problem to a sequence of $\ell_{1}$ problems,
(shortest path) since the latter can be solved efficiently.
This changed with the celebrated work of Spielman and Teng, which
showed how to find electrical flows ($\ell_{2}$-minimizing unit
$s,t$-flow) in nearly-linear time \cite{spielman2004nearly}. 
Since the $\ell_{2}$-norm is closer to
$\ell_{\infty}$ than $\ell_{1}$, this gives a more powerful primitive
for the max-flow problem.
In 2008, Daitch and Spielman demonstrated that one could apply
interior point methods (IPMs) to reduce min-cost flow to
roughly $\sqrt{m}$ electrical flow computations.
This follows from the fact that IPMs take $\otilde(\sqrt{m})$
iterations and each iteration requires solving an electrical flow
problem, which can now be solved in $\otilde(m)$ time due to the
work of Spielman and Teng.
Consequently, they obtained an algorithm with a $\otilde(m^{3/2}\log M)$ runtime \cite{daitch2008faster}.
Since then, several algorithms have utilized electrical flows and
other stronger primitives for solving max-flow and min-cost flow
problems.

For graphs with unit capacities, M\k{a}dry gave a
$\otilde(m^{10/7})$-time max-flow algorithm, the first that broke the
$3/2$-exponent barrier \cite{madry2013navigating}. It was later
improved and generalized to $O(m^{4/3+o(1)}\log M)$ \cite{axiotis2020circulation} for the min-cost flow problem. 
Kathuria et al. \cite{KathuriaLS20} gave a similar runtime of $O(m^{4/3 + o(1)} U^{1/3})$ where $U$ is the max capacity.
The runtime improvement comes from decreasing the number of iterations of IPM to
$\otilde(m^{1/3})$ via a more powerful primitive of
$\ell_{2}+\ell_{p}$ minimizing flows \cite{kyng2019flows}.

For general capacities, the runtime has recently been improved to
$\otilde((m+n^{3/2})\log^2 M)$ for min-cost flow on dense graphs~\cite{BrandLLSSSW21},
and $\otilde(m^{\frac{3}{2}-\frac{1}{328}}\log M)$ for max-flow on sparse graphs~\cite{GaoLP21:arxiv}.
These algorithms focus on decreasing
the per-iteration cost of IPMs by dynamically maintaining electrical
flows. After the preliminary version of this work was accepted to SODA
2022, \cite{BGJLLPS21} gave a runtime of $\otilde(m^{\frac{3}{2}-\frac{1}{58}}\log^2 M)$ for general min-cost flow following the dynamic electrical flow framework. 
Most recently, \cite{almostlinearmincostflow} improved the runtime for
general min-cost flow to $\O(m^{1+o(1)}\log^2 M)$ by solving a sequence of approximate undirected minimum-ratio cycles.

\paragraph{Max-flow on planar graphs.}

The planar max-flow problem has an equally long history. We refer the
reader to the thesis \cite{borradaile2008exploiting} for a detailed
exposition.  In the seminal work of Ford and Fulkerson that introduced the max-flow min-cut theorem,
they also gave a max-flow algorithm for $s,t$-planar graphs (planar
graphs where the source and sink lie on the same face)\cite{ford1956maximal}.  
This algorithm iteratively sends flow along the top-most augmenting path.
Itai and Shiloach showed how to implement each step in
$O(\log n)$ time, thus giving an $O(n\log n)$ time algorithm for
$s,t$-planar graphs~\cite{itai1979maximum}. 
In this setting, Hassin also showed that the max-flow can be computed
using shortest-path distances in the planar dual in $O(n \log n)$ time~\cite{Hassin}.
Building on Hassin's work, the current best runtime is $O(n)$ 
by Henzinger, Klein, Rao, and Subramanian \cite{henzinger1997faster}.

For undirected planar graphs, Reif first gave an $O(n\log^{2}n)$ time
algorithm for finding the max-flow value \cite{reif1983minimum}.
Hassin and Johnson then showed how to compute the flow in the same
runtime \cite{hassin1985n}. The current best runtime is
$O(n\log\log n)$ by Italiano, Nussbaum, Sankowski, and
Wulff-Nilsen~\cite{INSW11}.

For general planar graphs, Weihe gave the first $O(n\log n)$ time
algorithm, assuming the graph satisfies certain
connectivity conditions~\cite{weihe1997maximum}. 
Later, Borradaile and Klein gave an $O(n\log n)$
time algorithm for any planar graph \cite{BK09:journal}.

The multiple-source multiple-sink version of max-flow is considered
much harder on planar graphs. The first result of $O(n^{1.5})$
time was by Miller and Naor when sources and sinks are all on same face
\cite{MN95:journal}. This was then improved to $O(n\log^{3}n)$
in~\cite{BKMNW17:journal}.

For generalizations of planar graphs, Chambers, Ericskon and Nayyeri gave the first nearly-linear time
algorithm for max-flow on graphs embedded on bounded-genus surfaces~\cite{CEN12:journal}.  
Miller and Peng gave an $\otilde(n^{6/5})$-time algorithm for approximating undirected max-flow for the class of $O(\sqrt{n})$-separable graphs~\cite{MP13}, 
although this is superseded by the previously mentioned works for general graphs~\cite{sherman2013nearly,kelner2014almost}.

\paragraph{Min-cost flow on planar graphs.}

Imai and Iwano gave a $O(n^{1.594}\log M)$ time algorithm for min-cost
flow for the more general class of $O(\sqrt{n})$-separable
graphs~\cite{II90}.
To the best of our knowledge, there is little else known
about min-cost flow on general planar graphs.  In the special case of
unit capacities, \cite{AKLR18,LR19} gives an $O(n^{6/5}\log M)$ time
algorithm for min-cost perfect matching in bipartite planar graphs,
and Karczmarz and Sankowski gives a $O(n^{4/3}\log M)$ time algorithm
for min-cost flow~\cite{KS19}.  
Currently, bounded treewidth graphs is the only graph family we know that admits min-cost flow algorithms that run in nearly-linear time \cite{treeLP}.

\subsection{Challenges}

Here, we discuss some of the challenges in developing faster
algorithms for the planar min-cost flow problem from a convex
optimization perspective.  For a discussion on challenges in designing
combinatorial algorithms, we refer the reader to
\cite{khuller1993lattice}. Prior to our result, the fastest min-cost
flow algorithm for planar graphs is based on interior point methods (IPMs) and takes $\otilde(n^{3/2}\log M)$ time~\cite{daitch2008faster}.
Intuitively, $\Omega(n^{3/2})$ is a natural runtime barrier for IPM-based methods, 
since they require $\Omega(\sqrt{n})$ iterations, each computing a possibly-dense electrical flow.

\paragraph*{Challenges in improving the number of iterations.}

The $\Omega(\sqrt{n})$ term comes from the fact that IPM uses the
electrical flow problem ($\ell_{2}$-type problem) to approximate the
shortest path problem ($\ell_{1}$-type problem). This 
$\Omega(\sqrt{n})$ term is analogous to the flow decomposition
barrier: in the worst case, we need $\Omega(n)$ shortest paths
($\ell_{1}$-type problem) to solve the max-flow problem
($\ell_{\infty}$-type problem). Since $\ell_{2}$ and $\ell_{\infty}$
problems differ a lot when there are $s-t$ paths with drastically
different lengths, difficult instances for electrical flow-based
max-flow methods are often serial-parallel (see Figure 3 in
\cite{christiano2011electrical} for an example).  Therefore, planarity
does not help to improve the $\sqrt{n}$ term.
Although more general $\ell_{2}+\ell_{p}$ primitives have been
developed \cite{AdilKPS19, kyng2019flows, AdilS20, AdilBKS21}, exploiting their
power in designing current algorithms for exact max-flow problem has
been limited to perturbing the IPM trajectory, and such a perturbation
only works when the residual flow value is large.  In all previous
works tweaking IPMs for breaking the 3/2-exponent barrier
\cite{madry2013navigating,madry2016computing,cohen2017negative,KathuriaLS20,axiotis2020circulation},
an augmenting path algorithm is used to send the remaining flow at the
end. Due to the residual flow restriction, all these results assume
unit-capacities on edges, and it seems unlikely that planarity can be
utilized to design an algorithm for polynomially-large capacities with
fewer than $\sqrt{n}$ IPM iterations.

\paragraph*{Challenges in improving the cost per iteration.}

Recently, there has been much progress on utilizing data structures
for designing faster IPM algorithms for general linear programs and
flow problems on general graphs. For general linear programs, robust
interior point methods have been developed recently with running times
that essentially match the matrix multiplication cost
\cite{CohenLS21, van2020deterministic, BrandLSS20, huang2021solving,
  van2021unifying}.  This version of IPM ensures that the $\ell_{2}$
problem solved changes in a sparse manner from iteration to iteration.
When used to design graph algorithms, the $i$-th iteration of
a robust IPM involves computing an electrical flow on some graph
$G_{i}$. The edge support remains unchanged between iterations, though
the edge weights change. Further, if $K_{i}$ is the number of edges
with weight changes between $G_{i}$ and $G_{i+1}$, then robust IPMs
guarantee that
\[
\sum_{i}\sqrt{K_{i}}=\otilde(\sqrt{m}\log M).
\]
Roughly, this says that, on average, each edge weight changes only
poly-log many times throughout the algorithm. Unfortunately, any
sparsity bound is not enough to achieve nearly-linear time. Unlike the
shortest path problem, changing \emph{any} edge in a connected graph
will result in the electrical flow changing on essentially
\emph{every} edge. Therefore, it is very difficult to implement
(robust) IPMs in sublinear time per iteration, even if the subproblem
barely changes every iteration. On moderately dense graphs with
$m=\Omega(n^{1.5})$, this issue can be avoided by first approximating
the graph by sparse graphs and solving the electrical flow on the
sparse graphs. This leads to $\otilde(n) \ll \otilde(m)$ time cost per
step~\cite{BrandLSS20}. However, on sparse graphs, significant obstacles remain.
Recently, there has been a major breakthrough in this direction by
using random walks to approximate the electrical flow
\cite{GaoLP21:arxiv, BGJLLPS21}.  Unfortunately, this still requires
$m^{1-\frac{1}{58}}$ time per iteration.

Finally, we note that \cite{treeLP} gives an
$\otilde(n \tau^2 \log M)$-time algorithm for linear programs with
$\tau$ treewidth. Their algorithm maintains the solution using an
implicit representation.  This implicit representation involves a
$\tau \times \tau$ matrix that records the interaction between every
variable within the vertex separator set.  Each step of the algorithm updates
this matrix once and it is not the bottleneck for the
$\otilde(n \tau^2 \log M)$-time budget.  However, for planar graphs,
this $\tau \times \tau$ matrix is a dense graph on $\sqrt{n}$ vertices
given by the Schur complement on the separator.  Hence, updating this
using their method requires $\Omega(n)$ time per step.

Our paper follows the approach in \cite{treeLP} and shows that this
dense graph can be sparsified.  This is however subtle. Each step of
the IPM makes a global update via the implicit representation, hence
checking whether the flow is feasible takes at least linear time.
Therefore, we need to ensure each step is exactly feasible despite the
approximation.
If we are unable to do that, the algorithm will need to fix the flow
by augmenting paths at the end like \cite{KathuriaLS20,
  axiotis2020circulation}, resulting in super-linear time and
polynomial dependence on capacities, rather than logarithmic.


\subsection{Our approaches}

In this section, we introduce our approach and explain how we overcome the difficulties we mentioned. 
The min-cost flow problem can be reformulated into a linear program
in the following primal-dual form:
\[
\mathrm{(Primal)}=\min_{\mb^{\top}\vf=\vzero,\;\vl\leq\vf\leq\vu}\vc^{\top}\vf\quad\text{and}\quad\mathrm{(Dual)}=\min_{\mb\vy+\vs=\vc}\sum_{i}\min(\vl_{i}\vs_{i},\vu_{i}\vs_{i}),
\]
where $\mb\in\R^{m\times n}$ is an edge-vertex incidence matrix of the
graph, $\vf$ is the flow and $\vs$ is the slack (or adjusted cost
vector). The primal is the min-cost circulation problem and the
dual is a variant of the min-cut problem. Our algorithm for
min-cost flow is composed of a novel application of IPM (\cref{subsec:IPM}) and new data structures
(\cref{subsec:overview_representation}). The IPM method reduces
solving a linear program to applying a sequence of
$\otilde(\sqrt{m}\log M)$ projections and the data structures
implement the primal and dual projection steps roughly in
$\otilde(\sqrt{m})$ amortized time.

\paragraph{Robust IPM.}

We first explain the IPM briefly. To minimize $\vc^\top \vf$, each step of the IPM method moves the flow vector $\vf$ to the direction of $-\vc$. However, such $\vf$ may exceed the maximum or minimum capacities. IPM incorporates these capacity constraints by routing flows slower when they are approaching their capacity bounds. This is achieved by controlling the edge weights $\mw$ and direction $\vv$ in each projection step. 
Both $\mw$ and $\vv$ are roughly chosen from some explicit entry-wise formula
of $\vf$ and $\vs$, namely, $\mw_{ii}=\psi_{1}(\vf_{i},\vs_{i})$
and $\vv_{i}=\psi_{2}(\vf_{i},\vs_{i})$. Hence, the main bottleneck
is to implement the projection step (computing $\mproj_{\vw}\vv$). 
For the min-cost flow problem, this projection step corresponds to an electrical flow computation.

Recently, it has been observed that there is a lot of freedom in
choosing the weight $\mw$ and the direction $\vv$ (see for example
\cite{CohenLS21}). Instead of computing them exactly, we maintain some
entry-wise approximation $\of,\os$ of $\vf,\vs$ and use them to
compute $\mw$ and $\vv$. By updating $\of_{i},\os_{i}$ only when
$\vf_{i},\vs_{i}$ changed significantly, we can ensure $\of,\os$ has
mostly sparse updates. Since $\mw$ and $\vv$ are given by some
entry-wise formula of $\of$ and $\os$, this ensures that $\mw,\vv$ change
sparsely and in turn allows us to maintain the corresponding projection
$\mproj_{\vw}$ via low-rank updates.

We refer to IPMs that use approximate $\of$ and $\os$ as robust IPMs.
In this paper, we apply the version given in \cite{treeLP} in a
black-box manner.  In \cref{subsec:IPM}, we state the IPM we use. The
key challenge is implementing each step in roughly $\otilde (\sqrt{m})$
time.

\paragraph{Separators and Nested Dissection.}

Our data structures rely on the separability property of the input graph, 
which dates back to the nested dissection algorithms for
solving planar linear systems~\cite{lipton1979generalized,
  gilbert1987}.  By recursively partitioning the graph into
edge-disjoint subgraphs (i.e. regions) using balanced vertex separators, we
can construct a hierarchical decomposition of a planar graph $G$ which
is called a \textit{separator tree} \cite{fakcharoenphol2006planar}.  This is
a binary search tree over the edges in $G$.  Each node in the
separator tree represents a region in $G$. In planar graphs, for a
region $\region$ with $|\region|$ vertices, an
$O(\sqrt{|\region|})$-sized vertex separator suffices to partition it into
two balanced sub-regions which are represented by the two children of
$\region$ in the separator tree. The two subregions partition the edges in $\region$ and
share only vertices in the separator. We call the set of vertices in a
region $\region$ that appear in the separators of its ancestors the
\textit{boundary} of $\region$. Any two regions can only share
vertices on their boundaries unless one of them is an ancestor of the
other.

Nested dissection algorithms~\cite{lipton1979generalized,gilbert1987}
essentially replace each region by a graph involving only its boundary vertices, 
in a bottom-up manner.
For planar linear systems, solving the dense $\sqrt{n} \times \sqrt{n}$ submatrix corresponding
to the top level vertex separator leads to a runtime of $n^{\omega / 2}$ where $\omega$ is the
matrix multiplication exponent.
For other problems as shortest path, this primitive involving
dense graphs can be further accelerated using additional properties
of distance matrices~\cite{fakcharoenphol2006planar}.

\paragraph{Technique 1: Approximate Nested Dissection and Lazy Propagation}

Our representation of the Laplacian inverse, and in turn the
projection matrix, hinges upon a sparsified version of the
nested dissection representation.
That is, instead of a dense inverse involving all pairs of boundary
vertices, we maintain a sparse approximation.
This sparsified nested dissection has been used in the approximate
undirected planar flow algorithm from~\cite{MP13}.
However, that work pre-dated (and in some sense motivated) subsequent
works on nearly-linear time approximations of Schur complements
on general graphs~\cite{KLPSS16, KyngS16, Kyng17}.
Re-incorporating these sparsified algorithms gives runtime dependencies
that are nearly-linear, instead of quadratic, in separator sizes,
with an overall error that is acceptable to the robust IPM framework.
%

By maintaining objects with size nearly equal to the separator size in each node of the separator tree, we can support updating an single edge or a batch of edges in the graph efficiently. Our data structures for maintaining the approximate Schur complements and the slack and flow projection matrices all utilize this idea. 
For example, to maintain the Schur complement of a region $\region$ onto
its boundary (which is required in implementating the IPM step), we
maintain (1) Schur complements of its children onto their boundaries
recursively and (2) Schur complement of the children's boundaries onto
the boundary $\region$. Thus, to update an edge, the path in the
separator tree from the leaf node containing the edge to
the root is visited. To update multiple edges in a batch, each node in
the union of the tree paths is visited. The runtime is nearly
linear in the total number of boundary vertices of all nodes (regions)
in the union. 
For $K$ edges being updated, the runtime is bounded by $\otilde(\sqrt{mK})$. 
Step $i$ of our IPM algorithm takes $\otilde(\sqrt{mK_i})$ time, where $K_{i}$
is the number of coordinates changed in $\mw$ and $\vv$ in the step.
Such a recursive approximate Schur complement structure was used
in~\cite{GHP18}, where the authors achieved a running time of $\otilde(\sqrt{m} K_i).$

\paragraph{Technique 2: Batching the changes.} It is known that over
$t$ iterations of an IPM, the number of coordinate changes (by
more than a constant factor) in $\mw$ and $\vv$ is bounded by $O(t^2)$. 
This directly gives $\sum_{i=1}^{\otilde(\sqrt{m})}K_{i}=m$ and thus a total runtime
of
$\sqrt{m}  \left( \sum_{i=1}^{ \otilde(\sqrt{m}) } \sqrt{K_i} \right)
= \otilde(m^{1.25}).$ In order to obtain a nearly-linear runtime, the
robust IPM carefully batches the updates in different steps. In the
$i$-th step, if the change in an edge variable has exceeded some fixed
threshold compared to its value in the $(i-2^l)$-th step for some
$l\le \ell_i$, we adjust its approximation. (Here, $\ell_i$ is the
number of trailing zeros in the binary representation of $i,$
i.e. $2^{\ell_i}$ is the largest power of $2$ that divides $i$.) 
This ensures that $K_i,$ the number of coordinate changes at step $i$, 
is bounded by $\otilde(2^{2\ell_i}).$ 
Since each value of $\ell_i$ arises once every $2^{\ell_i}$ steps,
we can prove that the sum of square roots of the number of changes over all steps is bounded
by $\otilde(m),$ i.e.,
$\sum_{i=1}^{\otilde(\sqrt{m})} \sqrt{K_{i}}=\otilde(\sqrt{m}).$
Combined with the runtime of the data structures, this gives an
$\otilde(m)$ overall runtime.

\paragraph{Technique 3: Maintaining feasibility via two projections.}
A major difficulty in the IPM is maintaining a flow vector $\vf$ that satisfies
the demands exactly and a slack vector $\vs$ that can be expressed as
$\vs = \vc - \mb \vy$. If we simply project $\vv$ approximately in each
step, the flow we send is not exactly a circulation. Traditionally,
this can be fixed by computing the excess demand each step and sending
flow to fix this demand. Since our edge capacities can be polynomially
large, this step can take $\Omega(m)$ time. 
To overcome this feasibility problem, we note that distinct projection operators
$\mproj_{\vw}$ can be used in IPMs for $\vf$ and $\vs$ as long as each
projection is close to the true projection and that the step satisfies
$\mb^{\top}\Delta\vf=\vzero$ and $\mb\Delta\vy+\Delta\vs=\vzero$ for
some $\Delta\vy$. 

This two-operator scheme is essential to
our improvement since one can prove that any projection that gives
feasible steps for $\vf$ and $\vs$ simultaneously must be the exact electrical
projection, which takes linear time to compute.

\section{Overview}
In this section, we give formal statements of the main theorems proved
in the paper, along with the proof for our main result.
We provide a high-level explanation of the algorithm, sometimes using a simplified setup.

The main components of this paper are:
the IPM from \cite{treeLP} (\cref{subsec:IPM});
a data structure to maintain a collection of Schur complements via nested dissection of the graph (\cref{subsec:overview_sc});
abstract data structures to maintain the solutions $\vs, \vf$ implicitly, notably using an abstract tree operator (\cref{subsec:overview_representation});
a sketching-based data structure to maintain the approximations $\overline \vs$ and $\overline\vf$ needed in the IPM (\cref{subsec:overview_sketch}); and finally, the definition of the tree operators for slack and flow corresponding to the IPM projection matrices onto their respective feasible subspaces, along with the complete IPM data structure for slack and flow (\cref{subsec:dual_overview,subsec:primal_overview}). 

We extend our result to $\alpha$-separable graphs in \cref{sec:separable}.

\subsection{Robust interior point method\label{subsec:IPM}}

In this subsection, we explain the robust interior point method
developed in \cite{treeLP}, which is a refinement of the methods in
\cite{CohenLS21, van2020deterministic}.
Although there are many other robust interior point methods, we simply
refer to this method as RIPM. Consider a linear program of the
form\footnote{Although the min-cost flow problem can be written as a
	one-sided linear program, it is more convenience for the linear
	program solver to have both sides. Everything in this section works
	for general linear programs and hence we will not use the fact
	$m = O(n)$ in this subsection.}
\begin{equation}
	\min_{\vf\in\mathcal{F}} \vc^{\top}\vf\quad\text{where}\quad\mathcal{F}=\{\mb^{\top}\vf=\vb,\; \vl\leq\vf\leq\vu\}\label{eq:LP}
\end{equation}
for some matrix $\mb\in\mathbb{R}^{m\times n}$. 
As with many other IPMs, RIPM follows the central path $\vf(t)$ from an interior point
($t\gg0$) to the optimal solution ($t=0$):
\[
\vf(t)\defeq\arg\min_{\vf\in\mathcal{F}}\vc^{\top}\vf-t\phi(\vf)\quad\text{where }\phi(\vf)\defeq-\sum_{i}\log(\vf_{i}-\vl_{i})-\sum_{i}\log(\vu_{i}-\vf_{i}),
\]
where the term $\phi$ controls how close the flow $\vf_{i}$ can be 
to the capacity constraints $\vu_{i}$ and $\vl_{i}$.
Following the central path exactly is expensive. Instead, RIPM
maintains feasible primal and dual solution
$(\vf, \vs) \in \mathcal{F} \times \mathcal{S}$, where $\mathcal{S}$
is the dual space given by
$\mathcal{S} = \{\vs:\mb\vy+\vs=\vc\text{ for some }\vy\}$, and
ensures $\vf(t)$ is an approximate minimizer.
Specifically, the optimality condition for $\vf(t)$ is given by
\begin{align}
	\mu^{t}(\vf,\vs)& \defeq\vs/t+\nabla\phi(\vf) = \vzero\label{eq:mu_t_def}\\
	(\vf,\vs) &\in \mathcal{F}\times\mathcal{S}\nonumber 
\end{align}
where $\mu^{t}(\vf,\vs)$ measures how close $\vf$ is to the minimizer $\vf(t)$.
RIPM maintains $(\vf,\vs)$ such that 
\begin{equation}
	\|\gamma^{t}(\vf,\vs)\|_{\infty}\leq\frac{1}{C\log m}\text{ where }\gamma^{t}(\vf,\vs)_{i}=\frac{\mu^{t}(\vf,\vs)_{i}}{(\nabla^{2}\phi (\vf))_{ii}^{1/2}}, \label{eq:gamma_t_def}
\end{equation}
for some universal constant $C$. The normalization term
$(\nabla^{2}\phi)_{ii}^{1/2}$ makes the centrality measure
$\|\gamma^{t}(\vf,\vs)\|_{\infty}$ scale-invariant in $\vl$ and $\vu$.

The key subroutine $\textsc{Centering}$ takes as input a point close
to the central path
$(\vf(t_{\mathrm{start}}),\vs(t_{\mathrm{start}}))$, and outputs
another point on the central path
$(\vf(t_{\mathrm{end}}),\vs(t_{\mathrm{end}}))$.  Each step of the
subroutine decreases $t$ by a multiplicative factor of
$(1-\frac{1}{\sqrt{m}\log m})$ and moves $(\vf,\vs)$ within
$\mathcal{F}\times\mathcal{S}$ such that $\vs/t+\nabla\phi(\vf)$ is
smaller for the current $t$.  \cite{treeLP} proved that even if each
step is computed approximately, \textsc{Centering} still outputs a
point close to $(\vf(t_{\mathrm{end}}),\vs(t_{\mathrm{end}}))$ using
$\widetilde{O}(\sqrt{m}\log(t_{\mathrm{end}}/t_{\mathrm{start}}))$
steps.  See \cref{alg:IPM_centering} for a simplified version.
\begin{algorithm}
	\caption{Robust Interior Point Method from \cite{treeLP}\label{alg:IPM_centering}}
	\begin{algorithmic}[1]
		
		\Procedure{$\textsc{RIPM}$}{$\mb \in \mathbb{R}^{m \times n}, \vb, \vc,\vl,\vu,\epsilon$}
		
		\State Let $L=\| \vc \|_{2}$ and $R=\|\vu-\vl\|_{2}$
		
		\State Define $\phi_{i}(x)\defeq-\log(\vu_{i}-x)-\log(x-\vl_{i})$ 
		
		\Statex
		
		\Statex $\triangleright$ Modify the linear program and obtain an
		initial $(x,s)$ for modified linear program
		
		\State Let $t=2^{21}m^{5}\cdot\frac{LR}{128}\cdot\frac{R}{r}$
		
		\State Compute $\vf_{c}=\arg\min_{\vl\leq\vf\leq\vu}\vc^{\top}\vf+t\phi(\vf)$
		and $\vf_{\circ}=\arg\min_{\mb^{\top}\vf=\vb}\|\vf-\vf_{c}\|_{2}$
		
		\State Let $\vf=(\vf_{c},3R+\vf_{\circ}-\vf_{c},3R)$ and $\vs=(-t  \nabla\phi(\vf_{c}),\frac{t}{3R+\vf_{\circ}-\vf_{c}},\frac{t}{3R})$
		
		\State Let the new matrix $\mb^{\mathrm{new}} \defeq [\mb;\mb;-\mb]$, the
		new barrier
		\[
		\phi_{i}^{\mathrm{new}}(x)
		=\begin{cases}
			\phi_{i}(x) & \text{if }i\in[m],\\
			-\log x & \text{else}.
		\end{cases}
		\]
		
		\Statex 
		
		\Statex $\triangleright$ Find an initial $(\vf,\vs)$ for the original
		linear program
		
		\State $((\vf^{(1)},\vf^{(2)},\vf^{(3)}),(\vs^{(1)},\vs^{(2)},\vs^{(3)}))\leftarrow\textsc{Centering}(\mb^{\mathrm{new}},\phi^{\mathrm{new}},\vf,\vs,t,LR)$
		
		\State $(\vf,\vs)\leftarrow(\vf^{(1)}+\vf^{(2)}-\vf^{(3)},\vs^{(1)})$
		
		\Statex 
		
		\Statex $\triangleright$ Optimize the original linear program
		
		\State $(\vf,\vs)\leftarrow\textsc{Centering}(\mb,\phi,\vf,\vs,LR,\frac{\epsilon}{4m})$
		
		\State \Return $\vf$
		
		\EndProcedure
		
		\Statex
		
		\Procedure{$\textsc{Centering}$}{$\mb,\phi,\vf,\vs,t_{\mathrm{start}},t_{\mathrm{end}}$}
		
		\State Let $\alpha=\frac{1}{2^{20}\lambda}$ and $\lambda=64\log(256m^{2})$
		where $m$ is the number of rows in $\mb$
		
		\State Let $t\leftarrow t_{\mathrm{start}}$, $\of\leftarrow\vf,\os\leftarrow\vs,\ot\leftarrow t$
		
		\While{$t\geq t_{\mathrm{end}}$}
		
		\State Set $t\leftarrow\max((1-\frac{\alpha}{\sqrt{m}})t,t_{\mathrm{end}})$
		
		\State Update $h=-\alpha/\|\cosh(\lambda\gamma^{\ot}(\of,\os))\|_{2}$
		where $\gamma$ is defined in \cref{eq:mu_t_def}\label{line:step_given_begin}
		
		\State Update the diagonal weight matrix $\mw=\nabla^{2}\phi(\of)^{-1}$\label{line:step_given_3}
		
		\State Update the direction $\vv$ where  $\vv_{i}=\sinh(\lambda\gamma^{\ot}(\of,\os)_{i})$\label{line:step_given_end}
		
		\State Pick $\vv^{\|}$ and $\vv^{\perp}$ such that $\mw^{-1/2}\vv^{\|}\in\mathrm{Range}(\mb)$,
		$\mb^{\top}\mw^{1/2}\vv^{\perp}=\vzero$ and\label{line:step_user_begin}
		\begin{align*}
			\|\vv^{\|}-\mproj_{\vw}\vv\|_{2} & \leq\alpha\|\vv\|_{2},\\
			\|\vv^{\perp}-(\mi-\mproj_{\vw})\vv\|_{2} & \leq\alpha\|\vv\|_{2}
			\tag{$\mproj_{\vw}\defeq\mw^{1/2}\mb(\mb^{\top}\mw\mb)^{-1}\mb^{\top}\mw^{1/2}$}
		\end{align*}
		
		\State Implicitly update $\vf\leftarrow\vf+h\mw^{1/2}\vv^{\perp}$,
		$\vs\leftarrow\vs+\ot h\mw^{-1/2}\vv^{\|}$ \label{line:maintain_f_s}
		
		\State Explicitly maintain $\of,\os$ such that $\|\mw^{-1/2}(\of-\vf)\|_{\infty}\leq\alpha$
		and $\|\mw^{1/2}(\os-\vs)\|_{\infty}\leq\ot\alpha$ \label{line:step_user_end}
		
		\State Update $\ot\leftarrow t$ if $|\ot-t|\geq\alpha\ot$
		
		\EndWhile
		
		\State \Return $(\vf,\vs)$\label{line:step_user_output}
		
		\EndProcedure
		
	\end{algorithmic}
	
\end{algorithm}

RIPM calls $\textsc{Centering}$ twice. The first call to
$\textsc{Centering}$ finds a feasible point by following the central
path of the following modified linear program
\[
\min_{ \substack{\mb^{\top}(\vf^{(1)}+\vf^{(2)}-\vf^{(3)})=\vb \\
		\vl\leq\vf^{(1)}\leq\vu,\; 
		\vf^{(2)}\geq\vzero,\; 
		\vf^{(3)}\geq\vzero}}
\vc^{(1)\top}\vf^{(1)}+\vc^{(2)\top}\vf^{(3)}+\vc^{(2)\top}\vf^{(3)}
\]
where $\vc^{(1)}=\vc$, and $\vc^{(2)},\vc^{(3)}$ are some positive
large vectors. The above modified linear program is chosen so that we
know an explicit point on its central path, and any approximate
minimizer to this new linear program gives an approximate central path
point for the original problem.
The second call to $\textsc{Centering}$ finds an approximate solution
by following the central path of the original linear program. Note
that both calls run the same algorithm on essentially the same graph:
The only difference is that in the first call to $\textsc{Centering}$,
each edge $e$ of $G$ becomes three copies of the edge with flow value
$\vf_{e}^{(1)},\vf_{e}^{(2)},\vf_{e}^{(3)}$. Note that this edge
duplication does not affect planarity.

We note that the IPM algorithm only requires access to
$(\of,\os)$, but not $(\vf,\vs)$ during the main while loop.
Hence, $(\vf,\vs)$ can be implicitly maintained via any data
structure.  We only require $(\vf,\vs)$ explicitly when returning the
approximately optimal solution at the end of the algorithm
\cref{line:step_user_output}.
\begin{theorem}
	\label{thm:IPM}
	Consider the linear program 
	\[
	\min_{\mb^{\top}\vf=\vb,\; \vl\leq\vf\leq\vu}\vc^{\top}\vf
	\]
	with $\mb\in\R^{m\times n}$. We are given a scalar $r>0$ such that \emph{there exists} some interior point $\vf_{\circ}$ satisfying 
	$\mb^{\top}\vf_{\circ}=\vb$ and $\vl+r\leq \vf_{\circ} \leq \vu-r.$\footnote{For any vector $\vv$ and scalar $x$, we define $\vv+x$ to be the vector obtained by adding $x$ to each coordinate of $\vv$. We define $\vv-x$ to be the vector obtained by subtracting $x$ from each coordinate of $\vv$.}
	Let $L=\|\vc\|_{2}$ and $R=\|\vu-\vl\|_{2}$. For any $0<\epsilon\leq1/2$,
	the algorithm $\textsc{RIPM}$ (\cref{alg:IPM_centering}) finds $\vf$ such that $\mb^{\top}\vf=\vb$,
	$\vl\leq\vf\leq\vu$ and
	\[
	\vc^{\top}\vf\leq\min_{\mb^{\top}\vf=\vb,\; \vl\leq\vf\leq\vu}\vc^{\top}\vf+\epsilon LR.
	\]
	Furthermore, the algorithm has the following properties:
	\begin{itemize}
		\item Each call of $\textsc{Centering}$ involves $O(\sqrt{m}\log m\log(\frac{mR}{\epsilon r}))$
		many steps, and $\ot$ is only updated $O(\log m\log(\frac{mR}{\epsilon r}))$
		times.
		\item In each step of \textsc{Centering}, the coordinate $i$ in $\mw,\vv$ changes only if $\of_{i}$
		or $\os_{i}$ changes.
		\item In each step of \textsc{Centering}, $h\|\vv\|_{2}=O(\frac{1}{\log m})$.
		\item \cref{line:step_given_begin} to \cref{line:step_given_end} takes $O(K)$
		time in total, where $K$ is the total number of coordinate changes
		in $\of,\os$.
	\end{itemize}
\end{theorem}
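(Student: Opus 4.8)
The plan is to derive this theorem as an essentially black-box consequence of the robust interior point method of \cite{treeLP}: Algorithm \ref{alg:IPM_centering} is a faithful instantiation of their method, so the work is to verify that the hypotheses of their convergence analysis are met by our pseudocode, handle the two calls to \textsc{Centering} and the initialization, and then read off the four ``furthermore'' properties. For correctness of the main loop, I would invoke their centrality analysis: the softmax/$\cosh$ potential $\sum_i \cosh(\lambda\gamma^{\ot}(\of,\os)_i)$ is used to show that one iteration of \textsc{Centering} --- which sets $t \leftarrow (1-\alpha/\sqrt m)t$ and then takes the step $\vf \leftarrow \vf + h\mw^{1/2}\vv^{\perp}$, $\vs \leftarrow \vs + \ot h\mw^{-1/2}\vv^{\|}$ --- keeps $\|\gamma^{\ot}(\vf,\vs)\|_\infty \le \frac{1}{C\log m}$, provided that (i) $\vv^{\|},\vv^{\perp}$ lie in the correct subspaces ($\mw^{-1/2}\vv^{\|}\in\range{\mb}$, $\mb^\top\mw^{1/2}\vv^{\perp}=\vzero$) and satisfy the displayed $\alpha\|\vv\|_2$ error bounds against $\mproj_\vw\vv$ and $(\mi-\mproj_\vw)\vv$, (ii) $\of,\os$ track $\vf,\vs$ to within the local-norm tolerance $\alpha$, and (iii) $\ot$ agrees with $t$ up to a factor $1\pm\alpha$. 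Conditions (i)--(iii) are exactly what \cref{line:step_user_begin,line:step_user_end} and the $\ot$-update line enforce. The one point worth flagging is that $\vv^{\|}$ and $\vv^{\perp}$ may come from two \emph{different} approximations of $\mproj_\vw$; but \cite{treeLP} already phrases the step requirement in this split form, so their theorem applies verbatim. Finally, the standard log-barrier duality-gap bound converts a near-central point at $t_{\mathrm{end}}=\epsilon/(4m)$ into a feasible $\vf$ with $\vc^\top\vf \le \mathrm{OPT} + \epsilon LR$.

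For the two-phase structure, I would check the initialization estimate directly: with $t = 2^{21}m^5\cdot\frac{LR}{128}\cdot\frac{R}{r}$, the explicit triple $\vf=(\vf_c,3R+\vf_\circ-\vf_c,3R)$ and $\vs=(-t\nabla\phi(\vf_c),\frac{t}{3R+\vf_\circ-\vf_c},\frac{t}{3R})$ is $O(1/\log m)$-central for the modified program $(\mb^{\mathrm{new}},\phi^{\mathrm{new}})$; this is a mechanical computation using $\|\vf_c-\vf_\circ\|_2 \le R$, the bound on $\nabla\phi(\vf_c)$ coming from the definition of $\vf_c$, and the hypothesis $\vl+r\le\vf_\circ\le\vu-r$. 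Then, again following \cite{treeLP}, any near-central point for the modified program at $t=LR$ yields, after discarding the auxiliary blocks, a near-central point for the original program at $t=LR$, which seeds the second \textsc{Centering} call. These are exactly the reductions of \cite{treeLP}, specialized to our data.

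The four listed properties then follow by inspection. The iteration count: each step multiplies $t$ by $1-\alpha/\sqrt m$ with $\alpha=\Theta(1/\log m)$, so reaching $t_{\mathrm{end}}$ from $t_{\mathrm{start}}$ costs $O(\frac{\sqrt m}{\alpha}\log\frac{t_{\mathrm{start}}}{t_{\mathrm{end}}}) = O(\sqrt m\log m\log\frac{mR}{\epsilon r})$ steps, since both ratios $t_{\mathrm{start}}/t_{\mathrm{end}}$ are $\mathrm{poly}(mR/(\epsilon r))$; and $\ot$ is refreshed only after $t$ has fallen by a $(1-\alpha)$ factor, i.e. once every $\Theta(\sqrt m)$ steps, giving $O(\log m\log\frac{mR}{\epsilon r})$ refreshes. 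The second property is immediate from \cref{line:step_given_3,line:step_given_end}: $\mw=\nabla^2\phi(\of)^{-1}$ is diagonal with $\mw_{ii}$ a function of $\of_i$ alone, and $\vv_i=\sinh(\lambda\gamma^{\ot}(\of,\os)_i)$ with $\gamma^{\ot}(\of,\os)_i$ a function of $(\of_i,\os_i)$ and the scalar $\ot$ alone (when $\ot$ is refreshed all coordinates may change, which is consistent with counting all such coordinates as changed). The third follows from $|\sinh(x)|\le\cosh(x)$ coordinatewise, giving $\|\vv\|_2 = \|\sinh(\lambda\gamma^{\ot}(\of,\os))\|_2 \le \|\cosh(\lambda\gamma^{\ot}(\of,\os))\|_2$ and hence $|h|\,\|\vv\|_2 \le \alpha = O(1/\log m)$. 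For the fourth, maintain $\|\cosh(\lambda\gamma^{\ot}(\of,\os))\|_2^2$ as a running sum together with $\mw$ and $\vv$; between consecutive steps only the $K$ coordinates where $\of$ or $\os$ moved need recomputation, for $O(K)$ work in total.

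There is no deep obstacle here: the substantive content lives in \cite{treeLP}, which we use black-box. The parts demanding the most care are (a) confirming that the hypotheses of their convergence theorem --- in particular the split two-projection step requirement and the scale-invariant normalization in $\gamma^{\ot}$ --- are matched exactly by \cref{line:step_user_begin,line:step_user_end}, and (b) grinding through the elementary but slightly tedious estimates that certify the explicit initial point of \cref{alg:IPM_centering} as near-central for the modified program; both are routine within their framework.
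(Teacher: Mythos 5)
Your proposal is correct and follows essentially the same route as the paper: the paper's proof simply cites Theorem A.1 of the arXiv version of \cite{treeLP} (with unit weights) for the step count and correctness, notes that the coordinate-change and $O(K)$-runtime claims follow by inspection of the formulas for $\mu^{t}$ and $\gamma^{t}$, and bounds $h\|\vv\|_{2}\leq\alpha\|\sinh(\lambda\gamma^{\ot})\|_{2}/\|\cosh(\lambda\gamma^{\ot})\|_{2}\leq\alpha$ exactly as you do. Your additional verification of the initialization and the two-projection step hypotheses is more detail than the paper provides, but it is consistent with, not different from, its argument.
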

\begin{proof}
	The number of steps follows from Theorem A.1 in \cite{treeLPArxivV2}, with
	the parameter $w_{i}=\nu_{i}=1$ for all $i$. 
	The number of coordinate
	changes in $\mw,\vv$ and the runtime of \cref{line:step_given_begin}
	to \cref{line:step_given_end} follows directly from the formula of
	$\mu^{t}(\vf,\vs)_{i}$ and $\gamma^{t}(\vf,\vs)_{i}$. For the bound
	for $h\|\vv\|_{2}$, it follows from
	\[	h\|\vv\|_{2}\leq\alpha\frac{\|\sinh(\lambda\gamma^{\ot}(\of,\os))\|_{2}}{\|\cosh(\lambda\gamma^{\ot}(\of,\os))\|_{2}}\leq\alpha=O\left(\frac{1}{\log m}\right).
	\]
\end{proof}

A key idea in our paper involves the computation of projection
matrices required for the RIPM. Recall from the definition of
$\mproj_{\vw}$ in \cref{alg:IPM_centering},
the true projection matrix is
\begin{align}
	\mproj_{\vw} &\defeq\mw^{1/2}\mb(\mb^{\top}\mw\mb)^{-1}\mb^{\top}\mw^{1/2}. \nonumber
	\intertext{We let $\ml$ denote the weighted Laplacian where $\ml=\mb^{\top}\mw\mb$, so that}
	\mproj_{\vw} &= \mw^{1/2}\mb\ml^{-1}\mb^{\top}\mw^{1/2}. \label{eq:overview:mproj}
\end{align}

\begin{lemma}
  \label{lem:approx-projections}
  To implement \cref{line:step_user_begin} in
  \cref{alg:IPM_centering}, it suffices to find an approximate
  slack projection matrix $\widetilde{\mproj}_{\vw}$ satisfying
  $\norm{ 
  	\left(\widetilde{\mproj}_{\vw} - \mproj_{\vw} \right) \vv}_{2} \leq
  \alpha \norm{\vv}_2$ and
  $\mw^{-1/2} \widetilde{\mproj}_{\vw} \vv \in \mathrm{Range}(\mb)$;
  and an approximation flow projection matrix
  $\widetilde{\mproj}'_{\vw}$ satisfying
  $\norm{ \left (\widetilde{\mproj}'_{\vw} -\mproj_{\vw} \right) \vv}_{2} \leq \alpha \norm{\vv}_2$ and
  $\mb^\top \mw^{1/2} \widetilde{\mproj}'_{\vw} \vv =  \mb^\top \mw^{1/2} \vv$.
\end{lemma}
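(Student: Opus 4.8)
The plan is to produce the required vectors $\vv^{\|}$ and $\vv^{\perp}$ of \cref{line:step_user_begin} directly from the two approximate projection matrices. Concretely, I would set
$\vv^{\|} \defeq \widetilde{\mproj}_{\vw}\vv$ and $\vv^{\perp} \defeq \vv - \widetilde{\mproj}'_{\vw}\vv$, and then check, one by one, that this choice meets all four conditions imposed on \cref{line:step_user_begin}: the two subspace-membership constraints and the two $\ell_2$ approximation guarantees. This is the entire content of the lemma — it is a matching of the hypotheses on $\widetilde{\mproj}_{\vw}$ and $\widetilde{\mproj}'_{\vw}$ against the four requirements of a single line of the RIPM pseudocode.

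First I would verify the subspace conditions. For the slack direction, $\mw^{-1/2}\vv^{\|} = \mw^{-1/2}\widetilde{\mproj}_{\vw}\vv \in \mathrm{Range}(\mb)$ is exactly the second hypothesis on $\widetilde{\mproj}_{\vw}$. For the flow direction, I compute $\mb^{\top}\mw^{1/2}\vv^{\perp} = \mb^{\top}\mw^{1/2}\vv - \mb^{\top}\mw^{1/2}\widetilde{\mproj}'_{\vw}\vv$, which equals $\vzero$ precisely because the hypothesis on $\widetilde{\mproj}'_{\vw}$ states $\mb^{\top}\mw^{1/2}\widetilde{\mproj}'_{\vw}\vv = \mb^{\top}\mw^{1/2}\vv$. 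Next I would verify the approximation bounds: $\|\vv^{\|} - \mproj_{\vw}\vv\|_2 = \|(\widetilde{\mproj}_{\vw} - \mproj_{\vw})\vv\|_2 \le \alpha\|\vv\|_2$ is immediate, and since $(\mi - \mproj_{\vw})\vv - \vv^{\perp} = (\widetilde{\mproj}'_{\vw} - \mproj_{\vw})\vv$, the bound $\|\vv^{\perp} - (\mi - \mproj_{\vw})\vv\|_2 \le \alpha\|\vv\|_2$ follows from the hypothesis on $\widetilde{\mproj}'_{\vw}$.

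I do not expect a genuine obstacle here: the proof is a short substitution and four-way verification. The only point worth emphasizing is structural rather than technical — the slack step $\vv^{\|}$ is generated by one operator ($\widetilde{\mproj}_{\vw}$, enforcing feasibility of $\vs$) and the flow step $\vv^{\perp}$ by a different operator ($\widetilde{\mproj}'_{\vw}$, enforcing $\mb^{\top}\Delta\vf = \vzero$), which is exactly the two-operator scheme from Technique~3 and is what makes it possible to use sparse, approximate projections on each side. The substantive work, namely building data structures that actually realize such $\widetilde{\mproj}_{\vw}$ and $\widetilde{\mproj}'_{\vw}$ satisfying these two properties in $\widetilde{O}(\sqrt{m})$ amortized time per step, is deferred to the later sections and is not part of this lemma.
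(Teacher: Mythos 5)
Your proposal is correct and takes exactly the same route as the paper, which simply observes that setting $\vv^{\|} = \widetilde{\mproj}_{\vw}\vv$ and $\vv^{\perp} = \vv - \widetilde{\mproj}'_{\vw}\vv$ suffices; your version just spells out the four-way verification that the paper leaves implicit.
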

\begin{proof}
	We simply observe that setting $\vv^{\parallel} = \widetilde{\mproj}_{\vw} \vv$ and $\vv^{\perp} = \vv -  \widetilde{\mproj}'_{\vw} \vv$ suffices.
\end{proof}

In finding these approximate projection matrices, we apply ideas from nested dissection and approximate Schur complements to the matrix $\ml$.


\subsection{Nested dissection and approximate Schur complements}\label{subsec:overview_sc}
In this subsection, we discuss nested dissection and the corresponding Schur complements,
and explain how it relates to our goal of finding the approximate projection matrices for \cref{lem:approx-projections}.

As we will discuss later in the main proof, our LP formulation for the IPM uses a
\emph{modified planar graph} which includes two additional vertices and $O(n)$
additional edges to the original planar graph. 
Although the modified graph is no longer planar, 
it has only two additional vertices.
We may add these two vertices to any relevant sets defined in nested dissection without changing the overall complexity.
As such, we can apply nested dissection as we would for planar graphs. 

We first illustrate the key ideas using a two-layer nested dissection scheme. 
By the well-known planar separator theorem~\cite{LiptonT79}, 
a planar graph $G$ can be decomposed into two edge-disjoint (not vertex-disjoint)
subgraphs $H_1$ and $H_2$ called \emph{regions}, such that each
subgraph has at most $2n/3$ vertices. 
Let $\partial H_i$
denote the \emph{boundary} of region $H_i$, that is, the set of
vertices $v \in H_i$ such that $v$ is adjacent to some $u\notin H_i$.
Then $\partial H_i$ has size bounded by $O(\sqrt{n})$.
Let $F_{H_i} = V(H_i) \setminus \partial H_i$ denote the remaining interior vertices \emph{eliminated} at region $H_i$.

Let $C = \partial H_1 \cup \partial H_2$ denote the union of the boundaries, 
and let $F = F_{H_1} \cup F_{H_2}$ be the disjoint union of the two interior sets.
Note that $C$ is a \emph{balanced vertex separator} of $G$, with size
\[
|C|\leq |\partial H_1| + |\partial H_2| = O(\sqrt{n}).
\]

Furthermore, $F$ and $C$ give a natural partition of the vertices of $G$. Using
block Cholesky decomposition, we can now write\footnote{To keep
  notation simple, $\mm^{-1}$ will denote the Moore-Penrose
  pseudo-inverse for non-invertible matrices.}
\begin{equation} \label{eq:overview:Linv-2layer}
	\ml^{-1} = 
	\left[\begin{array}{cc}
		\mi & -\ml_{F,C}{\ml_{F,F}}^{-1}\\
		\mzero & \mi
	\end{array}\right]
	\left[\begin{array}{cc}
		{\ml_{F,F}}^{-1} & \mzero\\
		\mzero & {\sc(\ml,C)}^{-1}
	\end{array}\right]
	\left[\begin{array}{cc}
	\mi & \mzero \\
	-\ml_{F,C}{\ml_{F,F}}^{-1} & \mi
	\end{array} \right],
\end{equation}
where $\sc(\ml,C) \defeq \ml_{C,C}-\ml_{C,F}{\ml_{F,F}}^{-1}\ml_{F,C}$ is the \emph{Schur complement} of $\ml$ onto vertex set $C$,
and $\ml_{F,C}\in\R^{F\times C}$ is the $F\times C$-indexed submatrix of $\ml$.

The IPM in \cref{alg:IPM_centering} involves updating $\ml^{-1}$ in every step;
written as the above decomposition, we must in turn update the Schur
complement $\sc(\ml, C)$ in every step.  Hence, the update cost must
be sub-linear in $n$. Computing $\sc(\ml,C)$ exactly takes
$\Omega(|C|^{2})=\Omega(n)$ time, which is already too expensive.
Our key idea here is to maintain an approximate Schur complement,
which is of a smaller size based on the graph decomposition, 
and can be maintained in amortized $\sqrt{n}$ time per step throughout the IPM.

Let $\ml[H_i]$ denote the weighted Laplacian of the region $H_i$ for $i = 1,2$. 
Since these regions are edge-disjoint, we can write the Laplacian $\ml$ as the sum
\[
\ml=\ml[H_1]+\ml[H_2].
\]
Based on the graph decomposition, we have the Schur complement decomposition
\[
\sc(\ml,C) = \sc(\ml[H_1], C) +  \sc(\ml[H_2], C).
\]
This decomposition allows us to localize edge weight updates.
Namely, if the weight of edge $e$ is updated,
and $e$ is contained in region $H_i$,  
we only need to recompute the single Schur complement term for $H_i$, rather than both terms in the sum.

For the appropriate projection matrices in the IPM, 
it further suffices to maintain a sparse \emph{approximate Schur complement}
$\widetilde{\sc}(\ml[H_i], C) \approx \sc(\ml[H_i], C)$ for each region $H_i$ rather than the exact.
Then, the approximate Schur complement of $\ml$ on $C$ is given by
\begin{equation}
	\widetilde{\sc}(\ml,C) \defeq 
	\widetilde{\sc}(\ml[H_1],C) + \widetilde{\sc}(\ml[H_2],C). \label{eq:ScL_sum}
\end{equation}
Each term $\widetilde{\sc}(\ml[H_i], C)$ can be computed in time nearly-linear in the size of $H_i$.
Furthermore, $\widetilde{\sc}(\ml[H_i],C)$ is supported only
on the vertex set $\partial H_i$, which is of size $O(\sqrt{n})$. 
Hence, any \emph{sparse} approximate Schur complement has only $\otilde(\sqrt{n})$ edges.
When we need to compute $\tsc(\ml, C)^{-1} \vx$ for some vector $\vx$, we use a generic SDD-solver which runs in  $\widetilde{O}(|C|)$ time; this is crucial in bounding the overall runtime. 

To extend the two-level scheme to more layers, we apply nested dissection recursively to each region $H_i$, until the regions are of constant size. 
This recursive procedure naturally gives rise to a \emph{separator tree} $\ct$ of the input graph $G$, which we discuss in detail in \cref{subsec:construct_tree}. 
Each node of $\ct$ correspond to a region of $G$, and can be obtained by taking the edge-disjoint union of the regions of its two children. 
Taking the union over all leaf regions gives the original graph $G$. 
The separator tree $\ct$ allows us to define a set $F_H$ of \emph{eliminated vertices} and a set $\bdry{H}$ of \emph{boundary vertices} for each node $H$, analogous to what was shown in the two-layer dissection.
Moreover, if we let $F_i$ denote the disjoint union of sets $F_H$ over all nodes $H$ at level $i$, 
and $C_i$ denote the union of sets $\bdry{H}$, 
then we essentially generalize the set $C$ from the two-layer dissection to $V(G) = C_{-1} \supset C_0 \supset \dots \supset C_{\eta-1} \supset C_{\eta} = \emptyset$, where each $C_{i}$ is some vertex separator of $G \setminus C_{i-1}$, 
and generalize the set $F$ to $F_0, \dots, F_\eta$ partitioning $V(G)$, where $F_i \defeq C_{i-1} \setminus C_i$.
With a height-$\eta$ separator tree, we can write
\begin{equation}\label{eq:overview:Linv}
	\ml^{-1} = \mmu^{(0)\top}\cdots\mmu^{(\eta-1)\top}
	\left[
	\begin{array}{ccc}
		{\sc(\ml, C_{-1})_{F_0, F_0}}^{-1} & \mzero & \mzero\\
		\mzero & \ddots & \mzero\\
		\mzero & \mzero & {\sc(\ml, C_{\eta-1})_{F_\eta, F_\eta}}^{-1}
	\end{array}
	\right]
	\mmu^{(\eta-1)}\cdots\mmu^{(0)},
\end{equation}
for some explicit upper triangular matrices $\mmu^{(i)}$. Here, $\sc(\ml, C_{i})_{F_{i+1},F_{i+1}}$ denotes the $F_{i+1} \times F_{i+1}$ submatrix of $\sc(\ml, C_{i})$. 

In the expression \cref{eq:overview:Linv}, the Schur complement term $\sc(\ml, C_i)$ at level $i$ can further be decomposed at according to the nodes at the level. 
Then, we can obtain an approximation to $\ml^{-1}$ by using approximate Schur complements as follows:
\begin{restatable}[$\ml^{-1}$ approximation]{theorem}{LinvApprox}\label{thm:overview-L-inv-approx}
	Suppose for each $H \in \ct$, we have a Laplacian $\ml^{(H)}$ satisfying
	\[
	\ml^{(\region)} \approx_{\epssc} \sc(\ml[\region], \bdry{\region}\cup \elim{\region}).
	\]
	Then, we have
	\begin{equation} \label{eq:overview_Linv_approx}
		\ml^{-1} \approx_{\eta \epssc}
		\mpi^{(0)\top}\cdots\mpi^{(\eta-1)\top} \widetilde{\mga}
		\mpi^{(\eta-1)}\cdots\mpi^{(0)},
	\end{equation} 
	where
	\[
	\widetilde \mga = 
	\left[
	\begin{array}{cccc}
		\sum_{H \in \ct(0)} \left(\ml^{(H)}_{F_H, F_H}\right)^{-1} & \mzero & \mzero\\
		\mzero &  \ddots & \mzero\\
		\mzero & \mzero & \sum_{H \in \ct(\eta)} \left(\ml^{(H)}_{F_H, F_H}\right)^{-1}
	\end{array}
	\right].
	\]
	and
	\[
	\mpi^{(i)} = \mi - \sum_{H \in \ct(i)} \ml^{(H)}_{\bdry{\region}, F_H} \left( \ml^{(H)}_{F_H, F_H}\right)^{-1},
	\]
	where $\ct(i)$ denotes the set of nodes at level $i$ of $\ct$, and $\mi$ is the $n \times n$ identity matrix.
\end{restatable}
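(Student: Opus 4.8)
The plan is to compare the approximate formula \eqref{eq:overview_Linv_approx} with the exact nested-dissection factorization \eqref{eq:overview:Linv}, switching from exact to approximate pieces one level of $\ct$ at a time and paying a single factor $\epssc$ per level.

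\emph{Step 1 (make the exact factorization explicit).} For every node $\region\in\ct$ let $\ml^{(\region),*}\defeq\sc(\ml[\region],\bdry{\region}\cup\elim{\region})$ denote the \emph{true} Schur complement. Using the nested-dissection structure of $\ct$ (see \cref{subsec:construct_tree}) --- regions are edge-disjoint, so $\ml[\region]=\ml[D_1]+\ml[D_2]$ for the children $D_1,D_2$ of $\region$; vertices eliminated at or below a node are incident only to edges of its region; and Schur complementation is transitive --- the two-layer Schur-complement decomposition discussed before \eqref{eq:ScL_sum} iterates up the tree to the level-wise identity $\sc(\ml,C_{i-1})=\sum_{\region\in\ct(i)}\ml^{(\region),*}$, whose summands have pairwise disjoint $\elim{\cdot}$-supports. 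Substituting this into the block-Cholesky derivation of \eqref{eq:overview:Linv} identifies its factors: $\mmu^{(i)}=\mi-\sum_{\region\in\ct(i)}\ml^{(\region),*}_{\bdry{\region},\elim{\region}}\bigl(\ml^{(\region),*}_{\elim{\region},\elim{\region}}\bigr)^{-1}$ and the $i$-th diagonal block of $\mga$ is $\sum_{\region\in\ct(i)}\bigl(\ml^{(\region),*}_{\elim{\region},\elim{\region}}\bigr)^{-1}$ --- exactly the expressions defining $\mpi^{(i)}$ and $\widetilde{\mga}$ in \cref{thm:overview-L-inv-approx}, but with $\ml^{(\region),*}$ in place of $\ml^{(\region)}$.

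\emph{Step 2 (stability calculus and telescoping).} I will use that spectral approximation $\approx_{\epsilon}$ is preserved under summing PSD matrices, under congruence $\mm\mapsto\mm^{\top}(\cdot)\mm$, under passing to a principal submatrix, under taking Schur complements, and under (pseudo-)inversion, and that it composes ($\ma\approx_a\mb$ and $\mb\approx_b\mathbf{C}$ give $\ma\approx_{a+b}\mathbf{C}$); all the Laplacians involved are connected with a common kernel, so the pseudo-inverses are harmless. Now interpolate between $\ml^{-1}$ and the right-hand side of \eqref{eq:overview_Linv_approx} by a chain of hybrid matrices obtained from \eqref{eq:overview:Linv} by using the given $\ml^{(\region)}$ (hence $\mpi,\widetilde{\mga}$) at an initial segment of levels of $\ct$ and the exact $\ml^{(\region),*}$ elsewhere; since the constant-size leaf Laplacians may be taken exactly, $\eta$ one-level switches carry $\ml^{-1}$ all the way to the claimed approximation. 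A single switch at level $j$ changes exactly one ``tail'' block, sitting inside a fixed outer congruence, from the exact $\mmu^{(j)\top}\bigl[\mga_j\oplus\sc(\ml,C_j)^{-1}\bigr]\mmu^{(j)}=\sc(\ml,C_{j-1})^{-1}$ to $\mpi^{(j)\top}\bigl[\widetilde{\mga}_j\oplus\sc(\ml,C_j)^{-1}\bigr]\mpi^{(j)}$; since $\sum_{\region\in\ct(j)}\ml^{(\region)}\approx_{\epssc}\sum_{\region\in\ct(j)}\ml^{(\region),*}=\sc(\ml,C_{j-1})$ by additivity, these are block-$\mathsf{LDL}$ re-assemblies of spectrally $\epssc$-close matrices, which I would show are $\epssc$-close. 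Composing the $\eta$ resulting steps and inverting once more yields \eqref{eq:overview_Linv_approx} with total error $\eta\epssc$.

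\emph{The main obstacle.} The crux is this per-level step: replacing a PSD matrix $\ma$ by a spectrally $\epssc$-close $\ma'$ should turn the block-$\mathsf{LDL}$ re-assembly $\mathbf{W}_{\ma}^{\top}\bigl[(\ma_{F,F})^{-1}\oplus Z\bigr]\mathbf{W}_{\ma}$, with $\mathbf{W}_{\ma}=\mi-\ma_{C,F}\ma_{F,F}^{-1}$ ($F$ the eliminated set, $C$ its complement), into a spectrally $\epssc$-close matrix. This is \emph{false} for a generic fixed $Z$ --- the elimination factor $\mathbf{W}_{\ma}$ can rotate kernels, so that, e.g., $\ma$ with its pure-$C$ Schur complement removed need not be spectrally close to the analogue for $\ma'$ --- so the argument must use that here $Z=\sc(\ml,C_j)^{-1}$ is itself consistent with $\ma$ (indeed $\sc(\ml,C_j)=\sc(\ma,C_j)$ exactly, by the Step-1 decomposition one level higher), and, more essentially, that the family $\{\ml^{(\region)}\}$ is mutually consistent across the levels of $\ct$: each $\ml^{(\region)}$ is an approximate Schur complement of the \emph{already approximate} children's Laplacians, not merely $\epssc$-close to the true one. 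This cross-level consistency is what lets the per-level comparison cost a single $\epssc$ rather than two, and taking the leaf regions exactly is what trims the final bound to $\eta\epssc$ rather than $(\eta+1)\epssc$. Everything else --- the block-Cholesky algebra in Step 1 and the congruence bookkeeping in Step 2 --- is routine.
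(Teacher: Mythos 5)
Your proposal is correct in substance and follows essentially the same route as the paper: write down the exact nested-dissection factorization, then swap exact pieces for approximate ones one level at a time, paying one level's worth of error per swap via stability of $\approx_{\epsilon}$ under congruence, restriction to principal submatrices, Schur complements, and inversion. The one organizational difference is exactly where the obstacle you flag lives: your hybrid compares $\mmu^{(j)\top}\bigl[\mga_j\oplus Z\bigr]\mmu^{(j)}$ against $\mpi^{(j)\top}\bigl[\widetilde{\mga}_j\oplus Z\bigr]\mpi^{(j)}$ with the same fixed $Z$, which, as you correctly observe, is not generically stable under perturbing the elimination factor. The paper's induction never makes that comparison: it carries $(\ml^{(i)})^{-1}$ (with $\ml^{(i)}\defeq\sum_{H\in\ct(i)}\ml^{(H)}$) as the trailing block, rewrites it \emph{exactly} by block Cholesky on $F_i\cup C_i$ --- so $\mpi^{(i)}$ and $\ml^{(i)}_{F_i,F_i}$ are the exact factors of $\ml^{(i)}$ and cost nothing --- and then replaces only the innermost block $\sc(\ml^{(i)},C_i)$ by $\ml^{(i+1)}$ inside an exact congruence, using $\sc(\ml^{(i)},C_i)\approx\sc(\ml,C_i)\approx_{\epssc}\ml^{(i+1)}$, which follows from the level-wise decomposition $\sc(\ml,C_{i-1})=\sum_{H\in\ct(i)}\sc(\ml[H],C_{i-1}\cap V(H))$ and the per-node hypothesis alone. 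In particular, the cross-level consistency you invoke (each $\ml^{(H)}$ being an approximate Schur complement of the already-approximate children) is neither assumed in the theorem statement nor used in the paper's proof; it only affects whether the per-level cost is $\epssc$ or $2\epssc$, i.e.\ the constant in front of $\eta\epssc$, about which the paper's own accounting is itself loose and which is immaterial since $\epssc$ is later chosen as $O(1/\log m)$ up to constants.
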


	Compared to $\cref{eq:overview:Linv}$, we see that $\widetilde\mga$ approximates the middle block-diagonal matrix, and $\mpi^{(i)}$ approximates $\mmu^{(i)}$.
	
	To compute and maintain the necessary $\ml^{(H)}$'s as the edge weights undergo updates throughout the IPM, we have the following data structure:

\begin{restatable}[Schur complements maintenance]
	{theorem}{apxscMaintain}\label{thm:apxsc}
	Given a modified planar graph $G$ with $m$ edges and its separator tree $\ct$ with height $\eta = O(\log m)$, the deterministic data structure \textsc{DynamicSC} (\cref{alg:dynamicSC}) maintains the edge weights $\vw$ from the IPM, and at every node $H \in \ct$, maintains two vertex sets $F_H$ and $\bdry{H}$, and two Laplacians $\ml^{(H)}$ and $\tsc(\ml^{(H)}, \partial H \cup F_H)$ dependent on $\vw$.	It supports the following procedures:
	\begin{itemize}
		\item $\textsc{Initialize}(G,\vw\in\R_{>0}^{m}, \epssc > 0)$:
		Given a graph $G$, initial weights $\vw$, projection matrix approximation accuracy $\epssc$, preprocess in $\widetilde{O}(\epssc^{-2} m)$ time.
		\item $\textsc{Reweight}(\vw\in\R_{>0}^{m}, \text{given implicitly as a set of changed coordinates})$: 
		Update the weights to $\vw$, and update the relevant Schur complements in $\widetilde{O}( \epssc^{-2}  \sqrt{mK})$ time, where $K$ is the number of coordinates changed in $\vw$. 
		
		If $\collN$ is the set of leaf nodes in $\ct$ that contain an edge whose weight is updated,
		then $\ml^{(H)}$ and $\tsc(\ml^{(H)}, \bdry{\region})$ are updated only for nodes $H\in \pathT{\collN}$.
		\item Access to Laplacian $\ml^{(H)}$ at any node $H \in \ct$ in time $\widetilde{O}\left( \epssc^{-2} |\partial H \cup F_H|\right)$.
		\item Access to Laplacian $\tsc(\ml^{(H)}, \bdry{\region})$ at any node $H \in \ct$ in time $\widetilde{O}\left( \epssc^{-2} |\partial H|\right)$.
	\end{itemize}
	Furthermore, the $\ml^{(H)}$'s maintained by the data structure satisfy
	\begin{equation} \label{eq:L^(H)}
		 \ml^{(\region)} \approx_{\epssc} \sc(\ml[H], \bdry{H}\cup \elim{H}),
	\end{equation}
	for all $H\in \ct$ with high probability. The $\tsc(\ml^{(H)}, \bdry{\region})$'s maintained satisfy
	\begin{equation} \label{eq:tscL^(H)}
		 \tsc(\ml^{(H)}, \bdry{\region})\approx_{\epssc} \sc(\ml[\region], \bdry{\region})
	\end{equation}
	for all $H\in \ct$ with high probability.
\end{restatable}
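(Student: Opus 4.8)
The plan is to realize each $\ml^{(H)}$ by a single recursion on the separator tree $\ct$, verify the two spectral-approximation guarantees by bottom-up induction, and bound all running times through counting estimates on boundary sizes. Throughout I will use the standard facts that spectral approximation $\approx_\epsilon$ is preserved under addition of positive semidefinite matrices and under taking Schur complements onto a fixed vertex set, and that Schur complements compose (eliminating a set $A$ and then a disjoint set $B$ equals eliminating $A\cup B$ at once).

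\textbf{Construction.} At a leaf $H$ of $\ct$ (which has $O(1)$ edges) set $\ml^{(H)}\defeq\ml[H]$ exactly. At an internal node $H$ with children $D_1,D_2$, use the property of the separator tree that $\partial H\cup F_H=\partial D_1\cup\partial D_2$, and set
\[
\ml^{(H)}\defeq\tsc(\ml^{(D_1)},\partial D_1)+\tsc(\ml^{(D_2)},\partial D_2),
\]
padded by zeros to the index set $\partial D_1\cup\partial D_2$. In both cases $\tsc(\ml^{(H)},\partial H)$ is then computed by running the Kyng--Sachdeva approximate-Schur-complement routine on $\ml^{(H)}$ with accuracy $\delta\defeq\Theta(\epssc/\eta)$, which returns a Laplacian supported on $\partial H$ with $\otilde(\epssc^{-2}|\partial H|)$ edges. \textsc{Initialize} performs this bottom-up over all of $\ct$; \textsc{Reweight} recomputes $\ml^{(H)}$ and $\tsc(\ml^{(H)},\partial H)$ bottom-up, but only at nodes $H\in\pathT{\collN}$ --- this locality claim is immediate because any $H\notin\pathT{\collN}$ has no reweighted edge in its region, so $\ml[H]$ and every quantity derived from it are unchanged. \textsc{Access} returns the stored sparse matrices.

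\textbf{Correctness.} I will prove by induction up $\ct$ that $\ml^{(H)}\approx_{\epssc}\sc(\ml[H],\partial H\cup F_H)$ and $\tsc(\ml^{(H)},\partial H)\approx_{\epssc}\sc(\ml[H],\partial H)$. The base case is exact. For the inductive step, since $\ml[H]=\ml[D_1]+\ml[D_2]$ is an edge-disjoint union and each $F_{D_i}$ consists of vertices private to $D_i$, eliminating $F_{D_1}$ then $F_{D_2}$ yields the identity $\sc(\ml[H],\partial D_1\cup\partial D_2)=\sc(\ml[D_1],\partial D_1)+\sc(\ml[D_2],\partial D_2)$, and composition gives $\sc(\sc(\ml[D_i],\partial D_i\cup F_{D_i}),\partial D_i)=\sc(\ml[D_i],\partial D_i)$. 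Chaining the inductive hypothesis for $\ml^{(D_i)}$ with preservation of $\approx_\epsilon$ under Schur complements and with the $\delta$-accuracy of the sparsifier, $\tsc(\ml^{(D_i)},\partial D_i)$ approximates $\sc(\ml[D_i],\partial D_i)$ with accuracy one $\delta$ worse than $\ml^{(D_i)}$; summing over the two children and sparsifying once more, a node's accuracy exceeds that of its children by at most $2\delta$, so since $\ct$ has height $\eta$ and leaves are exact, every node has accuracy $\le 2\eta\delta\le\epssc$ for a suitable constant in $\delta$. The Kyng--Sachdeva routine succeeds with high probability per invocation, so a union bound over the $O(m)$ nodes and the $\mathrm{poly}(m)$ \textsc{Reweight} calls made during the IPM yields the ``for all $H$, w.h.p.'' guarantees; these are precisely the hypotheses needed by \cref{thm:overview-L-inv-approx}.

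\textbf{Running time and the main obstacle.} Producing $\ml^{(H)}$ is a merge of two $\delta$-sparse Laplacians and the approximate Schur complement at $H$ runs in time nearly-linear in its input size, so the work at $H$ is $\otilde(\epssc^{-2}|\partial H\cup F_H|)$, and the access-time bounds are just the sizes of the stored matrices. Thus \textsc{Initialize} costs $\otilde(\epssc^{-2}\sum_{H\in\ct}|\partial H\cup F_H|)$ and \textsc{Reweight} costs $\otilde(\epssc^{-2}\sum_{H\in\pathT{\collN}}|\partial H\cup F_H|)$, and the theorem reduces to two combinatorial estimates on the separator tree: $\sum_{H\in\ct}|\partial H\cup F_H|=\otilde(m)$, and, when $\collN$ arises from $K$ reweighted edges, $\sum_{H\in\pathT{\collN}}|\partial H\cup F_H|=\otilde(\sqrt{mK})$. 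The first holds because the sets $F_H$ partition $V(G)$ while boundaries satisfy $|\partial H|=O(\sqrt{|H|})$ over geometrically shrinking region sizes. The second --- the place where planarity (via $O(\sqrt{k})$-sized separators) is actually exploited, and what drives the $\sqrt{mK}$ update time --- follows since level $i$ of $\ct$ contributes at most $\min(2^i,K)$ nodes to $\pathT{\collN}$, each of boundary size $O(\sqrt{m/2^i})$, and $\sum_i\min(2^i,K)\sqrt{m/2^i}=O(\sqrt{mK})$. I expect that last estimate, together with pinning down the exact vertex partition at each node so the Schur-complement identities hold on the nose, to be the only genuinely delicate part; everything else is bookkeeping with the preservation properties of $\approx_\epsilon$.
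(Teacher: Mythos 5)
Your proposal is correct and follows essentially the same route as the paper: leaves store the exact Laplacian, internal nodes sum the children's sparsified Schur complements, correctness is a bottom-up induction using the decomposition and transitivity of Schur complements with $\Theta(\epssc/\eta)$ error added per level, and the runtime reduces to bounding $\sum_{H\in\pathT{\collN}}|\partial H\cup F_H|$. The one imprecision is your per-node claim $|\partial H|=O(\sqrt{m/2^i})$: a node's boundary accumulates separator vertices from \emph{all} of its ancestors and can be $\Theta(\sqrt m)$ at any depth, so the $\otilde(\sqrt{mK})$ bound instead needs the level-by-level charging of boundary vertices to ancestor separators (telescoping plus Cauchy--Schwarz over the edge-disjoint regions at each level), which is exactly the paper's \cref{lem:planarBoundChangeCost}.
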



\subsection{Implicit representations using tree operator}
\label{subsec:overview_representation}

In this section, we outline the data structures for maintaining the flow and slack solutions $\vf, \vs$ as needed in \cref{alg:IPM_centering}, \cref{line:maintain_f_s}.
Recall from \cref{lem:approx-projections}, at IPM step $k$ with step direction $\vv^{(k)}$, we want to update
\begin{align*}
	\vs &\leftarrow\vs+\ot h\mw^{-1/2}\widetilde{\mproj}_{\vw} \vv^{(k)}, \\
	\vf &\leftarrow\vf+ h \mw^{1/2}\vv^{(k)} - h \mw^{1/2} \widetilde{\mproj}'_{\vw} \vv^{(k)},
\end{align*}
for some approximate projection matrices $\widetilde{\mproj}_{\vw}$ and $\widetilde{\mproj}'_{\vw}$ satisfying $\range{\mw^{-1/2}\widetilde{\mproj}_{\vw}}\subseteq \range{\mb}$ and $\mb^\top \mw^{1/2}\widetilde{\mproj}'_{\vw} = \mb^\top \mw^{1/2}$.
The first term for the flow update is straightforward to maintain.
For this overview, we therefore focus on maintaining the second term
\[
\pf \leftarrow \pf + h \mw^{1/2} \widetilde{\mproj}'_{\vw} \vv^{(k)}.
\]
Computing $\widetilde{\mproj}_{\vw} \vv^{(k)}$ and $ \widetilde{\mproj}'_{\vw} \vv^{(k)}$ respectively
is too costly to do at every IPM step.
Instead, we maintain vectors $\vs_0, \pf_0, \vz$, and implicitly maintain two linear operators $\mm^{\slack}, \mm^{\flow}$ which depend on the weights $\vw$,
so at the end of every IPM step, the correct current solutions $\vs, \pf$ are recoverable via the identity
\begin{align*}
	\vs &= \vs_0 + \mm^{\slack} \vz \\
	\pf &= \pf_0 + \mm^{\flow} \vz.
\end{align*}
In this subsection, we abstract away the difference between slack and flow, and give a general data structure \textsc{MaintainRep} to maintain $\vx = \vy + \mm \vz$ for $\mm$ with a special tree structure.

At a high level, \textsc{MaintainRep} implements the IPM operations \textsc{Move} and \textsc{Reweight} as follows: 
To move in step $k$ with direction $\vv^{(k)}$ and step size $\alpha^{(k)}$, 
the data structure first computes $\vz^{(k)}$ as a function of $\vv^{(k)}$,
then updates $\vz \leftarrow \vz + \alpha^{(k)} \vz^{(k)}$, which translates to the desired overall update in $\vx$ of $\vx \leftarrow \vx + \mm (\alpha^{(k)} \vz^{(k)})$. 
To reweight with new weights $\vw^\new$ (which does not change the value of $\vx$), 
the data structure first computes $\mm^{\new}$ using $\vw^\new$ and $\Delta \mm \defeq \mm^{\new} - \mm$,
then updates $\mm \leftarrow \mm^{\new}$. This causes an increase in value in the $\mm \vz$ term by $\Delta \mm \vz$, 
which is then offset in the $\vy$ term with $\vy \leftarrow \vy - \Delta \mm \vz$.

In later sections, we will define $\mm^{\slack}$ and $\mm^{\flow}$ so that $\mm^{\slack} \vz^{(k)} = \mw^{1/2} \widetilde{\mproj}_{\vw} \vv^{(k)}$ and $\mm^{\flow} \vz^{(k)} = \mw^{-1/2} \widetilde{\mproj}'_{\vw} \vv^{(k)}$ for the desired approximate projection matrices.
With these operators appropriately defined, observed that \textsc{MaintainRep} correctly captures the updates to $\vs$ and $\pf$ at every IPM step. 

Let us now discuss the definition of $\vz$, which is common to both slack and flow:
Recall the \textsc{DynamicSC} data structure from the previous section maintains some Laplacian $\ml^{(H)}$ for every node $H$ in the separator tree $\ct$, so that at each IPM step, we can implicitly represent the matrices $\mpi^{(0)}, \cdots, \mpi^{(\eta-1)}, \widetilde{\mga}$ based on the current weights $\vw$, which together give an $\eta \epssc$-approximation of $\ml^{-1}$. 
\textsc{MaintainRep} will contain a \textsc{DynamicSC} data structure, so we can use these Laplacians in the definition of $\vz$:

At step $k$, let
\[
	\vz^{(k)} \defeq \widetilde{\mga} \mpi^{(\eta-1)} \cdots \mpi^{(0)} \mb^{\top} \mw^{1/2} \vv^{(k)},
\]
where $\widetilde{\mga}$, the $\mpi^{(i)}$'s, and $\mw$ are based on the state of the data structure at the end of step $k$.
$\vz$ is defined to be the accumulation of $\alpha^{(i)} \vz^{(i)}$'s up to the current step; that is, at the end of step $k$,
\[
	\vz = \sum_{i=1}^{k} \alpha^{(i)} \vz^{(i)}.
\]
Rather than naively maintaining $\vz$, we decompose $\vz$ and explicitly maintaining $c, \zprev$, and $\zsum$,
such that
\[
\vz \defeq c \cdot \zprev + \zsum,
\]
where we have the additional guarantee that at the end of IPM step $k$,
\[
\zprev = \widetilde\mga \mpi^{(\eta - 1)} \cdots \mpi^{(0)} \mb^{\top} \mw^{1/2} \vv^{(k)}.
\]
The other term, $\zsum$, is some remaining accumulation so that the overall representation is correct.

The purpose of this decomposition of $\vz$ is to facilitate sparse updates to $\vv$ between IPM steps: 
Suppose $\vv^{(k)}$ differ from $\vv^{(k-1)}$ on $K$ coordinates, then we can update $\zprev$ and $\zsum$ with runtime as a function of $K$, while producing the correct overall update in $\vz$.
Specifically, we decompose $ \vv^{(k)} =  \vv^{(k-1)} + \Delta \vv$.
We compute $\Delta \zprev =  \widetilde \mga \mpi^{(\eta-1)}\cdots \mpi^{(0)}  \mb^{\top} \mw^{1/2} \Delta \vv$, and then set
\[
\zprev \leftarrow \zprev + \Delta \zprev, \;
\zsum \leftarrow \zsum - c \cdot \Delta \zprev,
c \leftarrow c + \alpha, \;
\]
which can be performed in $O(nnz(\Delta \zprev))$ time.

Let us briefly discuss how to compute $\widetilde \mga \mpi^{(\eta-1)}\cdots \mpi^{(0)} \vd$ for some vector $\vd$. We use the two-layer nested dissection setup from \cref{subsec:overview_sc} for intuition, so
\[
\widetilde \mga \mpi^{(0)} \vd = 
\left[\begin{array}{cc}
	\ml_{F,F}^{-1} & \mzero\\
	\mzero & \tsc(\ml,C)^{-1}
\end{array}\right]\left[\begin{array}{cc}
	\mi & 0\\
	-\ml_{C,F}\ml_{F,F}^{-1} & \mi
\end{array}\right] \vd.
\]

The only difficult part for the next left matrix multiplication is $-\ml_{C,F}\ml_{F,F}^{-1}$.
However, we note that $\ml_{F,F}$ is block-diagonal with two blocks, each corresponding to a region generated during nested dissection. Hence, we can solve the Laplacians on the two subgraphs separately.
%
Next, we note that the two terms of $\ml_{C,F}\ml_{F,F}^{-1} \vd$ are both fed into $\tsc(\ml,C)^{-1}$, and we solve this Laplacian in time linear in the size of $\tsc(\ml,C)$. The rest of the terms are not the bottleneck in the overall runtime.
In the more general nested-dissection setting with $O(\log n)$ layers, we solve a sequence of Laplacians corresponding to the regions given by paths in the separator tree. We can bound the runtime of these Laplacian solves by the size of the corresponding regions for the desired overall runtime.

On the other hand, to work with $\mm$ efficiently, we define the notion of a \emph{tree operator} $\mm$ supported on a tree.
In our setting, we use the separator tree $\ct$.
Informally, our tree operator is a linear operator mapping $\R^{V(G)}$ to $\R^{E(G)}$.
It is constructed from the concatenation of a collection of \emph{edge operators} and \emph{leaf operators} defined on the edges and leaves of $\ct$.
If $H$ is a node in $\ct$ with parent $P$, then the edge operator for edge $(H,P)$ will map vectors supported on $\bdry{P} \cup F_P$ to vectors supported on $\bdry{H} \cup F_H$.
If $H$ is a leaf node, the leaf operator for $H$ will map vectors on $\bdry{H} \cup F_H$ to vectors on $E(H)$.
In this way, we take advantage of the recursive partitioning of $G$ via $\ct$ to map a vector supported on $V(G)$ recursive to be supported on smaller vertex subsets and finally to the edges.
Furthermore, we will show that when edge weights update, the change to $\mm$ can be localized to a small collection of edge and leaf operators along some tree paths, thus allowing for an efficient implementation.
We postpone the formal definition of the operator until \cref{subsec:tree_operator}.

\begin{restatable}[Implicit representation maintenance] {theorem}{MaintainRepresentation} \label{thm:maintain_representation}
	Given a modified planar graph $G$ with $n$ vertices and $m$ edges, and its separator tree $\ct$ with height $\eta$,
	the deterministic data structure \textsc{MaintainRep} (\cref{alg:maintain_representation})
	 maintains the following variables correctly at the end of every IPM step:
	\begin{itemize}
		\item the dynamic edge weights $\vw$ and step direction $\vv$ from the current IPM step,
		\item a \textsc{DynamicSC} data structure on $\ct$ based on the current edge weights $\vw$,
		\item an implicitly represented tree operator $\mm$ supported on $\ct$ with complexity $T(K)$, \emph{computable using information from \textsc{DynamicSC}},
		\item scalar $c$ and vectors $\zprev, \zsum$, which together represent $\vz = c \zprev + \zsum$,
		such that at the end of step $k$,
		\[
		\vz = \sum_{i=1}^{k} \alpha^{(i)} \vz^{(i)},
		\]
		where $\alpha^{(i)}$ is the step size $\alpha$ given in \textsc{Move} for step $i$,
		\item $\zprev$ satisfies $\zprev = \widetilde{\mga} \mpi^{(\eta-1)} \cdots \mpi^{(0)} \mb^{\top} \mw^{1/2} \vv$,
		\item an offset vector $\vy$ which together with $\mm, \vz$ represent $\vx=\vy+\mm\vz$, such that after step $k$,
		\[
			\vx = \vx^{\init}+\sum_{i=1}^{k} \mm^{(i)} (\alpha^{(i)} \vz^{(i)}),
		\]
		where $\vx^{\init}$ is an initial value from \textsc{Initialize}, and $\mm^{(i)}$ is the state of $\mm$ after step $i$.
	\end{itemize}
	The data structure supports the following procedures:
	\begin{itemize}
		\item $\textsc{Initialize}(G, \ct, \mm, \vv\in\R^{m},\vw\in\R_{>0}^{m}, \vx^{\init} \in \R^m, \epsilon_{\mproj} > 0)$:
		Given a graph $G$, its separator tree $\ct$, a tree operator $\mm$ supported on $\ct$ with complexity $T$,
		initial step direction $\vv$, initial weights $\vw$, initial vector $\vx^{\init}$, and target projection matrix accuracy $\epsilon_{\mproj}$, preprocess in $\widetilde{O}(\epssc^{-2}m+T(m))$ time and set $\vx \leftarrow \vx^{\init}$.

		\item $\textsc{Reweight}(\vw \in\R_{>0}^{m}$ given implicitly as a set of changed coordinates):
		Update the weights to $\vw$.
		Update the implicit representation of $\vx$ without changing its value, so that all the variables in the data structure are based on the new weights.

		The procedure runs in 
		$\widetilde{O}(\epsilon_{\mproj}^{-2}\sqrt{mK}+T(K))$ total time,
		where $K$ is an upper bound on the number of coordinates changed in $\vw$ and the number of leaf or edge operators changed in $\mm$.
		There are most $\O(K)$ nodes $\region\in \ct$ for which $\zprev|_{F_H}$ and $\zsum|_{F_H}$ are updated.

		\item $\textsc{Move}(\alpha \in \R$, $\vv \in \R^{n}$ given implicitly as a set of changed coordinates):
		Update the current direction to $\vv$, and then $\zprev$ to maintain the claimed invariant.
		Update the implicit representation of $\vx$ to reflect the following change in value:
		\[
			\vx \leftarrow \vx + \mm (\alpha \zprev).
		\]
		The procedure runs in $\widetilde{O}(\epsilon_{\mproj}^{-2} \sqrt{mK})$ time,
		where $K$ is the number of coordinates changed in $\vv$ compared to the previous IPM step.

		\item $\textsc{Exact}()$:
		Output the current exact value of $\vx=\vy + \mm \vz$ in $\O(T(m))$ time.
	\end{itemize}
\end{restatable}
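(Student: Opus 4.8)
The plan is to maintain an internal state and show each procedure preserves the stated invariants, arguing correctness by induction on the IPM steps and runtime from \cref{thm:apxsc}, the tree-operator complexity $T$, and the separator-tree bound $\sum_{H\in\pathT{\collN}}|\bdry H\cup F_H| = \O(\sqrt{mK})$ valid whenever $\collN$ consists of at most $K$ leaves. The state consists of: a \textsc{DynamicSC} data structure on $\ct$; the implicitly represented tree operator $\mm$, whose edge and leaf operators are read off from the Laplacians $\ml^{(H)}$ stored by \textsc{DynamicSC}; the scalar $c$; and the explicit vectors $\zprev,\zsum,\vy$. \textsc{Initialize} would build \textsc{DynamicSC} in $\O(\epssc^{-2}m)$ time, construct $\mm$ in $T(m)$ time, set $\vy\leftarrow\vx^{\init}$, $c\leftarrow 0$, $\zsum\leftarrow\vzero$, and compute $\zprev\leftarrow\widetilde{\mga}\,\mpi^{(\eta-1)}\cdots\mpi^{(0)}\mb^\top\mw^{1/2}\vv$ by a single full recursive pass over $\ct$ in $\O(\epssc^{-2}m)$ time; this gives $\vz=\vzero$ and $\vx=\vx^{\init}$, the base case. \textsc{Exact} would form $\vz=c\,\zprev+\zsum$ explicitly and return $\vy+\mm\vz$, the only costly part being the full application of $\mm$, which takes $T(m)$ time.

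For \textsc{Move}, given $\Delta\vv=\vv^{(k)}-\vv^{(k-1)}$ supported on $K$ coordinates, I would compute $\Delta\zprev=\widetilde{\mga}\,\mpi^{(\eta-1)}\cdots\mpi^{(0)}\mb^\top\mw^{1/2}\Delta\vv$ by the recursive downward-then-upward scheme sketched in \cref{subsec:overview_sc}; since $\mb^\top\mw^{1/2}\Delta\vv$ is supported on the endpoints of the $K$ changed edges, only the nodes of $\pathT{\collN}$ (with $\collN$ the at most $K$ leaves owning those edges) are visited, so this costs $\O(\epssc^{-2}\sqrt{mK})$ and $\Delta\zprev$ is supported on $\O(\sqrt{mK})$ coordinates inside $\bigcup_{H\in\pathT{\collN}}F_H$. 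Then I would set $\zprev\leftarrow\zprev+\Delta\zprev$, $\zsum\leftarrow\zsum-c\,\Delta\zprev$, and $c\leftarrow c+\alpha$ (in that order), which is $\O(\sqrt{mK})$ more work and touches $\zprev|_{F_H},\zsum|_{F_H}$ only at the $\O(K)$ nodes of $\pathT{\collN}$. A one-line computation shows the represented $\vz=c\,\zprev+\zsum$ goes from $\vz_{\mathrm{old}}$ to $\vz_{\mathrm{old}}+\alpha\,\zprev_{\mathrm{new}}$; since \textsc{Move} does not change the weights, $\zprev_{\mathrm{new}}=\widetilde{\mga}\,\mpi^{(\eta-1)}\cdots\mpi^{(0)}\mb^\top\mw^{1/2}\vv^{(k)}=\vz^{(k)}$, so the $\zprev$-invariant holds and $\vz=\sum_{i\le k}\alpha^{(i)}\vz^{(i)}$; and since $\mm,\vy$ are untouched, $\vx=\vy+\mm\vz$ increases by exactly $\mm(\alpha\,\zprev_{\mathrm{new}})$.

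For \textsc{Reweight}, I would first call \textsc{DynamicSC}.\textsc{Reweight}($\vw$) in $\O(\epssc^{-2}\sqrt{mK})$ time; by \cref{thm:apxsc} the Laplacians $\ml^{(H)}$, hence the edge and leaf operators of $\mm$, change only on $\pathT{\collN}$, so $\Delta\mm\defeq\mm^{\mathrm{new}}-\mm$ involves only $\O(K)$ operators. Next I would recompute $\zprev$ for the new weights by the same recursive pass (cost $\O(\epssc^{-2}\sqrt{mK})$, as the Laplacians and the vector $\mb^\top\mw^{1/2}\vv$ change only along $\pathT{\collN}$), obtaining a change $\Delta\zprev$ supported on $\O(\sqrt{mK})$ coordinates, and set $\zsum\leftarrow\zsum-c\,\Delta\zprev$ so that $\vz=c\,\zprev+\zsum$ and all past $\vz^{(i)}$ are unchanged while the $\zprev$-invariant now reflects the new weights. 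Finally, to keep $\vx=\vy+\mm\vz$ fixed despite $\mm$ changing, I would extract $\vz$ restricted to $\bigcup_{H\in\pathT{\collN}}F_H$ from $c,\zprev,\zsum$ in $\O(\sqrt{mK})$ time (which suffices because $\Delta\mm\vz$ depends on $\vz$ only through these coordinates), apply $\Delta\mm$ in $T(K)$ time via the tree-operator formalism of \cref{subsec:tree_operator}, and set $\vy\leftarrow\vy-\Delta\mm\vz$. The total is $\O(\epssc^{-2}\sqrt{mK}+T(K))$, with $\O(K)$ nodes having $\zprev|_{F_H},\zsum|_{F_H}$ updated.

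The main obstacle, I expect, is the structural and runtime claim underlying both \textsc{Move} and \textsc{Reweight}: that a $K$-sparse change to $\vv$ or to $\vw$ propagates through $\widetilde{\mga}\,\mpi^{(\eta-1)}\cdots\mpi^{(0)}\mb^\top\mw^{1/2}$ while touching only the nodes of $\pathT{\collN}$, and that the Laplacian solves performed there telescope to $\O(\epssc^{-2}\sqrt{mK})$ rather than $\O(\epssc^{-2}\sqrt m\cdot K)$. This needs the block-diagonality of $\ml_{F,F}$ in the nested dissection (so region solves decouple) together with the separator-tree overlap bound. The second delicate point is establishing that $\Delta\mm$ is confined to $\O(K)$ operators and that $\Delta\mm\vz$ stays within the budget $T(K)$; both follow from the localization guarantee of \cref{thm:apxsc} and the tree-operator definition, which is why those results are proved first. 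The remainder is bookkeeping to verify that \textsc{Move} and \textsc{Reweight} each carry out an exact re-decomposition of $\vz=c\,\zprev+\zsum$ and $\vx=\vy+\mm\vz$.
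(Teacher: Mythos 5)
Your proposal is correct and follows essentially the same route as the paper: the same state $(c,\zprev,\zsum,\vy)$ with $\mm$ stored as instructions reading off \textsc{DynamicSC}, the same update order $\zprev\leftarrow\zprev+\Delta\zprev$, $\zsum\leftarrow\zsum-c\,\Delta\zprev$, $c\leftarrow c+\alpha$ for \textsc{Move}, and the same $\vy\leftarrow\vy-\Delta\mm\,\vz$ offset for \textsc{Reweight}. The one step you gloss over is the mechanism behind ``recompute $\zprev$ for the new weights by the same recursive pass restricted to $\pathT{\collN}$'': since $\mb^{\top}\mw^{1/2}\vv$ is dense, the paper makes this rigorous by maintaining the intermediate vector $\vu=\mpi^{(\eta-1)}\cdots\mpi^{(0)}\mb^{\top}\mw^{1/2}\vv$ and inverting then re-applying only the factors $(\mi-\mx^{(H)})$ for $H\in\pathT{\collN}$, using that these factors commute across non-ancestor pairs and have explicit inverses $(\mi+\mx^{(H)})$ --- exactly the obstacle you flag at the end, so the plan is sound but this sub-argument still needs to be carried out.
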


\subsection{Solution approximation}
\label{subsec:overview_sketch}

In the flow and slack maintenance data structures,
one key operation is to maintain vectors $\of,\os$ that are close to $\vf,\vs$ throughout the IPM.
Since we have implicit representations of the solutions of the form $\vx = \vy + \mm \vz$, 
we now show how to maintain $\ox$ close to $\vx$.
To accomplish this, we will give a meta data structure that solves this in
a more general setting. The data structure involves three steps; 
the first two steps are similar to \cite{treeLP} and the key contribution is the last step:
\begin{enumerate}
	\item We maintain an approximate vector by detecting coordinates of the exact vector $\vx$ with large changes. 
	In step $k$ of the IPM, for every $\ell$ such that $2^\ell | k$, we consider all coordinates of the approximate vector $\ox$ that did not change in the last $2^\ell$ steps. 
	If any of them is off by more than $\frac{\delta}{2\left\lceil \log m\right\rceil}$ from $\vx$, it is updated. 
	We can prove that each coordinate of $\ox$ has additive error at most $\delta$ compared to $\vx$. 
	The number of updates to $\ox$ will be roughly $O(2^{2\ell_k})$, where $2^{\ell_k}$ is the largest power of $2$ that divides $k$. This guarantees that $K$-sparse updates only happen $\sqrt{m/K}$ times throughout the IPM algorithm.
	\item We detect coordinates with large changes in $\vx$ via a random sketch and sampling using the separator tree. 
	We can sample a coordinate with probability exactly proportional to the magnitude of its change, when given access to the approximate sum of probabilities in each region of the separator tree and to the exact value of any single coordinate of $\vx$.
	\item We show how to maintain random sketches for vectors of the form
	$\vx = \vy + \mm \vz$,
	where $\mm$ is an implicit tree operator supported on a tree $\ct$. 
	Specifically, to maintain sketches of $\mm \vz$, we store intermediate sketches for every complete subtree of $\ct$ at their roots. When an edge operator of $\mm$ or a coordinate of $\vz$ is modified, we only need to update the sketches along a path in $\ct$ from a node to the root. 
	For our use case, the cost of updating the sketches at a node $\region$ will be proportional to its separator size, so that a $K$-sparse update takes $\otilde(\sqrt{mK})$ time. 
\end{enumerate}

While the data structure is randomized, it is guaranteed to
work against an adaptive adversary that is allowed to see the entire
internal state of the data structure, including the random bits.

\begin{restatable}[Approximate vector maintenance with tree operator]{theorem}{vectorTreeMaintain} \label{thm:VectorTreeMaintain}
Given a constant degree tree $\ct$ with height $\eta$ that supports tree operator $\mm$ with complexity $T$, 
there is a randomized data structure \textsc{MaintainApprox} that
takes as input the dynamic variables $\mm, c, \zprev, \zsum, \vy, \md$ at every IPM step,
and maintains the approximation $\ox$ to $\vx \defeq \vy + \mm \vz = \vy + \mm(c \cdot \zprev + \zsum)$ satisfying $\norm{\md^{1/2} (\vx - \ox)}_\infty \leq \delta$.
It supports the following procedures: 
\begin{itemize}
\item $\textsc{Initialize}(\text{tree } \ct, \text{tree operator } \mm, c \in \R, \zprev \in \R^n, \zsum \in \R^n, \vy\in \R^m, \md \in \R^{n \times n}, \rho>0, \delta>0)$:
Initialize the data structure with initial vector $\vx = \vy + \mm (c \zprev+ \zsum)$, diagonal scaling matrix $\md$, 
target approximation accuracy $\delta$, success probability $1-\rho$, in $O(m\eta^2\log m\log (\frac{m}{\rho}))$ time.
Initialize $\ox \leftarrow \vx$.
\item $\textsc{Approximate}(\mm, c, {\zprev}, {\zsum}, \vy, \md)$: 
Update the internal variables to their new iterations as given.
Then output a vector $\ox$ such that $\|\md^{1/2}(\vx-\ox)\|_{\infty}\leq\delta$
for the current vector $\vx$ and the current diagonal scaling $\md$. 

\end{itemize}
Suppose $\|\vx^{(k+1)}-\vx^{(k)}\|_{\md^{(k+1)}}\leq\beta$ \emph{for all $k$}, 
where $\md^{(k)}$ and $\vx^{(k)}$ are the $\md$ and
$\vx$ at the $k$-th call to \textsc{Approximate}. Then, for the $k$-th call to \textsc{Approximate}, we have
\begin{itemize}
\item the data structure first updates $\ox_i\leftarrow \vx_i^{(k-1)}$ for the coordinates $i$ with $\md_{ii}^{(k)}\neq\md_{ii}^{(k-1)}$, then updates $\ox_i\leftarrow \vx_i^{(k)}$ for 
$O(N_k \defeq 2^{2\ell_{k}}(\beta/\delta)^{2}\log^{2}m)$ 
coordinates, where $\ell_{k}$ is the largest integer $\ell$ with $k=0\bmod2^{\ell}$.
\item The amortized time cost of $\textsc{Approximate}$ is
\[
\Theta(\eta^{2}\log(\frac{m}{\rho})\log m)\cdot T(\eta \cdot(N_{k-2^{\ell_k}}+| \mathcal{S}|)),
\]
where $\mathcal{S}$ is the set of nodes $H$ where either $\mm_{(H,P)}$, $\mj_H$, $\zprev|_{F_H}$, or $\zsum|_{F_H}$ changed, or where $\vy_e$ or $\md_{e, e}$ changed for some edge $e$ in $H$, compared to the $(k-1)$-th step.
\end{itemize}
\end{restatable}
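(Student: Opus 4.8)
The plan is to assemble the data structure from the three layers outlined before the statement, proving correctness and runtime for each and then composing them. For the first layer, I would implement the standard "lazy sampling at dyadic scales" scheme from \cite{treeLP}: at IPM step $k$, for every $\ell$ with $2^\ell \mid k$, re-examine all coordinates that have not been refreshed in the last $2^\ell$ steps and refresh any whose $\md^{1/2}$-weighted discrepancy from $\vx$ exceeds $\delta/(2\lceil\log m\rceil)$. The correctness claim $\norm{\md^{1/2}(\vx-\ox)}_\infty\le\delta$ follows from a telescoping argument over the at most $\lceil\log m\rceil$ dyadic scales, exactly as in the cited work: any coordinate's drift since its last refresh is the sum of its drifts over the dyadic blocks covering that interval, each bounded by $\delta/(2\lceil\log m\rceil)$, plus a factor-$2$ slack for the change in $\md$ handled by the explicit pre-update $\ox_i\leftarrow\vx_i^{(k-1)}$ on coordinates where $\md_{ii}$ changed. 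The bound $N_k = O(2^{2\ell_k}(\beta/\delta)^2\log^2 m)$ on the number of refreshed coordinates comes from the hypothesis $\norm{\vx^{(k+1)}-\vx^{(k)}}_{\md^{(k+1)}}\le\beta$: over $2^{\ell_k}$ steps the total $\md$-weighted movement is at most $2^{\ell_k}\beta$ by triangle inequality, and each refreshed coordinate contributes at least $(\delta/(2\lceil\log m\rceil))^2$ to the sum of squared movements (after a Cauchy–Schwarz / potential argument converting the $\ell_2$-per-step bound into a squared-$\ell_2$-over-the-block bound), giving $N_k = O((2^{\ell_k}\beta\lceil\log m\rceil/\delta)^2 \cdot 2^{-\ell_k}\cdot\ldots)$, which after the standard careful accounting is $O(2^{2\ell_k}(\beta/\delta)^2\log^2 m)$.

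For the second layer, I would describe how to \emph{find} the coordinates with large change without reading all of $\vx$. Using the tree $\ct$, maintain at each node a random sketch (e.g.\ a small number of $\pm1$ Johnson–Lindenstrauss rows, or $\ell_2$-sampling sketches) of the restriction of $\vx-\ox^{(\mathrm{old})}$ to the edges in that node's subtree. To locate a heavy coordinate, walk down from the root, at each node using the approximate subtree-sum of squared discrepancies (read off the sketch) to decide which child to descend into, with probability proportional to that child's contribution; at a leaf, read the single exact coordinate value (available in $\O(T(m))$ via \textsc{Exact}, or more cheaply for a single coordinate) and accept/reject. Repeating this sampling $\otilde(N_k)$ times and taking a union finds, with high probability, all coordinates whose weighted discrepancy exceeds the threshold; the adaptive-adversary robustness is obtained the usual way by refreshing the sketch matrices with fresh randomness after each query batch so that the adversary's choices are independent of the randomness used for the next detection (as in \cite{treeLP}), and by a union bound over the $\otilde(\sqrt m)$ IPM steps with failure probability $\rho$ driving the $O(m\eta^2\log m\log(m/\rho))$ initialization cost.

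The key contribution, and the main obstacle, is the third layer: maintaining these subtree sketches for $\vx = \vy + \mm(c\cdot\zprev+\zsum)$ when $\mm$ is an implicit \emph{tree operator} of complexity $T$. The plan is to store, at the root of every complete subtree $\ct_H$, the sketch $\mphi_H (\mm\vz)|_{E(H)}$, and to observe that because $\mm$ is built from edge operators along parent–child links and leaf operators, the contribution of $\vz|_{F_P}$ at a node $P$ propagates only down into the subtree rooted at $P$. Consequently, when an edge/leaf operator of $\mm$ changes, or a coordinate of $\zprev$ or $\zsum$ changes, or $\vy_e$ or $\md_{e,e}$ changes for an edge $e$ in node $H$, only the sketches on the root-to-$H$ path need updating; recomputing the sketch at a node $\region$ costs (by the complexity-$T$ promise, applied to the $O(1)$-sized change localized at that node) an amount proportional to $|\bdry{\region}|$ up to the sketch width and $\eta$ factors. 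Summing over the at most $\eta(N_{k-2^{\ell_k}}+|\mathcal S|)$ affected paths and invoking the definition of operator complexity $T(\cdot)$ gives the amortized cost $\Theta(\eta^2\log(m/\rho)\log m)\cdot T(\eta\cdot(N_{k-2^{\ell_k}}+|\mathcal S|))$; the convexity/subadditivity of $T$ (bundling all affected nodes into one $T$-call) is what makes this bound clean. The delicate points I expect to fight with are: (i) correctly propagating the $c$ and $\zsum$ decomposition of $\vz$ through the sketches so that a change to the scalar $c$ alone does not force a full recomputation — this needs the sketch to be linear and stored in a form where the $c$-scaling factors out; (ii) making sure the "change set" $\mathcal S$ genuinely captures every node whose stored sketch becomes stale, in particular handling the interaction where a \textsc{Reweight} changes $\mm$ and simultaneously \textsc{DynamicSC} changes $\zprev,\zsum$ on the $\O(K)$ nodes promised by \cref{thm:maintain_representation}; and (iii) the amortization — charging the $N_{k-2^{\ell_k}}$ term (coordinates that were refreshed two dyadic-phases ago and whose sketch contributions must now be rolled into $\vy$) against the dyadic schedule so that the sum over all IPM steps telescopes to $\otilde(\sqrt m)$ total, mirroring Technique 2.
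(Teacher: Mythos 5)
Your three-layer architecture matches the paper's, and layers one and two are essentially the paper's argument (the dyadic refresh schedule with the telescoping $\delta/(2\lceil\log m\rceil)$ bound, and the random descent with accept/reject at a leaf). However, there are two genuine gaps.

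First, your mechanism for adaptive-adversary robustness --- ``refreshing the sketch matrices with fresh randomness after each query batch'' --- does not fit the time budget: replacing $\mphi$ forces recomputation of every stored sketch $\mphi\mm^{(H)}$, which is the full $\O(\sketchlen\, m)$ initialization cost, and paying that per IPM step gives $\otilde(m^{1.5})$ overall. The paper instead keeps the \emph{same} sketch $\mphi$ for the entire algorithm and gets robustness from the rejection step you already mention but do not exploit: because the acceptance probability is $\norm{\vq|_{E(u)}}^2/(2\,p_u\,\|\mphi\vq\|_2^2)$ with the \emph{exact} numerator, each accepted leaf is distributed exactly proportionally to $\norm{\vq|_{E(u)}}^2$, independent of the sketch's randomness, so no randomness is leaked to the adversary.

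Second, your layer-three storage scheme --- storing $\mphi_{E(H)}(\mm\vz)$ at the root of each subtree and updating only root-to-node paths --- is inconsistent with the structure of the tree operator. By the paper's Lemma \ref{lem:estimate-correctness}, $\mphi_{E(H)}\mm\vz$ contains the term $\mphi\mm^{(H)}\sum_{A}\mm_{H\leftarrow A}\mi_{F_A}\vz$ over all \emph{ancestors} $A$ of $H$; hence a single change to $\vz|_{F_G}$ at the root stales the stored value at \emph{every} node, not just along one path, and your claimed update cost fails. The paper's fix is to store only the bottom-up quantities $\mphi\mm^{(H)}$ and $\mphi\overline{\mm^{(H)}}\vz$ (which genuinely depend only on the subtree $\ct_H$ and so admit path-local updates), and to assemble the ancestor contribution on demand via \textsc{SumAncestors} during the sampling descent, which is affordable because only $\otilde(N)$ descent paths are taken. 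Relatedly, you do not specify how the snapshot $\vx^{(k-2^\ell)}$ is kept available in sketched form for each of the $O(\log m)$ dyadic scales simultaneously; the paper does this by defining, for each $\ell$, a doubled tree operator $\mm_\ell$ with edge operators $\diag(\mm^{(k)}_{(H,P)},\mm^{(k-2^\ell)}_{(H,P)})$ and leaf operators $\overline{\md}^{1/2}[\mj^{(k)}_H~\,\mj^{(k-2^\ell)}_H]$ so that the difference vector $\overline{\md}^{1/2}(\vx^{(k)}-\vx^{(k-2^\ell)})$ is itself of the form $\vy_\ell+\mm_\ell\vz_\ell$ with the same complexity $T$, and by charging the $T(\eta|\mathcal S^{(k-2^\ell)}|)$ part of each update back to step $k-2^\ell$ using concavity of $T$. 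Without some such device your amortized bound in terms of $N_{k-2^{\ell_k}}$ and $|\mathcal S|$ is not established.
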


\subsection{Slack projection} \label{subsec:dual_overview}

We want to use a \textsc{MaintainRep} data structure to implicitly maintain the slack solution $\vs$ throughout the IPM,
and use a \textsc{MaintainApprox} data structure to explicitly maintain
the approximate slack solution $\os$.

To use \textsc{MaintainRep}, it remains to define a suitable tree operator $\mm^{\slack}$, so that at IPM step $k$, the update in \textsc{MaintainRep} is the correct IPM slack update; that is:
\[
	\mm^{\slack} (\bar{t}h \cdot \zprev) = \bar{t} h \mw^{-1/2} \widetilde{\mproj}_{\vw} \vv^{(k)}.
\]

Let $\widetilde{\ml}^{-1}$ denote the approximation of $\ml^{-1}$ from \cref{eq:overview_Linv_approx}, maintained and computable with a \textsc{DynamicSC} data structure. We define
\[
	\widetilde{\mproj}_{\vw} = \mw^{1/2} \mb \widetilde{\ml}^{-1} \mb^\top \mw^{1/2} = 
	\mw^{1/2} \mb \mpi^{(0)} \cdots \mpi^{(\eta-1)} \widetilde\mga \mpi^{(\eta-1)} \cdots \mpi^{(0)} \mb^\top \mw^{1/2}.
\]
then $\widetilde{\mproj}_{\vw} \approx_{\eta \epssc} \mproj_{\vw}$, and $\range{\widetilde{\mproj}_{\vw}}=\range{\mproj_{\vw}}$ by definition. Hence, this suffices as our approximate slack projection matrix.

Using \cref{subsec:overview_representation}, we can write
\begin{equation}  \label{eq:overview-slack-update}
\widetilde{\mproj}_{\vw} \vv^{(k)} = \mw^{1/2} \mb \mpi^{(0) \top} \cdots \mpi^{(\eta - 1) \top} \zprev,
\end{equation}
where $\zprev = \widetilde{\mga} \mpi^{(\eta-1)} \cdots \mpi^{(0)} \mb^\top \mw^{1/2} \vv^{(k)}$ at the end of IPM step $k$, as defined in the previous section. 
The remaining matrix multiplication on the left in \cref{eq:overview-slack-update} can indeed be represented by a tree operator $\mm$ on the tree $\ct$. 
Intuitively, observe that each $\mpi^{(i)}$ operates on level $i$ of $\ct$ and can be decomposed to be written in terms of the nodes at level $i$.
Furthermore, the $\mpi^{(i)}$'s are applied in order of descending level in $\ct$. 
Finally, at the leaf level, $\mw^{1/2} \mb$ maps vectors on vertices to vectors on edges.
In \cref{sec:slack_projection}, we present the exact tree operator and its correctness proof.
With it, we have
\[
	\widetilde{\mproj}_{\vw}\vv^{(k)} = \mw^{1/2} \mm \zprev.
\]
We set $\mm^{\slack}$ to be $\mw^{1/2} \mm$, which is also a valid tree operator.

Now, we state the full data structure for maintaining slack.

\begin{restatable}[Slack maintenance]{theorem}{SlackMaintain}
	\label{thm:SlackMaintain}
	Given a modified planar graph $G$ with $m$ edges and its separator tree $\ct$ with height $\eta$,
	the randomized data structure \textsc{MaintainSlack} (\cref{alg:slack-maintain-main})
	implicitly maintains the slack solution $\vs$ undergoing IPM changes,
	and explicitly maintains its approximation $\os$,
	and supports the following procedures with high probability against an adaptive adversary:
	\begin{itemize}
		\item $\textsc{Initialize}(G,\vs^{\init} \in\R^{m},\vv\in \R^{m}, \vw\in\R_{>0}^{m},\epsilon_{\mproj}>0,\overline{\epsilon}>0)$:
		Given a graph $G$, initial solution $\vs^{\init}$, initial direction $\vv$, initial weights $\vw$,
		target step accuracy $\epsilon_{\mproj}$ and target approximation accuracy
		$\overline{\epsilon}$, preprocess in $\widetilde{O}(m \epsilon_{\mproj}^{-2})$ time, 
		and set the representations $\vs \leftarrow \vs^{\init}$ and $\ox \leftarrow \vs$.
		
		\item $\textsc{Reweight}(\vw\in\R_{>0}^{m},$ given implicitly as a set of changed weights): 
		Set the current weights to $\vw$ in $\widetilde{O}(\epsilon_{\mproj}^{-2}\sqrt{mK})$ time,
		where $K$ is the number of coordinates changed in $\vw$.
		
		\item $\textsc{Move}(\alpha\in\mathbb{R},\vv\in\R^{m} $ given implicitly as a set of changed coordinates): 
		Implicitly update  $\vs \leftarrow \vs+\alpha\mw^{-1/2}\widetilde{\mproj}_{\vw} \vv$ for some
		$\widetilde{\mproj}_{\vw}$ with	$\|(\widetilde{\mproj}_{\vw} -\mproj_{\vw}) \vv \|_2 \leq\eta\epssc \norm{\vv}_2$,
		and $\widetilde{\mproj}_{\vw} \vv \in \range{\mb}$. 
		The total runtime is $\widetilde{O}(\epsilon_{\mproj}^{-2}\sqrt{mK})$ where $K$ is the number of coordinates changed in $\vv$.
		
		\item $\textsc{Approximate}()\rightarrow\R^{m}$: Return the vector $\os$
		such that $\|\mw^{1/2}(\os-\vs)\|_{\infty}\leq\overline{\epsilon}$
		for the current weight $\vw$ and the current vector $\vs$. 
		
		\item $\textsc{Exact}()\rightarrow\R^{m}$: 
		Output the current vector $\vs$ in $\O(m \epssc^{-2})$ time.
	\end{itemize}
	Suppose $\alpha\|\vv\|_{2}\leq\beta$ for some $\beta$ for all calls to \textsc{Move}.
	Suppose in each step, \textsc{Reweight}, \textsc{Move} and \textsc{Approximate} are called in order. Let $K$ denote the total number of coordinates changed in $\vv$ and $\vw$ between the $(k-1)$-th and $k$-th \textsc{Reweight} and \textsc{Move} calls. Then at the $k$-th \textsc{Approximate} call,
	\begin{itemize}
		\item the data structure first sets $\os_e\leftarrow \vs^{(k-1)}_e$ for all coordinates $e$ where $\vw_e$ changed in the last \textsc{Reweight}, then sets $\os_e\leftarrow \vs^{(k)}_e$ for $O(N_k\defeq 2^{2\ell_{k}}(\frac{\beta}{\overline{\epsilon}})^{2}\log^{2}m)$ coordinates $e$, where $\ell_{k}$ is the largest integer
		$\ell$ with $k=0\mod2^{\ell}$ when $k\neq 0$ and $\ell_0=0$. 
		\item The amortized time for the $k$-th \textsc{Approximate} call
		is $\widetilde{O}(\epsilon_{\mproj}^{-2}\sqrt{m(K+N_{k-2^{\ell_k}})})$.
	\end{itemize}
	
\end{restatable}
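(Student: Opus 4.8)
The plan is to implement \textsc{MaintainSlack} as a thin wrapper around two data structures that are already available: \textsc{MaintainRep} (\cref{thm:maintain_representation}), which will carry the implicit representation $\vs=\vy+\mm^{\slack}\vz$, and \textsc{MaintainApprox} (\cref{thm:VectorTreeMaintain}), which will produce the explicit approximation $\os$. The only genuinely new object is the slack tree operator $\mm^{\slack}$, constructed and analyzed in \cref{sec:slack_projection}; given that operator, each method of \textsc{MaintainSlack} is obtained by forwarding to the corresponding method of one of the two sub-structures, and the correctness and runtime bounds follow by composition.

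\textbf{Setting up the projection and the operator.} First I would pin down the approximate projection. Let $\widetilde{\ml}^{-1}$ be the operator from \cref{thm:overview-L-inv-approx}, assembled from the Laplacians $\ml^{(H)}$ that the \textsc{DynamicSC} structure inside \textsc{MaintainRep} maintains, and set $\widetilde{\mproj}_{\vw}\defeq\mw^{1/2}\mb\widetilde{\ml}^{-1}\mb^{\top}\mw^{1/2}=\mw^{1/2}\mb\,\mpi^{(0)}\cdots\mpi^{(\eta-1)}\widetilde{\mga}\,\mpi^{(\eta-1)}\cdots\mpi^{(0)}\,\mb^{\top}\mw^{1/2}$. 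Since $\ml^{-1}\approx_{\eta\epssc}\widetilde{\ml}^{-1}$ on $\range{\mb^{\top}}$ by \cref{thm:overview-L-inv-approx}, conjugating by $\mb^{\top}\mw^{1/2}$ preserves the Loewner relation, so $\widetilde{\mproj}_{\vw}\approx_{\eta\epssc}\mproj_{\vw}$; combined with $\|\mproj_{\vw}\|_{\op}=1$ this yields the required $\|(\widetilde{\mproj}_{\vw}-\mproj_{\vw})\vv\|_{2}\le\eta\epssc\|\vv\|_{2}$ (a standard conversion of a spectral bound to an $\ell_2$ bound). Moreover $\mw^{-1/2}\widetilde{\mproj}_{\vw}\vv=\mb\widetilde{\ml}^{-1}\mb^{\top}\mw^{1/2}\vv\in\range{\mb}$, so the slack step $\alpha\mw^{-1/2}\widetilde{\mproj}_{\vw}\vv$ always lies in the direction space of $\mathcal{S}$. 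Next I would invoke \cref{sec:slack_projection}: the left factor $\mb\,\mpi^{(0)}\cdots\mpi^{(\eta-1)}$, with the weight $\mw^{-1/2}$ folded in on the edge side and with all Laplacian blocks read off from \textsc{DynamicSC}, is a tree operator $\mm^{\slack}$ supported on $\ct$ — each $\mpi^{(i)}$ touches only level-$i$ nodes and decomposes over them, the $\mpi^{(i)}$ are applied in decreasing level order, and the leaf layer takes vertex-vectors to edge-vectors. Using that every $\ml^{(H)}$ has $\otilde(\epssc^{-2}|\bdry{H}\cup\elim{H}|)$ edges together with the separator bound $\sum_{H\in\pathT{\collN}}|\bdry{H}|=\otilde(\sqrt{m|\collN|})$, this operator has complexity $T^{\slack}(K)=\otilde(\epssc^{-2}\sqrt{mK})$, and by construction $\mm^{\slack}\zprev=\mw^{-1/2}\widetilde{\mproj}_{\vw}\vv^{(k)}$ once $\zprev$ has been refreshed to $\widetilde{\mga}\,\mpi^{(\eta-1)}\cdots\mpi^{(0)}\mb^{\top}\mw^{1/2}\vv^{(k)}$.

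\textbf{Implementing the methods.} \textsc{Initialize} builds $\ct$ (see \cref{subsec:construct_tree}), calls $\textsc{MaintainRep.Initialize}(G,\ct,\mm^{\slack},\vv,\vw,\vs^{\init},\epssc)$ at cost $\otilde(\epssc^{-2}m+T^{\slack}(m))=\otilde(\epssc^{-2}m)$, and calls $\textsc{MaintainApprox.Initialize}$ with scaling $\md=\mw$, accuracy $\delta=\overline{\epsilon}$, failure probability $\rho=1/\poly(m)$, costing $O(m\eta^{2}\log m\log(m/\rho))=\otilde(m)$; this sets $\vs\leftarrow\vs^{\init}$ and $\os\leftarrow\vs$. \textsc{Reweight}$(\vw)$ forwards to $\textsc{MaintainRep.Reweight}$, which updates \textsc{DynamicSC} and the implicit representation without changing the value of $\vs$, in $\otilde(\epssc^{-2}\sqrt{mK})$ time. \textsc{Move}$(\alpha,\vv)$ forwards to $\textsc{MaintainRep.Move}(\alpha,\vv)$: this refreshes $\zprev$ and then performs $\vs\leftarrow\vs+\mm^{\slack}(\alpha\zprev)=\vs+\alpha\mw^{-1/2}\widetilde{\mproj}_{\vw}\vv$, exactly the claimed update with the claimed projection guarantees, in $\otilde(\epssc^{-2}\sqrt{mK})$ time. \textsc{Exact} calls $\textsc{MaintainRep.Exact}$ in $\O(T^{\slack}(m))=\O(\epssc^{-2}m)$ time. \textsc{Approximate} passes the current state $(\mm^{\slack},c,\zprev,\zsum,\vy,\mw)$ of \textsc{MaintainRep} into $\textsc{MaintainApprox.Approximate}$; since $\vy+\mm^{\slack}(c\zprev+\zsum)=\vs$ and $\md=\mw$, the returned $\os$ satisfies $\|\mw^{1/2}(\os-\vs)\|_{\infty}\le\overline{\epsilon}$. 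For the amortized cost and the coordinate-update pattern I would apply \cref{thm:VectorTreeMaintain}: the per-step change $\|\vs^{(k)}-\vs^{(k-1)}\|_{\mw^{(k)}}=\alpha\|\widetilde{\mproj}_{\vw^{(k)}}\vv^{(k)}\|_{2}\le(1+\eta\epssc)\,\alpha\|\vv\|_{2}\le(1+\eta\epssc)\beta$ supplies the hypothesis parameter, so the number of large-change coordinates is $O(N_{k})$ with $N_{k}=2^{2\ell_{k}}(\beta/\overline{\epsilon})^{2}\log^{2}m$ and the $\os_{e}\leftarrow\vs^{(k-1)}_{e}$ resets occur exactly on edges whose weight changed in the last \textsc{Reweight}; the guarantee from \cref{thm:maintain_representation} that at most $\O(K)$ nodes have $\zprev|_{F_H},\zsum|_{F_H}$ or leaf/edge operators of $\mm^{\slack}$ updated bounds the set $\mathcal{S}$ of \cref{thm:VectorTreeMaintain} by $\O(K)$, so the amortized time is $\Theta(\eta^{2}\log(m/\rho)\log m)\cdot T^{\slack}(\eta(N_{k-2^{\ell_k}}+\O(K)))=\otilde(\epssc^{-2}\sqrt{m(K+N_{k-2^{\ell_k}})})$. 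The only randomized ingredients are the approximation guarantees of \textsc{DynamicSC} (inside \textsc{MaintainRep}) and the sketching in \textsc{MaintainApprox}, both holding with high probability against an adaptive adversary, so a union bound gives the stated success guarantee.

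\textbf{Main obstacle.} The delicate part is not this composition but the fact I am importing from \cref{sec:slack_projection}: that $\mw^{-1/2}$ folded into $\mb\,\mpi^{(0)}\cdots\mpi^{(\eta-1)}$ really is a tree operator in the precise sense of \cref{subsec:tree_operator}, whose edge and leaf operators are computable from the \textsc{DynamicSC} data at a cost matching its separator sizes, and that a $K$-sparse weight change perturbs only the operators lying on the $\O(K)$ tree paths that \textsc{DynamicSC} itself updates. Establishing this requires carefully matching the abstract tree-operator interface against the concrete block structure of the approximate nested-dissection inverse $\widetilde{\ml}^{-1}$, and it is precisely here that planarity enters, through the $\otilde(\sqrt{m})$-size balanced separators and the $\sum_{H\in\pathT{\collN}}|\bdry{H}|=\otilde(\sqrt{mK})$ bound. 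Once that correspondence is in place, the present theorem follows from the routing and runtime accounting above.
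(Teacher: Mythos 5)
Your proposal is correct and follows essentially the same route as the paper: define $\widetilde{\mproj}_{\vw}$ via the approximate nested-dissection inverse, realize $\mw^{-1/2}\widetilde{\mproj}_{\vw}\vv$ as $\mm^{\slack}\zprev$ for the slack tree operator of \cref{defn:slack-forest-operator} (with $\mw^{-1/2}$ absorbed as in \cref{lem:wm}), forward every method to \textsc{MaintainRep} and \textsc{MaintainApprox}, and feed \cref{thm:VectorTreeMaintain} the bound $\|\vs^{(k)}-\vs^{(k-1)}\|_{\mw^{(k)}}\le(1+\eta\epssc)\alpha\|\vv\|_2=O(\beta)$. The "main obstacle" you flag is exactly what the paper discharges in \cref{lem:slack-operator-correctness,lem:slack_operator_complexity}, so nothing is missing.
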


\subsection{Flow projection} \label{subsec:primal_overview}

Similar to slack, we want to use a \textsc{MaintainRep} data structure to implicitly maintain the flow solution $\vf$ throughout the IPM,
and use a \textsc{MaintainApprox} data structure to explicitly maintain
the approximate flow solution $\of$.
For the overview, we focus on the non-trivial part of the flow update at every step given by
\[
\pf \leftarrow \pf + h \mw^{1/2} \widetilde{\mproj}'_{\vw} \vv.
\]
To use \textsc{MaintainRep}, it remains to define a suitable tree operator $\mm^{\flow}$ so that at IPM step $k$, the update in \textsc{MaintainRep} is the correct IPM flow update; that is:
\[
 \mw^{1/2} \widetilde{\mproj}'_{\vw} \vv = \mm^{\flow} \zprev.
\]
Rather than finding an explicit $\widetilde{\mproj}'_{\vw}$ as we did for slack, 
observe it suffices to find some \emph{weighted flow}
$\tf \approx \mproj_{\vw} \vv$ satisfying $\mb^{\top} \mw^{1/2} \tf=\mb^{\top} \mw^{1/2} \vv$.
(We use the term ``weighted flow'' to mean it is obtained by multiplying the edge weights $\mw$ to some valid flow.)
Then the IPM update becomes
\[
	h \mw^{1/2} \widetilde{\mproj}'_{\vw} \vv = h \mw^{1/2} \tf.
\]
Hence, our goal is to write $\mw^{1/2} \tf = \mm^{\flow} \zprev$ for an appropriate weighted flow $\tf$.

Let us define demands on vertices by $\vd \defeq \mb^{\top} \mw^{1/2} \vv$.
Unwrapping the definition of $\mproj_{\vw}$, we see that the condition of $\tf \approx \mproj_{\vw} \vv$ is actually $\tf \approx \mw^{1/2} \mb \ml^{-1} \vd$. 
The second condition says $\tf$ is a weighted flow routing demand $\vd$. 
Suppose we had $\tf = \mw^{1/2} \mb \ml^{-1} \vd$ exactly, then we see immediately that the second condition is satisfied with $\mb^\top \mw^{1/2} \tf = \mb^\top \mw \mb \ml^{-1} \vd = \vd$.
To realize the approximation, we make use of the approximation of $\ml^{-1}$ from \cref{eq:overview_Linv_approx}.
Hence, one important fact about our construction is that when the Schur complements are exact,
our flow $\tf$ agrees with the true electrical flow routing the demand.

In constructing $\tf$ to route the demand $\vd$,
we show that $\tf$ can be written as $\mm \zprev$, where $\mm$ is a tree operator on the tree $\ct$,
and $\zprev$ is from \textsc{MaintainRep}, and in fact correspond to electric potentials. 
Here we explain what $\mm$ captures intuitively.
For simplicity, let $\vz$ denote $\zprev$.

The first step is recognizing a decomposition of $\vd$ using the separator tree,
such that we have a demand term $\vd^{(H)}$ for each node $H \in \ct$.
Furthermore, $\vd^{(H)} = \ml^{(H)} \vz|_{F_H}$, 
for the Laplacian $\ml^{(H)}$ supported on the region $H$ maintained by \textsc{dynamicSC}.
This decomposition allows us to route each demand $\vd^{(H)}$ by electric flows using only the corresponding region $H$, 
rather than the entire graph.
The recursive nature of the decomposition allows us to bound the overall runtime.
To show that the resulting flow $\tf$ indeed is close to the electric flow, 
one key insight is that the decomposed demands are orthogonal (\cref{lem:energy-graph-decomposition}).
Hence, routing them separately by electrical flows gives a good approximation
to the true electrical flow of the whole demand (\cref{thm:flow-forest-operator-correctness}).

Let us illustrate this partially using the two-layer decomposition scheme from \cref{subsec:overview_sc}:
Suppose we have a demand term $\vd$ \emph{that is non-zero only on vertices of $C$}. 
Then, observe that
\[
\vz=\left[\begin{array}{cc}
\ml_{F,F}^{-1} & \mzero\\
\mzero & \widetilde{\sc}(\ml,C)^{-1}
\end{array}\right]
\left[\begin{array}{cc}
\mi & \mzero\\
-\ml_{C,F}\ml_{F,F}^{-1} & \mi
\end{array}\right] \vd
\]
Looking at the sub-vector indexed by $C$ on both sides, we have that
\[
\widetilde{\sc}(\ml,C) \vz = \vd
\]
where we abuse the notation to extend $\widetilde{\sc}(\ml,C)$ from
$C\times C$ to $[n]\times[n]$ by padding zeros. Using \cref{eq:ScL_sum},
we have
\[
\left( \widetilde{\sc}(\ml[H_1],C) + \widetilde{\sc}(\ml[H_2],C) \right)  \vz =\vd
\]
This gives a decomposition of the demand $\vd$ into demand terms $\widetilde{\sc}(\ml[H_i],C)\vz$ for $i = 1,2$.
Crucially, each demand $\widetilde{\sc}(\ml[H_i],C)\vz$ is supported on the vertices of the region $H_i$, and we can route the flow on the corresponding region only. 
In a $O(\log n)$-level decomposition, we recursively decompose the demand further based on the sub-regions according to the separator tree $\ct$.
This guarantees that $\tf_i$ is the electric flow on the subgraph $H_i$ that satisfies the demand $\widetilde{\sc}(\ml[H_i],C)\vz$. 
Finally, we will let the output be $\tf =\sum \tf_{i}$. 
By construction, this $\tf$ satisfies $\mb^{\top} \mw^{1/2} \tf=\vd=\mb^{\top} \mw^{1/2} \vv$. 

In \cref{sec:flow_projection}, we show that this recursive operation can be realized using a tree operator.
We then present the full proof for \cref{thm:FlowMaintain} below, and implement the data structure. 

\begin{restatable}[Flow maintenance]{theorem}{FlowMaintain}
	\label{thm:FlowMaintain}
	Given a modified planar graph $G$ with $m$ edges and its separator tree $\ct$ with height $\eta$,
	the randomized data structure \textsc{MaintainFlow} (\cref{alg:flow-maintain-main})
	implicitly maintains the flow solution $\vf$ undergoing IPM changes,
	and explicitly maintains its approximation $\of$, 
	and supports the following procedures with high probability against an adaptive adversary:
	\begin{itemize}
		\item
		$\textsc{Initialize}(G,\vf^{\init} \in\R^{m},\vv \in \R^{m}, \vw\in\R_{>0}^{m},\epsilon_{\mproj}>0,
		\overline{\epsilon}>0)$: Given a graph $G$, initial solution $\vf^\init$, initial direction $\vv$, 
		initial weights $\vw$, target step accuracy $\epsilon_{\mproj}$,
		and target approximation accuracy $\overline{\epsilon}$, 
		preprocess in $\widetilde{O}(m \epsilon_{\mproj}^{-2})$ time and set the internal representation $\vf \leftarrow \vf^{\init}$ and $\of \leftarrow \vf$.
		\item $\textsc{Reweight}(\vw\in\R_{>0}^{m}$ given implicitly as a set of changed weights): Set the current weights to $\vw$ in
		$\widetilde{O}(\epsilon_{\mproj}^{-2}\sqrt{mK})$ time, where $K$ is
		the number of coordinates changed in $\vw$.
		\item $\textsc{Move}(\alpha\in\mathbb{R},\vv\in\R^{m}$ given
		implicitly as a set of changed coordinates): 
		Implicitly update
		$\vf \leftarrow \vf+ \alpha \mw^{1/2}\vv - \alpha \mw^{1/2} \widetilde{\mproj}'_{\vw} \vv$ for
		some $\widetilde{\mproj}'_{\vw} \vv$, 
		where 
		$\|\widetilde{\mproj}'_{\vw} \vv - \mproj_{\vw} \vv \|_2 \leq O(\eta \epssc) \norm{\vv}_2$ and
		$\mb^\top \mw^{1/2}\widetilde{\mproj}'_{\vw}\vv= \mb^\top \mw^{1/2} \vv$.
		The runtime is $\widetilde{O}(\epsilon_{\mproj}^{-2}\sqrt{mK})$, where $K$ is
		the number of coordinates changed in $\vv$.
		\item $\textsc{Approximate}()\rightarrow\R^{m}$: Output the vector
		$\of$ such that $\|\mw^{-1/2}(\of-\vf)\|_{\infty}\leq\overline{\epsilon}$ for the
		current weight $\vw$ and the current vector $\vf$. 
		\item $\textsc{Exact}()\rightarrow\R^{m}$: 
		Output the current vector $\vf$ in $\O(m \epssc^{-2})$ time.
	\end{itemize}
	Suppose $\alpha\|\vv\|_{2}\leq\beta$ for some $\beta$ for all calls to \textsc{Move}.
	Suppose in each step, \textsc{Reweight}, \textsc{Move} and \textsc{Approximate} are called in order. Let $K$ denote the total number of coordinates changed in $\vv$ and $\vw$ between the $(k-1)$-th and $k$-th \textsc{Reweight} and \textsc{Move} calls. Then at the $k$-th \textsc{Approximate} call,
	\begin{itemize}
	\item the data structure first sets $\of_e\leftarrow \vf^{(k-1)}_e$ for all coordinates $e$ where $\vw_e$ changed in the last \textsc{Reweight}, then sets $\of_e\leftarrow \vf^{(k)}_e$ for $O(N_k\defeq 2^{2\ell_{k}}(\frac{\beta}{\overline{\epsilon}})^{2}\log^{2}m)$ coordinates $e$, where $\ell_{k}$ is the largest integer
		$\ell$ with $k=0\mod2^{\ell}$ when $k\neq 0$ and $\ell_0=0$. 
	\item The amortized time for the $k$-th \textsc{Approximate} call
	is $\widetilde{O}(\epsilon_{\mproj}^{-2}\sqrt{m(K+N_{k-2^{\ell_k}})})$.
	\end{itemize}
\end{restatable}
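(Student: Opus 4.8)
I would build \textsc{MaintainFlow} on top of three components already in hand: the implicit-representation data structure \textsc{MaintainRep} (\cref{thm:maintain_representation}), instantiated with a flow tree operator $\mm^{\flow}$ to be constructed in \cref{sec:flow_projection}; the approximate-vector data structure \textsc{MaintainApprox} (\cref{thm:VectorTreeMaintain}), to produce $\of$; and a lightweight auxiliary structure for the dense but structurally trivial summand $\sum_{i}\alpha^{(i)}(\mw^{(i)})^{1/2}\vv^{(i)}$ of $\vf$. The construction in \cref{sec:flow_projection} will provide $\mm^{\flow}$ as a tree operator on $\ct$, computable from the \textsc{DynamicSC} state, with complexity $T^{\flow}(K)=\otilde(\epssc^{-2}\sqrt{mK})$, such that at every step $\mm^{\flow}\zprev=\mw^{1/2}\widetilde{\mproj}'_{\vw}\vv$, where $\widetilde{\mproj}'_{\vw}\vv=\tf$ is a weighted flow that routes the demand $\vd=\mb^{\top}\mw^{1/2}\vv$ \emph{exactly} and satisfies $\|\tf-\mproj_{\vw}\vv\|_{2}\le O(\eta\epssc)\|\vv\|_{2}$. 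Given this, \textsc{Initialize} sets up a \textsc{MaintainRep} instance on $\ct$ with operator $-\mm^{\flow}$ (the sign absorbing the $-\alpha\mw^{1/2}\widetilde{\mproj}'_{\vw}\vv$ term of the step), direction $\vv$, weights $\vw$, and initial vector $\vf^{\init}$; \textsc{Reweight} and \textsc{Move} simply forward to the homonymous \textsc{MaintainRep} procedures. The trivial summand changes in only $O(K)$ coordinates between steps, so I would keep it implicitly as a scalar multiple of the current $\mw^{1/2}\vv$ plus a sparsely corrected offset — the same telescoping device used for $\vz$ inside \textsc{MaintainRep} — and approximate it with a simple large-change-detection routine; \textsc{Exact} returns the sum of \textsc{MaintainRep}.\textsc{Exact} and this accumulator.

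\textbf{Correctness of the update.} By \cref{thm:maintain_representation}, after step $k$ the \textsc{MaintainRep} instance holds $\vf^{\init}$ minus the accumulated projection updates $\sum_{i\le k}\alpha^{(i)}(\mw^{(i)})^{1/2}\widetilde{\mproj}'_{\vw^{(i)}}\vv^{(i)}$; adding the trivial-term accumulator reproduces precisely the value of $\vf$ described in \textsc{Move}. The two properties of $\widetilde{\mproj}'_{\vw}\vv$ required by the statement — the feasibility identity $\mb^{\top}\mw^{1/2}\widetilde{\mproj}'_{\vw}\vv=\mb^{\top}\mw^{1/2}\vv$ and the accuracy bound $\|\widetilde{\mproj}'_{\vw}\vv-\mproj_{\vw}\vv\|_{2}\le O(\eta\epssc)\|\vv\|_{2}$ — I would import from \cref{sec:flow_projection}. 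Feasibility holds because the demand decomposition $\vd=\sum_{H\in\ct}\ml^{(H)}\vz|_{F_{H}}$ telescopes over $\ct$ and each piece is routed inside its own region, so the total routed demand is exactly $\vd$ no matter how inaccurate the individual $\ml^{(H)}$ are. The accuracy bound holds because $\tf$ equals the true electrical flow $\mw^{1/2}\mb\ml^{-1}\vd=\mproj_{\vw}\vv$ when the Schur complements are exact, and for $\epssc$-accurate approximate Schur complements it follows from the $\eta\epssc$-approximation of $\ml^{-1}$ in \cref{thm:overview-L-inv-approx} combined with the orthogonality of the decomposed demands (\cref{lem:energy-graph-decomposition}, \cref{thm:flow-forest-operator-correctness}). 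Then $\vv^{\perp}=\vv-\widetilde{\mproj}'_{\vw}\vv$ is a valid IPM direction by \cref{lem:approx-projections}.

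\textbf{Approximation and runtime.} For \textsc{Approximate} I would run \textsc{MaintainApprox} (\cref{thm:VectorTreeMaintain}) on the operator $-\mm^{\flow}$ with the live \textsc{MaintainRep} variables $c,\zprev,\zsum,\vy$, diagonal scaling $\md=\mw^{-1}$, accuracy $\delta=\overline{\epsilon}$, and failure probability $\rho=m^{-\Theta(1)}$; its output satisfies $\|\mw^{-1/2}(\of-\vf)\|_{\infty}\le\overline{\epsilon}$ after combining with the separately approximated trivial term. To invoke \cref{thm:VectorTreeMaintain} I must verify $\|\vf^{(k)}-\vf^{(k-1)}\|_{(\mw^{(k)})^{-1}}=O(\beta)$: the value of $\vf$ changes only in \textsc{Move}, where $\vf^{(k)}-\vf^{(k-1)}=\alpha^{(k)}(\mw^{(k)})^{1/2}\vv^{\perp,(k)}$, so this norm equals $\alpha^{(k)}\|\vv^{\perp,(k)}\|_{2}\le\alpha^{(k)}(2+O(\eta\epssc))\|\vv^{(k)}\|_{2}=O(\beta)$, using that $\mproj_{\vw}$ is an orthogonal projection and $\alpha^{(k)}\|\vv^{(k)}\|_{2}\le\beta$. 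The clause about first resetting $\of_{e}\leftarrow\vf^{(k-1)}_{e}$ on coordinates whose $\vw_{e}$ changed, and the count $N_{k}=O(2^{2\ell_{k}}(\beta/\overline{\epsilon})^{2}\log^{2}m)$, are then exactly what \cref{thm:VectorTreeMaintain} yields for $\md=\mw^{-1}$. For the amortized cost: \cref{thm:maintain_representation} guarantees that a \textsc{Reweight}/\textsc{Move} pair touches at most $\otilde(K)$ of the operators and coordinates tracked by \textsc{MaintainApprox}'s change set $\mathcal{S}$, and runs in $\otilde(\epssc^{-2}\sqrt{mK}+T^{\flow}(K))=\otilde(\epssc^{-2}\sqrt{mK})$; substituting $|\mathcal{S}|=\otilde(K)$ and $T^{\flow}(j)=\otilde(\epssc^{-2}\sqrt{mj})$ (subadditive in $j$, so the $\eta$-factor and the $\Theta(\eta^{2}\log(m/\rho)\log m)$ prefactor of \cref{thm:VectorTreeMaintain} are absorbed) gives amortized $\otilde(\epssc^{-2}\sqrt{m(K+N_{k-2^{\ell_{k}}})})$. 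Finally, \textsc{Initialize} costs $\otilde(\epssc^{-2}m+T^{\flow}(m))+O(m\eta^{2}\log m\log(m/\rho))=\otilde(\epssc^{-2}m)$, \textsc{Exact} costs $\O(T^{\flow}(m)+m)=\O(m\epssc^{-2})$, and all guarantees hold with high probability by a union bound over the randomized \textsc{DynamicSC} inside \textsc{MaintainRep} and the randomized \textsc{MaintainApprox}, both robust against an adaptive adversary.

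\textbf{Main obstacle.} Within this proof the bulk of the effort is bookkeeping: aligning the amortization terms of \textsc{MaintainApprox}, propagating the $\md=\mw^{-1}$ changes into the coordinate-reset clause, and pinning down the constant in the $\beta$-bound. The genuinely delicate ingredient — that $\widetilde{\mproj}'_{\vw}\vv$ routes the demand \emph{exactly} despite the approximate Schur complements, which is what lets the algorithm never repair the flow and so keep a $\log M$ rather than $\poly(M)$ dependence — is established by the construction in \cref{sec:flow_projection} via the telescoping demand decomposition, and is the key external fact this proof relies on.
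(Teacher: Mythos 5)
Your proposal is correct and follows essentially the same route as the paper: maintain the projection part $\pf$ via \textsc{MaintainRep} with the flow tree operator, keep the dense summand $\sum_i\alpha^{(i)}(\mw^{(i)})^{1/2}\vv^{(i)}$ as a scalar coefficient times the current $\mw^{1/2}\vv$ plus a sparsely corrected offset, import feasibility and the $O(\eta\epssc)$ accuracy of $\tf$ from \cref{thm:flow-forest-operator-correctness}, and invoke \cref{thm:VectorTreeMaintain} with $\md=\mw^{-1}$ after checking $\|\vf^{(k)}-\vf^{(k-1)}\|_{\mw^{-1}}\le(2+O(\eta\epssc))\beta$. The only cosmetic differences are the sign convention on the tree operator and that the paper folds the dense summand into the offset vector $\vy$ of a single \textsc{MaintainApprox} instance rather than approximating it separately.
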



\subsection{Main proof\label{subsec:overview_proof}}

We are now ready to prove our main result.
\cref{algo:IPM_impl} presents the implementation of RIPM \cref{alg:IPM_centering} using our data structures.

\begin{algorithm}

\caption{Implementation of Robust Interior Point Method\label{algo:IPM_impl}}

\begin{algorithmic}[1]

\Procedure{$\textsc{CenteringImpl}$}{$\mb,\phi,\vf,\vs,t_{\mathrm{start}},t_{\mathrm{end}}$}

\State $G$: graph on $n$ vertices and $m$ edges with incidence matrix $\mb$
\State $\mathcal{S}, \mathcal{F}$: data structures for slack and flow maintenance 
	\Comment \cref{thm:SlackMaintain,thm:FlowMaintain}

\State $\alpha \defeq \frac{1}{2^{20}\lambda}, \lambda \defeq 64\log(256m^{2})$

\State $t\leftarrow t_{\mathrm{start}}$, $\of\leftarrow\vf,\os\leftarrow\vs,\ot\leftarrow t$, $\mw\leftarrow\nabla^{2}\phi(\of)^{-1}$
\Comment variable initialization
\State $\vv_{i} \leftarrow \sinh(\lambda\gamma^{\ot}(\of,\os)_{i})$ for all $i \in [n]$

\Comment data structure initialization
\State $\mathcal{F}.\textsc{Initalize}(G,\vf,\vv, \mw,\epssc=O(\alpha/\log m),\overline{\epsilon}=\alpha)$   
\Comment choose $\epssc$ so $\eta\epssc \leq \alpha$ in \cref{thm:SlackMaintain} 
\State $\mathcal{S}.\textsc{Initalize}(G,\ot^{-1}\vs,\vv, \mw,\epssc=O(\alpha/\log m),\overline{\epsilon}=\alpha)$ 
\Comment and $O(\eta\epssc) \leq \alpha$ in \cref{thm:FlowMaintain}

\While{$t\geq t_{\mathrm{end}}$}

\State $t\leftarrow\max\{(1-\frac{\alpha}{\sqrt{m}})t, t_{\mathrm{end}}\}$

\State Update $h=-\alpha/\|\cosh(\lambda\gamma^{\ot}(\of,\os))\|_{2}$
\Comment $\gamma$ as defined in \cref{eq:mu_t_def}\label{line:step_given_begin_impl}

\State Update the diagonal weight matrix $\mw=\nabla^{2}\phi(\of)^{-1}$\label{line:step_given_3_impl}

\State $\mathcal{F}.\textsc{Reweight}(\mw)$ \Comment update the implicit representation of $\vf$ with new weights
\State $\mathcal{S}.\textsc{Reweight}(\mw)$ \Comment update the implicit representation of $\vs$ with new weights

\State $\vv_{i} \leftarrow \sinh(\lambda\gamma^{\ot}(\of,\os)_{i})$ for all $i$ where $\of_i$ or $\os_i$ has changed \label{line:step_given_end_impl} 
\Comment update direction $\vv$

\State \Comment $\mproj_{\vw}\defeq\mw^{1/2}\mb(\mb^{\top}\mw\mb)^{-1}\mb^{\top}\mw^{1/2}$
\State $\mathcal{F}.\textsc{Move}(h,\vv)$
\Comment  Update $\vf\leftarrow\vf+h\mw^{1/2}\vv - h \mw^{1/2} \tf$ 
with $\tf \approx \mproj_{\vw} \vv$ 

\State $\mathcal{S}.\textsc{Move}(h,\vv)$ \Comment Update $\vs\leftarrow\vs+\ot h\mw^{-1/2}\ts$ with $\ts \approx\mproj_{\vw} \vv$

\State $\of\leftarrow\mathcal{F}.\textsc{Approximate}()$ \Comment Maintain $\of$ such that $\|\mw^{-1/2}(\of-\vf)\|_{\infty}\leq\alpha$

\State  $\os\leftarrow\bar{t}\mathcal{S}.\textsc{Approximate}()$ \Comment Maintain $\os$ such that  $\|\mw^{1/2}(\os-\vs)\|_{\infty}\leq\ot\alpha$

\label{line:step_user_end_impl}

\If {$|\ot-t|\geq\alpha\ot$} 
	\State $\vs\leftarrow \ot\mathcal{S}.\textsc{Exact}()$
	\State $\ot\leftarrow t$ 
	\State  $\mathcal{S}.\textsc{Initalize}(G, \ot^{-1}\vs, \vv, \mw,\epssc=O(\alpha/\log m),\overline{\epsilon}=\alpha)$ 
\EndIf 

\EndWhile

\State \Return $(\mathcal{F}.\textsc{Exact}(),\ot\mathcal{S}.\textsc{Exact}())$\label{line:step_user_output_impl}

\EndProcedure

\end{algorithmic}

\end{algorithm}

We first prove a lemma about how many coordinates change in $\vw$ and $\vv$ in each step. This is useful for bounding the complexity of each iteration.
\begin{lemma}\label{lem:cor_change}
When updating $\vw$ and $\vv$ at the $(k+1)$-th step of the $\textsc{CenteringImpl}$ algorithm,
$\vw$ and $\vv$ change in $O(2^{2\ell_{k-1}}\log^{2}m+2^{2\ell_{k}}\log^{2}m)$ coordinates, where $\ell_{k}$
is the largest integer $\ell$ with $k \equiv 0\mod2^{\ell}$.
\end{lemma}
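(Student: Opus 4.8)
The plan is to trace which coordinates of $\vw$ and $\vv$ can change at step $k+1$ through the chain of dependencies in \textsc{CenteringImpl}. By \cref{thm:IPM}, a coordinate $i$ of $\mw$ or $\vv$ changes at a step only if $\of_i$ or $\os_i$ changed (since $\mw_{ii} = \psi_1(\of_i, \os_i)$ and $\vv_i = \psi_2(\of_i, \os_i)$ via $\nabla^2\phi$ and $\gamma^{\ot}$ — see \cref{line:step_given_3_impl,line:step_given_end_impl}). So it suffices to bound the number of coordinates where $\of$ or $\os$ changed, which is controlled by the \textsc{Approximate} calls of $\mathcal{F}$ and $\mathcal{S}$ from the \emph{previous} step.

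First I would invoke \cref{thm:FlowMaintain} and \cref{thm:SlackMaintain}: at the $k$-th \textsc{Approximate} call, the data structure updates $\of_e$ (resp. $\os_e$) on $O(N_k)$ coordinates, where $N_k = 2^{2\ell_k}(\beta/\overline{\epsilon})^2 \log^2 m$, plus the coordinates $e$ where $\vw_e$ changed in the last \textsc{Reweight}. Here $\overline{\epsilon} = \alpha = \Theta(1/\log m)$ by the initialization in \cref{algo:IPM_impl}, and $\beta = O(1/\log m)$ follows from the bound $h\|\vv\|_2 = O(1/\log m)$ in \cref{thm:IPM} (noting the step is $\alpha^{(k)}\vv^{(k)}$ with $\alpha^{(k)} = h$ for flow, and $\ot h$ for slack, with $\ot$ bounded appropriately). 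Hence $(\beta/\overline{\epsilon})^2 = O(1)$ and $N_k = O(2^{2\ell_k}\log^2 m)$.

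Next I would unwind the two-step dependency. The update to $\of, \os$ that feeds into $\vw, \vv$ at step $k+1$ happens via the $k$-th \textsc{Approximate} calls, which change $O(N_k) = O(2^{2\ell_k}\log^2 m)$ coordinates directly. But there is also the "$\vw$ changed in the last \textsc{Reweight}" term: the \textsc{Reweight} at step $k$ is driven by the coordinates of $\vw$ that changed at step $k$, which in turn came from $\of, \os$ changes at step $k-1$'s \textsc{Approximate}, i.e. $O(N_{k-1}) = O(2^{2\ell_{k-1}}\log^2 m)$ coordinates. (The offsetting-to-previous-value step for reweighted coordinates in \cref{thm:FlowMaintain,thm:SlackMaintain} also touches these same coordinates.) Adding the two contributions gives the claimed bound $O(2^{2\ell_{k-1}}\log^2 m + 2^{2\ell_k}\log^2 m)$ on the number of coordinates changed in $\vw$ and $\vv$ at step $k+1$.

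\textbf{The main obstacle} I anticipate is bookkeeping the indexing offset precisely: the statement concerns step $k+1$ but the bound involves $\ell_{k-1}$ and $\ell_k$, so one must carefully line up which \textsc{Approximate} call's output ($N_k$ vs $N_{k-1}$) and which \textsc{Reweight}'s changed-weight set feeds into the $\vw, \vv$ update at step $k+1$ — and confirm that no \emph{third} layer of dependency (e.g. coordinates changed two \textsc{Reweight}s ago) leaks in. A secondary point is justifying $(\beta/\overline\epsilon)^2 = O(1)$ cleanly, which just requires quoting $h\|\vv\|_2 = O(1/\log m)$ from \cref{thm:IPM} together with $\overline\epsilon = \alpha = \Theta(1/\log m)$; I would also note the $N_{k-2^{\ell_k}}$ term appearing in the runtime bounds is absorbed into the same $O(2^{2\ell_{k-1}} + 2^{2\ell_k})$ estimate since $\ell_{k - 2^{\ell_k}} \le \ell_k$ suffices for the coordinate-count claim here (the finer amortization is handled elsewhere).
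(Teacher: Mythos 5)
Your overall route is the same as the paper's: reduce the change-count for $\vw,\vv$ to the change-count for $\of,\os$, invoke \cref{thm:FlowMaintain,thm:SlackMaintain} with $\beta=O(1/\log m)$ and $\overline{\epsilon}=\alpha=\Theta(1/\log m)$ so that $N_k=O(2^{2\ell_k}\log^2 m)$, and account separately for the direct $O(N_k)$ updates and the reweight-induced resets. However, the obstacle you flag at the end is not a bookkeeping nuisance — it is a genuine gap, and your argument as written does not close. Let $A_k$ be the number of coordinates where $\of$ actually changes at step $k$ and $W_k$ the number where $\vw$ changes. The theorems only guarantee that the $k$-th \textsc{Approximate} performs assignments on $W_k + O(N_k)$ coordinates, so $A_k \le W_k + O(N_k)$; and since $\vw^{(k)}$ is an entrywise function of $\of^{(k-1)}$, $W_k \le A_{k-1}$. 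Unrolling gives $A_k \le O(N_k + N_{k-1} + \cdots + N_1)$, a cascade over all previous steps, not the two-term bound you assert. Your sentence ``which in turn came from $\of,\os$ changes at step $k-1$'s \textsc{Approximate}, i.e.\ $O(N_{k-1})$ coordinates'' silently assumes $A_{k-1}=O(N_{k-1})$, which is exactly what needs proving.

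The missing idea (the paper's) is an induction showing that for $\of$ the reset pass is a \emph{no-op}: inductively, the set $S$ of coordinates where $\of$ changed at step $k$ is precisely a set of size $O(N_k)$ on which $\of$ was written the exact value $\vf^{(k)}$. Then $\vw^{(k+1)}$ changes exactly on $S$, so the reset at step $k+1$ (\cref{line:D-induced-ox-change} of \cref{algo:maintain-vector}) writes $\vf^{(k)}|_S$ into coordinates that already hold $\vf^{(k)}|_S$ — no change — and only the $O(N_{k+1})$ fresh assignments count. This breaks the cascade at the $\of$ level (crucially, $\mw=\nabla^2\phi(\of)^{-1}$ depends on $\of$ only), giving $W_{k+1}=O(N_k)$ and hence $A_{k+1}=O(N_{k+1})$. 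The reset \emph{is} a genuine change for $\os$ (the reweighted coordinates need not be ones where $\os$ was exactly updated at step $k-1$), which is where the $O(2^{2\ell_{k-1}}\log^2 m)$ term in the lemma comes from via $\vv$'s dependence on $\os$. You should also state separately, as the paper does, that the $\otilde(1)$ global recomputations triggered by $\ot$ updates are handled as an additive $\otilde(m)$ term outside this per-step bound.
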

\begin{proof}
	Since both $\vw$ and $\vv$ are an entry-wise function of $\of,\os$ and $\ot$, we need to examine these variables.
	First, $\ot$ changes every $\otilde(\sqrt{m})$ steps, and when $\ot$ changes, every coordinate of $\vw$ and $\vv$ changes.
	Over the entire \textsc{CenteringImpl} run, $\ot$ changes $\otilde(1)$ number of times, so we may incur an additive $\otilde(m)$ term overall, and assume $\ot$ does not change for the rest of the analysis.
	
	By \cref{thm:IPM}, we have $h\|\vv\|_2 = O(\frac{1}{\log m})$ at all steps. So we apply \cref{thm:SlackMaintain}  and \cref{thm:FlowMaintain} both with parameters $\beta = O(\frac{1}{\log m})$ and $\overline{\epsilon}=\alpha=\Theta(\frac{1}{\log m})$. We use their conclusions in the following argument.
	Let the superscript $^{(k)}$ denote the variable at the end of the $k$-th step.
	
	By definition, $\vw^{(k+1)}$ is an entry-wise function of $\of^{(k)}$, and recursively, $\of^{(k)}$ is an entry-wise function of $\vw^{(k)}$.
	We first prove inductively that at step $k$, $O(2^{2\ell_{k}}\log^{2}m)$ coordinates of $\of$ change to $\vf^{(k)}$ where $\vf^{(k)}$ is the exact solution, and there are no other changes. This allows us to conclude that $\vw^{(k+1)}$ differ from $\vw^{(k)}$ on $O(2^{2\ell_{k}} \log^2 m)$ coordinates.
	
	In the base case at step $k=1$, because $\vw^{(1)}$ is equal to the initial weights $\vw^{(0)}$, 
	only $O(2^{2\ell_{1}}\log^{2}m)$ coordinates $\of_e$ change to $\vf^{(1)}_e$.
	Suppose at step $k$, a set $S$ of $O(2^{2\ell_{k}}\log^{2}m)$ coordinates of $\of$ change; that is, $\of|_S$ is updated to $\vf^{(k)}|_S$, and there are no other changes. 
	Then at step $k+1$, by definition, $\vw^{(k+1)}$ differ from $\vw^{(k)}$ exactly on $S$, 
	and in turn, $\of^{(k+1)}|_S$ is set to $\vf^{(k)}|_S$ again (\cref{line:D-induced-ox-change} of \cref{algo:maintain-vector}). 
	In other words, there is no change from this operation. 
	Then, $O(2^{2\ell_{k+1}}\log^{2}m)$ additional coordinates $\of_e$ change to $\vf^{(k+1)}_e$.
	
	Now, we bound the change in $\os$: \cref{thm:SlackMaintain} guarantees that in the $k$-th step, there are $O(2^{2\ell_{k}}\log^{2}m)+D$ coordinates in $\os$ that change, where $D$ is the number of changes between $\vw^{(k-1)}$ and $\vw^{(k)}$ and is equal to $O(2^{2\ell_{k-1}} \log^2 m)$ as shown above.
	
	Finally, $\vv^{(k+1)}$ is an entry-wise function of $\of^{(k)}$ and $\os^{(k)}$, so we conclude that $\vv^{(k+1)}$ and $\vv^{(k)}$ differ on at most $O(2^{2\ell_{k}}\log^{2}m) + 2 \cdot O(2^{2\ell_{k-1}}\log^{2}m)$ coordinates.
\end{proof}

\mainthm*

\begin{proof}
The proof is structured as follows. We first write the minimum cost
flow problem as a linear program of the form \cref{eq:LP}. We prove
the linear program has an interior point and is bounded, so to satisfy the assumptions
in \cref{thm:IPM}. Then, we implement the IPM algorithm using the
data structures from \cref{subsec:overview_representation,subsec:overview_sketch,subsec:dual_overview,subsec:primal_overview}.
Finally, we bound the cost of each operations of the data structures.

To write down the min-cost flow problem as a linear program of the form \cref{eq:LP}, we
add extra vertices $s$ and $t$.
Let $\vd$ be the demand vector of the min-cost flow problem.
For every vertex $v$ with $\vd_{v}<0$, we add a directed edge from
$s$ to $v$ with capacity $-\vd_{v}$ and cost $0$. For every vertex
$v$ with $\vd_{v}>0$, we add a directed edge from $v$ to $t$
with capacity $\vd_{v}$ and cost $0$.
Then, we add a directed edge
from $t$ to $s$ with capacity $4nM$ and cost $-4nM$. The modified graph is no longer planar but it has only two extra vertices $s$ and $t$.

The cost and capacity on the $t\rightarrow s$ edge is chosen such that
the minimum cost flow problem on the original graph is equivalent
to the minimum cost circulation on this new graph. Namely, if the
minimum cost circulation in this new graph satisfies all the demand
$\vd_{v}$, then this circulation (ignoring the flow on the new edges)
is the minimum cost flow in the original graph.

Since \cref{thm:IPM} requires an interior point in the polytope,
we first remove all directed edges $e$ through which no flow from $s$ to $t$ can
pass. To do this, we simply check, for every directed edge
$e=(v_{1},v_{2})$, if $s$ can reach $v_{1}$ and if $v_{2}$ can
reach $t$. This can be done in $O(m)$ time by a BFS from $s$ and
a reverse BFS from $t$.
With this preprocessing, we write the minimum cost circulation problem
as the following linear program
\[
\min_{\mb^{\top}\vf=\vzero,\; \vl^{\mathrm{new}}\leq\vf\leq\vu^{\mathrm{new}}}(\vc^{\mathrm{new}})^{\top}\vf
\]
where $\mb$ is the signed incidence matrix of the new graph, $\vc^{\mathrm{new}}$
is the new cost vector (with cost on extra edges), and $\vl^{\mathrm{new}},\vu^{\mathrm{new}}$
are the new capacity constraints. If an edge $e$ has only one direction,
we set $\vl_{e}^{\mathrm{new}}=0$ and $\vu_e^{\new} = \vu_e$, otherwise, we orient the edge arbitrarily and set $-\vl_{e}^{\mathrm{new}} = \vu_e^{\mathrm{new}} = \vu_e$.

Now, we bound the parameters $L,R,r$ in \cref{thm:IPM}. Clearly,
$L=\|\vc^{\mathrm{new}}\|_{2}=O(Mm)$ and $R=\|\vu^{\mathrm{new}}-\vl^{\mathrm{new}}\|_{2}=O(Mm)$.
To bound $r$, we prove that there is an ``interior'' flow $\vf$
in the polytope $\mathcal{F}$. We construct this $\vf$ by $\vf=\sum_{e\in E}\vf^{(e)}$,
where $\vf^{(e)}$ is a circulation passing through edges $e$ and $(t,s)$
with flow value $1/(4m)$. All such circulations exist because of
the removal preprocessing. This $\vf$ satisfies the capacity constraints
because all capacities are at least $1$. This shows $r \geq \frac{1}{4m}$.

The RIPM in \cref{thm:IPM} runs the subroutine \textsc{Centering} twice.
In the first run, the constraint matrix is the incidence matrix of a new underlying graph,
constructed by making three copies of each edge in the original graph $G$. 
Since copying edges does not affect planarity, and our data structures allow for duplicate edges,
we use the implementation given in \textsc{CenteringImpl} (\cref{algo:IPM_impl}) for both runs.

By the guarantees of \cref{thm:SlackMaintain} and \cref{thm:FlowMaintain},
we correctly maintain $\vs$ and $\vf$ at every step in \textsc{CenteringImpl}, 
and the requirements on $\of$ and $\os$ for the RIPM are satisfied.
Hence, \cref{thm:IPM} shows that we can find a circulation $\vf$
such that $(\vc^{\mathrm{new}})^{\top}\vf\leq\mathrm{OPT}-\frac{1}{2}$
by setting $\epsilon=\frac{1}{CM^{2}m^{2}}$ for some large constant
$C$ in \cref{alg:IPM_centering}. 
Note that $\vf$, when restricted to the original graph, is almost
a flow routing the required demand with flow value off by at most $\frac{1}{2nM}$. This
is because sending extra $k$ units of fractional flow from $s$ to
$t$ gives extra negative cost $\leq-knM$. Now we can round $\vf$
to an integral flow $\vf^{\mathrm{int}}$ with same or better flow
value using no more than $\widetilde{O}(m)$ time \cite{kang2015flow}.
Since $\vf^{\mathrm{int}}$ is integral with flow value at least the total demand
minus $\frac{1}{2}$, $\vf^{\mathrm{int}}$ routes the demand completely.
Again, since $\vf^{\mathrm{int}}$ is integral with cost at most $\mathrm{OPT}-\frac{1}{2}$,
$\vf^{\mathrm{int}}$ must have the minimum cost.

Finally, we bound the runtime of one call to \textsc{CenteringImpl}.
We initialize the data structures for flow and slack by \textsc{Initialize}.
Here, the data structures are given the first IPM step direction $\vv$ for preprocessing; the actual step is taken in the first iteration of the main while-loop.
At each step of \textsc{CenteringImpl}, we perform the implicit update of $\vf$ and $\vs$ using $\textsc{Move}$;  we update $\mw$ in the data structures using $\textsc{Reweight}$; and we construct the explicit approximations 
$\of$ and $\os$ using $\textsc{Approximate}$; 
each in the respective flow and slack data structures. 
We return the true $(\vf,\vs)$ by $\textsc{Exact}$. 
The total cost of \textsc{CenteringImpl} is dominated by $\textsc{Move}$, $\textsc{Reweight}$, and $\textsc{Approximate}$. 

Since we call \textsc{Move}, \textsc{Reweight} and \textsc{Approximate} in order in each step and the runtime for \textsc{Move}, \textsc{Reweight} are both dominated by the runtime for \textsc{Approximate}, it suffices to bound the runtime for \textsc{Approximate} only.  
\cref{thm:IPM} guarantees that there are $T=O(\sqrt{m}\log n\log(nM))$ total \textsc{Approximate} calls.
\cref{lem:cor_change} shows that at the $k$-th call, the number of coordinates changed in $\vw$ and $\vv$ is bounded by $K \defeq O(2^{2\ell_{k-1}} \log^2 m+2^{2\ell_{k-2}} \log^2 m)$, where $\ell_k$ is the largest integer $\ell$ with $k \equiv 0 \mod 2^{\ell}$, or equivalently, the number of trailing zeros in the binary representation of $k$.
\cref{thm:IPM} further guarantees we can apply \cref{thm:SlackMaintain} and \cref{thm:FlowMaintain} with parameter $\beta = O(1/\log m)$,
which in turn shows the amortized time for the $k$-th call is
\[
	\otilde(\epssc^{-2} \sqrt{m(K + N_{k-2^{\ell_k}})}).
\]
where $N_{k} \defeq 2^{2\ell_k} (\beta/\alpha)^2 \log^2 m = O(2^{2\ell_k} \log^2 m)$,
where $\alpha = O(1/\log m)$ and $\epsilon_{\mproj}= O(1/\log m)$ are defined in \textsc{CenteringImpl}.

Observe that $K + N_{k-2^{\ell_k}} = O(N_{k-2^{\ell_k}})$. Now, summing over all $T$ calls, the total time is
\begin{align*}
	O(\sqrt{m} \log m)  \sum_{k=1}^T \sqrt{N_{k-2^{\ell_k}}} &= 
	O(\sqrt{m} \log^2 m)  \sum_{k=1}^T 2^{\ell_{(k - 2^{\ell_k})}} \\
	&= O(\sqrt{m} \log^2 m) \sum_{k'=1}^T  2^{\ell_{k'}}\sum_{k=1}^{T}[k-2^{\ell_k}=k'],
\intertext{
where we use $[\cdot]$ for the indicator function, i.e., $[k-2^{\ell_k}=k']=1$ if $k-2^{\ell_k}=k'$ is true and $0$ otherwise. As there are only $\log T$ different powers of $2$ in $[1, T]$, the count  $\sum_{1\le k\le T}[k-2^{\ell_k}=k']$ is bounded by $O(\log T)$ for any $k'\in \{1,\dots, T\}$. Then the above expression is
}
	&= O(\sqrt{m} \log^2 m \log T) \sum_{k'=1}^T  2^{\ell_{k'}}.
\intertext{
Since $\ell_k$ is the number of trailing zeros on $k$, 
it can be at most $\log T$ for $k \leq T$. We again rearrange the summation by possible values of $\ell_{k'}$, 
and note that there are at most $T/2^{i + 1}$ numbers between 1 and $T$ with $i$ trailing zeros, so}
 \sum_{k'=1}^T  2^{\ell_{k'}} &= \sum_{i=0}^{\log T} 2^{i} \cdot T/2^{i+1} = O(T \log T).
\end{align*}

So the overall runtime is $O(\sqrt{m} T \log m \log^2 T)$. Combined with \cref{thm:IPM}'s guarantee of $T=O(\sqrt{m}\log n\log(nM))$, we conclude the overall runtime is $\otilde (m \log M)$.
\end{proof}


\section{Preliminaries}

\emph{We assume all matrices and vectors in an expression have matching dimensions.} That is, we will trivially pad matrices and vectors with zeros when necessary. This abuse of notation is unfortunately unavoidable as we will be considering lots of submatrices and subvectors.

\paragraph{General Notations.}
An event holds with high probability if it holds with probability at least $1-n^c$ for arbitrarily large constant $c$. The choice of $c$ affects guarantees by constant factors. 


We use boldface lowercase variables to denote vectors, and boldface uppercase variables to denote matrices.
We use $\|\vv\|_2$ to denote the 2-norm of vector $\vv$ and $\|\vv\|_{\mm}$ to denote $\vv^\top \mm\vv$.
For any vector $\vv$ and scalar $x$, we define $\vv+x$ to be the vector obtained by adding $x$ to each coordinate of $\vv$
and similarly $\vv-x$ to be the vector obtained by subtracting $x$ from each coordinate of $\vv$. 
We use $\vzero$ for all-zero vectors and matrices where dimensions are determined by context. We use $\vone_{A}$ for the vector with value $1$ on coordinates in $A$ and $0$ everywhere else. We use $\mi$ for the identity matrix and $\mi_{S}$ for the identity matrix in $\mathbb{R}^{S \times S}$. 
For any vector $\vx \in \mathbb{R}^{S}$,
$\vx|_{C}$ denotes the sub-vector of $\vx$ supported on $C\subseteq S$; 
\emph{more specifically, $\vx|_C \in \R^S$, where $\vx_i = 0$ for all $i \notin C$.}

For any matrix $\mm \in \mathbb{R}^{A\times B}$, 
we use the convention that $\mm_{C, D}$ denotes the sub-matrix of $\mm$ supported on $C\times D$ where $C\subseteq A$ and $D\subseteq B$. 
When $\mm$ is not symmetric and only one subscript is specified, as in $\mm_D$, this denotes the sub-matrix of $\mm$ supported on $A \times D$.
To keep notations simple, $\mm^{-1}$ will denote the inverse of $\mm$ if it is an invertible matrix and the Moore-Penrose pseudo-inverse otherwise.

For two positive semi-definite matrices $\ml_1$ and $\ml_2$, we write $\ml_1 \approx_t \ml_2$ if $e^{-t} \ml_1\preceq \ml_2 \preceq e^{t} \ml_1$, where $\ma\preceq \mb$ means $\mb-\ma$ is positive semi-definite. 
Similarly we define $\geq_t$ and $\leq_t$ for scalars, that is, $x\leq_t y$ if $e^{-t}x\le y\le e^t x$. 

\paragraph{Graphs and Trees.} 
We define \emph{modified planar graph} to mean a graph obtained from a planar graph by adding $2$ new vertices $s, t$ and 
any number of edges incident to the new vertices. We allow distinguishable parallel edges in our graphs. 
We assume the input graph is connected.

We use $n$ for the number of vertices and $m$ for the number of edges in the input graph. 
We will use $\vw$ for the vector of edge weights in a graph. We define $\mw$ as the diagonal matrix $\diag(\vw)$.

We define $\ml = \mb^\top \mw \mb$ be the Laplacian matrix associated with an undirected graph $G$ with non-negative edge weights $\mw$. 
We at times use a graph and its Laplacian interchangeably.
For a subgraph $H \subseteq G$, we use $\ml[H]$ to denote the weighted Laplacian on $H$, and $\mb[H]$ to denote the incidence matrix of $H$.

For a tree $\ct$, we write $H \in \ct$ to mean $H$ is a node in $\ct$.
We write $\ct_H$ to mean the complete subtree of $\ct$ rooted at $H$.
We say a node $A$ is an ancestor of $H$ if $H$ is in the subtree rooted at $A$, and $H \neq A$.

The \emph{level} of a node in a tree is defined so that leaf nodes have level 0, and the root has level $\eta$, where $\eta$ is the height of the tree. For interior nodes, the level is the length of the longest path from the node to a leaf. 
By this definition, note that the level of a node and its child can differ by more than 1.

For binary tree data structures, we assume there is constant time access to each node.

\paragraph{IPM data structures.} 

When we discuss the data structures in the context of the IPM, step 0 means the initialization step. For $k > 0$, step $k$ means the $k$-th iteration of the while-loop in \textsc{Centering} (\cref{alg:IPM_centering,algo:IPM_impl}); that is, it is the $k$-th time we update the current solutions.
For any vector or matrix $\vx$ used in the IPM, we use $\vx^{(k)}$ to denote the value of $\vx$ at the end of the $k$-th step.

In all procedures in these data structures, we assume inputs are given by the set of changed coordinates and their values, 
\emph{compared to the previous input}. 
Similarly, we output a vector by the set of changed coordinates and their values, compared to the previous output. 
This can be implemented by checking memory for changes.

We use \textsc{smallCaps} to denote function names and data structure classes, and \texttt{typewriterFont} to denote an instantiation of a data structure.

We say a data structure B \emph{extends} A in the object-oriented sense. Inside data structure B, we directly access functions and variables of A when the context is clear, or use the keyword \texttt{super}.

In the data structure where we write $\ml^{-1} \vx$ for some Laplacian $\ml$ and vector $\vx$, we imply the use of an SDD-solver as a black box in nearly-linear time:
\begin{theorem}[\cite{spielman2004nearly, JambulapatiS21}]
	\label{thm:laplacianSolver}
	There is a randomized algorithm which is an $\eps$-approximate Laplacian system solver for the any input $n$-vertex $m$-edge graph
	and $\eps\in (0,1)$ and has the following runtime
	$O(m\poly(\log \log n)\log(1/\eps))$.
\end{theorem}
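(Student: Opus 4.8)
This is a cited result --- the $\poly(\log n)$-overhead version is the classical solver of Spielman--Teng \cite{spielman2004nearly}, and the sharpened $\poly(\log\log n)$ overhead is the theorem of Jambulapati--Sidford \cite{JambulapatiS21} --- and in our paper it is invoked purely as a black box (e.g.\ to apply $\tsc(\ml,C)^{-1}$ to a vector), so there is nothing new to prove. If one wanted to reconstruct the argument, the plan would be as follows. First, reduce an $\eps$-accurate solve of $\ml\vx=\vb$ to a chain of preconditioners: it suffices to build, for a parameter $\kappa$, a Laplacian $\mathbf{B}$ with $\ml \preceq \mathbf{B} \preceq \kappa\ml$ that is itself quickly solvable, and then run preconditioned Chebyshev iteration, which converges at rate $1-\Theta(\kappa^{-1/2})$ and hence needs $O(\sqrt{\kappa}\log(1/\eps))$ steps, each step costing $O(m)$ work for matrix--vector multiplications plus one solve against $\mathbf{B}$.

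Second, construct $\mathbf{B}$ as an \emph{ultrasparsifier}: take a low-stretch spanning tree $T$ of the input graph (total stretch $m\cdot\poly(\log n)$ from known constructions), then sample off-tree edges with probability proportional to their stretch and rescale, obtaining with high probability a graph with $n-1+O(m/k)$ edges satisfying $\ml \preceq \mathbf{B} \preceq O(k)\ml$. Run greedy partial Cholesky elimination on $\mathbf{B}$, repeatedly removing degree-$1$ and degree-$2$ vertices --- each elimination is $O(1)$ work and preserves the Laplacian structure on the remaining vertices --- reducing to a Laplacian on $O(m/k)$ vertices and edges; recurse on that instance. Choosing $k$ to be a suitable small polylogarithmic (in the refined analysis, poly-$\log\log$) factor, the instance sizes decay geometrically over $O(\log n)$ levels, and the per-level costs (iteration count times instance size) telescope to $m\cdot\poly(\log\log n)$; the $\log(1/\eps)$ factor is incurred only at the top level, since the inner solves need only constant accuracy.

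The main obstacle is the ultrasparsifier step: the stretch-based sampling must yield simultaneously \emph{few} extra off-tree edges and a \emph{small} condition number with high probability, and this tension is exactly what the ``ultrasparse ultrasparsifier'' construction of \cite{JambulapatiS21} resolves in order to push the overhead from $\poly(\log n)$ down to $\poly(\log\log n)$. A secondary technical point is composing the randomized, approximate sparsifiers across all $O(\log n)$ recursion levels so that the accumulated spectral error still yields a single $(1\pm\eps)$ guarantee at the root; this works because preconditioned iteration is robust to each inner solve being only approximate, so constant-factor spectral approximations at every level suffice. Given all of this is standard, I would simply cite \cite{spielman2004nearly, JambulapatiS21}.
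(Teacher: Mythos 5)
Your treatment matches the paper's: this theorem is invoked purely as a cited black box (Spielman--Teng for the classical bound, Jambulapati--Sidford for the $\poly(\log\log n)$ overhead), and the paper gives no proof of its own. Your supplementary sketch of the recursive preconditioning argument is a reasonable outline of the cited works but is not needed here.
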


\section{Nested dissection and approximate Schur complements} \label{sec:apxsc}

This section lays the foundation for a recursive decomposition of the input graph.
Our goal is to set up the machinery necessary for approximating $\mproj_{\vw} \defeq \mw^{1/2} \mb (\mb^\top \mw \mb)^{-1} \mb^{\top} \mw^{1/2}$ as needed in the robust IPM. 
In particular, we are interested in the weighted Laplacian matrix $\ml \defeq \mb^\top \mw \mb$.

We begin with a discussion of nested dissection and the associated Schur complements.

\subsection{Cholesky decomposition and Schur complement}\label{sec:schur-complement-graphs}

Let $G$ be a weighted graph. Consider the partition of vertices in $G$ into two subsets $C$ and $F = V(G) \setminus C$ called \emph{boundary} and \emph{interior} vertices. This partitions $\ml$ into four blocks:
\begin{align*}
\ml = \left[ \begin{array}{cc}
\ml_{F,F} & \ml_{F,C} \\
\ml_{C,F} & \ml_{C,C}
\end{array} \right].
\end{align*}

\begin{definition}[Block Cholesky decomposition]
	The \emph{block Cholesky decomposition} of a symmetric $\ml$ with blocks indexed by $F$ and $C$ defined as  above is:
\begin{equation}\label{eq:basic_chol}
	\ml
	 = 
	\left[
	\begin{array}{cc}
		\mi
		& \mzero \\
		\ml_{C,F} (\ml_{F,F})^{-1}
		& \mi
	\end{array}
	\right] 
	\left[
	\begin{array}{cc}
		\ml_{F,F}
		& \mzero \\
		\mzero
		& \sc(\ml, C)
	\end{array}
	\right] 
	\left[
	\begin{array}{cc}
		\mi
		& (\ml_{F,F})^{-1}\ml_{F,C} \\
		\mzero & \mi
	\end{array}\right].
\end{equation}
\end{definition}
The middle matrix in the decomposition is a block-diagonal matrix with blocks indexed by $F$ and $C$, with the lower-right block being:

\begin{definition}[Schur complement]
The \emph{Schur complement} $\sc(\ml,C)$ of $\ml$ onto $C$ is the Laplacian matrix resulting from a partial symmetric Gaussian elimination on $\ml$. Formally, 
\[
	\sc(\ml,C) = \ml_{C,C} - \ml_{C,F}\ml_{F,F}^{-1}\ml_{F,C}. 
\]
\end{definition}      

It is known that $\sc(\ml,C)$ is the Laplacian of another graph with vertex set $C$. 
We further use the convention that if $H$
is a subgraph of $G$ and $V(H) \subset C$, then $\sc(H, C)$ simply
means $\sc(H, C \cap V(H))$.
Graph theoretically, the Schur complement has the following interpretation:
\begin{lemma}
  Let $V(G) = \{v_1, \dots, v_n\}$. 
  Let $C = V(G) - v_1$. Let $\vw_{ij}$ denote the weight of edge $v_i v_j$. Then
  \[
    \sc(\ml, C) = G[C] + H,
  \]
  where $G[C]$ is the subgraph of $G$ induced on the vertex set $S$,
  and $H$ is the graph on $S$ with edges $v_i v_j$ where
  $i, j \in N(v_1)$, and $\vw_{ij} = \vw_{1i} \vw_{1j}/\vw_1$, where $\vw_1$ is
  the total weight of edges incident to $v_1$ in $G$. 
  Note that on the right hand side, we use a graph to mean its Laplacian.
  \qed
\end{lemma}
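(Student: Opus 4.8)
The plan is to observe that when $F = \{v_1\}$ is a single vertex, the Schur complement formula $\sc(\ml,C) = \ml_{C,C} - \ml_{C,F}\ml_{F,F}^{-1}\ml_{F,C}$ becomes a rank-one correction to $\ml_{C,C}$ that can be evaluated in closed form. First I would spell out the three blocks. Since self-loops contribute nothing to $\ml = \mb^\top\mw\mb$, the scalar block is $\ml_{F,F} = [\vw_1]$, the weighted degree of $v_1$, so $\ml_{F,F}^{-1} = [1/\vw_1]$ (if $\vw_1 = 0$ then $v_1$ is isolated, $\ml_{C,F} = \vzero$, and $\sc(\ml,C) = \ml_{C,C} = \ml[G_C]$ with $H$ empty, so the claim is immediate; assume $\vw_1 > 0$ henceforth). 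The off-diagonal block $\ml_{C,F}$ is the column vector $\vv$ with $\vv_i = -\vw_{1i}$ for $i \in N(v_1)$ and $\vv_i = 0$ otherwise, and $\ml_{F,C} = \vv^\top$. Hence $\ml_{C,F}\ml_{F,F}^{-1}\ml_{F,C} = \tfrac{1}{\vw_1}\vv\vv^\top$, a matrix whose $(i,j)$ entry is $\vw_{1i}\vw_{1j}/\vw_1$, vanishing unless both $i,j \in N(v_1)$.

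Next I would separate the ``induced'' part of $\ml_{C,C}$. Writing $G_C$ for the subgraph of $G$ induced on $C$, the matrices $\ml_{C,C}$ and $\ml[G_C]$ have identical off-diagonal entries ($-\vw_{ij}$), but their diagonals differ: the $i$-th diagonal entry of $\ml_{C,C}$ is the full weighted degree of $i$ in $G$, which exceeds its weighted degree in $G_C$ by exactly $\vw_{1i}$ (the weight of the edge to $v_1$). Thus $\ml_{C,C} = \ml[G_C] + \md$ where $\md$ is the diagonal matrix with $\md_{ii} = \vw_{1i}$ for $i \in N(v_1)$ and $0$ elsewhere. Combining with the previous step, $\sc(\ml,C) = \ml[G_C] + \big(\md - \tfrac{1}{\vw_1}\vv\vv^\top\big)$, so it remains only to identify the parenthesized term with $\ml[H]$.

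For the last step I would check that $\md - \tfrac{1}{\vw_1}\vv\vv^\top$ is exactly the Laplacian of the graph $H$ on $N(v_1)$ with edge weights $\vw'_{ij} = \vw_{1i}\vw_{1j}/\vw_1$. The cleanest route is the identity $\md - \tfrac{1}{\vw_1}\vv\vv^\top = \sum_{\{i,j\}\subseteq N(v_1)} \vw'_{ij}\,(\vone_{\{i\}} - \vone_{\{j\}})(\vone_{\{i\}} - \vone_{\{j\}})^\top$, whose right-hand side is $\ml[H]$ by definition of the Laplacian; expanding it, the off-diagonal $(i,j)$ entry is $-\vw'_{ij} = -\vw_{1i}\vw_{1j}/\vw_1$ (matching) and the diagonal $(i,i)$ entry is $\sum_{j\in N(v_1),\, j\neq i}\vw'_{ij} = \tfrac{\vw_{1i}}{\vw_1}\sum_{j\in N(v_1),\, j\neq i}\vw_{1j} = \tfrac{\vw_{1i}}{\vw_1}(\vw_1 - \vw_{1i}) = \vw_{1i} - \tfrac{\vw_{1i}^2}{\vw_1}$, which matches $\md_{ii} - \tfrac{1}{\vw_1}\vv_i^2$. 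This proves the lemma. There is no real obstacle here; the only thing to keep straight is that $\ml_{C,C}$ is \emph{not} the induced Laplacian $\ml[G_C]$ --- its diagonal still carries the weights of the edges incident to $v_1$, which is precisely what produces the $\md$ term and what ultimately becomes the weighted degrees of $H$.
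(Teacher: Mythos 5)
The paper states this lemma without proof (it is presented as a standard fact, closed with \qed), so there is no in-paper argument to compare against. Your proof is correct and complete: the block computation $\ml_{C,F}\ml_{F,F}^{-1}\ml_{F,C} = \tfrac{1}{\vw_1}\vv\vv^\top$, the careful separation $\ml_{C,C} = \ml[G_C] + \md$ (which is indeed the one point people tend to get wrong), and the verification that $\md - \tfrac{1}{\vw_1}\vv\vv^\top$ has the row sums and off-diagonals of the weighted clique $H$ all check out, and you correctly handle the degenerate case $\vw_1 = 0$ consistent with the paper's pseudo-inverse convention.
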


Taking Schur complement is an associative operation. Furthermore, it commutes with edge deletion, and more generally, edge weight deletion. Finally, for our purposes, it can be decomposed under certain special circumstances.

\begin{lemma}
  \label{lem:sc_transitivity}
	If $X \subseteq Y \subseteq V(G)$, then
$	\sc(\sc (\ml, Y), X) = \sc(\ml, X).$ \qed
\end{lemma}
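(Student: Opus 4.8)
The plan is to prove this via the variational characterization of the Schur complement, which makes transitivity essentially automatic. First I would record the identity: for any weighted graph with Laplacian $\ml$, any $C \subseteq V(G)$, and any $\vx \in \R^{V(G)}$,
\[
\vx^{\top}\sc(\ml,C)\,\vx \;=\; \min\bigl\{\,\vy^{\top}\ml\vy \;:\; \vy \in \R^{V(G)},\ \vy_i = \vx_i \text{ for all } i \in C\,\bigr\},
\]
where $\sc(\ml,C)$ is padded with zero rows and columns outside $C$. This follows in one line from the block Cholesky decomposition \cref{eq:basic_chol}: writing $F = V(G)\setminus C$ and completing the square gives $\vy^{\top}\ml\vy = (\vy_F + \ml_{F,F}^{-1}\ml_{F,C}\vy_C)^{\top}\ml_{F,F}(\vy_F + \ml_{F,F}^{-1}\ml_{F,C}\vy_C) + \vy_C^{\top}\sc(\ml,C)\vy_C$; since $\ml_{F,F}\succeq\mzero$ and $\mathrm{Range}(\ml_{F,C})\subseteq\mathrm{Range}(\ml_{F,F})$, the first term can be driven to $0$ by choice of $\vy_F$, leaving exactly $\vy_C^{\top}\sc(\ml,C)\vy_C$.

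Given this, I would argue as follows. Fix $\vx \in \R^{V(G)}$. Applying the identity with the pair $(\sc(\ml,Y),X)$ and then, inside the minimum, with the pair $(\ml,Y)$, we get
\[
\vx^{\top}\sc(\sc(\ml,Y),X)\,\vx
= \min_{\vz_i = \vx_i\,(i\in X)} \ \min_{\vy_i = \vz_i\,(i\in Y)} \vy^{\top}\ml\vy
= \min_{\vy_i = \vx_i\,(i\in X)} \vy^{\top}\ml\vy
= \vx^{\top}\sc(\ml,X)\,\vx ,
\]
where the middle equality holds because, as $\vz$ ranges over all vectors agreeing with $\vx$ on $X$ and $\vy$ over all vectors agreeing with such a $\vz$ on $Y \supseteq X$, the net constraint on $\vy$ is exactly ``$\vy_i = \vx_i$ for $i \in X$'' (the coordinates of $\vy$ on $Y\setminus X$ and on $V(G)\setminus Y$ are jointly unconstrained). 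Since $\sc(\sc(\ml,Y),X)$ and $\sc(\ml,X)$ are both symmetric and supported on $X\times X$, and they induce the same quadratic form, they are equal.

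Alternatively, one can prove this by a direct block computation: partition $V(G) = A \sqcup B \sqcup C$ with $A = V(G)\setminus Y$, $B = Y\setminus X$, $C = X$; compute the $B,C$-blocks of $M \defeq \sc(\ml,Y)$ from the definition; then expand $\sc(\ml,X) = \ml_{C,C} - \ml_{C,A\cup B}(\ml_{A\cup B,A\cup B})^{-1}\ml_{A\cup B,C}$ using the $2\times 2$ block-inverse formula, noting that the $B$-Schur complement of $\ml_{A\cup B, A\cup B}$ is exactly $M_{B,B}$, and collect terms using $M_{B,C} = \ml_{B,C} - \ml_{B,A}\ml_{A,A}^{-1}\ml_{A,C}$ (and its transpose) to arrive at $\ml_{C,C} - \ml_{C,A}\ml_{A,A}^{-1}\ml_{A,C} - M_{C,B}M_{B,B}^{-1}M_{B,C} = \sc(M,C)$. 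The only thing to watch in this second route is the algebra in the block-inverse expansion — it is routine but it is easy to drop a cross term. The variational argument sidesteps this entirely and handles the pseudo-inverse case uniformly, so I would present it as the main proof; the block identity $\sc(\sc(\ml,Y),X) = \sc(\ml,X)$ is in any case standard, and this is why the statement is given with a bare \textsc{qed}.
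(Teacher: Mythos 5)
The paper gives no proof of this lemma — it is stated as a standard fact with a bare \qed — so there is nothing to compare your argument against. Your variational proof is correct and complete: the completing-the-square derivation of the identity $\vx^{\top}\sc(\ml,C)\vx = \min\{\vy^{\top}\ml\vy : \vy|_C = \vx|_C\}$ is valid (the range condition $\mathrm{Range}(\ml_{F,C})\subseteq\mathrm{Range}(\ml_{F,F})$ needed to absorb the cross term holds for any PSD matrix, hence for Laplacians and for the pseudo-inverse case), the collapse of the nested minimization is exactly right since the constraints compose to "agree with $\vx$ on $X$", and equality of symmetric PSD quadratic forms gives equality of the matrices. This is a clean, standard way to establish transitivity and would serve as a valid proof of the lemma.
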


\begin{lemma}  \label{lem:sc-delete-edges}
	Let $\vw_e$ denote the weight of edge $e$ in $G$.
	Suppose $C \subseteq V(G)$, and 
	$H$ is a subgraph of $G$ on the vertex set $C$ with edge weights $\vw'_e \leq \vw_e$ for all edges in $G[C]$.
	Let $\ml'$ denote the Laplacian of $H$. 
	Then,
			$\sc(\ml - \ml', C) = \sc(\ml, C) - \ml'.$ \qed
\end{lemma}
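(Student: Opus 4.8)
The plan is to reduce the claimed identity directly to the defining formula for the Schur complement, exploiting the fact that $\ml'$ is supported entirely on the boundary block indexed by $C \times C$. First I would record a preliminary observation: since $H$ has vertex set contained in $C$ and edge weights $\vw'_e \le \vw_e$, the matrix $\ml - \ml'$ is itself the Laplacian of a genuine weighted graph, namely $G$ with the weight of each edge $e \in E(G[C])$ decreased by $\vw'_e \ge 0$ (reading $\vw'_e = 0$ for edges of $G[C]$ absent from $H$). In particular all resulting weights are nonnegative, so $\sc(\ml - \ml', C)$ is well-defined in the sense of the definitions above. This is the only place the hypothesis $\vw'_e \le \vw_e$ is used.

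Next I would write both $\ml$ and $\ml - \ml'$ in $F, C$ block form with $F = V(G) \setminus C$, using the convention that $\ml'$ is zero-padded outside $C \times C$. Because every edge of $H$ has both endpoints in $C$, the matrix $\ml'$ has zero rows and columns indexed by $F$; hence $(\ml - \ml')_{F,F} = \ml_{F,F}$, $(\ml - \ml')_{F,C} = \ml_{F,C}$, $(\ml - \ml')_{C,F} = \ml_{C,F}$, and only the boundary block changes, $(\ml - \ml')_{C,C} = \ml_{C,C} - \ml'$.

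Then the conclusion is immediate from the definition of the Schur complement:
\[
\sc(\ml - \ml', C) = (\ml - \ml')_{C,C} - (\ml - \ml')_{C,F}\,\big((\ml - \ml')_{F,F}\big)^{-1}\,(\ml - \ml')_{F,C},
\]
which, by the block identities above, equals $\big(\ml_{C,C} - \ml'\big) - \ml_{C,F}\,\ml_{F,F}^{-1}\,\ml_{F,C} = \sc(\ml, C) - \ml'$. Note that the correction term $\ml_{C,F}\,\ml_{F,F}^{-1}\,\ml_{F,C}$ is literally identical for the two matrices, so the argument goes through verbatim even when $\ml_{F,F}$ is singular and $(\cdot)^{-1}$ denotes the Moore-Penrose pseudoinverse, since the same pseudoinverse appears on both sides.

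There is essentially no obstacle here: the statement is a one-line consequence of the Schur complement formula once one notices that subtracting a $C$-supported Laplacian leaves the $F,F$, $F,C$, and $C,F$ blocks untouched. The only points deserving a moment's care are (i) verifying that $\ml - \ml'$ is a valid Laplacian so that the identity is not vacuous, and (ii) being explicit that $\ml'$ is interpreted with zeros outside $C \times C$ so that the block identities hold as stated.
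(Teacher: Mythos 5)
Your proof is correct, and it is the intended argument: the paper states this lemma with a \qed and no written proof, treating it as an immediate consequence of the Schur complement formula once one notes that subtracting a $C$-supported Laplacian leaves the $F,F$, $F,C$, and $C,F$ blocks unchanged. Your two side remarks — that $\vw'_e \le \vw_e$ is only needed so that $\ml - \ml'$ is a genuine graph Laplacian, and that the correction term $\ml_{C,F}\ml_{F,F}^{-1}\ml_{F,C}$ is literally identical on both sides so the pseudoinverse convention causes no issue — are exactly the right points to flag.
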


\begin{lemma}\label{lem:sc-decomposition}
  Let $\ml$ be the Laplacian of graph $G$ with the decomposition 
  $\ml = \ml_1 + \ml_2$, where $\ml_1$ is a Laplacian supported on the vertex set $V_1$ and $\ml_2$ on $V_2$. 
  Furthermore, suppose  $V_1 \cap V_2 \subseteq C$ for some vertex set $C \subseteq V(G)$.
  Then 
	\[
		\sc(\ml, C) = \sc(\ml_1, C \cap V_1) + \sc(\ml_2, C \cap V_2).
	\]
\end{lemma}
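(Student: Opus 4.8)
The plan is to reduce the claim to the transitivity of Schur complements (Lemma~\ref{lem:sc_transitivity}) together with the additivity of Schur complements over edge-disjoint pieces that share only boundary vertices — the latter being exactly the two-layer decomposition identity $\sc(\ml,C) = \sc(\ml[H_1],C) + \sc(\ml[H_2],C)$ used in \cref{subsec:overview_sc}. More precisely, set $S \defeq V_1 \cap V_2$. Since $\ml_i$ is supported on $V_i$ and $S \subseteq C$, the set $S$ separates the two pieces; first I would eliminate all vertices \emph{not} in $C \cup$ (the part we keep), and track how the two summands interact only through $S$.

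The key steps, in order: (1) Observe $V(G) = V_1 \cup V_2$ and the two interiors $V_1 \setminus C$ and $V_2 \setminus C$ are disjoint, because any common vertex would lie in $V_1 \cap V_2 = S \subseteq C$. (2) Write $F \defeq V(G) \setminus C = (V_1 \setminus C) \sqcup (V_2 \setminus C) =: F_1 \sqcup F_2$, a disjoint union. Because $\ml_1$ has no entries touching $F_2$ and $\ml_2$ has no entries touching $F_1$, the block $\ml_{F,F}$ is block-diagonal: $\ml_{F,F} = (\ml_1)_{F_1,F_1} \oplus (\ml_2)_{F_2,F_2}$, and likewise $\ml_{F,C}$ splits as the stacking of $(\ml_1)_{F_1,C}$ and $(\ml_2)_{F_2,C}$ (with the rows indexed by the other block being zero). (3) Plug these block-diagonal forms into the Schur complement formula $\sc(\ml,C) = \ml_{C,C} - \ml_{C,F}\ml_{F,F}^{-1}\ml_{F,C}$. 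The cross terms vanish because $\ml_{F,F}^{-1} = (\ml_1)_{F_1,F_1}^{-1} \oplus (\ml_2)_{F_2,F_2}^{-1}$ is block-diagonal, so the quadratic form separates: $\ml_{C,F}\ml_{F,F}^{-1}\ml_{F,C} = (\ml_1)_{C,F_1}(\ml_1)_{F_1,F_1}^{-1}(\ml_1)_{F_1,C} + (\ml_2)_{C,F_2}(\ml_2)_{F_2,F_2}^{-1}(\ml_2)_{F_2,C}$. Combined with $\ml_{C,C} = (\ml_1)_{C,C} + (\ml_2)_{C,C}$, this gives
\[
\sc(\ml,C) = \bigl((\ml_1)_{C,C} - (\ml_1)_{C,F_1}(\ml_1)_{F_1,F_1}^{-1}(\ml_1)_{F_1,C}\bigr) + \bigl((\ml_2)_{C,C} - (\ml_2)_{C,F_2}(\ml_2)_{F_2,F_2}^{-1}(\ml_2)_{F_2,C}\bigr).
\]
(4) Finally identify each bracket as a Schur complement of $\ml_i$. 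Note $F_i = V_i \setminus C = V_i \setminus (C \cap V_i)$, so eliminating $F_i$ from $\ml_i$ is exactly $\sc(\ml_i, C \cap V_i)$ — here one must be mildly careful about the ambient index set, using the convention stated before Lemma~\ref{lem:sc_transitivity} that a Schur complement onto $X$ of a graph whose vertex set is contained in a larger set is padded by zeros outside $V_i$, so that the sum on the right is well-defined as operators on $\R^{V(G)}$.

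The main obstacle is purely bookkeeping rather than conceptual: making the block decomposition of $\ml_{F,F}$ and $\ml_{F,C}$ rigorous requires being explicit that $\ml_i$ restricted to $F_{3-i}$ is identically zero, which follows from $\mathrm{supp}(\ml_i) \subseteq V_i$ and $F_{3-i} \cap V_i \subseteq (V_{3-i} \setminus C) \cap V_i \subseteq (V_1 \cap V_2) \setminus C = \emptyset$. Once that vanishing is nailed down, the pseudo-inverse of a block-diagonal PSD matrix is the block-diagonal of the pseudo-inverses, and everything separates as above. I would also double-check the edge case where some $F_i$ is empty (then the corresponding term is just $(\ml_i)_{C,C} = \sc(\ml_i, C \cap V_i)$), which the formula already handles.
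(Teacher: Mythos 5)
Your proof is correct, but it takes a genuinely different route from the paper's. You argue by direct block-matrix computation: split $F = V(G)\setminus C$ into the disjoint interiors $F_1 = V_1\setminus C$ and $F_2 = V_2\setminus C$ (disjoint precisely because $V_1\cap V_2\subseteq C$), observe that $\ml_{F,F}$ is block-diagonal with blocks $(\ml_1)_{F_1,F_1}$ and $(\ml_2)_{F_2,F_2}$ and that $\ml_{F,C}$ splits accordingly, and then watch the quadratic term $\ml_{C,F}\ml_{F,F}^{-1}\ml_{F,C}$ in the Schur complement formula separate into the two pieces. The paper instead never opens up the formula: it chains \cref{lem:sc_transitivity} (transitivity) with \cref{lem:sc-delete-edges} (Schur complements commute with subtracting a Laplacian supported on the retained vertices), writing $\sc(\ml_1+\ml_2,C) = \sc(\sc(\ml_1+\ml_2, C\cup V_2),C) = \sc(\sc(\ml_1,C\cup V_2)+\ml_2, C)$ and peeling off $\ml_2$ at the end. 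Your version is more elementary and self-contained — it does not rely on those two lemmas, which the paper states without proof — at the cost of the index bookkeeping you flag (the vanishing of $\ml_i$ on $F_{3-i}$, the block-diagonality of the pseudo-inverse, and the harmless assumption that $V(G)=V_1\cup V_2$ up to isolated vertices that contribute zero rows and columns). The paper's version is shorter given its toolkit and composes more cleanly when the decomposition is iterated down the separator tree, which is how it is used later.
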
	
\begin{proof}
	We have
	\begin{align*}
		\sc(\ml, C) &= \sc(\ml_1 + \ml_2, C) \\
		&= \sc(\sc(\ml_1 + \ml_2, C \cup V_2), C) \\
		&= \sc(\sc(\ml_1, C \cup V_2) + \ml_2, C)  \tag{by \cref{lem:sc-delete-edges}} \\
		&= \sc(\sc(\ml_1, C) + \ml_2, C)  \tag{since $(C \cup V_2) \cap V_1 \subseteq C$}\\
		&= \sc(\ml_1, C) + \sc(\ml_2, C)  \tag{by \cref{lem:sc-delete-edges}},\\
		&= \sc(\ml_1, C \cap V_1) + \sc(\ml_2, C \cap V_2) \tag{since $\ml_i$ is supported on $V_i$ for $i=1,2$} 
	\end{align*}
	as desired.
\end{proof}

\subsection{Separator tree} \label{subsec:construct_tree}

In the overview, we briefly gave the intuition for a 2-level partition of the input graph; here we extend it to
a recursive partitioning scheme with $O(\log n)$-levels. We begin with the formal definitions.

\begin{restatable}[Separable graph]{definition}{defSeparableGraph}
 \label{defn:separable-graph}
A graph $G=(V, E)$ is $\alpha$-separable if there exists two constants $c>0$ and $b\in (0, 1)$ such that every nonempty subgraph $H=(V(H)\subseteq V, E(H)\subseteq E)$ with $|E(H)|\ge 2$ of $G$ can be partitioned into $\region_1$ and $\region_2$ such that 
\begin{itemize}
\item $E(\region_1)\cup E(\region_2) = E(H)$, $E(\region_1)\cap E(\region_2) = \emptyset$,
\item $|V(\region_1)\cap V(\region_2)|\le c\lceil |E(H)|^{\alpha} \rceil$,
\item $|E(\region_i)|\le b|E(H)|$, for $i=1,2$.
\end{itemize}
We call 
$S(H) \defeq V(\region_1)\cap V(\region_2)$ the \textit{balanced vertex separator} of $H$. 
\end{restatable}

It is known that any planar graph is $1/2$-separable. 

\begin{remark}
	As we discussed in \cref{subsec:overview_proof}, our LP formulation for the IPM uses a
	\emph{modified planar graph} which is the original planar graph with two additional vertices and $O(n)$
	additional edges incident to them.
	By adding two vertices and edges incident to them to a planar graph, the modified graph is also $1/2$-separable with the constant $c$ in \cref{defn:separable-graph} increased by $2$.
\end{remark}


We apply nested dissection recursively to each region using balanced vertex separators, until the regions are of constant size. 
The resulting hierarchical structure can be represented by a tree $\ct$, which is known as the \emph{separator tree} of $G$:

\begin{definition} [Separator tree $\ct$] \label{defn:separator-tree}
Let $G$ be a modified planar graph. 
A separator tree $\ct$ is a binary tree whose nodes represent subgraphs of $G$ such that the children of each node $H$ form a balanced partition of $H$.

Formally, each node of $\ct$ is a \emph{region} (edge-induced subgraph) $\region$ of $G$; we denote this by $\region \in \ct$. 
At a node $\region$, we store subsets of vertices $\bdry{\region}, \sep{\region}, \elim{\region} \subseteq V(\region)$, 
where $\bdry{\region}$ is the set of \emph{boundary vertices} that are incident to vertices outside $\region$ in $G$;
$\sep{\region}$ is the balanced vertex separator of $\region$;
and $\elim{\region}$ is the set of \emph{eliminated vertices} at $\region$. 
Concretely, the nodes and associated vertex sets are defined recursively in a top-down way as follows: 

	\begin{enumerate}
		\item The root of $\ct$ is the node $\region = G$, with $\bdry{\region} = \emptyset$ and $\elim{\region} = \sep{\region}$.
		\item A non-leaf node $\region \in \ct$ has exactly two children $D_1, D_2 \in \ct$ that form an edge-disjoint partition of $\region$ in \cref{defn:separable-graph}, and their vertex sets intersect on the balanced separator $\sep{\region}$ of $\region$. $D_1$ and $D_2$ does not have  any isolated vertex.
		Define $\bdry{D_1} = (\bdry{\region} \cup \sep{\region}) \cap V(D_1)$, and similarly $\bdry{D_2} = (\bdry{\region} \cup \sep{\region}) \cap V(D_2)$.
		Define $\elim{\region} = \sep{\region} \setminus \bdry{\region}$.\label{property: boundary}
		
		\item If a region $\region$ contains a constant number of edges, then we stop the recursion and $\region$ becomes a leaf node. Further, we define $\sep{\region} = \emptyset$ and $\elim{\region} = V(\region) \setminus \bdry{\region}$. Note that by construction, each edge of $G$ is contained in a unique leaf node. 
	\end{enumerate}

Let $\eta(H)$ denote the height of node $H$ which is defined as the maximum number of edges on a tree path from $H$ to one of its descendants. $\eta(H)=0$ if $H$ is a leaf. Note that the height difference between a parent and child node could be greater than one. Let $\eta$ denote the height of $\ct$ which is defined as the maximum height of nodes in $\ct$. We say $H$ is at \emph{level} $i$ if $\eta(H)=i$. 
\end{definition}

\begin{observation}\label{lem:F_partitions}
	Using the above definition, $\{\elim{\region} : \region \in \ct\}$ partitions the vertex set $V(G)$. 
\end{observation}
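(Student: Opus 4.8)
The plan is to prove a stronger, \emph{localized} statement by bottom-up (structural) induction on the separator tree $\ct$: for every node $H \in \ct$, the sets $\{\elim{H'} : H' \in \ct_H\}$ are pairwise disjoint with union $V(H) \setminus \bdry{H}$. Applying this at the root $H = G$, where $\bdry{G} = \emptyset$, immediately yields the observation. The base case is a leaf $H$: here $\ct_H = \{H\}$ and by \cref{defn:separator-tree} we have $\elim{H} = V(H) \setminus \bdry{H}$, so the single-block family trivially works.

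For the inductive step, let $H$ be a non-leaf node with children $D_1, D_2$, and assume the statement for $D_1$ and $D_2$ (legitimate since $\eta(D_i) < \eta(H)$). Since the node set $\ct_H$ is the disjoint union of $\{H\}$, $\ct_{D_1}$, and $\ct_{D_2}$, and since by the inductive hypothesis $\bigcup_{H' \in \ct_{D_i}} \elim{H'} = V(D_i)\setminus \bdry{D_i}$, it suffices to check that the three sets $\elim{H}$, $V(D_1)\setminus\bdry{D_1}$, $V(D_2)\setminus\bdry{D_2}$ are pairwise disjoint and cover $V(H)\setminus\bdry{H}$; combining these with the two inductive partitions then gives the partition of $V(H)\setminus\bdry{H}$. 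All three facts are short set-chases using the recursive definitions $\bdry{D_i} = (\bdry{H}\cup\sep{H})\cap V(D_i)$ and $\elim{H}=\sep{H}\setminus\bdry{H}$, together with $V(D_1)\cap V(D_2)=\sep{H}$ and $V(H)=V(D_1)\cup V(D_2)$. For instance: any $v\in\elim{H}\subseteq\sep{H}$ lies in $V(D_i)$, hence in $\bdry{D_i}$, so $\elim{H}$ is disjoint from $V(D_i)\setminus\bdry{D_i}$; any $v\in V(D_1)\cap V(D_2)=\sep{H}\subseteq\bdry{H}\cup\sep{H}$ lies in $\bdry{D_1}$, giving disjointness of $V(D_1)\setminus\bdry{D_1}$ and $V(D_2)\setminus\bdry{D_2}$; and for covering, a vertex $v\in V(H)\setminus\bdry{H}$ lies in some $V(D_i)$ and is either outside $\bdry{D_i}$ (done) or in $\bdry{D_i}=(\bdry{H}\cup\sep{H})\cap V(D_i)$, which, since $v\notin\bdry{H}$, forces $v\in\sep{H}\setminus\bdry{H}=\elim{H}$.

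The only point requiring care — and the only place where graph structure rather than pure set manipulation enters — is the identity $V(H)=V(D_1)\cup V(D_2)$ used in the covering direction. This needs the regions to have no isolated vertices: then every vertex of $H$ is an endpoint of some edge of $H$, that edge lies in exactly one of $D_1, D_2$, and its endpoints lie in the corresponding $V(D_i)$. This holds because $G$ is assumed connected (so the root has no isolated vertex) and \cref{defn:separator-tree} explicitly requires $D_1, D_2$ to be isolated-vertex-free, so by induction every region is. The companion identity $V(D_1)\cap V(D_2)=\sep{H}$ is built into the definition of the children's partition. Beyond this bookkeeping I expect no genuine obstacle; the main thing to get right is keeping the three pairwise-disjointness checks and the two covering inclusions straight, and noting that the root case, where $\elim{G}$ is defined directly as $\sep{G}$, is consistent with the non-leaf formula $\sep{G}\setminus\bdry{G}$ since $\bdry{G}=\emptyset$.
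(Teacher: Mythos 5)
Your proof is correct. The paper states this as an unproved observation, so there is no "paper proof" to compare against; your strengthened inductive invariant --- that $\{\elim{H'} : H' \in \ct_{H}\}$ partitions $V(H)\setminus\bdry{H}$ for every node $H$ --- is exactly the natural way to make the claim rigorous, and your set-chases using $\bdry{D_i}=(\bdry{H}\cup\sep{H})\cap V(D_i)$, $V(D_1)\cap V(D_2)=\sep{H}$, and the no-isolated-vertices condition (which is indeed what guarantees $V(H)=V(D_1)\cup V(D_2)$) all check out, including the consistency of the root case.
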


\begin{observation} \label{lem:parent-child-boundry-relation}
Suppose $H$ is a node in $\ct$ with children $D_1$ and $D_2$. We have $\partial D_1 \cup \partial D_2 = \partial H \cup F_H$.
\end{observation}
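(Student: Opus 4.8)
The plan is to simply unfold the recursive definitions from \cref{defn:separator-tree}. Since $H$ has two children $D_1, D_2$, it is a non-leaf node, so the second item of the definition applies: $\bdry{D_1} = (\bdry{H} \cup \sep{H}) \cap V(D_1)$, $\bdry{D_2} = (\bdry{H} \cup \sep{H}) \cap V(D_2)$, and $\elim{H} = \sep{H} \setminus \bdry{H}$. Distributing the intersection over the union gives
\[
  \bdry{D_1} \cup \bdry{D_2} = (\bdry{H} \cup \sep{H}) \cap \bigl(V(D_1) \cup V(D_2)\bigr).
\]

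The key intermediate claim is $V(D_1) \cup V(D_2) = V(H)$. First I would note $\subseteq$ is immediate since $D_1, D_2$ are subgraphs of $H$. For $\supseteq$: because $D_1$ and $D_2$ form an edge-disjoint partition of $H$ (so $E(D_1) \cup E(D_2) = E(H)$) and each $D_i$ is an edge-induced subgraph, the set $V(D_1) \cup V(D_2)$ is exactly the set of endpoints of edges in $E(H)$. It therefore suffices to know $H$ has no isolated vertices, which I would establish by induction down the tree: the root $G$ is connected (hence has no isolated vertices, assuming $|E(G)| \ge 1$), and each child in the construction is explicitly required to have no isolated vertex. Granting this, $V(D_1) \cup V(D_2) = V(H)$.

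Plugging this in and using $\bdry{H} \subseteq V(H)$ and $\sep{H} \subseteq V(H)$, we get $\bdry{D_1} \cup \bdry{D_2} = \bdry{H} \cup \sep{H}$. Finally, since $\elim{H} = \sep{H} \setminus \bdry{H}$, we have $\bdry{H} \cup \elim{H} = \bdry{H} \cup (\sep{H}\setminus\bdry{H}) = \bdry{H} \cup \sep{H}$, which completes the chain of equalities $\bdry{D_1}\cup\bdry{D_2} = \bdry{H}\cup F_H$. I do not anticipate any real obstacle here; the only point requiring a little care is the justification that $V(D_1) \cup V(D_2) = V(H)$, i.e. that no vertex of $H$ is lost when passing to the edge-induced children, which is exactly where the "no isolated vertex" condition in \cref{defn:separator-tree} is used.
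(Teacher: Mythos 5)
The paper states this as an \texttt{observation} with no proof, and your argument is exactly the intended definitional unfolding: distribute the intersection in $\bdry{D_i} = (\bdry{H}\cup \sep{H})\cap V(D_i)$, use that the $D_i$ are edge-induced with no isolated vertices to get $V(D_1)\cup V(D_2)=V(H)$, and note $\bdry{H}\cup F_H = \bdry{H}\cup \sep{H}$. Your proof is correct, and you rightly flag the only nontrivial point (no vertex of $H$ is lost in the edge-induced children), which is precisely what the "no isolated vertex" clause of \cref{defn:separator-tree} guarantees.
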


\begin{observation} \label{lem:bdry-containment}
	Suppose $H$ is a node in $\ct$. Then $\bdry{H} \subseteq \cup_{\text{ancestor $A$ of $H$}} F_A$.
\end{observation}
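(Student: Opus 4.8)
The plan is to prove this observation by a straightforward induction on the depth of $H$ in the separator tree $\ct$, walking down the unique root-to-$H$ path. The statement is essentially a direct unwinding of the recursive definition of the boundary sets in \cref{defn:separator-tree}, so the main work is bookkeeping rather than any real difficulty.

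For the base case I would take $H = G$, the root of $\ct$. By \cref{defn:separator-tree} we have $\bdry{G} = \emptyset$, and the root has no ancestors, so the union $\bigcup_{\text{ancestor } A \text{ of } H} \elim{A}$ is empty; the containment $\emptyset \subseteq \emptyset$ holds trivially. For the inductive step, let $P$ be the parent of $H$ and assume the claim for $P$, i.e.\ $\bdry{P} \subseteq \bigcup_{\text{ancestor } A \text{ of } P} \elim{A}$. By the recursive rule for boundary sets, $\bdry{H} = (\bdry{P} \cup \sep{P}) \cap V(H) \subseteq \bdry{P} \cup \sep{P}$. The one small extra fact I would invoke is that $\sep{P} \subseteq \bdry{P} \cup \elim{P}$, which is immediate from $\elim{P} = \sep{P} \setminus \bdry{P}$. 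Combining these gives $\bdry{H} \subseteq \bdry{P} \cup \elim{P}$. Since the ancestors of $H$ are exactly $P$ together with the ancestors of $P$, applying the inductive hypothesis yields $\bdry{H} \subseteq \bigl(\bigcup_{\text{ancestor } A \text{ of } P} \elim{A}\bigr) \cup \elim{P} = \bigcup_{\text{ancestor } A \text{ of } H} \elim{A}$, completing the induction. The argument for leaf nodes needs no separate treatment, since a leaf is never a parent; alternatively one checks the recursion for $\bdry{H}$ is unchanged.

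There is essentially no obstacle here. The only point that warrants a moment of care is the convention (from the Preliminaries) that "ancestor" is \emph{strict}: when passing from $P$ to its child $H$, the ancestor set grows by exactly the single node $P$, so that $\elim{P}$ is precisely the new term that absorbs the separator $\sep{P}$. Getting this one step of the bookkeeping right is what makes the induction close cleanly, and I would state it explicitly rather than leave it implicit.
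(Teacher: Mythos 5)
Your proof is correct, and it is the natural argument the authors intended: the paper states this as an \texttt{Observation} without proof, and your induction down the root-to-$H$ path, using $\bdry{H}\subseteq\bdry{P}\cup\sep{P}$ together with $\sep{P}\subseteq\bdry{P}\cup F_P$, is exactly the bookkeeping that justifies it. Your explicit note that the (strict) ancestor set of $H$ is $\{P\}$ together with the ancestors of $P$ is the right point to make precise.
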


Fakcharoenphol and Rao~\cite{fakcharoenphol2006planar} gave an algorithm that computes the separator tree for any planar graph. 

\begin{theorem}[Separator tree construction \cite{fakcharoenphol2006planar}] \label{thm:separator_tree_construction}
	Given a planar graph $G$, 
	there is an algorithm that computes a separator tree $\ct$ of $G$ of height $\eta = O(\log n)$ in $O(n\log n)$ time. 
\end{theorem}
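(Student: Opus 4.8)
The plan is to construct $\ct$ top-down by recursively applying a planar separator routine, following \cref{defn:separator-tree} verbatim. Begin with the root $\region = G$, with $\bdry{\region} = \emptyset$. Process regions one at a time: given a region $\region$ with more than a constant number of edges, compute a \emph{balanced vertex separator} $\sep{\region}$, i.e.\ a set of $O(\sqrt{|E(\region)|})$ vertices whose removal splits $\region$ into two edge-disjoint subgraphs $D_1, D_2$, each with at most $b|E(\region)|$ edges for a fixed constant $b<1$, and with $V(D_1)\cap V(D_2) = \sep{\region}$. Install $D_1,D_2$ as the two children of $\region$, set $\elim{\region} = \sep{\region}\setminus\bdry{\region}$ and $\bdry{D_i} = (\bdry{\region}\cup\sep{\region})\cap V(D_i)$ as prescribed, and recurse on $D_1$ and $D_2$; once a region has $O(1)$ edges, make it a leaf with $\sep{\cdot} = \emptyset$ and $\elim{\cdot} = V(\cdot)\setminus\bdry{\cdot}$. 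Since every subgraph of a planar graph is planar, the separator routine applies at every region, so this produces a tree that satisfies all the conditions of \cref{defn:separator-tree} (and witnesses $1/2$-separability as in \cref{defn:separable-graph}).

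For the height and running time: along any root-to-node path the edge count shrinks by the factor $b$ at each step, so a region at depth $\Omega(\log_{1/b} m)$ has $O(1)$ edges; since $m = O(n)$ for planar $G$, this gives $\heightN{\ct} = \eta = O(\log n)$. For the running time, the regions at any fixed level of $\ct$ are pairwise edge-disjoint and their edge sets cover $E(G)$, so their total size is $O(m) = O(n)$; as a balanced planar separator (together with a planar embedding) can be computed in time linear in the region size \cite{LiptonT79}, the work at each level is $O(n)$, and over the $O(\log n)$ levels the total is $O(n\log n)$. Maintaining the sets $\bdry{\region}, \sep{\region}, \elim{\region}$ and the heights $\heightN{\region}$ at each node costs only a constant factor more.

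The delicate point — the one that makes this a theorem rather than an exercise — is producing, in near-linear time, a separator that is \emph{edge-balanced} (each side has at most a constant fraction of the edges) rather than merely vertex-balanced, while keeping its size $O(\sqrt{|E(\region)|})$. A single call to the classical Lipton--Tarjan theorem with degree weights gives only a weak guarantee, since a high-degree separator vertex can leave one side with nearly all the edges; this is handled by using a simple-cycle separator (so the separator itself carries only $O(\sqrt{|E(\region)|})$ edges and the remaining edges split cleanly across the cycle), combined with a constant number of re-separations of the heavier side — the total separator size stays $O(\sqrt{|E(\region)|})$ because $\sum_i \sqrt{k_i} = O(\sqrt{k})$ when $\sum_i k_i = k$ over $O(1)$ pieces. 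All of this is carried out in \cite{fakcharoenphol2006planar}, which we invoke as a black box. Finally, for the \emph{modified} planar graph used by the IPM, namely $G$ together with the two vertices $s,t$ and $O(n)$ incident edges, we simply add $\{s,t\}$ to $\sep{\region}$ (hence to every $\bdry{\region}$ and $\elim{\region}$) at every region; this increases all separator and boundary sizes by an additive $2$, preserves $1/2$-separability with the constant $c$ larger by $2$, and leaves the height and running-time bounds unchanged.
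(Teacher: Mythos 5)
Your proposal is correct and matches the paper's treatment: the paper states this theorem purely as a citation to Fakcharoenphol--Rao, and the only construction it actually spells out --- the appendix proof of \cref{thm:separableDecompT} for general $\alpha$-separable graphs --- is exactly your top-down recursion (split each region by a balanced edge-separator, recurse until $O(1)$ edges remain, bound the height by the geometric decay of edge counts and the time by edge-disjointness of the regions at each depth, giving $O(s(m)\log m)=O(n\log n)$ here). Your extra discussion of obtaining \emph{edge}-balanced separators of size $O(\sqrt{m})$ in linear time is a genuine point the paper glosses over, but it is standard and correctly deferred to the cited work.
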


For computing the separator tree $\ct$ of a modified planar graph, we may apply their method to the original planar graph to get the separator $\ct'$, and add the two new vertices $s,t$ to $F_G$ at the root node $G$,
and to the boundary sets $\bdry{H}$ at every non-root node $H$. 
The additional edges incident to $s,t$ can be recursively partitioned from a node to its children, which increases the height of $\ct$ by $O(\log n)$. 
Thus, we have the following corollary:

\begin{corollary}[Separator tree construction for modified planar graph] \label{thm:separator_tree_construction_modified}
	Given a \emph{modified planar graph} $G$, 
	there is an algorithm that computes a separator tree $\ct$ of $G$ of height $\eta = O(\log n)$ in $O(n\log n)$ time. 
\end{corollary}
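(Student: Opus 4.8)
The plan is to reduce to the purely planar case handled by \cref{thm:separator_tree_construction} and then graft the two special vertices back in. Let $G'$ be the planar graph obtained from the modified planar graph $G$ by deleting $s,t$ together with the set $E_{st}$ of all edges incident to them; by the definition of a modified planar graph, $G'$ is planar with at most $n$ vertices and $m$ edges, and $|E_{st}| = O(n)$. First I would run the algorithm of \cref{thm:separator_tree_construction} on $G'$, obtaining a separator tree $\ct'$ of $G'$ of height $O(\log n)$ in $O(n\log n)$ time, with all the vertex sets $\bdry{H}, \sep{H}, \elim{H}$ attached as in \cref{defn:separator-tree}.

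Next I would patch $s,t$ and $E_{st}$ into $\ct'$ to produce $\ct$. Keep every region of $\ct'$, but at the root set $\elim{G}\leftarrow\elim{G}\cup\{s,t\}$ (so $s,t$ are eliminated at the top), and at every non-root node $H$ set $\bdry{H}\leftarrow\bdry{H}\cup\{s,t\}$; once $s,t$ are regarded as (virtual) vertices belonging to every region, the recursive identities $\bdry{D_i}=(\bdry{H}\cup\sep{H})\cap V(D_i)$ and $\elim{H}=\sep{H}\setminus\bdry{H}$ of \cref{defn:separator-tree} continue to hold. It remains to place each edge of $E_{st}$ into a leaf so that every edge of $G$ lies in a unique leaf. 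I would push each edge $(s,v)\in E_{st}$ (and symmetrically each $(v,t)$) down $\ct$: at a node $H$ with children $D_1,D_2$ the edge is sent to whichever child has $v$ in its vertex set --- at least one does, since $V(H)=V(D_1)\cup V(D_2)$ --- breaking ties arbitrarily. When the edge reaches a leaf $\ell$, its non-special endpoint lies in $V(\ell)$, and $|V(\ell)|=O(1)$, so only $O(1)$ edges of $E_{st}$ ever reach a given leaf; append them to $\ell$ (expanding $\ell$ into a balanced binary subtree over its constantly many edges if one wants leaf sizes to stay under the original recursion cutoff, which adds at most an additive $O(\log n)$ to the height). The routing touches each of the $O(n)$ edges of $E_{st}$ along a root-to-leaf path, costing $O(n\log n)$ additional time, and the total height of $\ct$ is $O(\log n)+O(\log n)=O(\log n)$.

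Finally I would verify that $\ct$ meets every condition of \cref{defn:separator-tree} for $G$: edge-disjointness of siblings and ``each edge of $G$ in a unique leaf'' are inherited from $\ct'$ together with the routing rule; the balanced-partition condition $|E(D_i)|\le b|E(H)|$ survives up to adjusting the constant $b$, since each region's edge count changes only by the $E_{st}$ edges routed through it and the root split may be taken to be the one from $\ct'$; and the separator-size bound holds because $|\sep{H}|$ and $|\bdry{H}|$ grow by at most $2$ over their values in $\ct'$, matching the observation (the remark following \cref{defn:separable-graph}) that a modified planar graph is $1/2$-separable with the constant $c$ increased by $2$. Adding the two runtimes, $O(n\log n)$ for \cref{thm:separator_tree_construction} on $G'$ and $O(n\log n)$ for the grafting, gives the claimed $O(n\log n)$ bound. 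The main obstacle is precisely this last bookkeeping step: one must check that treating $s,t$ as virtual members of every boundary and routing $E_{st}$ ad hoc does not break any of the recursive boundary/elimination identities in \cref{defn:separator-tree} and still leaves every $|\bdry{H}|=O(\sqrt{|E(H)|})+2$, since all the downstream data structures (\cref{thm:apxsc}, \cref{thm:maintain_representation}, etc.) are built on exactly these invariants; none of the verification is deep, but it has to be done carefully.
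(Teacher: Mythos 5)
Your proposal is correct and follows essentially the same route as the paper: the paper's own argument (the paragraph preceding the corollary) also runs the Fakcharoenphol--Rao construction on the underlying planar graph, adds $s,t$ to $F_G$ at the root and to $\bdry{H}$ at every non-root node, and recursively distributes the $O(n)$ extra incident edges from each node down to its children, increasing the height by $O(\log n)$. Your version just spells out the routing rule and the bookkeeping for the boundary/elimination identities in more detail than the paper does.
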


To discuss the structures in the separator tree, we define the following terms:

\begin{definition}
	Let $\ct(i)$ be the subset of nodes in $\ct$ at level $i$.
	For a node $\region$, let $\ct_\region$ be the subtree of $\ct$ rooted at $\region$. Let $\pathT{\region}$ be the set nodes on the path from $\region$ to the root of $\ct$, including $\region$.
	Given a set of nodes $\collN = \{\region : \region \in \ct\}$, define
	\[ \pathT{\collN} := \bigcup_{\region \in \collN} \pathT{\region}.\]
	
	Finally, we partition these nodes by their level in $\ct$, and use $\pathT{\collN,i}$ to denote all the nodes in $\pathT{\collN}$ at level $i$ in $\ct$.
\end{definition}

Fakcharoenphol and Rao~\cite[Section 3.5]{fakcharoenphol2006planar} showed
that for a set $\collN$ of $K$ nodes in $T$, the total number of boundary vertices from the nodes in
$\pathT{\collN}$ is $O(\sqrt{mK})$. However, their claim is not stated
as a result we can cite here. 
We provide a simple,
self-contained proof in \cref{sec:appendix} of a slightly weaker bound that in addition
requires bounding the number of separator vertices.

\begin{restatable}{lemma}{planarBoundChangeCost}
	\label{lem:planarBoundChangeCost}
	Let $G$ be a modified planar graph with separator tree $\ct$. Let $\mathcal{\region}$ be a set of $K$ nodes in $\ct$. Then
	\begin{align*}
		\sum_{\region \in \pathT{\collN}}| \bdry{\region}| +|\elim{\region}| \leq \otilde(\sqrt{mK} ). 
	\end{align*}
\end{restatable}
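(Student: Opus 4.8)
The plan is to bound $\sum_{H \in \pathT{\collN}} (|\bdry{H}| + |\elim{H}|)$ by splitting the sum by levels of $\ct$ and using two complementary estimates at each level: a direct size bound on individual regions, and a bound on the number of distinct regions that appear. Write $\eta = O(\log m)$ for the height. At level $i$, each node $H$ is an edge-induced subgraph; by the separable-graph property applied recursively, a region at level $i$ has at most $O(b^{?} m)$ edges — more precisely, since each step down the tree shrinks the edge count by a constant factor $b < 1$, a region at height $i$ (distance $i$ from the leaves, after the tree has been ``balanced'' so height correlates with size) has $O(m \cdot b^{\,\eta - i})$ edges, hence $O(\sqrt{m b^{\eta-i}})$ separator and boundary vertices by $\sepConst = 1/2$-separability. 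Actually the cleanest version: a node $H$ at level $i$ satisfies $|\bdry{H}| + |\elim{H}| = O(\sqrt{s_i})$ where $s_i$ is an upper bound on the number of edges in a level-$i$ region, and $s_i$ decays geometrically as $i$ decreases. The second estimate: the number of nodes of $\pathT{\collN}$ at level $i$ is at most $\min(K, |\ct(i)|)$, since $\pathT{\collN}$ is a union of $K$ root-paths so contributes at most $K$ nodes per level, while trivially there are at most $|\ct(i)| = O(m/s_i)$ nodes at that level total.

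The key steps, in order: (1) Record that each root-to-leaf path in $\ct$ has length $O(\log m)$, and the edge count along a path decreases geometrically, so there is a sequence $s_0 \le s_1 \le \dots \le s_\eta = m$ with $s_i \le b \cdot s_{i+1}$ (up to the $\lceil \cdot \rceil$ rounding, which only matters for constant-size regions and can be absorbed), such that every node at level $i$ has at most $s_i$ edges. (2) For a node $H$ at level $i$, use $1/2$-separability to get $|\sep{H}| = O(\sqrt{s_i})$, and since $\bdry{H}$ is built up from separators of ancestors restricted to $V(H)$, bound $|\bdry{H}|$ as well — here I'd argue $|\bdry{H}| \le \sum_{\text{ancestors } A} |\sep{A} \cap V(H)|$, but to get a clean per-node bound we instead charge boundary vertices cleverly, or just note $|\bdry{H}| + |\elim{H}| \le |V(H)| = O(s_i)$ in the worst case and $O(\sqrt{s_i})$ when the graph is genuinely separable at every scale — the lemma as stated with the extra $\sqrt{\cdot}$ slack is what makes this tractable. (3) Split $\sum_{H \in \pathT{\collN}} (|\bdry{H}|+|\elim{H}|) = \sum_{i=0}^{\eta} \sum_{H \in \pathT{\collN,i}} (|\bdry{H}| + |\elim{H}|)$, and at each level $i$ bound the inner sum by $\min(K, O(m/s_i)) \cdot O(\sqrt{s_i})$. (4) The function $i \mapsto \min(K, m/s_i)\sqrt{s_i}$ is maximized near the level where $m/s_i \approx K$, i.e. $s_i \approx m/K$, giving a per-level bound of $O(\sqrt{mK})$; summing over the $O(\log m)$ levels gives $O(\sqrt{mK}\log m) = \otilde(\sqrt{mK})$. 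The geometric decay of $s_i$ ensures the levels away from the critical one contribute a geometrically decreasing amount, so one does not even lose the full $\log m$ factor, but $\otilde$ absorbs it regardless.

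The main obstacle I anticipate is step (2): getting a usable bound on $|\bdry{H}|$ for an individual node $H$. Unlike $|\sep{H}|$ and $|\elim{H}|$, which are controlled directly by separability of $H$ itself, the boundary $\bdry{H}$ accumulates separator vertices from all ancestors of $H$ that happen to lie in $V(H)$, so a priori it could be as large as $\sum_{A \text{ ancestor}} |\sep{A}|$, which is too big if summed naively over a whole path. The resolution is a global charging argument rather than a per-node one: a separator vertex $v \in \sep{A}$ at some ancestor $A$ gets counted in $\bdry{H}$ for nodes $H$ along the path from $A$ down toward $\collN$, but $v$ lies in at most two children of $A$ and the path $\pathT{\collN}$ restricted to $\ct_A$ has a bounded branching structure, so the total multiplicity with which any separator vertex is charged across $\pathT{\collN}$ is $O(\eta) = O(\log m)$. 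Hence $\sum_{H \in \pathT{\collN}} |\bdry{H}| \le O(\log m) \cdot \sum_{A \in \pathT{\collN}} |\sep{A}| \le O(\log m) \cdot \sum_{A \in \pathT{\collN}}(|\sep{A}| + |\elim{A}|)$, reducing the boundary bound to the separator-plus-eliminated bound, which is exactly what steps (3)–(4) handle via the level-splitting and the $\min(K, m/s_i)\sqrt{s_i}$ estimate. Combining, the whole sum is $\otilde(\sqrt{mK})$. The details of the charging multiplicity and the rounding in $\lceil |E(H)|^\alpha \rceil$ for near-leaf regions are the routine parts I would not grind through here.
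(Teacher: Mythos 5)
Your skeleton---decompose by level, reduce the boundary sum to a separator sum, then bound the separator sums level by level---is the same as the paper's, but two of your steps would not survive being written out. First, the per-level estimate $\min(K, O(m/s_i))\cdot O(\sqrt{s_i})$ silently assumes that all nodes at level $i$ have edge counts within a constant factor of a common value $s_i$. Under the paper's definitions this is false: ``level'' means height, a child only satisfies $|E(D)|\le b\,|E(H)|$ for a constant $b$ that for planar separators is $2/3>1/2$, and a node's height lies between roughly $\log_2|E(H)|$ and $\log_{1/b}|E(H)|$, so two nodes at the same height can have polynomially different sizes. Concretely, if level $i$ contains $m/s$ nodes of size about $s$ together with some nodes of size $S\gg s$, then with $K=m/s$ your bound evaluates to $(m/s)\sqrt{S}$, which exceeds $\sqrt{mK}=m/\sqrt{s}$ by the factor $\sqrt{S/s}$. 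What actually saves the argument is that the nodes of $\pathT{\collN}$ at a fixed level are edge-disjoint and number at most $K$, so by Cauchy--Schwarz $\sum_{H}|\sep{H}|\le c\sum_H\sqrt{|E(H)|}\le c\sqrt{K}\cdot\bigl(\sum_H|E(H)|\bigr)^{1/2}\le c\sqrt{Km}$, with no uniformity needed; this is exactly the step the paper takes.

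Second, your justification of the $O(\eta)$ charging multiplicity for boundary vertices is incorrect. $\pathT{\collN}$ is a union of $K$ root paths and can branch $\Theta(K)$ times, and a vertex $v\in\sep{A}$ can lie in $\bdry{H}$ for many nodes $H$ at the \emph{same} level below $A$: it enters both children of every descendant whose separator contains it again, so charging every occurrence of $v$ back to $A$ can cost multiplicity up to $\eta K$, not $\eta$. The inequality you want, $\sum_{H\in\pathT{\collN}}|\bdry{H}|\le O(\eta)\sum_{A\in\pathT{\collN}}|\sep{A}|$, is nevertheless true, but for a different reason: a vertex of $\bdry{H}$ passes to at most one child of $H$ unless it lies in $\sep{H}$, which gives the sibling identity $|\bdry{D_1}|+|\bdry{D_2}|\le|\bdry{H}|+2|\sep{H}|$; telescoping this level by level over the ancestor-closed set $\pathT{\collN}$ yields the claim (equivalently, each extra copy of $v$ must be recharged to the \emph{most recent} ancestor whose separator contains $v$, not to the first one). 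With these two repairs your outline becomes the paper's proof, which indeed lands at $O(\eta^2\sqrt{mK})=\otilde(\sqrt{mK})$.
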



\subsection{Approximating $\ml^{-1}$ using the separator tree}

For a height-$\eta$ separator tree, we generalize the sets $C$ and $F$ from the block Cholesky decomposition (\cref{eq:basic_chol}) to a sequence of sets $C_0, \ldots, C_{\eta}$, and $F_0, \dots, F_\eta$ based on $\ct$.

\begin{definition}[$C_i, F_i$] \label{defn:tree-sets}
	Let $\ct$ be the separator tree from \cref{thm:separator_tree_construction_modified}.
	For all $0\le i\le \eta$, we define $F_i=\bigcup_{H \in \ct(i)} \elim{H}$ to be the vertices eliminated at level $i$.
	For all $0\le i\le \eta$, we define $C_i=\bigcup_{H \in \ct(i)} \bdry{H}$ to be the vertices remaining after eliminating vertices in $F_i$. We define $C_{-1}$ to be $V(G)$.
\end{definition}

By \cref{lem:F_partitions}, $F_i$ is the disjoint union of $\elim{\region}$ over all nodes $H$ at level $i$ in the separator tree. $F_0,\ldots, F_\eta$ partitions $V(G)$.  
By the definition of $\bdry{\region}$ and $\elim{\region}$, we know $F_i = C_{i-1} \setminus C_i$ for all $0\le i\le \eta$.
It follows that $V(G) = C_{-1} \supset C_0 \supset \dots \supset C_{\eta-1} \supset C_{\eta} = \emptyset$ and $C_i = \cup_{j > i} F_j$.

Now, the decomposition from \cref{eq:basic_chol} can be extended and inverted as follows:
\begin{equation}\label{eq:Linv}
	\ml^{-1} = \mmu^{(0)\top} \cdots \mmu^{(\eta-1)\top}
	\left[
	\begin{array}{ccc}
		{\sc(\ml, C_{-1})_{F_0, F_0}}^{-1} & \mzero & \mzero\\
		\mzero & \ddots & \mzero\\
		\mzero & \mzero & {\sc(\ml, C_{\eta-1})_{F_\eta, F_\eta}}^{-1}
	\end{array}
	\right]
	\mmu^{(\eta-1)} \cdots \mmu^{(0)},
\end{equation}
where the $\mmu^{(i)}$'s are upper triangular matrices with 
$$\mmu^{(i)} = \mi - \sc(\ml, C_{i-1})_{C_i, F_i} \left(\sc(\ml, C_{i-1})_{F_i,F_i} \right)^{-1},$$ 
where we assume all matrices are $n \times n$ by padding zeroes when required.
To efficiently compute parts of $\ml^{-1}$, we use approximate Schur complements instead of exact ones in \cref{eq:Linv}.

\begin{definition}[Approximate Schur Complement] \label{def:approxSchur} 
	Let $G$ be a weighted graph with Laplacian $\ml$, and let $C$ be a set of boundary vertices in $G$.
	We say that a Laplacian matrix $\tsc(\ml,C) \in \R^{C \times C}$ is an $\eps$-\emph{approximate Schur complement} of $\ml$ onto $C$ if $\tsc(\ml,C) \approx_{\eps} \sc(\ml,C)$, where we use $\approx_{\eps}$ to mean an $e^{\eps}$-spectral approximation.
\end{definition}

\begin{definition}[$\ml^{(H)}$] \label{defn:L^H}
	Let $\epssc > 0$. For each $H \in \ct$, let $\ml^{(H)}$ be a Laplacian on the vertex set $F_H \cup \partial H$ such that
	\[
		\ml^{(H)} \approx_{\epssc} \sc(\ml[H], \bdry{\region} \cup \elim{\region}).
	\]
\end{definition}
We show how to compute and maintain $\ml^{(H)}$ in the next subsection. 

Here, we define the necessary approximate matrices and show how to approximate $\ml^{-1}$.
\begin{definition}[$\mpi^{(i)}, \mx^{(H)}, \widetilde{\mga}$]
	
	To approximate $\mmu^{(i)}$, we define	
	\begin{equation} \label{def:approx_mproj}
		\mpi^{(i)} = \mi - \sum_{H \in \ct(i)} \mx^{(H)},
	\end{equation}
	where
	\begin{equation}\label{def:mx^(H)}
		\mx^{(H)} = \ml^{(H)}_{\bdry{\region}, F_H} \left( \ml^{(H)}_{F_H, F_H}\right)^{-1}
	\end{equation}
	for each $H \in \ct$.

	To approximate the block diagonal matrix in \cref{eq:Linv}, we define
	\[
	\widetilde \mga = 
	\left[
	\begin{array}{cccc}
		\sum_{H \in \ct(0)} \left(\ml^{(H)}_{F_H, F_H}\right)^{-1} & \mzero & \mzero\\
		\mzero &  \ddots & \mzero\\
		\mzero & \mzero & \sum_{H \in \ct(\eta)} \left(\ml^{(H)}_{F_H, F_H}\right)^{-1}
	\end{array}
	\right].
	\]
\end{definition}

\begin{theorem}[$\ml^{-1}$ approximation]\label{thm:L-inv-approx}
	Suppose for each $H \in \ct$, we have a Laplacian $\ml^{(H)}$ satisfying
	\[
	\ml^{(\region)} \approx_{\epssc} \sc(\ml[\region], \bdry{\region}\cup \elim{\region}).
	\]
	Then, we have
	\begin{equation} \label{eq:Linv_approx}
		\ml^{-1} \approx_{\eta \epssc}
		\mpi^{(0)\top}\cdots\mpi^{(\eta-1)\top} \widetilde{\mga}
		\mpi^{(\eta-1)}\cdots\mpi^{(0)}.
	\end{equation} 
\end{theorem}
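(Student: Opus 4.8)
The plan is to prove the approximation by induction on the levels of the separator tree, peeling off one level at a time, exactly mirroring the exact block-Cholesky factorization in \cref{eq:Linv}. The key algebraic fact to establish first is a one-level statement: if $\ml'$ is a Laplacian whose vertex set is $C_{i-1}$ and $\ml' \approx_{(i)\epssc} \sc(\ml, C_{i-1})$ for some accumulated error, then eliminating the block $F_i$ using the \emph{approximate} elimination matrix $\mpi^{(i)}$ and the \emph{approximate} diagonal inverse $\sum_{H\in\ct(i)}(\ml^{(H)}_{F_H,F_H})^{-1}$ produces an object that is $\approx_{(i+1)\epssc}$ to $\sc(\ml, C_i)$, with the corresponding bilinear-form identity holding up to the same spectral error. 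The two ingredients here are: (a) the decomposition $\sc(\ml, C_{i-1}) = \sum_{H\in\ct(i)} \sc(\ml[H], \bdry H \cup \elim H)$ plus the contributions from ancestor edges — which follows from \cref{lem:sc-decomposition} together with \cref{lem:parent-child-boundry-relation} and the disjointness of the $\elim H$'s (\cref{lem:F_partitions}); and (b) the fact that since $\ml^{(H)} \approx_{\epssc} \sc(\ml[H], \bdry H\cup\elim H)$ and the sets $\elim H$ for $H\in\ct(i)$ are disjoint and pairwise non-adjacent (so the relevant blocks live on disjoint coordinates), one can sum these approximations and the block-diagonal structure of $\ml^{(H)}_{F_i,F_i} := \sum_H \ml^{(H)}_{F_H,F_H}$ is preserved.

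Concretely, I would first record the helper lemma that taking a Schur complement is monotone under spectral approximation: if $\ma \approx_t \mb$ then $\sc(\ma, C) \approx_t \sc(\mb, C)$. This is standard — it follows since $\sc(\ma,C) = \argmin\{ \vx^\top \ma \vx : \vx|_C \text{ fixed}\}$ interpreted as an energy, or directly from the fact that congruence by the block-triangular matrix $[\,\mi\ ; -\ma_{C,F}\ma_{F,F}^{-1}\ \mi\,]$ and the variational characterization commute with $\preceq$. Second, I would show that if $\ml^{(H)} \approx_{\epssc} \sc(\ml[H], \bdry H\cup\elim H)$ for each $H\in\ct(i)$, then $\sum_{H\in\ct(i)} \ml^{(H)} \approx_{\epssc} \sum_{H\in\ct(i)} \sc(\ml[H],\bdry H\cup\elim H) = \sc(\ml,C_{i-1})$, using that the errors on vertex-disjoint eliminated sets compose into a single spectral bound (this uses that the off-diagonal blocks connecting different $\elim H$'s are zero, so the approximation factors out block-wise). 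Third, the inductive step: write the right side of \cref{eq:Linv_approx} as $\mpi^{(0)\top}\cdots\mpi^{(\eta-1)\top}\widetilde\mga\,\mpi^{(\eta-1)}\cdots\mpi^{(0)}$ and peel off $\mpi^{(0)}$. After congruence by $(\mpi^{(0)})^{-\top}$... — actually, cleaner: I would show by induction on $i$ going top-down that
\[
\mpi^{(i)\top}\cdots\mpi^{(\eta-1)\top}\widetilde\mga\,\mpi^{(\eta-1)}\cdots\mpi^{(i)}
\;\approx_{(\eta-i)\epssc}\;
\left(\sc(\ml, C_{i-1})\right)^{-1}\ \text{on the subspace}\ C_{i-1},
\]
where for $i=\eta$ the left side is just the $F_\eta$-block of $\widetilde\mga$ which is $\approx_{\epssc}$-close to $\sc(\ml,C_{\eta-1})^{-1}$ by Definition of $\ml^{(H)}$, and the inductive step adds one factor of $\epssc$ coming from replacing $\sc(\ml[H],\cdot)$ by $\ml^{(H)}$ at level $i$. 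Setting $i=0$ gives the theorem since $C_{-1} = V(G)$.

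The main obstacle I anticipate is bookkeeping the interaction between the \emph{two} sources of error at each level — the error in the elimination/triangular factor $\mpi^{(i)}$ versus $\mmu^{(i)}$, and the error in the diagonal block $\ml^{(H)}_{F_H,F_H}$ versus $\sc(\ml,C_{i-1})_{F_i,F_i}$ — and showing these combine into a \emph{single} additional $\epssc$ in the exponent rather than, say, $2\epssc$ or $3\epssc$. The clean way to handle this is to avoid treating $\mpi^{(i)}$ and the diagonal block as independent approximations: instead, observe that for a fixed Laplacian $\ml^{(H)}$, the expression $\mpi^{(i)\top}(\text{diag block})\mpi^{(i)}$ applied with \emph{that} $\ml^{(H)}$ is \emph{exactly} the corresponding piece of the block-Cholesky inverse of $\sum_H \ml^{(H)}$ (i.e., replacing $\ml$ by the approximate graph $\sum_H \ml^{(H)}$ throughout the $i$-th level), so only one substitution $\sc(\ml[H],\cdot)\to\ml^{(H)}$ is made per level and it costs one $\epssc$ via the Schur-complement monotonicity lemma. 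Thus the induction should cleanly yield the factor $\eta$. A secondary technicality is the pseudoinverse/rank issue (the Laplacians are singular, everything lives on the orthogonal complement of $\vone$); I would handle this by noting all operations here preserve the kernel $\vone$ and restricting attention to that subspace throughout, as flagged in the preliminaries.
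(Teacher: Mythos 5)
Your proposal is correct and follows essentially the same route as the paper's proof: decompose $\sc(\ml,C_{i-1})$ into per-node Schur complements via \cref{lem:sc-decomposition}, sum the $\approx_{\epssc}$ approximations using the block-diagonal/disjoint-support structure of the $F_H$'s, and then do a level-by-level induction in which each level makes a \emph{single} substitution of $\sc(\ml,C_{i-1})$ by $\ml^{(i)}\defeq\sum_{H\in\ct(i)}\ml^{(H)}$ followed by an \emph{exact} block-Cholesky factorization of $\ml^{(i)}$ (so that $\mpi^{(i)}$ and the diagonal block contribute only one $\epssc$ jointly, via monotonicity of Schur complements and congruence under $\approx_t$) — the paper runs this induction bottom-up on the outer factorization while you run it top-down on the inner product, which is the same argument read in the other direction. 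The only blemish is a harmless off-by-one in your exponent (your base case is $\approx_{\epssc}$ but your formula gives $0$ at $i=\eta$); the paper sidesteps this by using that leaf Laplacians are exact, and in any case the constant is absorbed into the choice of $\epssc$.
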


\begin{proof}
	Let $C_i, F_i$ be defined for each $i$ according to \cref{defn:tree-sets}.
	Let $\ml^{(i)} \defeq \sum_{H \in \ct(i)} \ml^{(H)}$.
	
	Note that $\ml^{(i)}_{F_i,F_i} \defeq \sum_{H \in \ct(i)} \ml^{(H)}_{F_H, F_H}$ 
	is a block-diagonal matrix with blocks indexed by $H \in \ct(i)$, 
	since $F_i$ is a disjoint union over $F_H$ for $H \in \ct(i)$, and only $\ml^{(H)}$ is supported on $F_H$.
	Hence, ${\ml^{(i)}_{F_i, F_i}}^{-1} = \sum_{H \in \ct(i)} \left(\ml^{(H)}_{F_H, F_H}\right)^{-1}$.
	
	Recall that the regions in $\ct(i)$ partition the graph $G$. Furthermore, the intersection of $H, H' \in \ct(i)$ is on their boundary, which is contained in $C_i \subseteq C_{i-1}$. Thus, we apply \cref{lem:sc-decomposition} to get
	\begin{equation}
		\label{eq:sum-L^(H)}
	\begin{aligned} 
		\sc(\ml, C_{i-1}) &= \sum_{H \in \ct(i)} \sc(\ml[H], C_{i-1} \cap V(H)) \\
		&\approx_{\epssc} \sum_{H \in \ct(i)} \tsc(\ml[H], \partial H \cup F_H)
		= \sum_{H \in \ct(i)} \ml^{(H)} = \ml^{(i)}.
	\end{aligned}
	\end{equation}
	
	Now, we prove inductively that
	\begin{equation}\label{eq:Linv-recurse}
		\ml^{-1} \approx_{i \epssc} 
		\mpi^{(0)\top}\cdots\mpi^{(i-1)\top}
		\left[
		\begin{array}{cccc}
			\left(\ml^{(0)}_{F_0, F_0}\right)^{-1} & \mzero & \mzero & \mzero\\
			\mzero & \ddots & \mzero & \mzero\\
			\mzero & \mzero & \left(\ml^{(i-1)}_{F_{i-1}, F_{i-1}} \right)^{-1} & \mzero\\
			\mzero & \mzero & \mzero & \left(\ml^{(i)}\right)^{-1} 
		\end{array}
		\right]
		\mpi^{(i-1)}\cdots\mpi^{(0)},
	\end{equation}

	When $i = 0$, we have the approximation trivially as $\ml^{(0)} = \ml$. 
	
	For general $i$, we factor $\ml^{(i)}$ in \cref{eq:Linv-recurse} recursively using Cholesky decomposition. 
	$\ml^{(i)}$ is supported on $C_{i-1}$, and we can partition $C_{i-1} = F_i \cup C_i$. Then,
	\begin{equation} \label{eq:L^(i)}
		\ml^{(i)} = 
		\left[
		\begin{array}{cc}
			\mi
			& \mzero \\
			\ml^{(i)}_{C_i,F_i} (\ml^{(i)}_{F_i,F_i})^{-1}
			& \mi
		\end{array}
		\right] 
		\left[
		\begin{array}{cc}
			\ml^{(i)}_{F_i,F_i}
			& \mzero \\
			\mzero
			& \sc(\ml^{(i)}, C_i)
		\end{array}
		\right] 
		\left[
		\begin{array}{cc}
			\mi
			& (\ml^{(i)}_{F_i,F_i})^{-1}\ml^{(i)}_{F_i,C_i} \\
			\mzero & \mi
		\end{array}\right].
	\end{equation}

	For the Schur complement term in the factorization, we have
	\begin{align*}
		\sc(\ml^{(i)}, C_i) &\approx_{i \epssc} \sc(\sc(\ml, C_{i-1}), C_i) \tag{by \cref{eq:sum-L^(H)}} \\
		&= \sc(\ml, C_i) \tag{by transitivity of Schur complements} \\
		&\approx_{\epssc} \ml^{(i)}. \tag{by \cref{eq:sum-L^(H)}}
	\end{align*}
	So we can use $\ml^{(i)}$ in place of the Schur complement term, and the equality becomes an approximation with factor $(i+1) \epssc$.
	Furthermore, in \cref{eq:L^(i)}, we can rewrite
	\[
	\ml^{(i)}_{C_i,F_i} = \sum_{H \in \ct(i)} \ml^{(H)}_{C_i,F_i} = \sum_{H \in \ct(i)} \ml^{(H)}_{\partial H,F_H}. 
	\]
	Plugging the inverse of \cref{eq:L^(i)} into \cref{eq:Linv-recurse}, we get the correct recursive approximation.
	
	Finally, we note that at the $\eta$-th level, $\ml^{(\eta)}_{F_\eta, F_\eta} = \ml^{(\eta)}$ since $C_\eta = \emptyset$. 
	So we have the overall expression.
\end{proof}

\subsection{Recursive Schur complements on separator tree}

In this section, we prove \cref{thm:apxsc} which maintains approximate Schur complements onto the boundary vertices of each node $\region$ in $\ct$. 

We use the following result as a black-box for computing sparse approximate Schur complements:

\begin{lemma}[$\textsc{ApproxSchur}$ procedure \cite{DurfeeKPRS17}]
	\label{lem:fastApproxSchur} 
	Let $\ml$ be the weighted Laplacian of a graph with $n$ vertices and $m$ edges, and
	let $C$ be a subset of boundary vertices of the graph.
	Let $\gamma = 1/n^3$ be the error tolerance.
	Given approximation parameter $\eps \in (0,1/2)$, 
	there is an algorithm \textsc{ApproxSchur}$(\ml, C, \eps)$ that 
	computes and outputs a $\eps$-approximate Schur complement $\tsc(\ml, C)$ that satisfies
	the following properties with probability at least $1-\gamma$:
	\begin{enumerate}
		\item The graph corresponding to $\tsc(\ml, C)$ has $O(\eps^{-2} |C| \log (n/\gamma))$ edges. \label{approxSC: sparsity}
		\item The total running time is $O(m \log^3 (n/\gamma) + \eps^{-2} n \log^4(n/\gamma))$. \label{approxSC: runtime}
	\end{enumerate}
\end{lemma}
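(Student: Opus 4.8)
The plan is to prove \cref{lem:fastApproxSchur} by implementing \textsc{ApproxSchur} as a \emph{sparsified partial Gaussian elimination} of the vertices $F \defeq V(G)\setminus C$, following the approximate-Cholesky framework of Kyng--Sachdeva, and then appending a single spectral-sparsification pass to bring the edge count on $C$ down to the target $O(\eps^{-2}|C|\log(n/\gamma))$. The object to analyze is a random sequence of Laplacians $\ml = \ml_0, \ml_1, \dots, \ml_{|F|}$, where $\ml_{|F|}$ is supported on $C$ and will be the approximate Schur complement before the final sparsification.

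First I would set up the elimination. Fix a uniformly random order on $F$. To eliminate vertex $v$ from the current Laplacian $\ml_k$, write one exact block-Cholesky step as $\ml_k \mapsto \ml_k - \mathbf{S}_v + \mathbf{C}_v$, where $\mathbf{S}_v$ is the weighted star at $v$ and $\mathbf{C}_v$ is the clique on $N(v)$ in which edge $(u,u')$ gets weight $\vw_{vu}\vw_{vu'}/\vw_v$ ($\vw_v$ the weighted degree of $v$). Instead of forming the dense clique I would sample a sparse unbiased estimator $\widetilde{\mathbf{C}}_v$ with $O(\deg_k(v))$ edges by the standard clique-sampling rule (for each incident edge $(v,u)$ draw a partner $(v,u')$ with probability proportional to $\vw_{vu'}$ and add a single weighted edge $(u,u')$), so that $\mathbb{E}[\widetilde{\mathbf{C}}_v \mid \ml_k] = \mathbf{C}_v$; to reach error $\eps$ one repeats this $O(\eps^{-2}\log(n/\gamma))$ times and averages. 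Set $\ml_{k+1} \defeq \ml_k - \mathbf{S}_v + \widetilde{\mathbf{C}}_v$ and iterate over all of $F$.

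The correctness argument is a matrix-martingale concentration: conditioned on $\ml_k$, the one-step error $\ml_{k+1} - (\text{exact step applied to }\ml_k)$ is mean zero, so the accumulated errors form a matrix martingale, and I would apply the matrix Freedman inequality in the $\sc(\ml,C)^{\dagger}$-norm, i.e. bound $\|\sc(\ml,C)^{\dagger/2}\big(\ml_{|F|}-\sc(\ml,C)\big)\sc(\ml,C)^{\dagger/2}\|$. Two quantities need control: (i) the predictable quadratic variation, where each incident edge contributes a term comparable to its weight times an effective-resistance/leverage factor in the current graph, and these telescope along the elimination to $\otilde(1)$ after the $\eps^{-2}\log(n/\gamma)$-fold averaging; and (ii) the per-step operator-norm deviation, where the random order enters — a balls-into-bins argument shows each original edge spawns only $O(\log n)$ descendant edges and is touched $O(\log n)$ times, making the relevant multiplicity parameter polylogarithmic. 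Together these give $\ml_{|F|} \approx_{\eps/2} \sc(\ml,C)$ with probability $\ge 1-\gamma/2$. Transitivity of $\approx$ with one $(\eps/2)$-spectral sparsification of $\ml_{|F|}$ (effective-resistance sampling, using the approximate Cholesky factor just produced as a Laplacian solver, or re-running the same clique-sampling machinery) then yields a Laplacian on $C$ with $O(\eps^{-2}|C|\log(n/\gamma))$ edges satisfying $\tsc(\ml,C)\approx_\eps \sc(\ml,C)$, which is \cref{approxSC: sparsity}. For \cref{approxSC: runtime}: eliminating $v$ costs $O(\deg_k(v)\cdot \eps^{-2}\log(n/\gamma))$, the balls-in-bins bound gives $\sum_k \deg_k(v)=O(m\log n)$ with high probability, so the elimination phase costs $\otilde(\eps^{-2} m)$; the intermediate Laplacian on $C$ has $\otilde(m)$ edges, so the final sparsification costs $O(m\log^{3}(n/\gamma)+\eps^{-2}|C|\log^{4}(n/\gamma))$, and summing with $|C|\le n$ and $\gamma=1/n^3$ gives the stated bound.

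The main obstacle is the martingale concentration step — in particular, showing that the predictable quadratic variation telescopes with only polylogarithmic blow-up, which requires the random elimination order and a careful accounting of how edge weights and effective resistances evolve as $F$-vertices are removed (this is the technical heart of the Kyng--Sachdeva analysis, adapted to eliminate only $F$ instead of all of $V$). A secondary subtlety is composing the error budgets and runtimes of the elimination and sparsification phases cleanly, and union-bounding all high-probability events to failure probability $\gamma = 1/n^3$.
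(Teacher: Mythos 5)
The paper never proves this lemma: it is imported as a black box from \cite{DurfeeKPRS17} (their \textsc{ApproxSchur}/\textsc{SchurSparse} routine), so there is no in-paper argument to compare against. Your sketch does follow the same general route as that cited work — sparsified partial Cholesky elimination of $F = V\setminus C$ with random clique sampling, matrix-Freedman concentration measured in the $\sc(\ml,C)^{\dagger}$-norm, and a final resparsification on $C$ — so as a reconstruction of the literature it is pointed in the right direction, and the sparsity claim (item 1) would indeed follow from the last sparsification pass.

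As a proof of the lemma as stated, however, there are two gaps. The first you acknowledge: the quadratic-variation/telescoping analysis behind the matrix-martingale bound is the technical heart of the Kyng--Sachdeva machinery and is only gestured at, so the proposal is a plan rather than a proof. The second is concrete and unacknowledged: your runtime accounting does not deliver the stated bound. Repeating the clique sample $O(\eps^{-2}\log(n/\gamma))$ times at every elimination makes the elimination phase cost $\otilde(\eps^{-2}m)$, which is \emph{not} dominated by $O(m\log^{3}(n/\gamma)+\eps^{-2}n\log^{4}(n/\gamma))$ — the whole point of the cited result is that the $m$-term carries no $\eps^{-2}$ factor (e.g.\ with $m\approx n^{2}$ and $\eps=n^{-1/10}$ your bound is polynomially worse). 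Durfee et al.\ avoid this by eliminating $F$ in $O(\log n)$ phases, spectrally resparsifying the intermediate graph after each phase so that all but the first pass run on graphs with $\otilde(\eps^{-2}n)$ edges, and budgeting error $\eps/O(\log n)$ per phase; a single random-order pass over all of $F$ followed by one final sparsification cannot be repaired into the claimed runtime. (For this paper's application, where $\epssc^{-1}=O(\log m)$, the weaker $\otilde(\eps^{-2}m)$ bound would only cost extra polylogarithmic factors, but it does not establish the lemma as stated.)
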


\begin{algorithm}
	\caption{Data structure to maintain dynamic approximate Schur complements} \label{alg:dynamicSC}
	\begin{algorithmic}[1]
		\State \textbf{data structure} \textsc{DynamicSC}
		\State \textbf{private: member}
		\State \hspace{4mm} Graph $G$ with incidence matrix $\mb$
		\State \hspace{4mm} $\vw \in \R^m$, $\mw \in \R^{m \times m}$: Weight vector and diagonal weight matrix, used interchangeably
		\State \hspace{4mm} $\epssc > 0$: Overall approximation factor
		\State \hspace{4mm} $\epslevel > 0$: Fast Schur complement approximation factor
		\State \hspace{4mm} $\ct$: Separator tree of height $\eta$. Every node $H$ of $\ct$ stores:
		\State \hspace{8mm} $\elim{\region}$, $\bdry{\region}$: Interior and boundary vertices of region $H$
		\State \hspace{8mm} $\ml^{(H)} \in \R^{m \times m}$: Laplacian supported on $\elim{\region} \cup \bdry{\region}$
		\State \hspace{8mm} $\tsc(\ml^{(H)}, \bdry{\region}) \in \R^{m \times m}$: $\epslevel$-approximate Schur complement of $\ml^{(H)}$
		\State
		
		\Procedure{\textsc{Initialize}} {$G$, $\vw \in \mathbb{R}^m$, $\epssc > 0$}
		\State $\mb \leftarrow$ incidence matrix of $G$
		\State $\ct \leftarrow$ separator tree of $G$ of height $\eta$ constructed by \cref{thm:separator_tree_construction}
		\State $\epslevel \leftarrow \epssc/(\eta+1)$
		\State $\vw \leftarrow \vw$
		\For{$i = 0,\ldots, \eta$}
		\For{each node $\region$ at level $i$ in $\ct$}
		\State \textsc{ApproxSchurNode}$(\region)$
		\EndFor
		\EndFor
		\EndProcedure
		\State
		
		\Procedure{\textsc{Reweight}}{$\vw^\new \in \R^{m}$}
		\State $\mathcal{H} \leftarrow$ set of leaf nodes in $\ct$ that contain each edge $e$ whose weight is updated
		\State $\vw \leftarrow \vw^\new$
		\State $\mathcal{P}_{\mathcal{T}}(\collN) \leftarrow$ set of all ancestor nodes of $\collN$ in $\ct$ and $\collN$
		\For{$i = 0,\ldots, \eta$}
		\For{each node $\region$ at level $i$ in $\mathcal{P}_{\ct}(\mathcal{H})$}
		\State \textsc{ApproxSchurNode}$(\region)$
		\EndFor
		\EndFor
		\EndProcedure
		\State
		
		\Procedure{\textsc{ApproxSchurNode}} {$\region \in \ct$}
		\If{$\region$ is a leaf node}
		\State \LeftComment $\mb[H]$ is the incidence matrix for the induced subgraph $H$ with edge set $E(H)$
		\State $\ml^{(H)} \leftarrow (\mb[H])^\top \mw_{E(H)} \mb[H]$
		\State $\tsc(\ml^{(H)}, \bdry{\region}) \leftarrow \textsc{ApproxSchur}(\ml^{(H)}, \bdry{\region}, \epslevel)$  \Comment\cref{lem:fastApproxSchur}
		\Else
		\State Let $D_1, D_2$ be the children of $H$ 
		\State $\ml^{(H)} \leftarrow \tsc(\ml^{(D_1)}, \bdry{D_1}) + \tsc(\ml^{(D_2)}, \bdry{D_2})$
		\State $\tsc(\ml^{(H)}, \bdry{\region}) \leftarrow \textsc{ApproxSchur}(\ml^{(H)}, \bdry{\region}, \epslevel)$ \label{line:dynamicSCsparsify}
		\EndIf
		\EndProcedure
	\end{algorithmic}
\end{algorithm}

First, we prove the correctness and runtime of \textsc{ApproxSchurNode}$(\region)$.
We say \textsc{ApproxSchurNode}$(\region)$ runs correctly on a node $H$ at level $i$ in $\ct$, if at the end of the procedure, the following properties are satisfied:
\begin{itemize}
	\item $\ml^{(\region)}$ is the Laplacian of a graph on vertices $\bdry{\region} \cup \elim{\region}$
	with $\O(\epslevel^{-2}|\bdry{\region}\cup \elim{\region}|)$ edges,
	\item $\ml^{(\region)} \approx_{(i-1) \epslevel} \sc(\ml[\region], \bdry{\region}\cup \elim{\region})$,
	\item $\tsc(\ml^{(H)}, \bdry{\region})\approx_{i \epslevel} \sc(\ml[\region], \bdry{\region})$, and the graph is on $\bdry{\region}$ with $\O(\epslevel^{-2}|\bdry{\region}|)$ edges.
\end{itemize}

\begin{lemma}\label{lem:ApproxSchurNodeCorrectness}
	Suppose $\ml^{(D)}$ and $\tsc (\ml^{(D)}, \bdry{D})$ are computed correctly for all descendants $D$ of $H$,
	then \textsc{ApproxSchurNode}$(\region)$ runs correctly. 
\end{lemma}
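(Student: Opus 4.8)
The plan is to prove the three required properties by structural induction down $\ct$, splitting into the leaf case (which is immediate) and the non-leaf case (which carries the real content). Throughout I would use two standard stability facts about $e^{\eps}$-spectral approximation: (i) it is additive over sums of positive semidefinite matrices, so $\ml_1 \approx_s \ml_1'$ and $\ml_2 \approx_t \ml_2'$ imply $\ml_1 + \ml_2 \approx_{\max(s,t)} \ml_1' + \ml_2'$; and (ii) it is preserved under Schur complementation, so $\ml_1 \approx_t \ml_2$ implies $\sc(\ml_1, C) \approx_t \sc(\ml_2, C)$ — this follows from the energy characterization $\vx^\top \sc(\ml, C)\vx = \min\{\vz^\top \ml \vz : \vz|_C = \vx\}$. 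I also use that $\approx$ composes: $\ml_1 \approx_s \ml_2$ and $\ml_2 \approx_t \ml_3$ give $\ml_1 \approx_{s+t} \ml_3$.

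For the base case, where $H$ is a leaf, I would observe that $\ml^{(H)} = (\mb[H])^\top \mw_{E(H)} \mb[H] = \ml[H]$, the Laplacian of the induced subgraph on $V(H)$, which has $O(1) = \O(\epslevel^{-2}|\bdry{H} \cup \elim{H}|)$ edges. Since $\sep{H} = \emptyset$ and $\elim{H} = V(H) \setminus \bdry{H}$, we get $\bdry{H} \cup \elim{H} = V(H)$ and hence $\sc(\ml[H], \bdry{H} \cup \elim{H}) = \ml[H] = \ml^{(H)}$, so the first two properties hold (the second with equality). Then $\tsc(\ml^{(H)}, \bdry{H}) = \textsc{ApproxSchur}(\ml^{(H)}, \bdry{H}, \epslevel)$, which by \cref{lem:fastApproxSchur} is a graph on $\bdry{H}$ with $\O(\epslevel^{-2}|\bdry{H}|)$ edges satisfying $\tsc(\ml^{(H)}, \bdry{H}) \approx_{\epslevel} \sc(\ml^{(H)}, \bdry{H}) = \sc(\ml[H], \bdry{H})$ with high probability, giving the third property.

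For the inductive step, let $D_1, D_2$ be the children of $H$, at levels $j_1, j_2 \le i-1$. By hypothesis, for $k=1,2$, $\tsc(\ml^{(D_k)}, \bdry{D_k})$ is a graph on $\bdry{D_k}$ with $\O(\epslevel^{-2}|\bdry{D_k}|)$ edges and $\tsc(\ml^{(D_k)}, \bdry{D_k}) \approx_{j_k \epslevel} \sc(\ml[D_k], \bdry{D_k})$. First, $\ml^{(H)} = \tsc(\ml^{(D_1)}, \bdry{D_1}) + \tsc(\ml^{(D_2)}, \bdry{D_2})$ is supported on $\bdry{D_1} \cup \bdry{D_2} = \bdry{H} \cup \elim{H}$ (\cref{lem:parent-child-boundry-relation}), with $\O(\epslevel^{-2}|\bdry{H} \cup \elim{H}|)$ edges since $\bdry{D_k} \subseteq \bdry{H} \cup \elim{H}$; this is the first property. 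Next I would establish the identity
\[
\sc(\ml[H], \bdry{H} \cup \elim{H}) = \sc(\ml[D_1], \bdry{D_1}) + \sc(\ml[D_2], \bdry{D_2}),
\]
which follows from \cref{lem:sc-decomposition} applied with $\ml[H] = \ml[D_1] + \ml[D_2]$, $V_i = V(D_i)$, and $C = \bdry{H} \cup \elim{H}$, noting $V(D_1) \cap V(D_2) = \sep{H} \subseteq \bdry{H} \cup \sep{H} = \bdry{H} \cup \elim{H} = C$ and $C \cap V(D_k) = (\bdry{H} \cup \sep{H}) \cap V(D_k) = \bdry{D_k}$ by \cref{defn:separator-tree}. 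Combined with the hypothesis and fact (i), this gives $\ml^{(H)} \approx_{\max(j_1,j_2)\epslevel} \sc(\ml[H], \bdry{H} \cup \elim{H})$, hence $\approx_{(i-1)\epslevel}$ — the second property. Finally, \cref{lem:fastApproxSchur} gives $\tsc(\ml^{(H)}, \bdry{H}) = \textsc{ApproxSchur}(\ml^{(H)}, \bdry{H}, \epslevel) \approx_{\epslevel} \sc(\ml^{(H)}, \bdry{H})$ on $\bdry{H}$ with $\O(\epslevel^{-2}|\bdry{H}|)$ edges w.h.p., while fact (ii) applied to the second property, together with transitivity of Schur complements (\cref{lem:sc_transitivity}, using $\bdry{H} \subseteq \bdry{H} \cup \elim{H} \subseteq V(H)$), gives
\[
\sc(\ml^{(H)}, \bdry{H}) \approx_{(i-1)\epslevel} \sc(\sc(\ml[H], \bdry{H} \cup \elim{H}), \bdry{H}) = \sc(\ml[H], \bdry{H});
\]
composing, we obtain $\tsc(\ml^{(H)}, \bdry{H}) \approx_{i\epslevel} \sc(\ml[H], \bdry{H})$, the third property.

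The main obstacle I expect is the Schur-complement decomposition identity in the inductive step: to invoke \cref{lem:sc-decomposition} correctly one must carefully verify, from the separator-tree construction in \cref{defn:separator-tree}, that $D_1$ and $D_2$ intersect only in separator vertices, that these lie inside the elimination-plus-boundary set $\bdry{H} \cup \elim{H}$, and that restricting $\bdry{H} \cup \elim{H}$ to either child recovers precisely its boundary $\bdry{D_k}$. Everything else — the edge-count estimates, the accumulation of the approximation error from $(i-1)\epslevel$ to $i\epslevel$, and the high-probability guarantee (via a union bound over the single \textsc{ApproxSchur} call at $H$, amplifying its success probability as needed) — is routine bookkeeping layered on top of \cref{lem:fastApproxSchur} and the two stability properties of spectral approximation.
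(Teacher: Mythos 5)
Your proof is correct and follows essentially the same route as the paper's: the leaf case is handled identically, and the inductive step uses the same decomposition $\sc(\ml[H], \bdry{H}\cup\elim{H}) = \sc(\ml[D_1],\bdry{D_1}) + \sc(\ml[D_2],\bdry{D_2})$ via \cref{lem:sc-decomposition} together with transitivity of Schur complements to accumulate the error from $(i-1)\epslevel$ to $i\epslevel$. The only difference is that you explicitly isolate and justify the two stability facts about spectral approximation (additivity over PSD sums and preservation under Schur complementation) that the paper uses implicitly in its chain of $\approx$ relations.
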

\begin{proof}
	When $\region$ is a leaf, the proof is trivial. 
	$\ml^{(\region)}$ is set to the exact Laplacian matrix of the induced subgraph $\region$ of constant size. 
	$\tsc(\ml^{(H)}, \bdry{\region})$ $\epslevel$-approximates $\sc(\ml^{(\region)}, \bdry{\region})=\sc(\ml[\region], \bdry{\region})$ by \cref{lem:fastApproxSchur}.
	
	Otherwise, suppose $H$ is at level $i$ with children $D_1$ and $D_2$.
	By construction of the separator tree and \cref{lem:parent-child-boundry-relation}, we have $\partial D_1 \cup \partial D_2 = \partial H \cup F_H$. For each $j=1,2$,  we know inductively $\tsc(\ml^{(D_j)}, \bdry{D_j})$ has $\O(\epslevel^{-2}|\bdry{D_j}|)$ edges.
	Since we define $\ml^{(\region)}$ to be the sum, it has $\O(\epslevel^{-2}(|\bdry{D_1}|+|\bdry{D_2}|))=\O(\epslevel^{-2}|\bdry{\region}\cup \elim{\region}|)$ edges, and is supported on vertices $\partial H \cup F_H$, so we have the first correctness property.
	
	Inductively, we know $\tsc(\ml^{(D_j)}, \bdry{D_j})\approx_{(i-1) \epslevel} \sc(\ml[D_j], \bdry{D_j})$ for both $j=1,2$. 
	(The height of $D_j$ may or may not equal to $i-1$ but it is guaranteed to be no more than $i-1$.) Then
	\begin{align*}
		\ml^{(\region)} &= \tsc(\ml^{(D_1)}, \bdry{D_1}) + \tsc(\ml^{(D_2)}, \bdry{D_2}) \\
		&\approx_{(i-1) \epslevel} \sc(\ml[D_1], \bdry{D_1}) + \sc(\ml[D_2], \bdry{D_2}) \\
		&= \sc(\ml[D_1], (\bdry{\region} \cup F_H) \cap V(D_1)) + \sc(\ml[D_2],  (\bdry{\region} \cup F_H) \cap V(D_2)) 
		\tag{by construction of the separator tree, $\partial D_j = (\partial H \cup F_H) \cap V(D_j)$ for $j = 1,2$}\\
		&= \sc(\ml[H], \partial H \cup F_H), \tag{by \cref{lem:sc-decomposition}}
	\end{align*}
	so we have the second correctness property.
	
	\cref{line:dynamicSCsparsify} returns $\tsc(\ml^{(H)}, \bdry{\region}) $ with $\O(\epslevel^{-2}|\bdry{\region}|)$ edges by \cref{lem:fastApproxSchur}. Also,
	\begin{align*}
		\tsc(\ml^{(H)}, \bdry{\region}) &\approx_{\epslevel} \sc(\ml^{(\region)}, \bdry{\region}) \\
		&\approx_{(i-1)\epslevel} \sc(\sc(\ml[H], \partial H \cup F_H), \bdry{\region}) \\
		&= \sc(\ml[H], \partial H), \tag{by \cref{lem:sc_transitivity}}
	\end{align*}
	giving us the third correctness property.
\end{proof}

\begin{lemma}
	The runtime of \textsc{ApproxSchurNode}$(\region)$ is $\O(\epslevel^{-2}|\bdry{\region}\cup \elim{\region}|)$.
\end{lemma}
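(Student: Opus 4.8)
The plan is to follow the two branches of \textsc{ApproxSchurNode} and, in each, bound separately the cost of forming $\ml^{(H)}$ and the cost of the single call to \textsc{ApproxSchur}. The key external input is \cref{lem:ApproxSchurNodeCorrectness}: assuming the procedure has already run correctly at all descendants of $H$, we may treat each $\tsc(\ml^{(D_j)}, \bdry{D_j})$ as a \emph{known sparse object}, namely a Laplacian on $\bdry{D_j}$ with $\O(\epslevel^{-2}|\bdry{D_j}|)$ edges. So the induction already set up for correctness simultaneously supplies the sparsity facts needed for the runtime, and no separate induction is required.

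For the leaf case, $\region$ has $O(1)$ vertices and edges, so forming $\ml^{(H)} = (\mb[H])^\top \mw_{E(H)} \mb[H]$ takes $O(1)$ time, and \cref{lem:fastApproxSchur} applied to this constant-size graph with parameter $\epslevel$ runs in $\O(\epslevel^{-2})$ time; since $|\bdry{\region}\cup\elim{\region}| = |V(\region)| \ge 1$, this is $\O(\epslevel^{-2}|\bdry{\region}\cup\elim{\region}|)$. For an internal node with children $D_1, D_2$, forming $\ml^{(H)} = \tsc(\ml^{(D_1)}, \bdry{D_1}) + \tsc(\ml^{(D_2)}, \bdry{D_2})$ costs time linear in the total edge count of the two summands, which by the sparsity fact above is $\O(\epslevel^{-2}(|\bdry{D_1}|+|\bdry{D_2}|))$; by \cref{lem:parent-child-boundry-relation}, $\bdry{D_1}\cup\bdry{D_2} = \bdry{\region}\cup\elim{\region}$, so this is $\O(\epslevel^{-2}|\bdry{\region}\cup\elim{\region}|)$. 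The important consequence is that $\ml^{(H)}$ is the Laplacian of a graph on $n' := |\bdry{\region}\cup\elim{\region}|$ vertices with $m' := \O(\epslevel^{-2}|\bdry{\region}\cup\elim{\region}|)$ edges.

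It then remains to feed these parameters into \cref{lem:fastApproxSchur}: the call $\textsc{ApproxSchur}(\ml^{(H)}, \bdry{\region}, \epslevel)$ runs in $O(m'\log^{O(1)}n + \epslevel^{-2} n' \log^{O(1)} n) = \O(\epslevel^{-2}|\bdry{\region}\cup\elim{\region}|)$ time, since both $m'$ and $\epslevel^{-2} n'$ are $\O(\epslevel^{-2}|\bdry{\region}\cup\elim{\region}|)$ and the polylogarithmic factors (at most $\log^{O(1)} n$, with $n' \le n$) are absorbed by the $\O$. Summing the two contributions in each branch gives the claimed $\O(\epslevel^{-2}|\bdry{\region}\cup\elim{\region}|)$ bound. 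The only genuine subtlety — and it is mild — is the self-referential character of the argument: the runtime of the \textsc{ApproxSchur} call here depends on $m'$, the edge count of $\ml^{(H)}$, which we control only because the children's Schur-complement approximations were themselves sparsified; since sparsification keeps the edge count linear (not quadratic) in the vertex count, there is no blow-up when this is composed up the separator tree, and the bound is local to $H$.
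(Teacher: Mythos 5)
Your proof is correct and follows essentially the same route as the paper's: bound the cost of forming $\ml^{(H)}$ (constant at a leaf, linear in the children's sparsified edge counts at an internal node, converted via \cref{lem:parent-child-boundry-relation}) and then the cost of the single \textsc{ApproxSchur} call via \cref{lem:fastApproxSchur}. You are somewhat more explicit than the paper about where the edge-count bound on $\ml^{(H)}$ comes from (the sparsity guarantee inherited from the children's correctness) and about substituting $m'$ and $n'$ into the \textsc{ApproxSchur} runtime, but the substance is identical.
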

\begin{proof}
	When $H$ is a leaf node, computing $\ml^{(H)} = \ml[H]$ takes time proportional to $|H| = \partial H \cup F_H$. 
	Computing $\tsc(\ml^{(H)}, \partial H)$ takes $\O (\epslevel^{-2} |H|)$ time by \cref{lem:fastApproxSchur}.
	
	Otherwise, when $H$ has children $D_1, D_2$, computing $\ml^{(H)}$ requires accessing $\tsc(\ml^{(D_j)}, \partial D_j)$ for $j=1,2$ and summing them together, in time $\O(|\partial D_1| + |\partial D_2|) = \O(|\partial H \cup F_H|)$. 
	Then, computing $\tsc (\ml^{(H)}, \partial H)$ take $\O(\epslevel^{-2}|\partial H \cup F_H|)$ by \cref{lem:fastApproxSchur}.
\end{proof}

Next, we prove the overall data structure correctness and runtime:
\apxscMaintain*

\begin{proof}[Proof of \cref{thm:apxsc}]
	Because we set $\epslevel \leftarrow \epssc/(\eta+1)$ in \textsc{Initialize}, combined with \cref{lem:ApproxSchurNodeCorrectness}, we conclude that for each $H \in \ct$, 
	$$\ml^{(\region)} \approx_{\epssc}\sc(\ml[H], \bdry{H}\cup \elim{H})$$
	and
	$$\tsc(\ml^{(H)}, \bdry{\region})\approx_{\epssc} \sc(\ml[\region], \bdry{\region}).$$
	
	
	We next prove the correctness and runtime of \textsc{Initialize}. 
	Computing the separator tree costs $O(n\log n)$ time by \cref{thm:separator_tree_construction}. 
	Because \textsc{ApproxSchurNode}$(\region)$ is called in increasing order of level of $\region$, each \textsc{ApproxSchurNode}$(\region)$ runs correctly and stores the initial value of $\ml^{(\region)}$ by \cref{lem:ApproxSchurNodeCorrectness}. 
	The runtime of \textsc{Initialize} is bounded by running \textsc{ApproxSchurNode} on each node, i.e:
	\[
	\O(\epslevel^{-2}\sum_{\region \in \ct} |\bdry{\region}\cup \elim{\region}|) =\O(\epslevel^{-2}m)=\O(\epssc^{-2}m).
	\]
	Where we bound the sum using \cref{lem:planarBoundChangeCost} with $K = O(m)$, since $\ct$ has $O(m)$ nodes in total.
	
	The proof for \textsc{Reweight} is similar to \textsc{Initialize}. 
	Let $K$ be the number of coordinates changed in $\vw$. 
	Then $\mathcal{P}_{\mathcal{T}}(\collN)$ contains all the regions with an edge with weight update.
	For each node $\region$ not in $\mathcal{P}_{\mathcal{T}}(\collN)$, no edge in $\region$ has a modified weight,
	and in this case, we do not need to update $\ml^{(H)}$. 
	For the nodes that do require updates,
	since \textsc{ApproxSchurNode}$(\region)$ is called in increasing order of level of $\region$, 
	we can prove inductively that all \textsc{ApproxSchurNode}$(\region)$ for $\region \in \mathcal{P}_{\mathcal{T}}(\collN)$ run correctly. 
	The time spent is bounded by $\O(\epslevel^{-2}\sum_{\region\in \mathcal{P}_{\mathcal{T}}(\collN)} |\bdry{\region}\cup \elim{\region}|)$. By \cref{lem:planarBoundChangeCost}, this is further bounded by $\O(\epssc^{-2}\sqrt{mK})$.
	
	For accessing $\ml^{(\region)}$ and $\tsc(\ml^{(H)}, \bdry{\region})$, we simply return the stored values. 
	The time required is proportional to the size of $\ml^{(\region)}$ and $\tsc(\ml^{(H)}, \bdry{\region})$ respectively,
	by the correctness properties of these Laplacians, we get the correct size and therefore the runtime.
\end{proof}

\section{Maintaining the implicit representation} 

In this section, we give a general data structure \textsc{MaintainRep}.

At a high level, \textsc{MaintainRep} implicitly maintains a vector $\vx$ throughout the IPM, 
by explicitly maintaining vector $\vy$, and implicitly maintaining a \emph{tree operator} $\mm$ and vector $\vz$,
with $\vx \defeq \vy + \mm \vz$.
\textsc{MaintainRep} supports the IPM operations \textsc{Move} and \textsc{Reweight} as follows: 
To move in step $k$ with direction $\vv^{(k)}$ and step size $\alpha^{(k)}$, 
the data structure computes some $\vz^{(k)}$ from $\vv^{(k)}$ and
updates $\vx \leftarrow \vx + \mm (\alpha^{(k)} \vz^{(k)})$. 
To reweight with new weights $\vw^\new$ (which does not change the value of $\vx$), 
the data structure computes $\mm^{\new}$ using $\vw^\new$,
updates $\mm \leftarrow \mm^{\new}$, and updates $\vy$ to offset the change in $\mm \vz$.
In \cref{subsec:maintain_z}, we define $\vz^{(k)}$ and show how to maintain $\vz = \sum_{i=1}^k \vz^{(i)}$ efficiently.
In \cref{subsec:tree_operator}, we define tree operators.
Finally in \cref{subsec:maintain_rep_impl}, we implement \textsc{MaintainRep} for a general tree operator $\mm$. 

Our goal is for this data structure to maintain the updates to the slack and flow solutions at every IPM step. 
Recall at step $k$, we want to update the slack solution by $\bar{t} h \mw^{1/2} \widetilde{\mproj}_{\vw} \vv^{(k)}$ 
and the partial flow solution by $ h \mw^{-1/2}  \widetilde{\mproj}'_{\vw} \vv^{(k)}$.
In later sections, we define specific tree operators $\mm^{\slack}$ and $\mm^{\flow}$ so that the slack and flow updates can be written as $\mm^{\slack} (\bar{t} h \vz^{(k)})$ and $\mm^{\flow} (h \vz^{(k)})$ respectively.
This then allows us to use two copies of \textsc{MaintainRep} to maintain the solutions throughout the IPM.

To start, recall the information stored in the \textsc{DynamicSC} data structure: at every node $H$ we have Laplacian $\ml^{(H)}$.
In the previous section, we defined matrices $\widetilde{\mga}$ and $\mpi^{(i)}$'s as functions of the $\ml^{(H)}$'s, in order to approximate $\ml^{-1}$.
\textsc{MaintainRep} will contain a copy of the \textsc{DynamicSC} data structure; therefore, the remainder of this section will freely refer to $\widetilde{\mga}$ and $\mpi^{(0)}, \cdots, \mpi^{(\eta-1)}$.

\subsection{Maintaining the intermediate vector $\vz$} \label{subsec:maintain_z}

We define a partial computation at each step of the IPM, which will be shared by both the slack and flow solutions:

\begin{definition}[$\vz^{(k)}$] \label{defn:z^k}
	At the $k$-th step of the IPM, let $\vv^{(k)}$ be the step direction.
	Let $\vd \defeq \mb^\top \mw^{1/2} \vv^{(k)}$.
	Define $\vz^{(k)}$ to be the partial computation
	\begin{equation}\label{eq:z}
		\vz^{(k)} \defeq
		\widetilde{\mga}
		\mpi^{(\eta-1)}\cdots \mpi^{(0)}  \vd.
	\end{equation}
\end{definition}

Observe that this is a partial projection: If we apply $\mw^{1/2} \mb \mpi^{(0)\top} \cdots \mpi^{(\eta -1)\top}$ to $\vz^{(k)}$, then by \cref{thm:L-inv-approx}, the result is an approximation to $\mproj_{\vw} \vv^{(k)}$.

We first show how to multiply $\widetilde{\mga} \mpi^{(\eta-1)} \cdots \mpi^{(0)}$ to a vector efficiently.
The main idea is to take advantage of the hierarchical structure of the separator tree $\mathcal{T}$ in a bottom-up fashion.
If $\vd$ is a sparse vector with only $K$ non-zero entries, then we can apply the operator while avoiding exploring parts of $\ct$ that are guaranteed to contain zero values.

\begin{algorithm}
	\caption{Data structure to maintain the intermediate vector $\vz$, Part 1}
	\label{algo:maintain_z}
	\begin{algorithmic}[1]
		\State \textbf{data structure} \textsc{MaintainZ}
		\State \textbf{private: member}
		\State \hspace{4mm} $G$: input graph $G$ with incidence matrix $\mb$
		\State \hspace{4mm} $\ct$: separator tree of $G$ of height $\eta$
		\State \hspace{4mm} $c \in \R, \zprev, \zsum \in \R^n$: coefficient and vectors to be maintained
		\State \hspace{4mm} $\vu \in \R^n$: vector to be maintained such that $\vu= \mpi^{(\eta-1)} \cdots \mpi^{(0)} \mb^{\top} \mw \vv$
		\State \hspace{4mm} $\vv \in \R^m$: direction vector from the current iteration
		\State \hspace{4mm} $\vw \in \R^m$: weight vector
		\Comment we sometimes also use $\mw \defeq \diag(\vw)$
		\State \hspace{4mm} \texttt{dynamicSC}: an instance of \textsc{DynamicSC} struct
		\Comment gives read access to $\ml^{(H)}$ for $H \in \ct$
		\State
		\Procedure{Initialize}{$G,\vv \in\R^{m},\vw\in\R_{>0}^{m},\epssc>0$}
		\State $\vw \leftarrow \vw$, $\vv \leftarrow \vv$
		\State $\texttt{dynamicSC}.\textsc{Initialize}(G, \vw, \epssc)$
		\State $\vu \leftarrow \textsc{PartialProject}(\mb^\top \mw^{1/2} \vv)$
		\State $\zprev \leftarrow \widetilde{\mga}\vu$
		\State $\zsum \leftarrow \vzero$
		\State $c \leftarrow 0$
		\EndProcedure
		\State

		\Procedure{PartialProject}{$\vd \in \R^{n}, \mathcal{H} = \{H \in \ct : \vd|_{F_H} \neq \vzero\}$}
		\State \Comment if $\mathcal{H}$ is not given in the argument, then it takes the default value above
		\State $\vu \leftarrow \vd$
		\For{$i$ from $0$ to $\eta-1$}  \label{line:outerloop}
		\State $
		\vu \leftarrow
		\vu - \sum_{\region \in \pathT{\collN, i}} \ml^{(\region)}_{\bdry{\region}, \elim{\region}} (\ml^{(\region)}_{\elim{\region},\elim{\region}})^{-1}\cdot \vu|_{\elim{\region}}
		$ \label{line:projectFBoundary}
		\EndFor


		\State \Return $\vu$
		\EndProcedure
		\State
		\Procedure{InversePartialProject}{$\vu \in \R^n, \collN$}

		\For{$i$ from $\eta-1$ to $0$}
		\State $
		\vu \leftarrow
		\vu + \sum_{\region \in \pathT{\collN, i}} \ml^{(\region)}_{\bdry{\region}, \elim{\region}} (\ml^{(\region)}_{\elim{\region},\elim{\region}})^{-1}\cdot \vu|_{\elim{\region}}
		$
		\label{line:reverseProjectFBoundary}
		\EndFor
		\State $\vd \leftarrow \vu$
		\State \Return $\vd$
		\EndProcedure

\algstore{maintain-z-break-point-1}
\end{algorithmic}
\end{algorithm}
\addtocounter{algorithm}{-1}

\begin{lemma} \label{lem:partial_project}
	Given a vector $\vd \in \R^{n}$, let $\mathcal{H} \supseteq \{H \in \ct : \vd|_{F_H} \neq \vzero\}$ and suppose $|\mathcal{H}| = K$.
	Then the procedure $\textsc{PartialProject}(\vd, \collN)$ in the \textsc{MaintainZ} data structure (\cref{algo:maintain_z}) returns the vector
	\[
	\vu = \mpi^{(\eta-1)}\cdots\mpi^{(1)}\mpi^{(0)} \vd,
	\]
	where the $\mpi^{(i)}$'s and $\epssc$ are from the \textsc{DynamicSC} data structure in \textsc{MaintainZ}.

	The procedure runs in $\otilde(\epsilon_{\mproj}^{-2} \sqrt{mK})$ time,
	and $\tz|_{\elim{\region}}$ is non-zero for at most $\O(K)$ nodes $\region\in \pathT{\collN}$.
\end{lemma}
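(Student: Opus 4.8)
The plan is to prove the two claims — correctness of the output vector and the runtime/support bound — essentially by induction on the loop index $i$, unwinding the definition of $\mpi^{(i)}$ and exploiting that $\mx^{(H)} = \ml^{(H)}_{\bdry{\region}, F_H}(\ml^{(H)}_{F_H,F_H})^{-1}$ is supported on $F_H \times \bdry{H}$ with $F_H$'s disjoint across a level. First I would set up the invariant: after the iteration for level $i$, the working vector $\vu$ equals $\mpi^{(i)}\cdots\mpi^{(0)}\vd$. The key algebraic fact is that $\mpi^{(i)} = \mi - \sum_{H\in\ct(i)} \mx^{(H)}$, and since the $F_H$ for $H\in\ct(i)$ are pairwise disjoint (\cref{lem:F_partitions}) and $F_H \cap \bdry{H'} = \emptyset$ for distinct $H, H'$ at the same level (by construction of $\ct$), applying $\mpi^{(i)}$ to the current $\vu$ amounts to subtracting $\mx^{(H)} \vu|_{F_H}$ for each $H\in\ct(i)$ independently. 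This is exactly \cref{line:projectFBoundary}, except that the sum there ranges only over $\region \in \pathT{\collN,i}$ rather than all of $\ct(i)$. So the real content is showing the restriction to $\pathT{\collN,i}$ loses nothing.

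For that, I would argue by a second (nested) induction that at the start of the level-$i$ iteration, $\vu|_{F_H} = \vzero$ for every $H\in\ct(i)\setminus\pathT{\collN,i}$; equivalently, $\vu|_{F_H}$ can only be nonzero for $H\in\pathT{\collN}$. Initially $\vu = \vd$ and by hypothesis $\{H : \vd|_{F_H}\neq\vzero\}\subseteq\collN\subseteq\pathT{\collN}$. Inductively, the level-$i$ update adds, for each $\region\in\pathT{\collN,i}$, a vector supported on $\bdry{\region}$; by \cref{lem:bdry-containment}, $\bdry{\region}\subseteq\bigcup_{A\text{ ancestor of }\region} F_A$, and every such ancestor $A$ lies in $\pathT{\collN}$. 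Hence the support of $\vu$, measured by which $F_H$-blocks are nonzero, never leaves $\pathT{\collN}$, and in particular any $H\in\ct(i)$ with $\vu|_{F_H}\neq\vzero$ must already be in $\pathT{\collN,i}$. This justifies that summing over $\pathT{\collN,i}$ gives the same result as summing over all of $\ct(i)$, completing the correctness proof. The same argument shows $\vu|_{F_H}$ is nonzero for at most $O(|\pathT{\collN}|) = O(K\log m) = \widetilde O(K)$ nodes, giving the support bound.

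For the runtime, each node $\region\in\pathT{\collN}$ is touched once; the cost at $\region$ is dominated by multiplying the vector $\vu|_{F_H}$ by $\ml^{(H)}_{\bdry{\region},F_H}$ and by solving a linear system in $\ml^{(H)}_{F_H,F_H}$. Both $\ml^{(H)}$ blocks have $\widetilde O(\epssc^{-2}|\bdry{\region}\cup\elim{\region}|)$ nonzeros by the sparsity guarantee of \cref{thm:apxsc} (inherited from \cref{lem:fastApproxSchur}), and the linear solve is done via the near-linear-time SDD solver of \cref{thm:laplacianSolver} (the block $\ml^{(H)}_{F_H,F_H}$ is itself SDD, being a principal submatrix of a Laplacian). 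So the per-node cost is $\widetilde O(\epssc^{-2}|\bdry{\region}\cup\elim{\region}|)$, and summing over $\region\in\pathT{\collN}$ and invoking \cref{lem:planarBoundChangeCost} with the $K$ nodes of $\collN$ bounds the total by $\widetilde O(\epssc^{-2}\sqrt{mK})$. I also need the preliminary observation that computing the index set $\pathT{\collN,i}$ and maintaining $\vu$ sparsely costs no more than this, which is immediate from constant-time tree navigation.

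The main obstacle I anticipate is being careful about the support-propagation claim: one must verify that the update vector added at level $i$ genuinely has its $F_H$-support confined to ancestors of $\pathT{\collN}$-nodes, which relies on the precise semantics of $\ml^{(H)}$ being supported on $\bdry{H}\cup\elim{H}$ and on \cref{lem:bdry-containment}; a sloppy version of this argument would wrongly conclude the support can spread to siblings. The correctness-of-the-operator part (that $\mpi^{(i)}$ factors as independent block operations across a level) is the other place to be precise, but it follows cleanly from the disjointness of the $F_H$'s and the orthogonality $F_H\cap\bdry{H'}=\emptyset$ for same-level $H\neq H'$, exactly as in the (commented-out) \cref{lem:projection_by_nodes} in the excerpt.
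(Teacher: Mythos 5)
Your proposal is correct and follows essentially the same route as the paper's proof: the same induction showing $\vu = \mpi^{(i)}\cdots\mpi^{(0)}\vd$ after level $i$ together with the invariant that the nonzero $F_H$-blocks of $\vu$ stay within $\pathT{\collN}$ (the paper phrases the support-propagation step as "$F_{H'}\cap\bdry{H}\neq\emptyset$ only if $H'$ is an ancestor of $H$," which is the same fact you extract from \cref{lem:bdry-containment}), and the same per-node cost accounting via \cref{thm:apxsc} and \cref{lem:planarBoundChangeCost}. No gaps.
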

\begin{proof}
	First, we consider the runtime.
	We remark that the creation of vector $\vu$ is for readability; the procedure can in fact be computed using $\vd$ in-place.

	The bottleneck of \textsc{PartialProject} is \cref{line:projectFBoundary}.
	For each $H \in \mathcal{P}_{\ct}(\mathcal{H})$, recall from \cref{thm:apxsc} that $\ml^{(\region)}$ is supported on the vertex set $\elim{\region} \cup \bdry{\region}$ and has $\O(\epssc^{-2}|\elim{\region}\cup \bdry{\region}|)$ edges. Hence,
	$(\ml^{(\region)}_{\elim{\region},\elim{\region}})^{-1} \vu|_{\elim{\region}}$ can be computed by an exact Laplacian solver in $\O(\epssc^{-2}|\elim{\region}\cup \bdry{\region}|)$ time, and the subsequent left-multiplying by $\ml^{(\region)}_{\bdry{\region}, \elim{\region}}$ also takes $\O(\epssc^{-2}|\elim{\region}\cup \bdry{\region}|)$ time.
	Finally, we can add the resulting vector to $\vu$ in time linear in the sparsity.
	Summing this over all $\region\in \pathT{\collN}$, we get that the total runtime is $\O(\epssc^{-2} \sqrt{mK})$ by \cref{lem:planarBoundChangeCost}.

	To show the correctness of \textsc{PartialProject}, we have the following claim:

	\begin{claim} \label{claim:partial_project_mpi}
		Let $\vu^{(-1)} = \vd$ be the value of $\vu$ in \textsc{PartialProject}$(\vd, \mathcal{H})$ before the first double for-loop.
		Let $\vu^{(i)}$ be the value of $\tz$ after iteration $i$ of the outer loop (\cref{line:outerloop}) for $0\le i<\eta$. Then
		\[
		\vu^{(i)} = \mpi^{(i)} \cdots \mpi^{(0)} \vd.
		\]
		Furthermore, $\vu^{(i)}|_{F_H} \neq \vzero$ only if $H \in \pathT{\collN}$.
	\end{claim}
	\begin{proof}
		We prove the claim by induction.
		For $i=-1$, we are given $\vu^{(-1)}|_{F_H} = \vd|_{F_H} \neq \vzero$ exactly for all $H \in \mathcal{H} \subseteq \pathT{\collN}$.

		For $i+1$, we have, by inductive hypothesis and definition of $\mpi^{(i)}$,
		\begin{align*}
			\mpi^{(i+1)} \mpi^{(i)} \cdots \mpi^{(0)} \vd &=
			 \mpi^{(i+1)} \vu^{(i)} \\
			&= \left( \mi - \sum_{H \in \ct(i+1)} 
			\mx^{(H)} \right) \vu^{(i)}. \\
			\intertext{Since $\mx^{(H)} \in \R^{\bdry{\region} \times \elim{\region}}$ and $\vu^{(i)}|_{F_H} \neq \vzero$ only if $H \in \pathT{\collN}$, the summation above can be taken over the smaller set $\mathcal{T}(i+1) \cap \pathT{\collN} \defeq \pathT{\collN, i+1}$, giving}
			&= \vu^{(i)} - \sum_{H \in \pathT{\collN, i+1}} 
			\mx^{(H)} \vu^{(i)}|_{F_H}.
		\end{align*}
		This is exactly what is computed as $\vu$ after iteration $i$ of the outer loop at \cref{line:outerloop}. Hence, this is equal to $\vu^{(i+1)}$ by definition.

		For the sparsity condition, we note that if $\vu^{(i+1)}|_{F_H'}$ differs from $\vu^{(i)}|_{F_H'}$ at a node $H'$,
		then it was changed by a term in the summation above,
		and so we must have $F_{H'} \cap \partial H \neq \emptyset$ for some $H \in \pathT{\collN, i+1}$.
		By construction of the separator tree, this occurs only if $H'$ is an ancestor of $H$, which implies $H' \in \pathT{\collN}$.
		Combined with the inductive hypothesis, we have that $\vu^{(i+1)}|_{F_H} \neq \vzero$ only if $H \in \pathT{\collN}$.
	\end{proof}

Setting $i = \eta-1$ in the above claim immediately shows that at the end of the first double for-loop in \textsc{PartialProject}, we have $\vu = \mpi^{(\eta-1)} \cdots \mpi^{(1)} \mpi^{(0)} \vd$.

Finally, to complete the sparsity argument, we have $|\mathcal{H}| = K$, and consequently $|\pathT{\mathcal{H}}| = O(K \cdot \eta) = \otilde(K)$.
Combined with the claim, we get the overall sparsity guarantee.
\end{proof}

For the correctness of our data structure, we will need a more specific structural property of \textsc{PartialProject}:
\begin{lemma}
	\label{lem:project_by_node}
	Let $\collN$ be any subset of nodes in $\ct$. Let $H_1, \ldots, H_r$ be any permutation of all nodes from $\pathT{\collN}$
	such that if $H_i$ is an ancestor of $H_j$, then $i<j$.
	Then
	\[
	\textsc{PartialProject}(\vd, \collN)=(\mi-\mx^{(H_1)})\ldots(\mi-\mx^{(H_r)})\vd.
	\]
\end{lemma}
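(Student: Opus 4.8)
The plan is to reduce the statement to Claim \ref{claim:partial_project_mpi}, which already establishes that \textsc{PartialProject}$(\vd,\collN) = \mpi^{(\eta-1)}\cdots\mpi^{(0)}\vd$, combined with the observation (implicit in the removed Lemma on the decomposition of $\mpi^{(i)}$) that $\mpi^{(i)} = \mi - \sum_{H\in\ct(i)}\mx^{(H)} = \prod_{H\in\ct(i)}(\mi - \mx^{(H)})$, where the factors commute. Thus the entire operator $\mpi^{(\eta-1)}\cdots\mpi^{(0)}$ equals a product of $(\mi-\mx^{(H)})$ over all $H\in\ct$, ordered so that levels are non-increasing from left to right. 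So the core task is twofold: first, argue that in this product we may restrict the factors to $H\in\pathT{\collN}$ when multiplying against $\vd$ (since \textsc{PartialProject} only ever touches those nodes), and second, argue that within the restricted product the only constraint on the ordering is the ancestor-descendant one stated in the lemma — in particular that two factors $\mx^{(H_i)}$, $\mx^{(H_j)}$ with neither an ancestor of the other commute, so any topological order (ancestors before descendants) gives the same result.

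The key steps, in order: (1) Establish the commutation fact: if $H, H'$ are nodes of $\ct$ with neither an ancestor of the other, then $\mx^{(H)}\mx^{(H')} = \vzero$ and hence $(\mi-\mx^{(H)})(\mi-\mx^{(H')}) = \mi - \mx^{(H)} - \mx^{(H')} = (\mi-\mx^{(H')})(\mi-\mx^{(H)})$. This is because $\mx^{(H)}$ maps $\R^{\elim{H}}$ to $\R^{\bdry{H}}$ and is zero elsewhere, while $\elim{H'}\cap\bdry{H} = \emptyset$ unless $H$ is an ancestor of $H'$ (an interior/eliminated vertex of $H$ can be a boundary vertex of $H'$ only if $H'$ lies strictly below $H$) — and symmetrically. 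Combined with $\elim{H}\cap\elim{H'}=\emptyset$ for distinct nodes (Observation \ref{lem:F_partitions}), the product of the two factors telescopes. (2) By a standard bubble-sort / adjacent-transposition argument, any two topological orderings of $\pathT{\collN}$ (ancestors before descendants) are connected by transpositions of adjacent incomparable elements, each of which leaves the product $(\mi-\mx^{(H_1)})\cdots(\mi-\mx^{(H_r)})$ invariant by step (1); hence the product is independent of the chosen order. (3) It remains to show this order-independent product equals $\textsc{PartialProject}(\vd,\collN)$. For this, take the specific ordering by non-increasing level (ties broken arbitrarily among incomparable nodes at the same level); then group consecutive factors by level $i$ to recover $\prod_{H\in\pathT{\collN,i}}(\mi-\mx^{(H)})$. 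Since $\mx^{(H)}\mx^{(H')}=\vzero$ for distinct $H,H'$ at the same level (they are incomparable), this equals $\mi - \sum_{H\in\pathT{\collN,i}}\mx^{(H)}$. Finally, invoke Claim \ref{claim:partial_project_mpi}: the intermediate vector after outer-loop iteration $i$ is $\mpi^{(i)}\cdots\mpi^{(0)}\vd$ and is supported only on $F_H$ for $H\in\pathT{\collN}$, so left-multiplying by $\mi - \sum_{H\in\pathT{\collN,i+1}}\mx^{(H)}$ agrees with left-multiplying by the full $\mpi^{(i+1)} = \mi - \sum_{H\in\ct(i+1)}\mx^{(H)}$. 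Chaining over all levels gives $\textsc{PartialProject}(\vd,\collN) = (\mi-\sum_{H\in\pathT{\collN,\eta-1}}\mx^{(H)})\cdots(\mi-\sum_{H\in\pathT{\collN,0}}\mx^{(H)})\vd$, which by the regrouping equals the ordered product over all of $\pathT{\collN}$.

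I expect the main obstacle to be step (1) — pinning down precisely why $\mx^{(H)}\mx^{(H')}=\vzero$ for incomparable $H,H'$, i.e. the claim $\elim{H}\cap\bdry{H'}=\emptyset$ whenever $H$ is not an ancestor of $H'$. This follows from Observation \ref{lem:bdry-containment} ($\bdry{H'}\subseteq\bigcup_{A\text{ ancestor of }H'}\elim{A}$) together with the disjointness of the $\elim{\cdot}$ sets: if $v\in\elim{H}\cap\bdry{H'}$ then $v\in\elim{A}$ for some ancestor $A$ of $H'$, forcing $A=H$ by disjointness, so $H$ is an ancestor of $H'$ — contradiction. One should also double-check the boundary case where $H$ or $H'$ is a leaf, but there $\mx^{(H)}$ still only has the stated support, so nothing changes. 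The rest is bookkeeping: the bubble-sort argument in step (2) and the level-grouping in step (3) are routine once the commutation and the support structure from Claim \ref{claim:partial_project_mpi} are in hand.
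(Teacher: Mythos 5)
Your proposal is correct and takes essentially the same route as the paper: commutativity of $(\mi-\mx^{(H)})$ factors for incomparable nodes via $\mx^{(H)}\mx^{(H')}=\vzero$, the expansion of each level's operator $\mi-\sum_{H\in\pathT{\collN,i}}\mx^{(H)}$ as a product of commuting factors, and then reordering to an arbitrary topological order. Your justification of the support fact $\elim{H}\cap\bdry{H'}=\emptyset$ for incomparable nodes is more detailed than the paper's one-line assertion, but the argument is the same.
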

\begin{proof}
	First, we observe that $\mi-\mx^{(H_i)}$ and $\mi-\mx^{(H_j)}$ are commutative if $H_i$ and $H_j$ are not ancestor-descendants.
	The reason is that $\mx^{(H_i)}\mx^{(H_j)}=\vzero$, since $\mx^{(H_i)} \in \R^{\bdry{H_i}\times \elim{H_i}}$, and $\elim{H_i} \cap \bdry{H_j} \neq \emptyset$ only if $H_i$ is an ancestor of $H_j$.

	From the proof of \cref{claim:partial_project_mpi}, we observe that iteration $i$ of the for-loop in \textsc{PartialProject} applies the operator
	\[
	\mi - \sum_{\region \in \pathT{\collN, i}} \mx^{(H)} =\prod_{H\in\pathT{\collN, i}}(\mi-\mx^{(H)}),
	\]
	where the equality follows from expanding the RHS and applying the property $\mx^{(H_i)}\mx^{(H_j)}=\vzero$.
	Thus, we have a stricter version of the claim:
	\[ \textsc{PartialProject}(\vd, \collN)=(\mi-\mx^{(H_1)})\ldots(\mi-\mx^{(H_r)})\vd,
	\]
	where $H_1,\ldots, H_r$ is any permutation of $\pathT{\collN}$ such that nodes at lower levels come later.
	Then we apply commutativity to allow $H_1,\ldots, H_r$ to be any permutation such that if $H_i$ is an ancestor of $H_j$ then $i < j$.
\end{proof}

Next, we show there is a procedure that reverses \textsc{PartialProject} using select nodes of $\ct$.

\begin{lemma} \label{lem:inverse_partial_project}
Given a set of $K$ nodes $\mathcal{H}$ in $\ct$ and a vector $\vu$,
$\textsc{InversePartialProject}(\vu, \collN)$ in the \textsc{MaintainZ} data structure (\cref{algo:maintain_z}) is a procedure that returns $\vd$ such that
\[
\vd= (\mi+\mx^{(H_r)})\ldots(\mi+\mx^{(H_1)})\vu,
\] where $H_1, \ldots, H_r$ is any permutation of all nodes from $\pathT{\collN}$
	such that if $H_i$ is an ancestor of $H_j$, then $i<j$. 
The procedure runs in $\otilde(\epsilon_{\mproj}^{-2} \sqrt{mK})$ time, where $K = |\collN|$.
\end{lemma}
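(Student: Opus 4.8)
~\textbf{Proof proposal for \cref{lem:inverse_partial_project}.}

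The plan is to establish this as the exact algebraic and computational mirror of \cref{lem:partial_project} and \cref{lem:project_by_node}. First I would verify the correctness claim by induction on the iterations of the single (reversed) for-loop in \textsc{InversePartialProject}, which runs $i$ from $\eta-1$ down to $0$. The key observation, completely parallel to the proof of \cref{claim:partial_project_mpi}, is that iteration $i$ adds $\sum_{H \in \pathT{\collN,i}} \mx^{(H)}\, \vu|_{\elim{H}}$ to $\vu$, which is exactly applying the operator $\mi + \sum_{H \in \pathT{\collN,i}} \mx^{(H)}$. Since $\mx^{(H_i)}\mx^{(H_j)} = \vzero$ whenever $H_i, H_j$ are not in ancestor--descendant relation (because $\mx^{(H)} \in \R^{\bdry{H} \times \elim{H}}$ and $\elim{H_i} \cap \bdry{H_j} \neq \emptyset$ forces $H_i$ to be an ancestor of $H_j$), expanding the product $\prod_{H \in \pathT{\collN,i}} (\mi + \mx^{(H)})$ collapses to $\mi + \sum_{H \in \pathT{\collN,i}} \mx^{(H)}$, so each iteration applies $\prod_{H \in \pathT{\collN,i}}(\mi + \mx^{(H)})$. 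Because the loop processes levels in \emph{decreasing} order $\eta-1, \ldots, 0$ (the reverse of \textsc{PartialProject}'s increasing order), composing these level-operators yields $(\mi + \mx^{(H_r)}) \cdots (\mi + \mx^{(H_1)})$ for any permutation $H_1, \ldots, H_r$ of $\pathT{\collN}$ with ancestors preceding descendants — this is the stated formula. One subtlety to address carefully: I should check that the set of nodes touched in iteration $i$ can legitimately be restricted to $\pathT{\collN,i}$ rather than all of $\ct(i)$, i.e.\ that $\vu|_{\elim{H}} = \vzero$ for $H \in \ct(i) \setminus \pathT{\collN,i}$ at the time iteration $i$ runs; this follows from the same sparsity-propagation argument as in \cref{claim:partial_project_mpi}, noting that the reversed loop only ever writes into $\elim{H'}$ for ancestors $H'$ of nodes already in $\pathT{\collN}$, hence stays within $\pathT{\collN}$.

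For the runtime, I would argue identically to \cref{lem:partial_project}: for each $H \in \pathT{\collN}$, computing $(\ml^{(H)}_{\elim{H},\elim{H}})^{-1} \vu|_{\elim{H}}$ via an exact Laplacian solver costs $\otilde(\epssc^{-2} |\bdry{H} \cup \elim{H}|)$ since $\ml^{(H)}$ has $\otilde(\epssc^{-2}|\bdry{H}\cup\elim{H}|)$ edges (from \cref{thm:apxsc}), and left-multiplying by $\ml^{(H)}_{\bdry{H},\elim{H}}$ plus the vector addition cost the same. Summing over $H \in \pathT{\collN}$ and applying \cref{lem:planarBoundChangeCost} with $|\collN| = K$ gives the total bound $\otilde(\epssc^{-2}\sqrt{mK})$.

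I do not expect any serious obstacle here — the lemma is essentially bookkeeping, since \textsc{InversePartialProject} is constructed precisely to undo \textsc{PartialProject}. The one place that needs genuine care rather than routine copying is making the ancestor-ordering bookkeeping airtight: in \textsc{PartialProject} the operator factors $(\mi - \mx^{(H)})$ are applied with descendants first (small levels early), while here the factors $(\mi + \mx^{(H)})$ are applied with ancestors first, and I want to confirm that $(\mi + \mx^{(H)})(\mi - \mx^{(H)}) = \mi$ (true since $\mx^{(H)}\mx^{(H)} = \vzero$, as $\bdry{H} \cap \elim{H} = \emptyset$), so that \textsc{InversePartialProject} genuinely inverts \textsc{PartialProject} when applied to the same node set $\collN$ — this is what justifies the name and is presumably how the lemma gets used downstream. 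Stating the commutativity and nilpotency facts once, up front, and then citing them makes the whole argument short.
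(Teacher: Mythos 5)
Your proposal is correct and follows essentially the same route as the paper's proof: iteration $i$ of the reversed loop applies $\mi + \sum_{H\in\pathT{\collN,i}}\mx^{(H)} = \prod_{H\in\pathT{\collN,i}}(\mi+\mx^{(H)})$ (using $\mx^{(H_i)}\mx^{(H_j)}=\vzero$ for non-ancestor-descendant pairs), composing the level operators in decreasing level order gives the stated product, and the runtime is bounded exactly as in \cref{lem:partial_project} via \cref{lem:planarBoundChangeCost}. The paper's proof is just a terser version of this; your extra observations (the sparsity restriction to $\pathT{\collN,i}$ and the identity $(\mi+\mx^{(H)})(\mi-\mx^{(H)})=\mi$) are harmless and correct, though not needed for the lemma statement itself.
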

\begin{proof}
Intuitively, observe that \textsc{InversePartialProject} is reversing all the operations in \textsc{PartialProject}.
The runtime analysis is analogous to \textsc{PartialProject}.
The proof of the equation is also analogous to \textsc{PartialProject}. We first observe that iteration $i$ of the for-loop applies the operator
\[
	\mi + \sum_{\region \in \pathT{\collN, i}} \mx^{(H)} =\prod_{H\in\pathT{\collN, i}}(\mi+\mx^{(H)}).
	\]
	Then by commutativity as in \cref{lem:project_by_node}, we have
	\[
\vd= (\mi+\mx^{(H_r)})\ldots(\mi+\mx^{(H_1)})\vu.
\]
where $H_1,\ldots, H_r$ is any permutation of $\pathT{\collN}$ such that nodes at lower levels come later.
	Then we apply commutativity to allow $H_1,\ldots, H_r$ to be any permutation such that if $H_i$ is an ancestor of $H_j$ then $i < j$.
\end{proof}

\begin{algorithm}
	\renewcommand{\thealgorithm}{}
	\caption{\textbf{\ref{algo:maintain_z}} Data structure to maintain the intermediate vector $\vz$, Part 2}
	\begin{algorithmic}[1]
		\algrestore{maintain-z-break-point-1}
		\Procedure{Reweight}{$\vw^\new \in\R_{>0}^{m}$}
		\State $\vw \leftarrow \vw^\new$

		\State $\collN \leftarrow$ set of leaf nodes in $\mathcal{T}$ that contain all the edges of $G$ whose weight has changed
		\State $\Delta \vu \leftarrow \textsc{PartialProject}(\mb^{\top} (\mw^{\new 1/2} - \mw^{1/2}) \vv)$\label{line:maintainzreweightproject}
		\State $\vu \gets \vu + \Delta \vu$ \label{line:maintainzupdated}
		\State $\vd \leftarrow \textsc{InversePartialProject}(\vu, \mathcal{H})$\label{line:deproject}
		\Comment revert projection with old weights
		\State $\texttt{dynamicSC}.\textsc{Reweight}(\vw^\new)$\label{line:maintainZupdateWeights}
		\Comment update $\ml^{(\region)}$'s to use the new weights
		\State \Comment specifically, $\ml^{(H)}$ changes for each $H \in \pathT{\collN}$\label{line:reproject}
		\State $\vu \leftarrow \textsc{PartialProject}(\vd, \mathcal{H})$
		\Comment apply projection with new weights
		\State $\vy \leftarrow \zprev$ \Comment{backup copy of $\zprev$}
		\For{$\region$ in $\pathT{\collN}$}
		\State  $\zprev|_{\elim{\region}} \leftarrow (\ml^{(H)}_{\elim{H}, \elim{H}})^{-1}\vu|_{\elim{\region}}$ \label{line:reweightgamma}
		\EndFor
		\State $\zsum \gets \zsum - c \cdot (\zprev-\vy)$
		\Comment{update $\zsum$ to maintain the invariant} \label{line:maintainZcancelChanges}
		\EndProcedure
		\State
		\Procedure{Move}{$\alpha \in \R, \vnew \in \R^{m}$}
		\State $\Delta \vv \leftarrow \vnew- \vv$
		\State $\vv \leftarrow \vnew$
		\State $\Delta \vu \leftarrow \textsc{PartialProject}(\mb^\top \mw^{1/2} \Delta \vv)$
		\State $\vu \leftarrow \vu+\Delta \vu$ \label{line:maintain_z:update_u}
		\State $\vy \leftarrow \zprev$ \Comment{backup copy of $\zprev$}
		\For{$\region$ in $\pathT{\collN}$}
		\State  $\zprev|_{\elim{\region}} \leftarrow (\ml^{(H)}_{\elim{H}, \elim{H}})^{-1}\vu|_{\elim{\region}}$\label{line:maintainZmove1}
		\EndFor
		\State $\zsum \gets \zsum - c \cdot (\zprev-\vy)$\label{line:maintainZmove2}
		\State $c \leftarrow c + \alpha$ \label{line:maintainZmovec}
		\EndProcedure
	\end{algorithmic}
\end{algorithm}

Finally, we have the data structure for maintaining a vector $\vz$ dependent on $\vv$ throughout the IPM.
For one IPM step, there is one call to \textsc{Reweight} followed by one call to \textsc{Move}.

\begin{restatable}[Maintain intermediate vector $\vz$] {theorem}{MaintainZ} \label{thm:maintain_z}
	Given a modified planar graph $G$ with $n$ vertices and $m$ edges and its separator tree $\ct$ with height $\eta$,
	the deterministic data structure \textsc{MaintainZ} (\cref{algo:maintain_z})
	maintains the following variables correctly at the end of each IPM step:
	\begin{itemize}
		\item the dynamic edge weights $\vw$ is and current step direction $\vv$ from the IPM
		\item a \textsc{DynamicSC} data structure on $\ct$ based on the current edge weights $\vw$
		\item scalar $c$ and vectors $\zprev, \zsum$, which together represent $\vz = c \zprev + \zsum$,
		such that at the end of IPM step $k$,
		\begin{equation} \label{eq:z-invariant}
			\vz = \sum_{i=1}^{k} \vz^{(i)}.
		\end{equation}
		\item $\zprev$ satisfies $\zprev = \widetilde{\mga} \mpi^{(\eta-1)} \cdots \mpi^{(0)} \mb^{\top} \mw^{1/2} \vv.$
	\end{itemize}
	The data structure supports the following procedures:
	\begin{itemize}
		\item $\textsc{Initialize}(G, \text{ separator tree } \ct, \vv\in\R^{m},\vw\in\R_{>0}^{m}, \epsilon_{\mproj} > 0)$:
		Given a graph $G$, its separator tree $\ct$, initial step direction $\vv$, initial weights $\vw$, and target projection matrix accuracy $\epsilon_{\mproj}$, preprocess in $\widetilde{O}(\epssc^{-2}m)$ time
		and initialize $\vz = \vzero$.

		\item $\textsc{Reweight}(\vw\in\R_{>0}^{m}$  given implicitly as a set of changed coordinates):
		Update the current weight to $\vw$ and update \textsc{DynamicSC},
		and update the representation of $\vz$.
		The procedure runs in
		$\widetilde{O}(\epsilon_{\mproj}^{-2}\sqrt{mK})$ total time,
		where $K$ is the number of coordinates updated in $\vw$.
		There are most $\O(K)$ nodes $\region\in \ct$ for which $\zprev|_{F_H}$ and $\zsum|_{F_H}$ are updated.

		\item $\textsc{Move}(\alpha \in \R$, $\vv \in \R^{n}$ given implicitly as a set of changed coordinates):
		Update the current direction to $\vv$, and set $\vz \leftarrow \vz + \alpha \widetilde\mga \mpi^{(\eta-1)}\cdots \mpi^{(0)} \mb^{\top} \mw^{1/2} \vv$ with the correct representation.
		The procedure runs in $\widetilde{O}(\epsilon_{\mproj}^{-2} \sqrt{mK})$ time,
		where $K$ is the number of coordinates changed in $\vv$ compared to the previous IPM step.
	\end{itemize}
\end{restatable}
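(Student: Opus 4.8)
The plan is to verify each claimed invariant by tracking how \textsc{Reweight} and \textsc{Move} transform the stored tuple $(c, \zprev, \zsum, \vu, \vv, \vw)$, relying on the algebraic facts already established in \cref{lem:partial_project,lem:project_by_node,lem:inverse_partial_project} and on \cref{thm:apxsc} for the \textsc{DynamicSC} guarantees. I would organize the proof around the two structural invariants: (i) $\vu = \mpi^{(\eta-1)}\cdots\mpi^{(0)}\mb^\top\mw^{1/2}\vv$ with $\vw,\vv$ the current IPM iterates, and (ii) $\zprev|_{\elim H} = (\ml^{(H)}_{\elim H,\elim H})^{-1}\vu|_{\elim H}$ for every node $H$, so that $\zprev = \widetilde\mga\vu$; these two together give the stated formula for $\zprev$. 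Then, assuming (i)--(ii), I would establish the accumulation invariant $\vz = c\cdot\zprev + \zsum = \sum_{i=1}^k \vz^{(i)}$ by induction on the step count $k$.

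First I would handle \textsc{Initialize}: the calls set $\vw,\vv$ to the inputs, invoke $\texttt{dynamicSC}.\textsc{Initialize}$ so the $\ml^{(H)}$'s satisfy \cref{eq:L^(H)}, and then \textsc{PartialProject}$(\mb^\top\mw^{1/2}\vv)$ returns exactly $\mpi^{(\eta-1)}\cdots\mpi^{(0)}\mb^\top\mw^{1/2}\vv$ by \cref{lem:partial_project}, establishing (i). Next $\zprev\leftarrow\widetilde\mga\vu$ establishes (ii) via the block-diagonal action of $\widetilde\mga$ (each block is $(\ml^{(H)}_{\elim H,\elim H})^{-1}$ on $\elim H$, using \cref{lem:F_partitions} that $\{\elim H\}$ partitions $V(G)$). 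With $\zsum=\vzero$ and $c=0$ we get $\vz=\vzero$, matching \cref{eq:z-invariant} for $k=0$. The runtime is $\otilde(\epssc^{-2}m)$ from \cref{thm:apxsc} plus the $K=O(m)$ case of \cref{lem:partial_project}.

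For \textsc{Move}$(\alpha,\vnew)$: writing $\Delta\vv = \vnew - \vv$, linearity of $\mpi^{(\eta-1)}\cdots\mpi^{(0)}\mb^\top\mw^{1/2}$ together with \cref{lem:partial_project} applied to the sparse vector $\mb^\top\mw^{1/2}\Delta\vv$ (supported near the $O(K)$ leaves containing changed coordinates) shows $\vu + \Delta\vu$ equals the operator applied to $\mb^\top\mw^{1/2}\vnew$, re-establishing (i); here I must note that $\mw$ is unchanged so the $\mpi^{(i)}$'s and $\ml^{(H)}$'s are the same before and after, and that \cref{lem:partial_project} guarantees only $\O(K)$ nodes of $\pathT{\collN}$ have $\Delta\vu|_{\elim H}\neq\vzero$. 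Recomputing $\zprev|_{\elim H}$ for exactly those $H\in\pathT{\collN}$ (\cref{line:maintainZmove1}) restores (ii) with the new $\vu$, giving $\zprev^{\new} = \widetilde\mga\mpi^{(\eta-1)}\cdots\mpi^{(0)}\mb^\top\mw^{1/2}\vnew = \vz^{(k)}$. The key bookkeeping identity is that after the updates $\zprev\leftarrow\vy + (\zprev^{\new}-\vy)$, $\zsum\leftarrow\zsum - c(\zprev^{\new}-\vy)$, $c\leftarrow c+\alpha$, the value $c\cdot\zprev+\zsum$ changes by exactly $\alpha\cdot\zprev^{\new} = \alpha\vz^{(k)}$ (the $c(\zprev^{\new}-\vy)$ terms cancel), which by the inductive hypothesis on $\vz$ proves \cref{eq:z-invariant} at step $k$. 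Runtime $\otilde(\epssc^{-2}\sqrt{mK})$ follows from \cref{lem:partial_project} and updating $O(K\eta)$ coordinates of $\zprev,\zsum$.

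For \textsc{Reweight}$(\vw^\new)$: this is the most delicate case and the main obstacle, because the operators $\mpi^{(i)}$ themselves change when $\vw$ changes, so one cannot simply apply linearity. The algorithm's strategy is: apply \textsc{InversePartialProject} with the \emph{old} $\ml^{(H)}$'s to recover $\vd = \mb^\top\mw^{1/2}\vv$ (the pre-image under the old operator chain), using \cref{lem:inverse_partial_project} together with \cref{lem:project_by_node} to see these are genuinely inverse operations on the relevant coordinates; then update \texttt{dynamicSC} so the $\ml^{(H)}$'s (and hence $\mpi^{(i)}$'s) reflect $\vw^\new$ — by \cref{thm:apxsc} only $H\in\pathT{\collN}$ change; then re-apply \textsc{PartialProject} with the new operators. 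I need to be careful that the $\vd$ recovered is $\mb^\top\mw^{1/2}\vv$ with the \emph{old} $\mw$, and that \cref{line:maintainzreweightproject}--\cref{line:maintainzupdated} first add the correction $\textsc{PartialProject}(\mb^\top(\mw^{\new 1/2}-\mw^{1/2})\vv)$ so that after the round trip $\vu$ is the new operator chain applied to $\mb^\top\mw^{\new 1/2}\vv$ — this requires checking that \textsc{InversePartialProject}$\circ$\textsc{PartialProject} acts as identity on the coordinates touched, which is exactly the content of \cref{lem:project_by_node} plus the observation that only nodes in $\pathT{\collN}$ (where weights changed) have their operators altered, so nodes outside $\pathT{\collN}$ commute through harmlessly. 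Then $\zprev$ is recomputed on $\pathT{\collN}$ as in \textsc{Move}, and $\zsum\leftarrow\zsum - c(\zprev-\vy)$ ensures $\vz = c\zprev+\zsum$ is unchanged in value (as required: \textsc{Reweight} must not change $\vz$), so \cref{eq:z-invariant} is preserved. Runtime $\otilde(\epssc^{-2}\sqrt{mK})$ follows from two \textsc{PartialProject}/\textsc{InversePartialProject} calls plus $\texttt{dynamicSC}.\textsc{Reweight}$, all bounded via \cref{lem:planarBoundChangeCost}, and the $\O(K)$ bound on updated nodes of $\zprev,\zsum$ comes from the sparsity clause of \cref{lem:partial_project}. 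The subtle point I expect to spend the most care on is proving that the inverse-then-forward composition with a \emph{changed} operator in between does the intended thing — i.e., that it is legitimate to "peel off" the old operators on $\pathT{\collN}$ and "glue on" the new ones while leaving the untouched subtrees alone — which hinges on the commutativity structure of $\{\mi - \mx^{(H)}\}$ established in \cref{lem:project_by_node}.
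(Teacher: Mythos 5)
Your proposal is correct and follows essentially the same route as the paper's proof: the same two structural invariants on $\vu$ and $\zprev$, the same cancellation bookkeeping for \textsc{Move}, and the same peel-off/re-apply strategy for \textsc{Reweight} justified by \cref{lem:project_by_node} and \cref{lem:inverse_partial_project}. One small imprecision: \textsc{InversePartialProject} does not recover the full pre-image $\mb^{\top}\mw^{\new 1/2}\vv$ but only strips the factors $(\mi-\mx^{(H)})$ for $H\in\pathT{\collN}$, leaving $(\mi-\mx^{(H_{r+1})})\cdots(\mi-\mx^{(H_t)})\mb^{\top}\mw^{\new 1/2}\vv$ — though your later remarks about untouched subtrees show you have the right picture.
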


	\begin{proof}
	If \textsc{Move} is implemented correctly, then by the definition of the update to $\vz$, the invariant in \cref{eq:z-invariant} is correctly maintained.

	For the runtime analysis, recall $\{F_H : H \in \ct\}$ partition the vertex set of $G$.
	Therefore $\vv$ has $K$ non-zero entries, then $\vd \defeq \mb^{\top} \mw^{1/2} \vv$ has $O(K)$ non-zero entries,
	and consequently $\vd|_{F_H} \neq \vzero$ for $O(K)$ nodes $H$. There are $O(m)$ total nodes in the separator tree $\ct$.

	We maintain a vector $\vu$ with the invariant $\vu = \mpi^{(\eta-1)} \cdots \mpi^{(0)} \mb^{\top} \mw^{1/2} \vv$.
	We now prove the correctness and runtime of each procedure separately.
	\paragraph{\textsc{Initialize}:}
	By the guarantee of \cref{lem:partial_project}, at the end of \textsc{Initialize}, we have
	\[
		\vu = \mpi^{(\eta-1)} \cdots \mpi^{(0)} \mb^\top \mw^{1/2} \vv
	\] and 
	\[
		\zprev = \widetilde{\mga}\vu=\widetilde{\mga} \mpi^{(\eta-1)} \cdots \mpi^{(0)} \mb^\top \mw^{1/2} \vv.
	\]
	Since $c$ and $\zsum$ are initialized to zero, we have $\vz = c \zprev + \zsum = \vzero$.

	We initialize the \textsc{DynamicSC} data structure in $\o(\epssc^{-2} m)$ time.
	There is no sparsity guarantee for $\vv$,
	but the call to \textsc{PartialProject} takes at most $O(\epssc^{-2} m)$ time because of the size of $\ct$. To calculate  $\widetilde{\mga}\vu$, we solve a Laplacian system $(\ml^{(H)}_{\elim{H}, \elim{H}})^{-1}\vu|_{\elim{H}}$ in time $\O(|\ml^{(H)}|)$ for each node $H$. The total time is $\O(\epssc^{-2} m)$ as well by $|\ml^{(H)}|=\widetilde{O}\left( \epssc^{-2} |\elim{\region} \cup \bdry{\region} |\right)$ from \cref{thm:apxsc} and by \cref{lem:planarBoundChangeCost}.

	\paragraph{\textsc{Move}:}
	Let $\vv, \vu$ be the variables at the start of \textsc{Move}, and let ${\vv}', \vu'$ denote them at the end.
	Similarly, let $\vz = c \zprev + \zsum$ denote $\vz$ and the respective variables at the start of \textsc{Move}, and
	let $\vz' = c' {\zprev}' + {\zsum}'$ denote these variables at the end.

	First, after \cref{line:maintain_z:update_u}, we have
	\begin{align*}
		{\vu}' &= \vu + \Delta \vu \\
		&= \mpi^{(\eta-1)} \cdots \mpi^{(0)} \mb^{\top} \mw^{1/2} ({\vv} + \Delta \vv) \\
		&= \mpi^{(\eta-1)} \cdots \mpi^{(0)} \mb^{\top} \mw^{1/2} {\vv}',
	\end{align*}
	where the second equality follows from the guarantee of \textsc{PartialProject} and the guarantee from the previous IPM step.
	By \cref{lem:partial_project}, ${\vu}'$ is updated only on $\elim{H}$ where $H\in \pathT{\collN}$. Thus, to update $\zprev'=\widetilde{\mga}\vu'$, we only need to update $\zprev'|_{\elim{H}}$ for $H\in \pathT{\collN}$, which happens on \cref{line:maintainZmove1}. 
	Observe that the update in value to $\zprev$ is cancelled out by the update in $\zsum$ at \cref{line:maintainZmove2}, so that the value of $\vz$ does not change overall up to that point. 
	But we have 
	\[
	\vz = c \zprev' + \zsum' = c \widetilde{\mga} \mpi^{(\eta-1)} \cdots \mpi^{(0)} \mb^{\top} \mw^{1/2} {\vv}' + \zsum'.
	\]
	Then in \cref{line:maintainZmovec}, incrementing $c$ by $\alpha$ represents increasing the value of $\vz$ by  $\alpha \zprev'$, which is exactly the desired update.
		
	For the runtime, first note $nnz(\Delta \vv) = K$. So \textsc{PartialProject} runs in $\O(\epssc^{-2} \sqrt{mK})$ time by \cref{lem:partial_project}. \cref{line:maintainZmove1} takes $\O(\epssc^{-2} \sqrt{mK})$ time in total by \cref{thm:apxsc} and \cref{lem:planarBoundChangeCost}. The remaining operations in the procedure are adding vectors with bounded sparsity.

	\paragraph{\textsc{Reweight}:}
	Let $\vw^{\old}$ denote the weight vector immediately before this procedure is called, and $\vw^{\new}$ is the new weight passed in as an argument.

	Let $\widetilde{\mga}$ and $\mpi^{(i)}$ denote these matrices defined using the old weights, and let
	$\widetilde{\mga}'$ and $\mpi^{(i)'}$ denote the matrices using the new weights. Similarly let $\vu$ be the state of the vector at the start of the procedure call and $\vu'$ at the end.

	In \textsc{Reweight}, we do not change the value of $\vz$, but rather update $\zprev$ and $\zsum$ so that at the end of the procedure,
	\[
		\zprev = \widetilde{\mga}' \mpi^{(\eta-1)'} \cdots \mpi^{(0)'} \mb^{\top} \mw^{\new 1/2} {\vv},
	\]
	so that we maintain the invariant claimed in the theorem statement.

	To see that the value of $\vz$ does not change at the end of the procedure, observe that we modify $\zprev$ during the procedure, and cancel all the changes to $\zprev$ by updating $\zsum$ appropriately at the last line (\cref{line:maintainZcancelChanges}).

	Immediately before \cref{line:maintainzupdated}, the algorithm invariant guarantees
	\[
	\vu = \mpi^{(\eta-1)} \cdots \mpi^{(0)} \mb^{\top} \mw^{\old 1/2} \vv.
	\]
	By \cref{lem:partial_project},
	\[
	\Delta \vu = \mpi^{(\eta-1)} \cdots \mpi^{(0)} \mb^{\top} \left(\mw^{\new 1/2} - \mw^{\old 1/2} \right) \vv.
	\]
	Therefore, after executing \cref{line:maintainzupdated}, we have
	\[
	\vu \leftarrow \vu + \Delta \vu = \mpi^{(\eta-1)} \cdots \mpi^{(0)} \mb^{\top} \mw^{\new 1/2} \vv.
	\]
	Next, we need to update $\vu$ to reflect the changes to $\widetilde{\mga}, \mpi^{(i)}$.
	Updating these matrices is done via \texttt{dynamicSC}. However, calling $\textsc{PartialProject}(\mb^{\top} \mw^{\new 1/2} \vv)$ afterwards is too costly if done directly, since the argument is a dense vector.
	To circumvent this problem, we make the key observation that the change to $\vu$ is restricted to a subcollection of nodes on $\ct$ (in fact a connected subtree containing the root),
	and it suffices to partially reverse and reapply the operator $\widetilde{\mga} \mpi^{(\eta-1)} \cdots \mpi^{(0)}$. Intuitively, \textsc{InversePartialProject} revert all computations in \textsc{PartialProject} that are related to the changes to $\mw$.

	Let $H_1, \dots, H_t$ be a permutation of all nodes in $\ct$, such that the nodes in $\pathT{\collN}$ is a prefix of the permutation, and it satisfies that for any node $H_i$ with descendant $H_j$, $i < j$.
	Then by \cref{lem:project_by_node}, after executing \cref{line:maintainzupdated}, we have
	\begin{align}
	\vu &= \textsc{PartialProject}(\mb^{\top} \mw^{\new 1/2} \vv, \ct) \nonumber \\
	&= (\mi-\mx^{(H_1)})\ldots(\mi-\mx^{(H_t)})\mb^{\top} \mw^{\new 1/2} \vv. \label{eq:partial_project_zprev}
	\end{align}
	Let $r = |\pathT{\collN}|$.
	Then \textsc{InversePartialProject}($\vu, \collN$) on \cref{line:deproject} returns $\vd$ by \cref{lem:inverse_partial_project} satisfying
	\begin{align*}
		\vd&= (\mi+\mx^{(H_r)})\ldots(\mi+\mx^{(H_1)})\vu.
		\intertext{Plugging in $\vu$ from \cref{eq:partial_project_zprev}, we have}
		\vd&= (\mi+\mx^{(H_r)})\ldots(\mi+\mx^{(H_1)})(\mi-\mx^{(H_1)})\ldots(\mi-\mx^{(H_t)})\mb^{\top} \mw^{\new 1/2} \vv.
	\end{align*}
	We use the fact that each $\mi-\mx^{(H_i)}$ is nonsingular and has inverse $\mi+\mx^{(H_i)}$ to get
	\[\vd=(\mi-\mx^{(H_{r+1})})\ldots(\mi-\mx^{(H_t)})\mb^{\top} \mw^{\new 1/2} \vv.
	\]

	We then call \texttt{dynamicSC}.\textsc{Reweight}, which updates $\ml^{(H)}$ and in turn $\mx^{(H_i)}$ for precisely all nodes in $\pathT{\collN} = \{ H_1, \dots, H_r\}$.
	Let $\mx^{(H)'}$ denote the matrix after reweight. Next, we call $\textsc{PartialProject}$ again. Let us denote it by $\textsc{PartialProject}^\new$ to emphasize that it runs with new weights. This gives
	\begin{align*}
		\vu' &= \textsc{PartialProject}^\new(\vd, \collN) \\
		&= (\mi - \mx^{(H_1)'}) \dots (\mi-\mx^{(H_r)'}) \vd \\
		&= (\mi - \mx^{(H_1)'}) \dots (\mi-\mx^{(H_r)'}) (\mi-\mx^{(H_{r+1})})\ldots(\mi-\mx^{(H_t)})\mb^{\top} \mw^{\new 1/2} \vv \\
		&= (\mi - \mx^{(H_1)'}) \ldots(\mi-\mx^{(H_t)'})\mb^{\top} \mw^{\new 1/2} \vv \tag{since $\mx^{(H_i)'}=\mx^{(H_i)}$ for all $i>r$} \\
		&= \textsc{PartialProject}^\new(\mb^\top \mw^{\new 1/2} \vv, \ct).
	\end{align*}
	Because $\vu'|_{F_H}$ is updated on $H\in \pathT{\collN}$, and $\ml^{(H)}$ is updated on $H\in \pathT{\collN}$ by \cref{thm:apxsc}, running \cref{line:reweightgamma} on $H\in \pathT{\collN}$ correctly sets $\zprev'=\widetilde{\mga}'\vu'$.

	For the runtime, the first call to \textsc{PartialProject} has a vector with $O(K)$ sparsity as the argument, and therefore runs in $\O(\epssc^{-2} \sqrt{mK})$.
	Next, we know $|\collN| = O(K)$.
	The call to \textsc{InversePartialProject} and the subsequent call to \textsc{PartialProject} both have $\collN$ as an argument, so they run in $\O(\epssc^{-2} \sqrt{mK})$.
	The \texttt{DynamicSC}.\textsc{Reweight} call runs in $\O(\epssc^{-2} \sqrt{mK})$. 
	Updating $\zprev$ (\cref{line:reweightgamma}) takes $\O(\epssc^{-2} \sqrt{mK})$ time in total by \cref{thm:apxsc} and \cref{lem:planarBoundChangeCost}. And finally we can update $\zsum$ in the same time.

	We remark that although \textsc{InversePartialProject} returns a vector $\vd$ that is not necessarily sparse, and we then assign $\vu \leftarrow \textsc{PartialProject}(\vd, \collN)$, this is for readability. $\vd$ is in fact an intermediate state of $\vu$, on which we perform in-place operations.
\end{proof}

\subsection{Tree operator} \label{subsec:tree_operator}

At IPM step $k$, our goal is to write the slack update $\widetilde\mproj_{\vw} \vv^{(k)}$ as $\mm^{(\textrm{slack})} \vz^{(k)}$, and similarly, write the partial flow update $\widetilde \mproj'_{\vw} \vv^{(k)}$ approximately as $\mm^{(\textrm{flow})} \vz^{(k)}$, where $\vz^{(k)}$ is defined in the previous subsection, and $\mm^{(\textrm{slack})}$ and $\mm^{(\textrm{flow})}$ are linear operators that are efficiently maintainable between IPM steps.

In this section, we define a general class of operators called \emph{tree operators} and show how to efficiently compute and maintain them.
In later sections, we show that  $\mm^{(\textrm{slack})}$ and $\mm^{(\textrm{flow})}$ can be defined as tree operators.

We begin with the formal definitions.
Recall for a tree $\ct$ and node $H \in \ct$, we use $\ct_H$ to denote the subtree rooted at $H$.

\begin{definition}[Tree operator]
	\label{def:forest-operator} 
	Suppose $\ct$ is a rooted tree with constant degree.
	Let each node $H \in \ct$ be associated with two sets $V(H)$ and $F_{H} \subseteq V(H)$.
	Let each leaf node $H \in \ct$ be further associated with a non-empty set $E(H)$ of constant size, where \emph{the $E(H)$'s are pairwise disjoint over all leaf nodes}. For a non-leaf node $H$, define $E(H) \defeq \bigcup_{\text{leaf } D \in \ct_H} E(D)$.
	Finally, define $E \defeq E(G) \bigcup_{\text{leaf } H \in \ct} E(H)$ and $V \defeq V(G) = \bigcup_{H \in \ct} V(H) $, where $G$ is the root node of $\ct$.
	
	Let each node $H$ with parent $P$ be associated with a linear \emph{edge operator} $\mm_{(H, P)} : \R^{V(P)} \mapsto \R^{V(H)}$.
	In addition, let each leaf node $H$ be associated with a \emph{constant-time computable} linear \emph{leaf operator} $\mj_H : \R^{V(H)} \mapsto \R^{E(H)}$. 
	We extend all these operators trivially to $\R^V$ and $\R^E$ respectively, in order to have matching dimensions overall.
	When a edge or leaf operator is not given, we assume it to be $\mzero$.
	
	For a path $H_t \rightarrow H_1 \defeq (H_t, \dots, H_1)$, where each $H_i$ is the parent of $H_{i-1}$ and $H_1$ is a leaf node (call these \emph{tree paths}), we define 
	\[
	\mm_{H_1 \leftarrow H_t} = \mm_{(H_{1}, H_{2})} \mm_{(H_{2},H_{3})} \cdots \mm_{(H_{t-1},H_{t})}.
	\]
	If $t = 1$, then $\mm_{H_1 \leftarrow H_t} \defeq \mi$.
	
	We define \emph{the tree operator $\mm : \R^V \mapsto \R^E$ supported on $\ct$} to be
	\begin{equation} \label{defn:mm}
		\mm \defeq \sum_{\text{leaf $H$, node $A$} \;:\; H \in \ct_A} \mj_{H}\mm_{H\leftarrow A}\mi_{F_{A}}.
	\end{equation}
	
	We always maintain a tree operator implicitly by maintaining 
	\[
	\{ \mj_H : \text{leaf } H\} \cup \{ \mm_{(H,P)} : \text{edge } (H,P) \} \cup \{F_H : \text{node } H\}.
	\]
\end{definition}

\begin{remark}
Although we define the tree operator in general and hope it will find applications in other problems, 
we have used suggestive names in the definition to suit our min-cost flow setting.
In particular, our tree operators will be supported on the separator tree $\ct$. For each node $H$, the sets $V(H), F_H, E(H)$ associated with the tree operator are, respectively, $\partial H \cup F_H$ of region $H$, the eliminated vertices $F_H$ of region $H$, and the edge set of region $H$, all from the separator tree construction.
\end{remark}

To maintain $\mm$ using the tree efficiently, we also need some partial operators:
\begin{definition}[$\mm^{(H)}, \overline{\mm^{(H)}}$]
\label{defn:subtree-operator}
	For notational convenience, define $\ct_H$ to be the subtree of $\ct$ rooted at $H$.
	
	We define the subtree operator $\mm^{(H)} : V(H) \mapsto E(H)$ at each node $H$ to be
	\begin{equation}
		\mm^{(H)} \defeq  \sum_{\text{leaf } D \in \ct_H} \mj_{D} \mm_{D\leftarrow H}.
	\end{equation}
	We also define the partial sum
	\begin{equation}
		\overline{\mm^{(H)}} \defeq \sum_{D \in \ct_H} \mm^{(D)} \mi_{F_D}.
	\end{equation}
\end{definition}

We state a straightforward corollary based on the definitions without proof.
\begin{corollary}
	For any node $H \in \ct$,
	\[
	\mm = \sum_{H \in \ct} \mm^{(H)} \mi_{F_H} = \overline{\mm^{(G)}},
	\]
	where $G$ is the root node of $\ct$.
	
	Furthermore, if $H$ has with children $D_1, D_2$, then
	\begin{equation}
	\mm^{(H)} = \mm^{(D_1)} \mm_{(D_1, H)} + \mm^{(D_2)} \mm_{(D_2, H)}.
	\end{equation}
\end{corollary}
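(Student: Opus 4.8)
The plan is to prove both assertions by directly unwinding \cref{defn:subtree-operator} and the definition of the tree operator in \cref{def:forest-operator}, then interchanging the order of summation; no new ideas are needed. First I would establish the global identity $\mm = \sum_{H \in \ct} \mm^{(H)} \mi_{F_H}$, and then the recursion for $\mm^{(H)}$ in terms of its children's subtree operators.

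For the global identity, I would start from the right-hand side and substitute the definition of $\mm^{(A)}$ for each node $A \in \ct$:
\[
\sum_{A \in \ct} \mm^{(A)} \mi_{F_A} \;=\; \sum_{A \in \ct}\ \sum_{\text{leaf } D \in \ct_A} \mj_D\, \mm_{D \leftarrow A}\, \mi_{F_A}.
\]
The pairs $(D,A)$ appearing in this double sum are exactly the pairs consisting of a leaf $D$ and a node $A$ with $D \in \ct_A$ (equivalently, $A$ lies on the tree path from $D$ to the root). Re-indexing the double sum over all such pairs turns the right-hand side into $\sum_{\text{leaf } D,\ \text{node } A\,:\, D \in \ct_A} \mj_D \mm_{D\leftarrow A} \mi_{F_A}$, which is precisely the definition of $\mm$ in \cref{defn:mm}. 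Since $\ct_G = \ct$ for the root $G$, the same sum is by definition $\overline{\mm^{(G)}}$, which gives the second equality at once.

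For the recursion, let $H$ be a non-leaf node with children $D_1, D_2$. The leaves of $\ct_H$ are the disjoint union of the leaves of $\ct_{D_1}$ and the leaves of $\ct_{D_2}$. For any leaf $D \in \ct_{D_i}$, the tree path from $D$ up to $H$ factors through $D_i$, so from the definition of $\mm_{H_1 \leftarrow H_t}$ as the ordered product of edge operators along a path we get $\mm_{D \leftarrow H} = \mm_{D \leftarrow D_i}\, \mm_{(D_i, H)}$ (using the convention $\mm_{D \leftarrow D_i} = \mi$ when $D = D_i$ is itself a leaf). Substituting into $\mm^{(H)} = \sum_{\text{leaf } D \in \ct_H} \mj_D \mm_{D \leftarrow H}$ and splitting the sum according to which subtree contains $D$ yields
\[
\mm^{(H)} \;=\; \Bigl(\sum_{\text{leaf } D \in \ct_{D_1}} \mj_D \mm_{D \leftarrow D_1}\Bigr)\mm_{(D_1,H)} \;+\; \Bigl(\sum_{\text{leaf } D \in \ct_{D_2}} \mj_D \mm_{D \leftarrow D_2}\Bigr)\mm_{(D_2,H)} \;=\; \mm^{(D_1)}\mm_{(D_1,H)} + \mm^{(D_2)}\mm_{(D_2,H)}.
\]

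The argument is entirely routine, so I do not expect a genuine obstacle. The only points requiring a moment's care are the path-factorization identity $\mm_{D\leftarrow H} = \mm_{D \leftarrow D_i}\mm_{(D_i,H)}$ and the boundary cases where a child (or $H$ itself) is a leaf; these are all handled by the conventions $\mm_{H_1 \leftarrow H_1} = \mi$ and ``a missing edge or leaf operator equals $\mzero$'' fixed in \cref{def:forest-operator}.
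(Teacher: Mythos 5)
Your proof is correct, and since the paper explicitly states this corollary without proof ("a straightforward corollary based on the definitions"), your definition-unwinding and reindexing of the double sum over pairs $(D,A)$ with $D$ a leaf of $\ct_A$, together with the path-factorization $\mm_{D\leftarrow H}=\mm_{D\leftarrow D_i}\mm_{(D_i,H)}$, is exactly the intended argument. No gaps.
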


We define the complexity of a tree operator to be parameterized by the number of tree edges.

\begin{definition}[Complexity of tree operator]
	\label{def:forestcost}
	Let $\mm$ be a tree operator on tree $\ct$.
	We say $\mm$ has complexity function $T$, 
	if for any $k > 0$,
	for any set $S$ of $k$ distinct edges in $\ct$ and any families of vectors $\{\vu_e : e \in S\}$ and $\{\vv_e : e \in S\}$, 
	the total cost of computing 
	$\{\vu_e^{\top}\mm_{e} : e \in S\}$ and $\{\mm_{e}\vv_e : e \in S\}$ is bounded by $T(k)$. 
	
	Without loss of generality, we may assume $T(0) = 0$, $T(k) \geq k$, and $T$ is concave.
\end{definition}

We can show the structure of a tree operator by the procedure \textsc{ComputeMz}($\mm, \vz$) to compute $\mm \vz$. Intuitively, $\vz$ is given as input to each node $H$. The edge operators are concatenated in the order of tree paths from $H$ to a leaf, but we apply them level-wise in descending order.

\begin{algorithm}
	\caption{Compute $\mm\vz$ for a tree operator $\mm$}\label{algo:computeMz}
	\begin{algorithmic}[1]
		
		\Procedure{\textsc{ComputeMz}}{$\mm, \vz$}
		\State $\collN \leftarrow$  set of all nodes $H$ in $\mathcal{T}$ such that $\mm_{(H, P)}$ or $\mj_H$ is nonzero
		\State $\pathT{\collN} \leftarrow$ set of $\mathcal{H}$ and all ancestor nodes of $\collN$ in  $\mathcal{T}$
		\State $\vv_H \leftarrow \vzero$ for each $H \in \ct$ \Comment{sparse vectors for intermediate computations}
		\For{each node $H \in \pathT{\collN}$}
		\State $\vv_H \leftarrow \mi_{F_H} \vz = \vz|_{F_H}$  \Comment{apply the $\mi_{F_H}$ part of the operator}
		\EndFor
		\For{each node $H \in \pathT{\collN}$ by decreasing level}
		\State Let $P$ be the parent of $H$
		\State $\vv_H \leftarrow \vv_H + \mm_{(H, P)} \vv_P$ \Comment{apply $\mm_{(H,P)}$ as we move from $P$ to $H$}
		\EndFor
		\For{each leaf node $H \in \pathT{\collN}$}
		\State $\vx|_{E(H)} \leftarrow \mj_H \vv_H$	\Comment{apply the leaf operator}
		\EndFor
		\State \Return $\vx$
		\EndProcedure
	\end{algorithmic}
\end{algorithm}

\begin{corollary}
	\label{cor:exacttime}
	Suppose $\mm : \R^V \rightarrow \R^E$ is a tree operator on tree $\ct$ with complexity $T$, where $|V| = n$ and $|E| = m$. Then for $\vz \in \R^V$,
	$\textsc{Exact}(\mm, \vz)$ outputs $\mm \vz$ in $O(T(K))=O(T(m))$ time where $K$ is the total number of non-zero edge and leaf operators in $\mm$.
\end{corollary}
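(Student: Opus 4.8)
The plan is to verify that \textsc{ComputeMz} (which realizes \textsc{Exact}$(\mm, \vz)$, \cref{algo:computeMz}) outputs $\mm\vz$ by an induction running from the root down $\ct$, and then to bound its runtime by charging to the nonzero edge and leaf operators. First I would observe that the defining sum \cref{defn:mm} can be restricted to $\pathT{\collN}$ with no loss: any term $\mj_H\mm_{H\leftarrow A}\mi_{F_A}$ with leaf $H\notin\pathT{\collN}$ has $\mj_H=\mzero$, and there is no term with leaf $H\in\pathT{\collN}$ but ancestor $A\notin\pathT{\collN}$, since $\pathT{\collN}$ is closed under taking ancestors. Hence it suffices to track only the nodes of $\pathT{\collN}$, which is precisely what the algorithm does.

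The core of the argument is the invariant that, after the two for-loops, every node $H\in\pathT{\collN}$ holds $\vv_H=\sum_{A\in\pathT{H}}\mm_{H\leftarrow A}\,\vz|_{F_A}$, where $\pathT{H}$ is the set consisting of $H$ and its ancestors and $\mm_{H\leftarrow H}\defeq\mi$. I would prove this by induction on the depth of $H$, using that a parent always has strictly larger level than its child (so the ``decreasing level'' processing order handles a node only after its parent): the base case is the root, where $\vv_G=\vz|_{F_G}$ and no further update occurs; for a non-root $H$ with parent $P\in\pathT{\collN}$, the update $\vv_H\leftarrow\vz|_{F_H}+\mm_{(H,P)}\vv_P$, combined with $\mm_{(H,P)}\mm_{P\leftarrow A}=\mm_{H\leftarrow A}$ and $\pathT{H}=\{H\}\cup\pathT{P}$, closes the induction. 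Applying $\mj_H$ at each leaf $H\in\pathT{\collN}$, and using that the sets $E(H)$ over leaves partition $E$ while $\vx$ is zero on the remaining coordinates (where $\mj_H=\mzero$ anyway), yields exactly the restricted form of $\mm\vz$ identified above.

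For the runtime I would split the work into (i) the bookkeeping — initializing each $\vv_H\leftarrow\vz|_{F_H}$ and performing the vector additions — and (ii) the genuine operator applications $\mm_{(H,P)}\vv_P$ and $\mj_H\vv_H$. When $\mm_{(H,P)}=\mzero$ the corresponding addition is a no-op, so the true edge-operator applications run over at most $K$ edges and cost $O(T(K))$ by \cref{def:forestcost} (taking the relevant $\vv_P$'s as the input vectors); each leaf operator is constant-time computable and at most $K$ of them are nonzero. The bookkeeping is bounded by $\sum_{H\in\ct}|F_H|=n=O(m)$ using \cref{lem:F_partitions} and connectivity of the input graph, which is $O(T(m))$ since $T(k)\geq k$. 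Finally, since $\ct$ has $O(m)$ nodes and edges we have $K=O(m)$, and concavity of $T$ together with $T(0)=0$ gives $T(K)=O(T(m))$, so the total time is $O(T(K))=O(T(m))$.

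I expect the main obstacle to be the correctness bookkeeping rather than the runtime: one must match the level-wise, bottom-up evaluation in \cref{algo:computeMz} to the definition of $\mm$ as a sum over (leaf, ancestor) pairs weighted by concatenated edge operators, and carefully justify that restricting every loop to $\pathT{\collN}$ is exact. Once the invariant is established, both the correctness conclusion and the $O(T(m))$ time bound follow by routine manipulation together with the complexity-function definition and the $F_H$-partition property.
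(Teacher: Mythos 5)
Your proof is correct and takes essentially the same approach as the paper's: the paper explicitly omits the correctness argument ("simply an application of the definition") and gives only the runtime analysis, which matches yours (charge each nonzero edge operator application to $T(K)$, each leaf operator to $O(1)$, and note $K=O(m)$ since $\ct$ has $O(m)$ leaves). Your top-down invariant $\vv_H=\sum_{A\in\pathT{H}}\mm_{H\leftarrow A}\vz|_{F_A}$ correctly supplies the omitted correctness step, including the justification for restricting all loops to $\pathT{\collN}$.
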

\begin{proof}
	Note only non-zero edge and leaf operators contribute to $\mm\vz$. We omit the proof of correctness as it is simply an application of the definition.
	
	Since $E = \cup_{\text{leaf $D$}} E(D)$, and each $E(D)$ has constant size, we know there are at most $O(m)$ leaves in $\ct$. 
	Hence, there are $O(m)$ edges in $\ct$, and $K=O(m)$.
	Since we define each leaf operator to be constant time computable, applying $\mj_H$ for leaves in $\pathT{\collN}$ costs $O(K)$ time in total.
	The bottleneck of the procedure is to apply the edge operator $\mm_e$ to some vector exactly once for each edge $e$ in $\ct$; the time cost is $O(T(K))$ by definition of the operator complexity. 
\end{proof}

\subsection{Proof of \crtcref{thm:maintain_representation}} \label{subsec:maintain_rep_impl}

Finally, we give the data structure for maintaining an implicit representation of the form $\vy + \mm \vz$ throughout the IPM.
For an instantiation of this data structure, there is exactly one call to \textsc{Initialize} at the very beginning, and one call to \text{Exact} at the very end.
Otherwise, each step of the IPM consists of one call to \textsc{Reweight} followed by one call to \textsc{Move}.
Note that this data structure \emph{extends} \textsc{MaintainZ} in the object-oriented programming sense.

\MaintainRepresentation*
\begin{algorithm}
	\caption{Implicit representation maintenance}\label{alg:maintain_representation}
	\begin{algorithmic}[1]
		\State \textbf{data structure} \textsc{MaintainRep} \textbf{extends} \textsc{MaintainZ}
		\State \textbf{private: member}
		\State \hspace{4mm} $\ct$: separator tree
		\State \hspace{4mm} $\vy \in \R^{m}$: offset vector
		\State \hspace{4mm} $\mm$: \emph{instructions to compute the tree operator} $\mm \in \R^{m \times n}$
		\State \LeftComment \; $\vz = c \zprev + \zsum$ maintained by \textsc{MaintainZ}, accessable in this data structure
		\State \LeftComment \; \texttt{DynamicSC}: an accessable instance of \textsc{DynamicSC} maintained by \textsc{MaintainZ}
		\State 
		\Procedure{Initialize}{$G,\ct, \mm, \vv\in \R^{m}, \vw\in\R_{>0}^{m}, \vx^\init \in \R^{m}, \epssc>0$}
		\State $\mm \leftarrow \mm$ \Comment{initialize the instructions to compute $\mm$}
		\State $\texttt{Super}.\textsc{Initialize}(G, \ct, \vv\in \R^{m}, \vw\in\R_{>0}^{m},\epssc>0)$ \Comment initialize $\vz$
		\State $\vy\leftarrow \vx^\init$ \label{line:init_y}
		\EndProcedure
		\State
		\Procedure{Reweight}{$\vw^\new$}
		\State Let $\mm^\old$ represent the current tree operator $\mm$
		\State $\texttt{Super}.\textsc{Reweight}(\vw^\new)$ \Comment{update representation of $\vz$ and \texttt{DynamicSC}}\label{line:super_reweight}
		\State \Comment $\mm$ is updated as a result of reweight in \texttt{DynamicSC}
		\State $\Delta \mm \leftarrow \mm - \mm^\old$ \Comment{$\Delta \mm$ is represented implicitly}
		\State $\vy\leftarrow \vy - \textsc{ComputeMz}(\Delta \mm, c\zprev+\zsum)$ \Comment{\cref{algo:computeMz}} \label{line:update_y}
		\EndProcedure
		\State
		\Procedure{Move}{$\alpha, \vv^\new$}
		\State $\texttt{Super}.\textsc{Move}(\alpha, \vv^\new)$ 
		\EndProcedure
		\State
		\Procedure{Exact}{$ $}
		\State \Return $\vy + \textsc{ComputeMz}(\mm, c\zprev+\zsum)$ \Comment{\cref{algo:computeMz}}
		\EndProcedure
	\end{algorithmic}
\end{algorithm}
\begin{proof}
	First, we discuss how $\mm$ is stored in the data structure:
	Recall $\mm$ is represented implicitly by a collection of edge operators and leaf operators on the separator tree $\ct$,
	so that each edge operator is stored at a corresponding node of $\ct$, and each leaf operator is stored at a corresponding leaf node of $\ct$.
	However, the data structure \emph{does not} store any edge or leaf operator matrix explicitly.
	We make a key assumption that each edge and leaf operator is computable using $O(1)$-number of $\ml^{(H)}$ matrices from \textsc{DynamicSC}. This will be true for the slack and flow operators we define.
	As a result, to store an edge or leaf operator at a node, we simply store \emph{pointers to the matrices from \textsc{DynamicSC}} required in the definition, and an $O(1)$-sized instruction for how to compute the operator.
	The computation time is proportional to the size of the matrices in the definitions, but crucially the instructions have only $O(1)$-size. 
	
	Now, we prove the correctness and runtime of each procedure separately.
	Observe that the invariants claimed in the theorem are maintained correctly if each procedure is implemented correctly.
	
	\paragraph{\textsc{Initialize}:}
	\cref{line:init_y} sets $\vy\leftarrow \vx^\init$, and $\texttt{Super}.\textsc{Initialize}$ sets $\vz\leftarrow \vzero$.
	So we have $\vx=\vy+\mm\vz$ at the end of initialization. Furthermore, the initialization of $\vz$ correctly sets $\zprev$ in terms of $\vv$.
	
	By \cref{thm:maintain_z}, \texttt{Super}.\textsc{Initialize} takes $\widetilde{O}(\epssc^{-2}m)$ time.
	Storing the implicit representation of $\mm$ takes $O(m)$ time.  
	
	\paragraph{\textsc{Reweight}:}
	By \cref{thm:maintain_z}, $\texttt{Super}.\textsc{Reweight}$ updates its current weight and \texttt{DynamicSC}, and updates $\zprev$ correspondingly to maintain the invariant, while not changing the value of $\vz$.
	Because $\mm$ is stored by instructions, no explicit update to $\mm$ is required. \cref{line:update_y} updates $\vy$ to zero out the changes to $\mm\vz$. 
	
	The instructions for computing $\Delta \mm$ require the Laplacians from \texttt{DynamicSC} before and after the update in \cref{line:super_reweight}. 
	For this, we monitor the updates of \texttt{dynamicSC} and stores the old and new values. 
	The runtime of this is bounded by the runtime of updating \texttt{dynamicSC}, which is in turn included in the runtime for \texttt{Super}.\textsc{Reweight}.
	
	Let $K$ upper bound the number of coordinates changed in $\vw$ and the number of edge and leaf operators changed in $\mm$. Then $\texttt{Super}.\textsc{Reweight}$ takes $\widetilde{O}(\epssc^{-2}\sqrt{mK})$ time, 
	and $\textsc{Exact}(\Delta \mm, \vz)$ takes $O(T(K))$ time.
	Thurs, the total runtime is $\widetilde{O}(\epssc^{-2} \sqrt{mK} +T(m))$.
	
	\paragraph{\textsc{Move}:}
	The runtime and correctness follow from \cref{thm:maintain_z}.
	
	\paragraph{\textsc{Exact}:}
	\textsc{ComputeMz} computes $\mm\vz$ correctly in $O(T(m))$ time by \cref{cor:exacttime}. Adding the result to $\vy$ takes $O(m)$ time and gives the correct value of $\vx=\vy+\mm\vz$. Thus, \textsc{Exact} returns $\vx$ in $O(T(m))$ time.
\end{proof}

\section{Maintaining vector approximation}
\label{sec:sketch}

Recall at every step of the IPM, we want to maintain approximate vectors $\os, \of$ so that
\begin{align*}
	\norm{ \mw^{-1/2} (\of - \vf)}_\infty \leq \delta \quad \text{and} \quad 
	\norm{ \mw^{1/2} (\os- \vs)}_\infty \leq \delta'
\end{align*}
for some additive error tolerances $\delta$ and $\delta'$.

In the previous section, we showed how to maintain some vector $\vx$ implicitly as $\vx \defeq \vy + \mm \vz$ throughout the IPM, where $\vx$ should represent $\vs$ or part of $\vf$.
In this section, we give a data structure to efficiently maintain an approximate vector $\ox$ to the $\vx$ from \textsc{MaintainRep}, so that at every IPM step,
\[
	\norm{\md^{1/2} \left(\ox - \vx\right)}_\infty \leq \delta,
\]
where $\md$ is a dynamic diagonal scaling matrix. (It will be $\mw^{-1}$ for the flow or $\mw$ for the slack.) 

In \cref{subsec:sketch_vector_to_change}, we reduce the problem of maintaining $\ox$ to detecting coordinates in $\vx$ with large changes. 
In \cref{subsec:sketch_change_to_sketch}, we detect coordinates of $\vx$ with large changes using a sampling technique on a binary tree, where Johnson-Lindenstrauss sketches of subvectors of $\vx$ are maintained at each node the tree. 
In \cref{subsec:sketch_maintenance}, we show how to compute and maintain the necessary collection of JL-sketches on the separator tree $\ct$; in particular, we do this efficiently with only an implicit representation of $\vx$.
Finally, we put the three parts together to prove \cref{thm:VectorTreeMaintain}.

We use the superscript $^{(k)}$ to denote the variable at the end of the $k$-th step of the IPM; that is, $\md^{(k)}$ and $\vx^{(k)}$ are $\md$ and $\vx$ at the end of the $k$-th step. Step 0 is the state of the data structure immediately after initialization.

\subsection{Reduction to change detection} 
\label{subsec:sketch_vector_to_change}

In this subsection, we show that in order to maintain an approximation $\ox$ to some vector $\vx$, it suffices to detect coordinates of $\vx$ that change a lot. 

Here, we make use of dyadic intervals, and at step $k$ of the IPM, for each $\ell$ such that $k = 0 \bmod 2^\ell$, we find the set $I_{\ell}^{(k)}$ that contains all coordinates $i$ of $\vx$ such that $\vx_i^{(k)}$ changed significantly compared to $\vx_i^{(k-2^\ell)}$, that is, compared to $2^\ell$ steps ago. Formally:

\begin{definition} \label{defn:sketch-I_ell^k}
	At step $k$ of the IPM, for each $\ell$ such that $k  = 0 \bmod 2^\ell$, we define 
	\begin{align*}
		I_{\ell}^{(k)} \defeq &\; \{i\in[n]: \sqrt{\md_{ii}^{(k)}} \cdot|\vx_{i}^{(k)}-\vx_{i}^{(k-2^{\ell})}|\geq\frac{\delta}{2\left\lceil \log m\right\rceil } \\
		&\;\text{and $\ox_i$ has not been updated 
			since the $(k-2^{\ell})$-th step}\}.
	\end{align*}
	We say that $\ox_i$ has not been updated since the $(k-2^{\ell})$-th step if $\ox_{i}^{(j)}=\ox_{i}$ and $\md_{ii}^{(j)}=\md_{ii}^{(k-2^\ell)}$ for $j\geq k-2^{\ell}$, i.e. $\ox_i$ was not updated by \cref{line:D-induced-ox-change} or \cref{line:update-ox} in the $(k-2^{\ell}+1),\ldots, (i-1)$-th steps.
\end{definition}

We show how to find the sets $I_\ell^{(k)}$ with high probability in the next subsection.
Assuming the correct implementation, we have the following data structure for maintaining the desired approximation $\ox$:

\begin{algorithm}
	\caption{Data structure \textsc{AbstractMaintainApprox}, Part 1}
	\label{algo:maintain-vector}
	
	\begin{algorithmic}[1]
		\State \textbf{data structure} \textsc{AbstractMaintainApprox} 
		\State \textbf{private : member}
		\State \hspace{4mm} $\ct$: constant-degree rooted tree with height $\eta$ and $m$ leaves 
		\Comment leaf $i$ corresponds to $\vx_i$
		\State \hspace{4mm} $\sketchlen \defeq \Theta(\eta^{2}\log(\frac{m}{\rho}))$: sketch dimension
		\State \hspace{4mm} $\mphi \sim \mathbf{N}(0,\frac{1}{w})^{w \times m}$: JL-sketch matrix
		\State \hspace{4mm} $\delta>0$: additive approximation error
		\State \hspace{4mm} $k$: current IPM step
		\State \hspace{4mm} $\ox \in \R^{m}$: current valid approximate vector
		\State \hspace{4mm} $\{\vx^{(j)} \in \R^{m}\}_{j=0}^{k}$: list of previous inputs 
		\State \hspace{4mm} $\{\md^{(j)}\in\R^{m\times m}\}_{j=0}^{k}$: list of previous diagonal scaling matrices 
		
		\State
		\Procedure{\textsc{Initialize}}{$\ct, \vx\in\R^{m}, \md\in\R_{>0}^{m\times m}, \rho>0, \delta>0$}
		\State $\ct \leftarrow \ct$, $\delta\leftarrow\delta$, $k \leftarrow 0$
		\State $\ox\leftarrow\vx, \vx^{(0)} \leftarrow \vx, \md^{(0)} \leftarrow \md$
		\State sample $\mphi \sim \mathbf{N}(0,\frac{1}{w})^{w \times m}$
		\EndProcedure
		
		\State
		\Procedure{\textsc{Approximate}}{$\vx^\new \in\R^{m}, \md^\new \in\R_{>0}^{m\times m}$}
		\State $k\leftarrow k+1$, $\vx^{(k)}\leftarrow\vx^\new$, $\md^{(k)}\leftarrow\md^\new$
		\State $\ox_{i}\leftarrow\vx_{i}^{(k-1)}$ for all $i$ such that
		$\md_{ii}^{(k)}\neq\md_{ii}^{(k-1)}$ 
		\label{line:D-induced-ox-change}
		
		\State $I \leftarrow \emptyset$
		
		\ForAll {$0 \leq \ell < \left\lceil \log m\right\rceil $ such that $k \equiv 0\bmod2^{\ell}$}
		
		\State $I_{\ell}^{(k)} \leftarrow \textsc{FindLargeCoordinates}(\ell)$
		\label{line:set-I_ell^k}
		\State $I \leftarrow I\cup I_{\ell}^{(k)}$
		\EndFor
		
		\If{$k = 0 \bmod 2^{\left\lceil \log m\right\rceil}$} 
		\State $I\leftarrow[m]$ \label{line:set-ox-to-x}
		\Comment Update $\ox$ in full every $2^{\left\lceil \log m\right\rceil}$ steps
		\EndIf
		\State $\ox_{i}\leftarrow\vx_{i}^{(k)}$ for all $i\in I$ \label{line:update-ox}
		\State \Return $\ox$
		
		\EndProcedure
		\algstore{approx-vector-break-point}
	\end{algorithmic}
\end{algorithm}
\addtocounter{algorithm}{-1}

\begin{lemma}[Approximate Vector Maintenance] \label{lem:MaintainApprox}
	Suppose \textsc{FindLargeCoordinates}$(\ell)$ is a procedure in \textsc{AbstractMaintainApprox} that correctly computes the set $I_\ell^{(k)}$ at the $k$-th step.
	Then the deterministic data structure \textsc{AbstractMaintainApprox} in \cref{algo:maintain-vector} maintains an approximation $\ox$ of $\vx$ with the following procedures:
	\begin{itemize}
		\item \textsc{Initialize}$(\ct, \vx \in\R^{m}$, $\md \in \R_{> 0}^{m \times m}$, $\rho > 0$, $\delta>0)$: 
		Initialize the data structure at step 0 with tree ${\ct}$, initial vector $\vx$, initial diagonal scaling matrix $\md$, target additive approximation error $\delta$, and success probability $1 - \rho$. 
		\item \textsc{Approximate}$(\vx^\new \in\R^{m}$, $\md^\new \in\R_{> 0}^{m\times m}$): 
		Increment the step counter and update vector $\vx$ and diagonal scaling matrix $\md$. 
		Output a vector $\ox$ such that $\|\md^{1/2} (\vx-\ox)\|_{\infty}\leq\delta$ for the latest $\vx$ and $\md$.
	\end{itemize}
	Furthermore, if $\|\vx^{(k)}-\vx^{(k-1)}\|_{\md^{(k)}}\leq\beta$
	\emph{for all $k$}, then at the $k$-th step, the data structure first updates $\ox_i\leftarrow \vx_i^{(k-1)}$ for the coordinates $i$ with $\md_{ii}^{(k)}\neq\md_{ii}^{(k-1)}$, then updates $\ox_i\leftarrow \vx_i^{(k)}$ for $O(2^{2\ell_{k}}(\beta/\delta)^{2}\log^{2}m)$ coordinates, where $\ell_{k}$ is the largest integer $\ell$ with $k \equiv 0\bmod2^{\ell}$.
\end{lemma}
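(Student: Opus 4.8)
The plan is to establish three things in sequence: (i) the approximation guarantee $\|\md^{1/2}(\vx-\ox)\|_\infty \le \delta$ holds at the end of every call to \textsc{Approximate}; (ii) the bookkeeping in \cref{line:D-induced-ox-change} correctly handles coordinates where the scaling $\md_{ii}$ changed; and (iii) under the stability hypothesis $\|\vx^{(k)}-\vx^{(k-1)}\|_{\md^{(k)}} \le \beta$, the number of coordinates written in \cref{line:update-ox} is $O(2^{2\ell_k}(\beta/\delta)^2\log^2 m)$. Throughout I would take for granted that \textsc{FindLargeCoordinates}$(\ell)$ returns exactly $I_\ell^{(k)}$ as defined in \cref{defn:sketch-I_ell^k}, since that is the stated hypothesis.

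First I would prove the invariant that, after the $k$-th call, every coordinate $i$ satisfies $\sqrt{\md_{ii}^{(k)}}\,|\vx_i^{(k)}-\ox_i| \le \delta$. The argument is the standard dyadic-interval telescoping: fix a coordinate $i$ and let $k'$ be the last step at or before $k$ on which $\ox_i$ was set to $\vx_i^{(k')}$ (such a step exists because \cref{line:set-ox-to-x} forces a full update every $2^{\lceil\log m\rceil}$ steps, and because \cref{line:D-induced-ox-change} resets $\ox_i\leftarrow \vx_i^{(k-1)}$ whenever $\md_{ii}$ changes, so $\md_{ii}$ has been constant on $(k',k]$). Write the interval $(k',k]$ as a disjoint union of at most $\lceil\log m\rceil$ dyadic blocks of the form $(k_{j-1},k_j]$ with $k_j-k_{j-1}=2^{\ell_j}$ and $2^{\ell_j}\mid k_j$; this is the usual binary-carry decomposition. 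For each block, since $\ox_i$ was not updated within it (by maximality of $k'$), coordinate $i$ would have been placed in $I_{\ell_j}^{(k_j)}$ and hence $\ox_i$ updated — unless $\sqrt{\md_{ii}^{(k_j)}}\,|\vx_i^{(k_j)}-\vx_i^{(k_{j-1})}| < \frac{\delta}{2\lceil\log m\rceil}$. Using that $\md_{ii}$ is constant on $(k',k]$ and the triangle inequality over the $\le \lceil\log m\rceil$ blocks, the total deviation is at most $\lceil\log m\rceil \cdot \frac{\delta}{2\lceil\log m\rceil} = \delta/2 \le \delta$. I would also handle the edge case where \cref{line:D-induced-ox-change} just set $\ox_i\leftarrow\vx_i^{(k-1)}$ at the current step, giving deviation $\sqrt{\md_{ii}^{(k)}}|\vx_i^{(k)}-\vx_i^{(k-1)}| \le \beta \le \delta$ (one may assume $\beta\le\delta$, or alternatively note $i\in I_0^{(k)}$ in that case if the change is large). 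This yields the \textsc{Approximate} correctness claim.

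Next, for the update-count bound, fix the current step $k$ and let $\ell_k$ be the largest $\ell$ with $2^\ell\mid k$; the coordinates written in \cref{line:update-ox} are $I = \bigcup_{\ell \le \ell_k} I_\ell^{(k)}$ (plus possibly all of $[m]$ when $k\equiv 0 \bmod 2^{\lceil\log m\rceil}$, but then $\ell_k = \lceil\log m\rceil$ and $2^{2\ell_k} = m^2 \ge m$, so the bound is trivially satisfied). For each fixed $\ell\le\ell_k$, I bound $|I_\ell^{(k)}|$: by definition each $i\in I_\ell^{(k)}$ has $\sqrt{\md_{ii}^{(k)}}|\vx_i^{(k)}-\vx_i^{(k-2^\ell)}| \ge \frac{\delta}{2\lceil\log m\rceil}$, and $\md_{ii}$ is unchanged on $(k-2^\ell,k]$ (else $\ox_i$ was updated, excluding $i$), so this equals $\sqrt{\md_{ii}^{(j)}}$ for any $j$ in the block. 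Summing squared telescoped increments and applying the per-step hypothesis $\|\vx^{(j)}-\vx^{(j-1)}\|_{\md^{(j)}}\le\beta$ together with Cauchy–Schwarz over the $2^\ell$ steps gives $\sum_{i} \md_{ii}^{(k)}|\vx_i^{(k)}-\vx_i^{(k-2^\ell)}|^2 \le 2^\ell \sum_{j=k-2^\ell+1}^{k} \|\vx^{(j)}-\vx^{(j-1)}\|_{\md^{(j)}}^2 \le 2^{2\ell}\beta^2$ (using $\md_{ii}$ constancy on the block to align the norms). Hence $|I_\ell^{(k)}| \le 2^{2\ell}\beta^2 / (\delta/(2\lceil\log m\rceil))^2 = O(2^{2\ell}(\beta/\delta)^2\log^2 m)$. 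Summing the geometric series over $\ell = 0,\dots,\ell_k$ gives $|I| = O(2^{2\ell_k}(\beta/\delta)^2\log^2 m)$, as claimed.

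The main obstacle I anticipate is the careful handling of the interaction between the $\md$-change resets (\cref{line:D-induced-ox-change}) and the "$\ox_i$ has not been updated since step $k-2^\ell$" clause in the definition of $I_\ell^{(k)}$ — one must be sure that whenever $\md_{ii}$ changes, the reset $\ox_i \leftarrow \vx_i^{(k-1)}$ both restores a valid approximation for the old-to-new $\md$ transition and makes coordinate $i$ ineligible for the dyadic blocks straddling that step, so the telescoping argument in part (i) and the block-norm-alignment in part (iii) both remain valid. A secondary subtlety is verifying that the binary-carry decomposition of $(k',k]$ into dyadic blocks always uses blocks $(k_{j-1},k_j]$ with $2^{\ell_j}\mid k_j$ and $\ell_j$ strictly increasing (so each block's right endpoint triggers a \textsc{FindLargeCoordinates}$(\ell_j)$ call), which is a clean number-theoretic fact but needs to be stated precisely. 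Neither is deep, but both require attention to the exact semantics of the algorithm; the rest is routine.
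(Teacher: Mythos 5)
Your proposal is correct and follows essentially the same route as the paper's proof: the dyadic telescoping over blocks whose right endpoints trigger \textsc{FindLargeCoordinates}, plus the Cauchy--Schwarz counting of $|I_\ell^{(k)}|$ via the per-step bound $\|\vx^{(j)}-\vx^{(j-1)}\|_{\md^{(j)}}\le\beta$, with the $\md$-reset interaction handled just as you anticipate. The one slip is harmless: the binary-carry decomposition of $(k',k]$ generally uses up to $2\lceil\log m\rceil$ blocks whose lengths increase and then decrease (not $\lceil\log m\rceil$ blocks with strictly increasing lengths), which is precisely why the threshold carries the factor $\frac{1}{2\lceil\log m\rceil}$ and the telescoped error still sums to $\delta$.
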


\begin{remark}
	In our problem setting of maintaining approximate flows and slacks, we do not have full access to the exact vector. 
	The algorithms in the next two subsections however will refer to the exact vector $\vx$ for readability and modularity. 
	We observe that access to $\vx$ is limited to two types: 
	accessing the JL-sketches of specific subvectors, and accessing exact coordinates and other specific subvectors of sufficiently small size.
	In later sections, we show how to implement these oracle accesses to $\vx$.
\end{remark}

\begin{proof}[Proof of \cref{lem:MaintainApprox}]
	We first prove the correctness of \textsc{Approximate} in $\textsc{AbstractMaintainApprox}$. 
	Fix some coordinate $i\in[m]$ and fix some IPM step $k$. 
	Suppose the latest update to $\ox_i$ is $\ox_i\leftarrow \vx_i^{(k')}$. This may happen in \cref{line:update-ox} at step $k'$ or in \cref{line:D-induced-ox-change} at step $k'+1$. In both case, we have that $\md_{ii}^{(d)}$ is the same for all $k\ge d >k'$ and that $i$ is not in the set $I_{\ell}^{(d)}$ returned by \textsc{FindLargeCoordinates} for all $k\ge d>k'$. (In the former case, we further have $\md_{ii}^{(k'+1)}=\md_{ii}^{(k')}$ but this is not required in the proof.)
	Since we set $\ox\leftarrow\vx$ every $2^{\left\lceil \log m\right\rceil }$
	steps by \cref{line:set-ox-to-x}, 
	we have $k-2^{\left\lceil \log m\right\rceil }\leq k'<k$.
	Using dyadic intervals, we can write $k'=k_{0}<k_{1}<k_{2}<\cdots<k_{s}=k$ such that $k_{j+1}-k_{j}$
	is a power of $2$, $k_{j+1}-k_{j}$ divides $k_{j+1}$, and $|s|\leq2\left\lceil \log m\right\rceil $.
	Hence, we have that 
	\[
	\vx_{i}^{(k)}-\ox_{i}^{(k)}=\vx_{i}^{(k_{s})}-\ox_{i}^{(k_{0})}=\vx_{i}^{(k_{s})}-\vx_{i}^{(k_{0})}=\sum_{j=0}^{s-1}(\vx_{i}^{(k_{j+1})}-\vx_{i}^{(k_{j})}).
	\]
	We know that $\md_{ii}^{(d)}$ is the same for all $k\ge d >k'$. By the guarantees of \textsc{FindLargeCoordinates}, we have
	\[
	\sqrt{\md_{ii}^{(k)}} \cdot|\vx_{i}^{(k_{j+1})}-\vx_{i}^{(k_{j})}|= \sqrt{\md_{ii}^{(k_{j+1})}} \cdot|\vx_{i}^{(k_{j+1})}-\vx_{i}^{(k_{j})}|\leq\frac{\delta}{2\left\lceil \log m\right\rceil }
	\] for all $0\le j<s$. (
	Summing over all $j=0,1,\ldots,s-1$ gives 
	\[
	\sqrt{\md_{ii}^{(k)}} \cdot|\vx_{i}^{(k)}-\ox_{i}^{(k)}|\leq\delta.
	\]
	Hence, we have $\|\md^{1/2} (\vx-\ox)\|_{\infty}\leq\delta$.
	
	Next, we bound the number of coordinates changed from $\ox^{(k-1)}$ to $\ox^{(k)}$. 
	Fix some $\ell$ with $k = 0 \bmod 2^\ell$.
	For any $i\in I_{\ell}^{(k)}$, we know $\md_{ii}^{(j)}=\md_{ii}^{(k)}$ for all $j>k-2^{\ell}$ because $\ox_{i}$ did not change in the meanwhile. By definition of $I_\ell^{(k)}$, we have
	\[
	\sqrt{\md_{ii}^{(k)}} \cdot\sum_{j=k-2^{\ell}}^{k-1}|\vx_{i}^{(j+1)}-\vx_{i}^{(j)}|\geq \sqrt{\md_{ii}^{(k)}} \cdot|\vx_{i}^{(k)}-\vx_{i}^{(k-2^{\ell})}|\geq\frac{\delta}{2\left\lceil \log m\right\rceil }.
	\]
	
	Using $\md_{ii}^{(j)}=\md_{ii}^{(k)}$ for all $j>k-2^{\ell}$ again, the above inequality yields 
	\begin{align*}
		\frac{\delta}{2\left\lceil \log m\right\rceil } &\leq\sum_{j=k-2^{\ell}}^{k-1} \sqrt{\md_{ii}^{(j+1)}}|\vx_{i}^{(j+1)}-\vx_{i}^{(j)}| \\
		&\leq \sqrt{2^{\ell}\sum_{j=k-2^{\ell}}^{k-1}\md_{ii}^{(j+1)}|\vx_{i}^{(j+1)}-\vx_{i}^{(j)}|^{2}}. \tag{by Cauchy-Schwarz}
	\end{align*}
	Squaring and summing over all $i\in I_{\ell}^{(k)}$ gives 
	\begin{align*}
		\Omega \left(\frac{2^{-\ell}\delta^{2}}{\log^{2}m}\right)|I_{\ell}^{(k)}| &\leq \sum_{i\in I_{\ell}^{(k)}}\sum_{j=k-2^{\ell}}^{k-1}\md_{ii}^{(j+1)}|\vx_{i}^{(j+1)}-\vx_{i}^{(j)}|^{2} \\
		&\leq
		\sum_{i=1}^{m}\sum_{j=k-2^{\ell}}^{k-1}\md_{ii}^{(j+1)}|\vx_{i}^{(j+1)}-\vx_{i}^{(j)}|^{2} \\
		&\leq2^{\ell}\beta^{2},
	\end{align*}
	where we use $\|\vx^{(j+1)}-\vx^{(j)}\|_{\md^{(j+1)}}\leq\beta$
	at the end. Hence, we have 
	\[
	|I_{\ell}^{(k)}|=O(2^{2\ell}(\beta/\delta)^{2}\log^{2}m).
	\]
	Recall this expression is for a fixed $\ell$. 
	At the $k$-th step, summing over all $\ell$ with $k = 0 \bmod 2^\ell$, we have that the total number of coordinates changed, excluding those induced by a change in $\md$, is
	\[
	\sum_{\ell=0}^{\ell_{k}}|I_{\ell}^{(k)}|=O(2^{2\ell_{k}}(\beta/\delta)^{2}\log^{2}m).
	\]
	
\end{proof}

\subsection{From change detection to sketch maintenance}
\label{subsec:sketch_change_to_sketch}

Now we discuss the implementation of \textsc{FindLargeCoordinates}$(\ell)$ to find the set $I_{\ell}^{(k)}$ in \cref{line:set-I_ell^k} of \cref{algo:maintain-vector}.
We accomplish this by repeatedly sampling a coordinate $i$ with probability proportional to $\md_{ii}^{(k)}\cdot|\vx_{i}^{(k)}-\vx_{i}^{(k-2^{\ell})}|^{2}$,
among all coordinates $i$ where $\ox_i$ has not been updated since $2^\ell$ steps ago.
With high probability, we can find all $i \in I_\ell^{(k)}$ in this way efficiently.
To implement the sampling procedure, we make use of a data structure based on segment trees~\cite{CLRS} along with sketching based on the Johnson-Lindenstrauss lemma.

Formally, we define the vector $\vq \in \R^m$ where $\vq_i \defeq {\md^{(k)}_{ii}}^{1/2} (\vx^{(k)}_i-\vx^{(k-2^{\ell})}_i)$ if $\ox_i$ has not been updated after the $k-2^\ell$-th step, and $\vq_i = 0$ otherwise. 
Our goal is precisely to find all large coordinates of $\vq$.

Let $\ct$ be a constant-degree rooted tree with $m$ leaves, where leaf $i$ represents coordinate $\vq_i$.
For each node $u \in \ct$, we define $E(u) \subseteq [m]$ to be set of indices of leaves in the subtree rooted at $u$.
We make a random descent down $\ct$, in order to sample a coordinate $i$ with probability proportional to $\vq_i^2$.
At a node $u$, for each child $u'$ of $u$, the total probability of the leaves under $u'$ is given precisely by $\norm{\vq|_{E(u')}}_2^2$. We can estimate this by the Johnson-Lindenstrauss lemma using a sketching matrix $\mphi$. Then we randomly move from $u$ down to child $u'$ with probability proportional to the estimated value.
To tolerate the estimation error, when reaching some leaf node representing coordinate $i$, we accept with probability proportional to the ratio between the exact probability of $i$ and the estimated probability of $i$. If $i$ is rejected, we repeat the process from the root again independently. 

\begin{algorithm}
	\caption{Data structure \textsc{AbstractMaintainApprox}, Part 2}
	\label{algo:change-detection}
	\begin{algorithmic}[1]
		\renewcommand{\thealgorithm}{}
		\algrestore{approx-vector-break-point}
		
		\Procedure{FindLargeCoordinates}{$\ell$}
		\State \Comment $\overline{\md}$ and $\vq$ are symbolic definitions
		\State \LeftComment $\overline{\md}$: diagonal matrix such that 
		\[
		\overline{\md}_{ii}=\begin{cases}
			\md_{ii}^{(k)} & \text{\text{if }}\text{$\ox_i$ has not been updated after the  $(k-2^{\ell})$-th step}\\
			0 & \text{otherwise.}
		\end{cases}
		\]
		\State \LeftComment $\vq \defeq \overline{\md}^{1/2} (\vx^{(k)}-\vx^{(k-2^{\ell})})$ \Comment{vector to sample coordinates from}
		\State
		\State $I \leftarrow \emptyset$ \Comment{set of candidate coordinates}
		\For{$N \defeq \Theta(2^{2\ell}(\beta/\delta)^{2}\log^{2}m\log(m/\rho))$ iterations}\label{line:outer-loop}
		
		\State \LeftComment Sample coordinate $i$ of $\vq$ w.p. proportional to $\vq_{i}^{2}$ by random descent down $\ct$ to a leaf
		\While{\textbf{true}} \label{line:sample-loop}
		\State $u\leftarrow\textrm{root}(\ct)$, $p_{u}\leftarrow1$ \label{line:sample-start}
		\While{$u$ is not a leaf node} 
		\State Sample a child $u'$ of $u$ with probability 
		\[
		\mathbf{P} (u\rightarrow u')\defeq\frac{\|\mphi_{E(u')} \vq\|_{2}^{2}}{\sum_{\text{child $u''$ of $u$}} \|\mphi_{E(u'')}\vq\|_{2}^{2}}
		\] \label{line:sample-a-child}
		\Comment let $\mphi_{E(u)} \defeq \mphi\mi_{E(u)}$ for each node $u$
		\State $p_{u}\leftarrow p_{u}\cdot \mathbf{P} (u\rightarrow u')$
		\State $u\leftarrow u'$
		\EndWhile 
		\State \textbf{break} with probability $p_{\mathrm{accept}}\defeq \norm{\vq|_{E(u)}}^{2} /(2\cdot p_{u}\cdot\|\mphi\vq\|_{2}^{2})$ \label{line:sample-filter} 
		\EndWhile
		
		\State $I\leftarrow I\cup E(u)$ \label{line:sample-one-cor}
		\EndFor
		
		\State \Return $\{i\in I \;:\; \vq_{i} \geq\frac{\delta}{2\left\lceil \log m\right\rceil }\}$.
		\EndProcedure
	\end{algorithmic}
\end{algorithm}

\begin{lemma} 
	\label{lem:change-detection}
	Assume that $\|\vx^{(k+1)}-\vx^{(k)}\|_{\md^{(k+1)}}\leq\beta$ for all IPM steps $k$. 
	Let $\rho < 1$ be any given failure probability, and let $N \defeq \Theta(2^{2\ell}(\beta/\delta)^{2}\log^{2}m\log(m/\rho))$ be the number of samples \cref{algo:change-detection} takes.
	Then with probability $\geq1-\rho$, during the $k$-th call of \textsc{Approximate},  \cref{algo:change-detection} finds the set $I_{\ell}^{(k)}$ correctly. 
	Furthermore, the while-loop in \cref{line:sample-loop} happens only $O(1)$ times in expectation per sample.  
\end{lemma}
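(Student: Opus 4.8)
The plan is to analyze \textsc{FindLargeCoordinates}$(\ell)$ in two parts: first show that a single iteration of the outer loop (\cref{line:outer-loop}) samples each coordinate $i$ with probability \emph{exactly} $\vq_i^2 / \|\vq\|_2^2$, conditioned on the JL-sketch $\mphi$ being a good sketch; then show that taking $N = \Theta(2^{2\ell}(\beta/\delta)^2 \log^2 m \log(m/\rho))$ such samples recovers every coordinate $i$ with $\vq_i \geq \delta/(2\lceil \log m\rceil)$ with probability $\geq 1-\rho$. First I would record the Johnson--Lindenstrauss guarantee: for $w = \Theta(\eta^2 \log(m/\rho))$ rows, with probability $\geq 1 - \rho/\mathrm{poly}(m)$ simultaneously over all $O(m)$ nodes $u$ of $\ct$ and all $O(N)$ query rounds (taking a union bound, and noting the adversary is adaptive but the analysis of \cite{treeLP}-style sketch maintenance still applies against it), we have $\|\mphi_{E(u)} \vq\|_2^2 = (1\pm \tfrac14)\|\vq|_{E(u)}\|_2^2$ for every node $u$. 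Condition on this event for the rest of the argument.

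Under this event, consider one pass of the while-loop in \cref{line:sample-loop}. Along a root-to-leaf path ending at the leaf for coordinate $i$, the probability of \emph{reaching} leaf $i$ is $p_u = \prod_{\text{edges } (v,v') \text{ on path}} \mathbf{P}(v \to v')$, a telescoping product of ratios of sketched norms. The acceptance probability in \cref{line:sample-filter} is $p_{\mathrm{accept}} = \|\vq|_{E(i)}\|^2 / (2 p_u \|\mphi\vq\|_2^2) = \vq_i^2/(2 p_u \|\mphi\vq\|_2^2)$; I would check that the good-sketch event implies $p_u \geq \Omega(\vq_i^2/\|\vq\|_2^2)$ (each edge ratio is within a constant of the true ratio, and the telescoping over the $O(\eta)$ levels contributes only a $c^{\eta}$ factor, which is absorbed by choosing $w$ with the $\eta^2$ dependence so that per-level error is $O(1/\eta)$), hence $p_{\mathrm{accept}} \in (0,1]$ is well-defined. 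The product $p_u \cdot p_{\mathrm{accept}} = \vq_i^2/(2\|\mphi\vq\|_2^2)$ is then \emph{independent of $i$ up to the fixed factor $1/(2\|\mphi\vq\|_2^2)$}, so conditioned on acceptance, coordinate $i$ is returned with probability exactly proportional to $\vq_i^2$. Summing over $i$, the probability that a single while-loop pass is accepted is $\sum_i \vq_i^2/(2\|\mphi\vq\|_2^2) = \|\vq\|_2^2/(2\|\mphi\vq\|_2^2) \geq \Omega(1)$ by the good-sketch event; this gives the claimed $O(1)$ expected repetitions of the while-loop per outer iteration, and also shows the sampling distribution over accepted coordinates is exactly $\vq_i^2/\|\vq\|_2^2$.

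For correctness of the returned set: any coordinate $i$ with $\vq_i \geq \delta/(2\lceil\log m\rceil)$ satisfies $\vq_i^2/\|\vq\|_2^2 \geq \Omega(2^{-2\ell}(\delta/\beta)^2/\log^2 m)$, because $\|\vq\|_2^2 = \sum_{j=k-2^\ell}^{k-1}$-type telescoped bound is $O(2^\ell \cdot 2^\ell \beta^2) = O(2^{2\ell}\beta^2)$ by the same Cauchy--Schwarz argument as in the proof of \cref{lem:MaintainApprox} (using $\|\vx^{(j+1)} - \vx^{(j)}\|_{\md^{(j+1)}} \leq \beta$ and that $\md_{ii}$ is frozen on the relevant coordinates). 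Hence in each of the $N$ iterations such an $i$ is sampled with probability $\geq \Omega(2^{-2\ell}(\delta/\beta)^2/\log^2 m)$, and with $N = \Theta(2^{2\ell}(\beta/\delta)^2 \log^2 m \log(m/\rho))$ independent iterations the probability $i$ is \emph{never} sampled is $\leq \exp(-\Omega(\log(m/\rho))) \leq \rho/\mathrm{poly}(m)$; a union bound over the $\leq m$ such coordinates, together with the sketch-failure probability, gives total failure probability $\leq \rho$. Finally, the filter $\{i \in I : \vq_i \geq \delta/(2\lceil\log m\rceil)\}$ in the last line discards any spuriously-sampled small coordinates, so the returned set is exactly $I_\ell^{(k)}$. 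The main obstacle I anticipate is the careful bookkeeping of the multiplicative sketch errors telescoped over the $O(\eta)$ levels of the descent — ensuring the per-level distortion is small enough ($1 \pm O(1/\eta)$) that $p_{\mathrm{accept}}$ stays in $[0,1]$ and the acceptance probability stays $\Omega(1)$ — and arguing this holds against an adaptive adversary, which is why the sketch dimension carries the $\eta^2$ factor; I would invoke the adaptive-adversary sketch guarantee in the same black-box manner as \cite{treeLP} rather than re-deriving it.
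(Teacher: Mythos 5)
Your proposal is correct and follows essentially the same route as the paper's proof: condition on the JL event with per-level distortion $1\pm O(1/\eta)$, observe that $p_u\cdot p_{\mathrm{accept}} = \vq_i^2/(2\|\mphi\vq\|_2^2)$ so rejection sampling yields the exact distribution $\vq_i^2/\|\vq\|_2^2$ with $\Omega(1)$ acceptance probability, bound $\|\vq\|_2^2 \leq 2^{2\ell}\beta^2$ using the frozen-$\md$ observation, and finish with a coupon-collector/union-bound argument plus the final threshold filter. The only cosmetic difference is your initial $(1\pm\tfrac14)$ statement of the sketch guarantee before correcting to the $O(1/\eta)$ per-level accuracy, and your labeling of the $\|\vq\|_2$ bound as Cauchy--Schwarz where the paper uses the triangle inequality; both yield the same bound.
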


\begin{proof}The proof is similar to Lemma 6.17 in \cite{treeLPArxivV2}. We
	include it for completeness. 
	For a set $S$ of indices, let $\mi_{S}$ be the $m \times m$ diagonal matrix that is one on $S$ and zero otherwise. 
	
	We first prove that \cref{line:sample-filter} breaks with probability
	at least $\frac{1}{4}$. By the choice of $\sketchlen$, Johnson--Lindenstrauss
	lemma shows that $\|\mphi_{E(u)}\vq\|_{2}^{2}=(1\pm\frac{1}{9\eta})\|\mi_{E(u)}\vq\|_{2}^{2}$
	for all $u \in \ct$ with probability at least $1-\rho$. 
	Therefore, the probability we move from a node $u$ to its child node $u'$ is given by
	\[
	\mathbf{P} (u\rightarrow u')= \left(1\pm\frac{1}{3\eta} \right) \frac{\|\mi_{E(u')}\vq\|_{2}^{2}}{\sum_{u''\text{ is a child of }u}\|\mi_{E(u'')}\vq\|_{2}^{2}}
	=
	\left(1\pm\frac{1}{3\eta}\right)\frac{\|\mi_{E(u')}\vq\|_{2}^{2}}{\|\mi_{E(u)}\vq\|_{2}^{2}}.
	\]
	Hence, the probability the walk ends at a leaf $u \in \ct$ is given by
	\[
	p_{u}= \left(1\pm\frac{1}{3\eta} \right)^{\eta} \frac{\|\mi_{u}\vq\|_{2}^{2}}{\|\vq\|_{2}^{2}}=(1\pm\frac{1}{3\eta})^{\eta}\frac{\norm{\vq|_{E(u)}}^{2}}{\|\vq\|_{2}^{2}}.
	\]
	Now, $p_{\mathrm{accept}}$ on \cref{line:sample-filter} is at least
	\[
	p_{\mathrm{accept}}=\frac{\norm{\vq|_{E(u)}}^{2}} {2\cdot p_{u}\cdot\|\mphi\vq\|_{2}^{2}}
	\geq 
	\frac{\norm{\vq|_{E(u)}}^{2}} {2\cdot(1+\frac{1}{3\eta})^{\eta} \frac{\norm{\vq|_{E(u)}}^{2}}{\|\vq\|_{2}^{2}}\cdot\|\mphi\vq\|_{2}^{2}}
	\geq
	\frac{\|\vq\|_{2}^{2}}{2\cdot(1+\frac{1}{3\eta})^{\eta}\|\mphi\vq\|_{2}^{2}}
	\geq 
	\frac{1}{4}.
	\]
	On the other hand, we have that $p_{\mathrm{accept}}\leq\frac{\|\vq\|_{2}^{2}}{2(1-\frac{1}{3\eta})^{\eta}\|\mphi\vq\|_{2}^{2}}<1$
	and hence this is a valid probability.
	
	Next, we note that $u$ is accepted on \cref{line:sample-filter} with probability
	\[
	p_{\mathrm{accept}} p_{u}=\frac{\norm{\vq|_{E(u)}}^{2}}{2\cdot\|\mphi\vq\|_{2}^{2}}.
	\]
	Since $\|\mphi\vq\|_{2}^{2}$ remains the same in all iterations, this probability is proportional to $\norm{\vq|_{E(u)}}^{2}$.
	Since the algorithm repeats when $u$ is rejected, on \cref{line:sample-one-cor},
	$u$ is chosen with probability exactly $\norm{\vq|_{E(u)}}^{2}/\|\vq\|^{2}$.
	
	Now, we want to show the output set is exactly $\{i \in [n] :|\vq_{i}|\geq\frac{\delta}{2\left\lceil \log m\right\rceil }\}$.
	Let $S$ denote the set of indices where $\ox$ did not update between the $(k-2^\ell)$-th step and the current $k$-th step.
	Then
	\begin{align*}
		\|\vq\|_{2} &=\|\mi_{S}(\md^{(k)})^{1/2}(\vx^{(k)}-\vx^{(k-2^{\ell})})\|_{2}\\
		& \leq\sum_{i=k-2^{\ell}}^{k-1}\|\mi_{S}(\md^{(k)})^{1/2}(\vx^{(i+1)}-\vx^{(i)})\|_{2}\\
		& =\sum_{i=k-2^{\ell}}^{k-1}\|\mi_{S}(\md^{(i+1)})^{1/2}(\vx^{(i+1)}-\vx^{(i)})\|_{2}\\
		& \leq\sum_{i=k-2^{\ell}}^{k-1}\|(\md^{(i+1)})^{1/2}(\vx^{(i+1)}-\vx^{(i)})\|_{2} \\
		&\leq 2^{\ell}\beta,
	\end{align*}
	where we used $\mi_S \md^{(i+1)} = \mi_S \md^{(k)}$, because $\ox_{i}$ changes whenever $\md_{ii}$ changes at a step. Hence,
	each leaf $u$ is sampled with probability at least $\norm{\vq|_{E(u)}}^{2}/(2^{\ell}\beta)^{2}$.
	If $|\vq_{i}|\geq\frac{\delta}{2\left\lceil \log m\right\rceil }$,
	and $i \in E(u)$ for a leaf node $u$, then the coordinate $i$ is not in $I$ with probability at most
	\[
	\left(1-\frac{\norm{\vq|_{E(u)}}^{2}}{(2^{\ell}\beta)^{2}} \right)^{N}
	\leq \left(1-\frac{1}{2^{2\ell+2}(\beta/\delta)^{2}\left\lceil \log m\right\rceil ^{2}}\right)^{N}\leq\frac{\rho}{m},
	\]
	by our choice of $N$. Hence, all $i$ with $|\vq_{i}|\geq\frac{\delta}{2\left\lceil \log m\right\rceil }$
	lies in $I$ with probability at least $1-\rho$. This proves that
	the output set is exactly $I_{\ell}^{(k)}$ with probability at least
	$1-\rho$. 
\end{proof} 

\begin{remark}
	
	In \cref{algo:change-detection}, we only need to compute $\|\mphi_{E(u)}\vq\|_{2}^{2}$ for $O(N)$ many nodes $u \in \ct$. 
	Furthermore, 
	the randomness of the sketch is not leaked and we can use the same
	random sketch $\mphi$ throughout the algorithm. 
	This allows us to efficiently maintain $\mphi_{E(u)}\vq$ for each $u \in \ct$ throughout the IPM.
	
\end{remark}
	
\subsection{Sketch maintenance} \label{subsec:sketch_maintenance}

In \textsc{FindLargeCoordinates} in the previous subsection, we assumed the existence of a constant degree tree $\ct$, 
and for the dynamic vector $\vq$, 
the ability to access $\mphi_{E(u)} \vq$ at each node $u \in \ct$ and $\vq|_{E(u)}$ at each leaf node $u \in \ct(0)$.

In this section, we consider when the required tree is the separator tree $\ct$ of the overall input graph, and
the vector $\vq$ is of the form $\vq = \vy + \mm \vz$,
where $\mm$ is a tree operator supported on $\ct$, and each of $\vy, \mm, \vz$  undergo changes at every IPM step. 
We present a data structure that implements two features efficiently on $\ct$:
\begin{itemize}
	\item access $(\vy + \mm \vz)|_{E(H)}$ at every leaf node $H$, where $E(H) \defeq \range{\mj_H}$.
	\item access $\mphi_{E(H)} (\vy + \mm \vz)$ at every node $H$, where $\mphi_{E(H)}$ is $\mphi$ restricted to columns given by $E(H) \defeq \bigcup_{\text{leaf $D\in \ct_H$}} E(D)$.
\end{itemize}

\begin{remark}
	As seen in the pseudocode, sketches for $\vy$ and $\mm \vz$ can be maintained separately. We collected them together to represent $\vx$ as a whole for simplicity.
\end{remark}

First, we present some lemmas about the structure of the expression $\mm \vz$ which will help us to implement the requirements above. For any node $H \in \ct$, let $\ct_H$ be the subtree of $\ct$ rooted at $H$.

\begin{lemma} \label{lem:query-correctness}
	At any leaf node $H \in \ct(0)$, we have
	\[
	(\mm \vz)|_{E(H)} = \sum_{A : H \in \ct_A} \mj_H \mm_{H \leftarrow A} \mi_{F_A} \vz 
	= \mj_H \mi_{F_H} \vz + \sum_{\text{ancestor $A$ of $H$}} \mj_H \mm_{H \leftarrow A} \mi_{F_A} \vz. 
	\]
\end{lemma}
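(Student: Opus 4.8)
The statement is essentially an unwinding of the definition of the tree operator $\mm$ from \cref{def:forest-operator} restricted to the coordinates $E(H)$ for a leaf $H$. The plan is to start from the defining formula
\[
\mm = \sum_{\text{leaf $D$, node $A$} \;:\; D \in \ct_A} \mj_{D}\,\mm_{D\leftarrow A}\,\mi_{F_{A}},
\]
apply it to $\vz$, and then left-multiply by $\mi_{E(H)}$ (equivalently, restrict to the rows indexed by $E(H)$). Since the sets $E(D)$ are pairwise disjoint over leaves $D$ and each $\mj_D$ has range exactly $E(D)$, restricting to $E(H)$ annihilates every term with $D \neq H$. This is the one genuine observation needed, and it follows directly from the disjointness hypothesis in \cref{def:forest-operator}; I would state it as a one-line justification rather than belabor it.

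After that step, only the summand with $D = H$ survives, and the remaining sum is over all nodes $A$ with $H \in \ct_A$, i.e.\ over $A = H$ together with all ancestors of $H$ in $\ct$. I would then just split off the $A = H$ term, noting that $\mm_{H \leftarrow H} = \mi$ by the convention in \cref{def:forest-operator} (the $t=1$ case of the tree-path product), which yields the second displayed equality $\mj_H \mi_{F_H} \vz + \sum_{\text{ancestor }A} \mj_H \mm_{H\leftarrow A}\mi_{F_A}\vz$.

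There is no real obstacle here; the only thing to be careful about is the bookkeeping of which nodes $A$ satisfy $H \in \ct_A$ — precisely $H$ and its proper ancestors — and that all operators have been extended to act on $\R^V$ and $\R^E$ so the dimensions match, as stipulated in the definition. So the write-up is three lines: substitute the definition of $\mm$; restrict to $E(H)$ and use disjointness of the $E(D)$'s plus $\range{\mj_D} = E(D)$ to kill all $D \neq H$ terms; rewrite $\{A : H \in \ct_A\}$ as $\{H\} \cup \{\text{ancestors of }H\}$ and pull out the $A=H$ term using $\mm_{H\leftarrow H} = \mi$.
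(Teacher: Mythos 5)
Your proof is correct and follows exactly the same route as the paper's: restrict the defining sum for $\mm$ to $E(H)$, use disjointness of the ranges of the leaf operators to discard all terms with $D \neq H$, and then split off the $A = H$ term (using $\mm_{H\leftarrow H}=\mi$) from the proper ancestors. No gaps.
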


\begin{proof}
	Recall from the definition of the tree operator that $\range{\mj_H}$ are disjoint. So to get $(\mm \vz)|_{E(H)}$, it suffices to only consider the terms corresponding to the leaf $H$ in the expression \cref{defn:mm} for $\mm$; this gives the first equality. The second equality simply splits the sum into two parts. (We do not consider a node to be its own ancestor.)
\end{proof}

\begin{lemma} \label{lem:estimate-correctness}
	At any node $H \in \ct$, we have
	\[
	\mphi_{E(H)} \mm \vz = \mphi \overline{\mm^{(H)}} \vz + \mphi \mm^{(H)} \sum_{\text{ancestor $A$ of $H$}} \mm_{H \leftarrow A} \mi_{F_A} \vz.
	\]
\end{lemma}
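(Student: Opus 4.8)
The plan is to unwind the definitions of the tree operator $\mm$ and the subtree operators $\mm^{(H)}, \overline{\mm^{(H)}}$ from \cref{def:forest-operator,defn:subtree-operator}, and carefully split the sum defining $\mm$ into the part supported inside the subtree $\ct_H$ and the part that comes from ancestors of $H$. Recall $\mphi_{E(H)} = \mphi \mi_{E(H)}$, where $E(H) = \bigcup_{\text{leaf } D \in \ct_H} E(D)$, and the sets $E(D)$ over leaves are pairwise disjoint. So $\mi_{E(H)}$ acts as the identity on coordinates belonging to leaves of $\ct_H$ and kills all others.

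First I would write $\mi_{E(H)} \mm \vz = \sum_{\text{leaf } D,\ \text{node } A\ :\ D \in \ct_A} \mi_{E(H)} \mj_D \mm_{D \leftarrow A} \mi_{F_A} \vz$ using \cref{defn:mm}. Since $\range{\mj_D} = E(D)$ and $\mi_{E(H)}$ projects onto $\bigcup_{\text{leaf } D' \in \ct_H} E(D')$, the term survives exactly when $D$ is a leaf of $\ct_H$, i.e.\ $D \in \ct_H$; and then $\mi_{E(H)} \mj_D = \mj_D$. Thus $\mphi_{E(H)} \mm \vz = \mphi \sum_{\text{leaf } D \in \ct_H} \sum_{A : D \in \ct_A} \mj_D \mm_{D \leftarrow A} \mi_{F_A} \vz$. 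Next I split the inner sum over $A$ according to whether $A \in \ct_H$ (i.e.\ $A$ is $H$ itself or a descendant of $H$ that is still an ancestor of $D$) or $A$ is a strict ancestor of $H$.

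For the first piece, when $A \in \ct_H$, summing $\sum_{\text{leaf } D \in \ct_A} \mj_D \mm_{D \leftarrow A}$ over such $D$ gives exactly $\mm^{(A)}$ by \cref{defn:subtree-operator}, and then summing $\mm^{(A)} \mi_{F_A} \vz$ over all $A \in \ct_H$ gives $\overline{\mm^{(H)}} \vz$. One has to be a little careful with the index juggling: the double sum over (leaf $D \in \ct_H$, node $A$ with $D \in \ct_A \subseteq \ct_H$) is the same as the double sum over ($A \in \ct_H$, leaf $D \in \ct_A$), which is the standard Fubini swap on a tree. For the second piece, when $A$ is a strict ancestor of $H$, every leaf $D$ of $\ct_H$ has $\mm_{D \leftarrow A} = \mm_{D \leftarrow H}\, \mm_{H \leftarrow A}$ by the composition rule for $\mm_{\cdot \leftarrow \cdot}$ along tree paths (since the tree path from $A$ to $D$ passes through $H$). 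So $\sum_{\text{leaf } D \in \ct_H} \mj_D \mm_{D \leftarrow A} = \left(\sum_{\text{leaf } D \in \ct_H} \mj_D \mm_{D \leftarrow H}\right) \mm_{H \leftarrow A} = \mm^{(H)} \mm_{H \leftarrow A}$, and summing over strict ancestors $A$ of $H$ yields $\mm^{(H)} \sum_{\text{ancestor } A \text{ of } H} \mm_{H \leftarrow A} \mi_{F_A} \vz$. Adding the two pieces and factoring out $\mphi$ gives the claimed identity.

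I expect the only mildly delicate step to be the path-composition factorization $\mm_{D \leftarrow A} = \mm_{D \leftarrow H}\,\mm_{H \leftarrow A}$ when $A$ is a strict ancestor of $H$ and $D$ a leaf of $\ct_H$; this needs the observation that the unique tree path from $A$ down to $D$ factors through $H$, together with the convention $\mm_{H \leftarrow H} = \mi$ (handling the boundary case where one of the sub-paths is trivial). Everything else is bookkeeping with the definitions and the disjointness of the $E(D)$'s. There is no real analytic obstacle here — the lemma is purely algebraic — so the proof should be short.
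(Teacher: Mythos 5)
Your proposal is correct and follows essentially the same route as the paper's proof: restrict the sum defining $\mm$ to leaves $D \in \ct_H$ using the disjointness of the $E(D)$'s, split the inner sum over $A$ according to whether $A \in \ct_H$ or $A$ is a strict ancestor of $H$, identify the first piece as $\overline{\mm^{(H)}}\vz$, and factor the second piece via $\mm_{D \leftarrow A} = \mm_{D \leftarrow H}\,\mm_{H \leftarrow A}$ to obtain $\mm^{(H)}\sum_{\text{ancestor } A}\mm_{H\leftarrow A}\mi_{F_A}\vz$. No gaps; the step you flag as delicate is exactly the one the paper also relies on.
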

Intuitively, the lemma shows that the sketch of $\mm \vz$ restricted to $E(H)$ can be split into two parts. 
The first part involves some sum over all nodes in $\ct_H$, ie. descendants of $H$ and $H$ itself, and the second part involves a sum over all ancestors of $H$. 
\begin{proof}
	First, note that since $\mphi$ is restricted to $E(H)$, it suffices to consider the terms in the sum for $\mm$ that map into to $E(H)$. In particular, this is the set of leaf nodes $\ct_H$ in the subtree rooted at $H$.
	\[
	\mphi_{E(H)} \mm \vz = \mphi \sum_{\text{leaf } D \in \ct_H} \sum_{A : D \in \ct_A} \mj_D \mm_{D \leftarrow A} \mi_{F_A} \vz.
	\]
	The right hand side involves a sum over the set $\{ (D, A) \;:\; D \in \ct_H \text{ is a leaf node}, D \in \ct_A\}$. Observe that $(D,A)$ is in this set if and only if $A \in \ct_H$ or $A$ is an ancestor of $H$. Hence, the summation can be written as
	\[
	\sum_{\text{leaf $D \in \ct_H$}} \sum_{A \in \ct_H} \mj_D \mm_{D \leftarrow H}\mi_{F_H}\vz + 
	\sum_{\text{leaf $D \in \ct_H$}} \sum_{\text{ancestor $A$ of $H$}} \mj_{D} \mm_{D \leftarrow A} \mi_{F_A} \vz.
	\]
	The first term is precisely $ \overline{\mm^{(H)}} \vz$. For the second term, we can use the fact that $A$ is an ancestor of $H$ to expand $\mm_{D \leftarrow A} = \mm_{D \leftarrow H} \mm_{H \leftarrow A}$. Then, the second term is
	\begin{align*}
		&\phantom{{}={}} \sum_{\text{leaf $D \in \ct_H$}} \sum_{\text{ancestor $A$ of $H$}} \mj_{D} \mm_{D \leftarrow H} \mm_{H \leftarrow A} \mi_{F_A} \vz \\
		&= \sum_{\text{leaf $D \in \ct_H$}} \mj_D \mm_{D \leftarrow H} \left( \sum_{\text{ancestor $A$ of $H$}} \mm_{H \leftarrow A} \mi_{F_A} \vz \right) \\
		&= \mm^{(H)} \left(\sum_{\text{ancestor $A$ of $H$}} \mm_{H \leftarrow A} \mi_{F_A} \vz \right),
	\end{align*}
	by definition of $\mm^{(H)}$.
\end{proof}

\begin{algorithm}
	\caption{Data structure for maintaining $\mphi (\vy + \mm \vz)$, Part 1}
	\label{algo:maintain-sketch}
	\begin{algorithmic}[1]
		\State \textbf{data structure} \textsc{MaintainSketch} 
		\State \textbf{private : member}
		\State \hspace{4mm} $\ct$ : rooted constant degree tree, where at every node $H$, there is
		\State \hspace{12mm} $\boxed{\mphi \mm^{(H)}}$ : sketch of partial tree operator
		\State \hspace{12mm} $\boxed{\mphi\overline{\mm^{(H)}} \vz}$ : sketched vector 
		\Comment This gives $\mphi \mm \vz$ at the root
		\State \hspace{12mm} $\boxed{\mphi\vy|_{E(H)}}$ : sketched subvector of $\vy$ 
		
		\State \hspace{4mm} $\mphi\in\R^{\sketchlen\times m}$ : JL-sketch matrix
		\State \hspace{4mm} $\mm$ : tree operator on $\ct$
		\State \hspace{4mm} $\vz\in\R^n$ : vector $\vz$
		\State \hspace{4mm} $\vy\in\R^n$ : vector $\vy$ 
		\Comment $\mm,\vz,\vy$ are pointers to read-only memory
		
		\State
		\Procedure{$\textsc{Initialize}$}{rooted tree $\ct$, $\mphi\in\R^{\sketchlen\times m}$, tree operator $\mm$, $\vz$, $\vy$}
		\State $\mphi \leftarrow\mphi$, $\ct \leftarrow \ct$
		\State $\boxed{\mphi \mm^{(H)}} \leftarrow \vzero$, $\boxed{\mphi\overline{\mm^{(H)}} \vz} \leftarrow \vzero$, 
		$\boxed{\mphi \vy|_{E(H)}} \leftarrow \vzero$ for all $H \in \ct$
		\State {$\textsc{Update}$}({$\mm, \vz, \vy, V(\ct)$}) 
		\EndProcedure
		
		\State
		\Procedure{$\textsc{Update}$}{$\mm^\new, \vz^\new, \vy^\new, \mathcal{S} \defeq \text{set of nodes admitting changes}$}
		\State $\mm \leftarrow \mm^\new$, $\vz \leftarrow \vz^\new$, $\vy\leftarrow \vy^\new$
		\For{$H \in \mathcal{P}_{\ct}(\mathcal{S})$ by increasing node level}
		\If {$H$ is a leaf}
		\State $\boxed{\mphi\mm^{(H)}} \leftarrow \mphi \mj_H$ \label{line:collectl}
		\State $\boxed{\mphi\overline{\mm^{(H)}}\vz} \leftarrow \mphi \mj_H \vz|_{F_H}$ \label{line:collectl2}
		\State $\boxed{\mphi\vy|_{E(H)}} \leftarrow \mphi\vy|_{E(H)}$
		\Else
		\State $\boxed{\mphi\mm^{(H)}} \leftarrow \sum_{\text{child $D$ of $H$}}\boxed{\mphi\mm^{(D)}} \mm_{(D, H)}$ \label{line:collect}
		\State $\boxed{\mphi\overline{\mm^{(H)}}\vz} \leftarrow \boxed{\mphi\mm^{(H)}}\vz|_{F_H}+\sum_{\text{child  $D$ of $H$}} \boxed{\mphi\overline{\mm^{(D)}}\vz}$ \label{line:collect2}
		\State $\boxed{\mphi\vy|_{E(H)}} \leftarrow \sum_{\text{child $D$ of $H$}}\boxed{\mphi\vy|_{E(D)}}$ \label{line:collecty}
		\EndIf
		\EndFor
		\EndProcedure
		
		\State
		\Procedure{$\textsc{SumAncestors}$}{$H \in \ct$}
		\If {\textsc{Update} has not been called since the last call to \textsc{SumAncestors}$(H)$}
		\State \Return the result of the last \textsc{SumAncestors}$(H)$
		\EndIf
		\If {$H$ is the root} 
		\Return $\vzero$
		\EndIf
		\State \Return $\mm_{(H,P)} (\vz|_{F_P} + \textsc{SumAncestors}(P))$ 
		\Comment $P$ is the parent of $H$
		\EndProcedure
		\algstore{sketch-maintain-break-point}
	\end{algorithmic}
\end{algorithm}
\addtocounter{algorithm}{-1}

\begin{algorithm}
	\caption{Data structure for maintaining $\mphi(\vy + \mm \vz)$, part 2}
	\begin{algorithmic}[1]
		\renewcommand{\thealgorithm}{}
		\algrestore{sketch-maintain-break-point}
		
		\Procedure{$\textsc{Estimate}$}{$H \in \ct$}
		\State Let $\vu$ be the result of \textsc{SumAncestors}$(H)$
		\State \Return $\boxed{\mphi\mm^{(H)}}\vu+\boxed{\mphi\overline{\mm^{(H)}}\vz} + \boxed{\mphi\vy|_{E(H)}}$
		\EndProcedure
		\State
		\Procedure{$\textsc{Query}$}{leaf $H \in \ct$}
		\State \Return $\vy|_{E(H)} + \mj_H (\vz|_{F_H} + \textsc{SumAncestors}(H))$
		\EndProcedure
	\end{algorithmic}
\end{algorithm}

\begin{lemma} \label{lem:maintain-sketch}
	Let $\ct$ be a rooted tree with height $\eta$ supporting tree operator $\mm$ with complexity $T$.
	Let $\sketchlen = \Theta(\eta^{2}\log(\frac{m}{\rho}))$ be as defined in \cref{algo:maintain-vector},
	and let $\mphi\in\R^{\sketchlen\times m}$ be a JL-sketch matrix.
	Then \textsc{MaintainSketch} (\cref{algo:maintain-sketch}) is a data structure that maintains $\mphi (\vy + \mm \vz)$, as $\vy$, $\mm$ and $\vz$ undergo changes in the IPM.
	The data structure supports the following procedures: 
	\begin{itemize}
		\item \textsc{Initialize}(rooted tree $\ct$, $\mphi \in\R^{\sketchlen\times m}$, tree operator $\mm^{\init} \in \R^{m \times n}$, $\vz^{\init} \in \R^n$, $\vy^{\init} \in \R^m$):
		Initialize the data structure with tree operator $\mm \leftarrow \mm^{\init}$, and vectors $\vz \leftarrow \vz^{\init}$, $\vy \leftarrow \vy^{\init}$, and compute the initial sketches in $O(w \cdot m)$ time.
		\item $\textsc{Update}(\mm^\new, \vz^\new, \vy^\new)$:
		Update $\mm \leftarrow \mm^\new$, $\vz \leftarrow \vz^\new$, $\vy\leftarrow \vy^\new$ and all the necessary sketches in $O(\sketchlen\cdot T(\eta \cdot |\mathcal{S}|))$ time, where
		$\mathcal{S}$ is the set of all nodes $H$ where one of $\mm_{(H,P)}, \mj_H, \vz|_{F_H}, \vy|_{E(H)}$ is updated.
		
		\item $\textsc{SumAncestors}(H \in \ct)$: Return $\sum_{\text{ancestor $A$ of $H$}} \mm_{H \leftarrow A} \mi_{F_A} \vz$.
		\item $\textsc{Estimate}(H \in \ct)$: Return $\mphi_{E(H)} \left(\vy + \mm \vz \right)$.
		\item $\textsc{Query}(H \in \ct)$: Return $(\vy + \mm \vz)|_{E(H)}$.
	\end{itemize}

	If we call $\textsc{Query}$ on $N$ nodes, the total runtime is $O(\sketchlen \cdot T(\eta N))$.

	If we call $\textsc{Estimate}$ along a sampling path (by which we mean starting at the root, calling estimate at both children of a node, and then recursively descending to one child until reaching a leaf), and then we call $\textsc{Query}$ on the resulting leaf, and we repeat this $N$ times with no updates during the process,
	then the total runtime of these calls is $O(\sketchlen\cdot T(\eta N))$.
\end{lemma}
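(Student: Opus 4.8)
The plan is to prove \cref{lem:maintain-sketch} by establishing, one procedure at a time, that the boxed sketches stored at each node satisfy the invariants dictated by \cref{lem:query-correctness} and \cref{lem:estimate-correctness}, and then reading off the runtime from the tree operator complexity function $T$ and \cref{lem:planarBoundChangeCost}. The central invariant to maintain is: after any \textsc{Update} call, at every node $H$ we have $\boxed{\mphi\mm^{(H)}} = \mphi \mm^{(H)}$, $\boxed{\mphi\overline{\mm^{(H)}}\vz} = \mphi\overline{\mm^{(H)}}\vz$, and $\boxed{\mphi\vy|_{E(H)}} = \mphi\vy|_{E(H)}$, using the current $\mm,\vz,\vy$. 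I would prove this by induction on node level, exactly following the recursion $\mm^{(H)} = \sum_{\text{child }D}\mm^{(D)}\mm_{(D,H)}$ and $\overline{\mm^{(H)}} = \mm^{(H)}\mi_{F_H} + \sum_{\text{child }D}\overline{\mm^{(D)}}$ from \cref{defn:subtree-operator} and its corollary; the leaf base case is immediate from $\mm^{(H)} = \mj_H$. The key structural point is that when only nodes in $\mathcal{S}$ change (an edge operator, leaf operator, a block $\vz|_{F_H}$, or a block $\vy|_{E(H)}$), the stored quantities only change for $H \in \pathT{\mathcal S}$, so recomputing bottom-up along $\pathT{\mathcal S}$ suffices; nodes outside $\pathT{\mathcal S}$ retain correct values by the inductive hypothesis from the previous step.

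For \textsc{SumAncestors}$(H)$, I would verify by a short downward induction along the root-to-$H$ path that it returns $\sum_{\text{ancestor }A\text{ of }H}\mm_{H\leftarrow A}\mi_{F_A}\vz$: the recursion $\mm_{(H,P)}(\vz|_{F_P} + \textsc{SumAncestors}(P))$ peels off the $A = P$ term and composes $\mm_{(H,P)}$ onto all higher terms, using $\mm_{H\leftarrow A} = \mm_{(H,P)}\mm_{P\leftarrow A}$. Then \textsc{Estimate}$(H)$ returns $\boxed{\mphi\mm^{(H)}}\vu + \boxed{\mphi\overline{\mm^{(H)}}\vz} + \boxed{\mphi\vy|_{E(H)}}$, which by the main invariant and \cref{lem:estimate-correctness} equals $\mphi\mm^{(H)}\sum_A \mm_{H\leftarrow A}\mi_{F_A}\vz + \mphi\overline{\mm^{(H)}}\vz + \mphi\vy|_{E(H)} = \mphi_{E(H)}\mm\vz + \mphi\vy|_{E(H)} = \mphi_{E(H)}(\vy + \mm\vz)$, where the last step uses that $\mphi\vy|_{E(H)} = \mphi_{E(H)}\vy$ since $\vy|_{E(H)}$ is supported on $E(H)$. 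Similarly \textsc{Query}$(H)$ returns $\vy|_{E(H)} + \mj_H(\vz|_{F_H} + \textsc{SumAncestors}(H))$, which by \cref{lem:query-correctness} is exactly $(\vy + \mm\vz)|_{E(H)}$.

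For the runtime bounds I would argue as follows. \textsc{Initialize} calls \textsc{Update} with $\mathcal S = V(\ct)$; since $E = \bigcup_{\text{leaf }D}E(D)$ with each $E(D)$ of constant size, there are $O(m)$ leaves and $O(m)$ edges in $\ct$, and the bottleneck is forming the $O(w)$ sketched images under each edge operator, costing $O(w\cdot T(m)) = O(w\cdot m)$ by concavity and $T(k)\ge k$. For \textsc{Update}, the work at a node is dominated by applying $\mm_{(D,H)}$ to the $O(w)$ rows of $\boxed{\mphi\mm^{(D)}}$ for its children; summing over $H\in\pathT{\mathcal S}$ and noting $|\pathT{\mathcal S}| = O(\eta|\mathcal S|)$, the total is $O(w\cdot T(\eta|\mathcal S|))$ by the definition of complexity. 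For the batched \textsc{Query}/\textsc{Estimate} bounds, the subtle point — which I expect to be the main obstacle — is amortization across the $N$ calls: a single \textsc{SumAncestors}$(H)$ walks a root-to-$H$ path of length up to $\eta$, and naively $N$ sampling paths touch $O(\eta N)$ nodes, but the per-node work involves an edge operator application, so I must invoke $T$ on the \emph{union} of all touched edges (size $O(\eta N)$) rather than summing $T$ per path; concavity of $T$ makes the union bound $O(w\cdot T(\eta N))$ no worse, and the memoization guard in \textsc{SumAncestors} (``if \textsc{Update} has not been called since the last call'') ensures no edge operator is applied twice within one batch, so each relevant edge contributes to the $T$-cost only once. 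I would also check that along one sampling path the two \textsc{Estimate} calls at siblings and the descent reuse the same ancestor sums, keeping the bookkeeping consistent with a single application of $T$ to the $O(\eta N)$-sized edge set.
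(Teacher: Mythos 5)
Your proposal is correct and follows essentially the same route as the paper's proof: the same bottom-up invariant on the boxed sketches restricted to $\pathT{\mathcal{S}}$, the same peeling recursion for \textsc{SumAncestors}, correctness of \textsc{Estimate}/\textsc{Query} via \cref{lem:estimate-correctness,lem:query-correctness}, and the same amortization of the $N$ sampling paths via memoization so that $T$ is invoked once on the $O(\eta N)$-sized union of touched tree edges. One small slip: your justification ``$O(\sketchlen\cdot T(m)) = O(\sketchlen\cdot m)$ by concavity and $T(k)\ge k$'' runs the inequality the wrong way ($T(k)\ge k$ gives a lower bound, not an upper bound); the honest bound for \textsc{Initialize} is $O(\sketchlen\cdot T(m))$, which is what the paper's own proof records.
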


\begin{proof}
	First, we note that each edge operator $\mm_e$ should be stored implicitly. 
	In particular, it suffices to only support the operation of computing $\vu^{\top}\mm_{e}$ and $\mm_{e}\vx$ for any vectors $\vu$ and $\vx$. 
	
	We prove the running time and correctness for each procedure.
	
	\paragraph{\textsc{Initialize}:}
	It sets the sketches to $\vzero$ in $O(\sketchlen \cdot m)$ time. 
	It then calls \textsc{Update} with the initial $\mm$, $\vz$, $\vy$, and updates the sketches everywhere on $\ct$. 
	By the runtime and correctness of \textsc{Update}, this step is correct and runs in $\O(w \cdot T(m))$ time.
	
	\paragraph{\textsc{Update}$(\mm^\new, \vz^\new, \vy^\new)$:}
	Let $\mathcal{S}$ denote the set of nodes admitting changes as defined in the theorem statement.
	If a node $H$ is not in $\mathcal{S}$ and it has no descendants in $\mathcal{S}$, then by definition, $\mm^{(H)}$ and $\overline{\mm^{(H)}} \vz$ are not affected by the updates in $\mm$ and $\vz$. Similarly, in this case, $\vy|_{E(H)}$ is not affected by the updates to $\vy$. 
	Hence, it suffices to update the sketches only at all nodes in $\mathcal{P}_{\ct}(\mathcal{S})$. 
	We update the nodes from the bottom level of the tree upwards, so that when we're at a node $H$, all the sketches at its descendant nodes are correct. Hence, by definition, the sketch at $H$ is also correct.
	
	To compute the runtime, first note $|\mathcal{P}_{\ct}(\mathcal{S})| = O(\eta |\mathcal{S}|)$, 
	since for each node $H \in \mathcal{S}$, the set includes all the $O(\eta)$ nodes on the path from $H$ to the root. 
	For each leaf node $H \in \mathcal{P}_\ct(\mathcal{S})$, we can compute its sketches in constant time.
	For each non-leaf node $H \in \mathcal{S}$ with children $D_1, D_2$, 
	\cref{line:collect} multiplies each row of $\boxed{\mphi\mm^{(D_1)}}$ with $\mm_{(D_1, H)}$,  
	each row of $\boxed{\mphi\mm^{(D_2)}}$ with $\mm_{(D_2, H)}$, and sums the results.
	For a fixed row number, the total time over all $H \in \mathcal{P}_{\ct}(\mathcal{S})$ is bounded by $O(T(|\mathcal{P}_\ct(\mathcal{S})|))$.
	So the total time for \cref{line:collect} in the procedure is $O(\sketchlen \cdot T(\eta |S|))$.
	
	\cref{line:collect2} multiply each row of $\boxed{\mphi\mm^{(H)}}$ with a vector and then performs a constant number of additions of $O(\sketchlen)$-length vectors. Since $\boxed{\mphi\mm^{(H)}}$ is computed for all $H \in T(|\mathcal{P}_\ct(\mathcal{S})|)$ in $O(\sketchlen \cdot T(\eta |S|))$ total time, this runtime must also be a bound on the number of total non-zero entries. Since each $\boxed{\mphi\mm^{(H)}}$ is used once in \cref{line:collect2} for a matrix-vector multiplication, the total runtime over all $H$ is also $O(\sketchlen \cdot T(\eta |S|))$. Lastly, the vector additions across all $H$ takes $O(\sketchlen \cdot \eta |S|)$ time.
	
	\cref{line:collecty} adds two vectors of length $\sketchlen$. This is not the bottleneck.
	
	\paragraph{\textsc{SumAncestors}$(H)$:}
	At the root, there are no ancestors, hence we return the zero matrix. 
	When $H$ is not the root, suppose $P$ is the parent of $H$. Then we can recursively write 
	\[
	\sum_{\text{ancestor $A$ of $H$}} \mm_{H \leftarrow A} \mi_{F_A} \vz = \mm_{(H, P)} \left(\mi_{F_P} \vz + \sum_{\text{ancestor $A$ of $P$}} \mm_{P \leftarrow A} \mi_{F_A} \vz \right).
	\]
	The procedure implements the right hand side, and is therefore correct.
	
	\paragraph{\textsc{Estimate} and \textsc{Query}:} Their correctness follow from \cref{lem:estimate-correctness,lem:query-correctness}, and the correctness of \boxed{\mphi\vy|_{E(H)}} maintained by \textsc{Update}.
	
	\paragraph{Overall \textsc{Estimate} and \textsc{Query} time along $N$ sampling paths:}
	We show that if we call \textsc{Estimate} along $N$ sampling paths
	each from the root to a leaf, and we call $\textsc{Query}$
	on the leaves, the overall cost for these calls is $O(\sketchlen\cdot T(\eta N))$:
	
	Suppose the set of nodes visited is given by $\mathcal{H}$, then $|\mathcal{H}| \leq \eta N$.
	Since there is no update, and \textsc{Estimate} is called for a node only after it is called for its parent, 
	we know that $\textsc{SumAncestors}(H)$ is called exactly once for each $H \in \mathcal{H}$.
	Each $\textsc{SumAncestor}(H)$ multiplies a unique edge operator $\mm_{(H,P)}$ with a vector. 
	Hence, the total runtime of \textsc{SumAncestors} is $T(|\mathcal{H}|)$.
	Furthermore, the total number of non-zero entries of the return values of these \textsc{SumAncestors} is also $O(T(|\mathcal{H}|))$.
	
	Finally, each \textsc{Query} applies a constant-time operator $\mj_H$ to the output of a unique \textsc{SumAncestors} call, so the overall runtime is certainly bounded by $O(T(|\mathcal{H}|))$. Adding a constant-sized $\vy|_{E(H)}$ can be done efficiently.
	Similarly, each \textsc{Estimate} multiplies $\boxed{\mphi\mm^{(H)}}$ with the output of a unique \textsc{SumAncestors} call.
	This can be computed as $\sketchlen$-many vectors each multiplied with the \textsc{SumAncestors} output. Then two vectors of length $\sketchlen$ are added. Summing over all $H \in \mathcal{H}$, the overall runtime is $O(\sketchlen \cdot T(|\mathcal{H}|)) = O(\sketchlen \cdot T(\eta N))$.
	
	\paragraph{\textsc{Query} time on $N$ leaves:}
	Since this is a subset of the work described above, the runtime must also be bounded by $O(\sketchlen \cdot T(\eta N))$.
	
\end{proof}

\subsection{Proof of \crtcref{thm:VectorTreeMaintain}}
\label{subsec:sketch_final_proof}

We combine the previous three subsections for the overall approximation procedure.
It is essentially \textsc{AbstractMaintainApprox} in \cref{algo:maintain-vector}, 
with the abstractions replaced by a data structure implementation.
We did not provide the corresponding pseudocode. 

\vectorTreeMaintain*

\begin{proof}
The data structure \textsc{AbstractMaintainApprox} in \cref{algo:maintain-vector}
performs the correct vector approximation maintenance, however, it is not completely implemented.
\textsc{MaintainApprox} simply replaces the abstractions with a concrete implementation using the data structure \textsc{MaintainSketch} from \cref{algo:maintain-sketch}.

First, for notation purposes, let $\vz \defeq c \zprev + \zsum$, and let $\vx \defeq \vy + \mm \vz$, so that at step $k$,  \textsc{Approximate} procedure has $\vx^{(k)}$ (in implicit form) as input, and return $\ox$. 

Let $\ell \in \{1, \dots, O(\log m)\}$. We define a new dynamic vector $\vx_{\ell}$ \emph{symbolically}, which is represented at each step $k$ for $k \geq 2^\ell$ by
\[
\vx^{(k)}_{\ell} \defeq \vy^{(k)}_{\ell}  + \mm^{(k)}_{\ell}  \vz^{(k)}_{\ell},
\]
where the new tree operator $\mm_{\ell}$ at step $k$ is given by
\begin{itemize}
	\item ${\mm^{(k)}_{\ell}}_{(H, P)}=\diag\left(\mm^{(k)}_{(H, P)}, \mm^{(k-2^\ell)}_{(H, P)}\right)$ for each child-parent edge $(H,P)$ in $\ct$,
	\item ${\mj^{(k)}_{\ell}}_H=\overline{\md}_{E(H), E(H)}\left[\mj^{(k)}_H~\mj^{(k-2^\ell)}_H\right]$ for each leaf node $H \in \ct$,
\end{itemize}
where $\overline{\md}$ is the diagonal matrix defined in \textsc{FindLargeCoordinates}, with $\overline{\md}_{i,i} = \md^{(k)}_{i,i}$ at step $k$ if $\ox_i$ has not been updated after step $k-2^\ell$, and zero otherwise.

At step $k$, the vector $\vy_{\ell}$ is given by $\vy_{\ell}^{(k)}= \overline{\md}^{1/2} \left( \vy^{(k)}-\vy^{(k-2^\ell)}\right)$,
and $\vz_{\ell}$ by $\vz_{\ell}^{(k)} \defeq \left[\vz^{(k)} ~ \vz^{(k-2^\ell)}\right]^\top$.
Then, at each step $k$ with $k \geq 2^\ell$, we have
\begin{align} \label{eq:x_ell^k}
\vx_{\ell}^{(k)} &\defeq \vy_{\ell}^{(k)} + \mm_{\ell}^{(k)} \vz_{\ell}^{(k)} \\
&= \left(\overline{\md}^{1/2}\vy^{(k)} + \overline{\md}^{1/2}\mm^{(k)}\vz^{(k)}\right) - \left(
 \overline{\md}^{1/2} \vy^{(k-2^\ell)} +\overline{\md}^{1/2}\mm^{(k-2^\ell)} \vz^{(k-2^\ell)} \right) \nonumber \\
&= \overline{\md}^{1/2} (\vx^{(k)} - \vx^{(k - 2^\ell)}) \nonumber.
\end{align}
Note this is precisely the vector $\vq$ for a fixed $\ell$ in \textsc{FindLargeCoordinates} in \cref{algo:maintain-vector}.
It is straightforward to see that $\mm_{\ell}$ indeed satisfies the definition of a tree operator.
Furthermore, $\mm_{\ell}$ has the same complexity as $\mm$.
\textsc{MaintainApprox} will contain $O(\log m)$ copies of the \textsc{MaintainSketch} data structures in total, where the $\ell$-th copy sketches $\vx_{\ell}$ as it changes throughout the IPM algorithm.

We now describe each procedure in words, and then prove their correctness and runtime. 
\paragraph{\textsc{Initialize}$(\ct, \mm, c, \zprev, \zsum, \vy, \md, \rho, \delta)$:}

This procedure implements the initialization of \textsc{AbstractMaintainApprox},
where the dynamic vector $\vx$ to be approximated is represented by $\vx \defeq \vy + \mm (c \zprev + \zsum)$.
The initialization steps described in \cref{algo:maintain-vector} takes $O(\sketchlen m)$ time.
Let $\mphi$ denote the JL-sketching matrix. 

We initialize two copies of the \textsc{MaintainSketch} data structure, $\texttt{ox\_cur}$ and $\texttt{ox\_prev}$. At step $k$, $\texttt{ox\_cur}$ will maintain sketches of $\mphi \vx^{(k)}$, and $\texttt{ox\_prev}$ will maintain sketches of $\mphi \vx^{(k-1)}$.
(The latter is initialized at step $1$, but we consider it as part of initialization.)

In addition, for each $0 \leq \ell \leq O(m)$, we initialize a copy $\texttt{sketch}_{\ell}$ of  \textsc{MaintainSketch}.
These are needed for the implementation of $\textsc{FindLargeCoordinates}(\ell)$ in \textsc{Approximate}.
Specifically, at step $k = 2^\ell$ of the IPM, we initialize $\texttt{sketch}_{\ell}$ by calling 
$\texttt{sketch}_{\ell}.\textsc{Initialize}(\ct, \mphi, \mm_{\ell}^{(k)}, \vz_{\ell}^{(k)}, \vy_{\ell}^{(k)})$.
(Although this occurs at step $k > 0$, we charge its runtime according to its function as part of initialization.)

The total initialization time is $O(wm\log m)=O(m\eta^2\log m\log (\frac{m}{\rho}))$ by \cref{lem:maintain-sketch}. 
By the existing pseudocode in \cref{algo:maintain-vector}, it correctly initializes $\ox \leftarrow \vx$.

\paragraph{\textsc{Approximate}$(\mm^\new, c^\new, \zprev^\new, \zsum^\new, \vy^\new, \md^\new)$:}
This procedure implements \textsc{Approximate} in \cref{algo:maintain-vector}. 
We consider when the current step is $k$ below.

First, we update the sketch data structures $\texttt{sketch}_{\ell}$ for each $\ell$ 
by calling $\texttt{sketch}_{\ell}.\textsc{Update}$.
Recall at step $k$, $\texttt{sketch}_{\ell}$ maintains sketches for the vector $\vx_{\ell}^{(k)} =  \overline{\md}^{1/2} (\vx^{(k)} - \vx^{(k-2^\ell)})$,
although the actual representation in $\texttt{sketch}_{\ell}$ of the vector $\vx_{\ell}$ is given by $\vx_{\ell} = \vy_{\ell} + \mm_{\ell} \vz_{\ell}$ as defined in \cref{eq:x_ell^k}.

Next, we execute the pseudocode given in \textsc{Approximate} in \cref{algo:maintain-vector}:

To update $\ox_e$ to $\vx^{(k-1)}_e$ for a single coordinate (\cref{line:D-induced-ox-change} of \cref{algo:maintain-vector}), 
we find the leaf node $H$ containing the edge $e$, and call $\texttt{ox\_prev}.\textsc{Query}(H)$. 
This returns the subvector $\vx^{(k-1)}|_{E(H)}$, from which we can make the assignment to $\ox_e$.
To update $\ox_e$ to $\vx^{(k)}_e$ for single coordinates (\cref{line:update-ox} of \cref{algo:maintain-vector}), 
we do the same as above, except using the data structure $\texttt{ox\_cur}$.

In the subroutine \textsc{FindLargeCoordinates}$(\ell)$, 
the vector $\vq$ defined in the pseudocode is exactly $\vx_\ell^{(k)}$.
We get the value of $\mphi_{E(u)} \vq$ at a node $u$ by calling $\texttt{sketch}_{\ell}.\textsc{Estimate}(u)$,
and we get the value of $\vq|_{E(u)}$ at a leaf node $u$ by calling
$\texttt{sketch}_{\ell}.\textsc{Query}(u)$.

\subparagraph{Number of coordinates changed in $\ox$ during \textsc{Approximate}.}

In \cref{line:D-induced-ox-change} of \textsc{Approximate} in \cref{algo:maintain-vector},
$\ox$ is updated in every coordinate $e$ where $\md_e$ differs compared to the previous step.

Next, the procedure collect a set of coordinates for which we update $\ox$,
by calling \textsc{FindLargeCoordinates}$(\ell)$ for each $0 \leq \ell \leq \ell_k$,
where $\ell_k$ is defined to be the number of trailing zeros in the binary representation of $k$. 
(These are exactly the values of $\ell$ such that $k \equiv 0 \mod 2^\ell$).
In each call of \textsc{FindLargeCoordinates}$(\ell)$, 
There are $O(2^{2\ell} (\eta/\delta)^2 \log^2 m \log(m/\rho))$ iterations of the outer for-loop, 
and $O(1)$ iterations of the inner while-loop by the assumption of $\|\vx^{(k+1)}-\vx^{(k)}\|_{\md^{(k+1)}}\leq\beta$ and 
\cref{lem:change-detection}. 
Each iteration of the while-loop adds a $O(1)$ sized set to the collection $I$ of candidate coordinates.
So overall, \textsc{FindLargeCoordinates}$(\ell)$ returns a set of size
$O(2^{2\ell} (\eta/\delta)^2 \log^2 m \log(m/\rho))$.
Summing up over all calls of $\textsc{FindLargeCoordinates}$, the total size of the set of coordinates to update is
\begin{equation} \label{eq:N_k-defn}
N_k \defeq \sum_{\ell=0}^{\ell_k} O(2^{2\ell}(\beta/\delta)^{2}\log^{2}m\log(m/\rho))= O(2^{2\ell_{k}}(\beta/\delta)^{2}\log^{2}m).
\end{equation}
We define $\ell_0=N_0=0$ for convenience.

\subparagraph{Changes to sketching data structures.}

Let $\mathcal{S}^{(k)}$ denote the set of nodes $H$, where one of (when applicable) $\mm_{(H,P)}$, $\mj_H$, $\zprev|_{F_H}$, $\zsum|_{F_H}$, $\vy_{F_H}$, $\md_{E(H)}$ changes during step $k$. 
(They are entirely induced by changes in $\vv$ and $\vw$ at step $k$.)
We store $\mathcal{S}^{(k)}$ for each step.

For each $\ell$, the diagonal matrix $\overline{\md}$ is the same as $\md$,
except $\overline{\md}_{ii}$ is temporarily zeroed out for $2^\ell$ steps after $\overline{\vx}_i$ changes at a step. 
Thus, the number of coordinate changes to $\overline{\md}$ at step $k$ is the number of changes to $\md$, plus $N_{k-1}+N_{k-2^\ell}$:
$N_{k-1}$ entries are zeroed out because of updates to $\overline{\vx}_i$ in step $k-1$.
The $N_{k-2^\ell}$ entries that were zeroed out in step $k-2^\ell+1$ because of the update to $\overline{\vx}_i$ in step $k-2^\ell$ are back. 

Hence, at step $k$,
the updates to $\texttt{sketch}_{\ell}$ are induced by updates to $\overline{\md}$, and the updates to $\vx$ at step $k$, and at step $k-2^\ell$. 
The updates to the two $\vx$ terms are restricted to the nodes $\mathcal{S}^{(k-2^\ell)} \cup \mathcal{S}^{(k)}$ in $\ct$ for \cref{algo:maintain-sketch}.
Updates to $\texttt{ox\_cur}$ and $\texttt{ox\_prev}$ can be similarly analyzed.

\subparagraph{Runtime of \textsc{Approximate}.}
First, we consider the time to update each $\texttt{sketch}_{\ell}$:
At step $k$, the analysis above combined with \cref{lem:maintain-sketch} show that $\texttt{sketch}_{\ell}.\textsc{Update}$ with new iterations of the appropriate variables run in time
\begin{align*}
	&\phantom{{}={}} O\left(\sketchlen \cdot T \left(\eta \cdot  (|\mathcal{S}^{(k)}|+|\mathcal{S}^{(k-2^\ell)}|+N_{k-1}+N_{k-2^\ell} ) \right) \right) \\
	&\leq \sketchlen \cdot O\left(T (\eta \cdot  (|\mathcal{S}^{(k)}| +N_{k-1}+N_{k-2^\ell} )) \right) + \sketchlen \cdot O \left( T(\eta \cdot |\mathcal{S}^{(k-2^\ell)}| )\right),
\end{align*}
where we use the concavity of $T$. 
The second term can be charged to step $k-2^\ell$. 
Thus, the amortized time cost for $\texttt{sketch}_{\ell}.\textsc{Update}$ at step $k$ is
 \[
 \sketchlen \cdot O(T(\eta \cdot (|\mathcal{S}^{(k)}|+N_{k-1} + N_{k-2^{\ell_k}}))).
\]
Summing over all $0 \leq \ell \leq O(\log m)$ for the different copies of $\texttt{sketch}_{\ell}$, we get an extra $O(\log m)$ factor in the overall update time.

Similarly, we can update $\texttt{ox\_prev}$ and $\texttt{ox\_cur}$ in the same amortized time.

Next, we consider the runtime for \cref{line:D-induced-ox-change} in \cref{algo:maintain-vector}:
The number of coordinate accesses to $\vx^{(k-1)}$ is $|\{i: \md_{ii}^{(k)}-\md_{ii}^{(k-1)} \neq 0\}| = O(\mathcal{S}^{(k)})$. Each coordinate is computed by calling $\texttt{ox\_cur}.\textsc{Query}$, and by \cref{lem:maintain-sketch}, the total time for these updates is $\sketchlen \cdot O(T(\eta \cdot |\mathcal{S}^{(k)}|)$.

Finally, we analyze the remainder of the procedure, which consists of \textsc{FindLargeCoordinates}($\ell$) for each $0 \leq \ell \leq \ell_k$ and the subsequent updates to entries of $\ox$:
For each \textsc{FindLargeCoordinates}$(\ell)$ call, by \cref{lem:change-detection},  $N_{k,\ell} \defeq \Theta(2^{2\ell} (\beta/\delta)^2 \log^2 m \log (m/\rho))$ sampling paths are explored in the $\texttt{sketch}_{\ell}$ data structure, where each sampling path correspond to one iteration of the while-loop.
We calculate $\|\mphi_{E(H)} \vx_{\ell} \|_{2}^{2}$ at a node $H$ in the sampling path
using $\texttt{sketch}_{\ell}.\textsc{Estimate}(H)$, and at a leaf node $H$
using $\texttt{sketch}_{\ell}.\textsc{Query}(H)$. 
The total time is $w\cdot O(T(\eta \cdot N_{k,\ell}))$ by \cref{lem:maintain-sketch}.
To update a coordinate $i \in E(H)$ that was identified to be large, we can refer to the output of $\texttt{sketch}_{\ell}.\textsc{Query}(H)$ from the sampling step.

Summing over each $0 \leq \ell \leq \ell_k$, we see that the total time for the \textsc{FindLargeCoordinates} calls and the subsequent updates fo $\ox$ is
\[
	\sum_{\ell=0}^{\ell_k} \sketchlen \cdot O(T(\eta \cdot N_{k,\ell})) = 
	\sketchlen \cdot O(T(\eta \cdot N_{k})),
\]
where $N_k$ is the number of coordinates that are updated in $\ox$ as shown in \cref{eq:N_k-defn}.

Combined with the update times, we conclude that the total amortized cost of \textsc{Approximate} at step $k$ is
\begin{align*}
&\Theta(\eta^2\log (\frac{m}{\rho})\log m) \cdot T(\eta \cdot (|\mathcal{S}^{(k)}|+N_{k-1} + N_{k - 2^{\ell_k}})).
\end{align*}
Observe that $N_{k-1} = N_{k-2^0}$ and $N_{k-2^\ell}$ are both bounded by $O(N_{k-2^{\ell_k}})$:
When $\ell\neq \ell_k$, the number of trailing zeros in $k-2^\ell$ is no more than $\ell_k$. When $\ell=\ell_k$, the number of trailing zeros of $k-2^{\ell_k}$ is $\ell_{k-2^{\ell_k}}$. In both cases, $\ell_{k-2^{\ell}}\le \ell_{k-2^{\ell_k}}$. 
So we have the desired overall runtime.
\end{proof}

\section{Slack projection} \label{sec:slack_projection}

In this section, we define the slack tree operator as required to use \textsc{MaintainRep}.
We then give the full slack maintenance data structure.

\subsection{Tree operator for slack}

The full slack update at IPM step $k$ with step direction $\vv^{(k)}$ and step size $\bar{t} h$ is
\[
	\vs\leftarrow\vs+\mw^{-1/2}\widetilde{\mproj}_{\vw} (\bar{t} h \vv^{(k)}),
\]
where we require $\widetilde{\mproj}_{\vw} \approx \mproj_{\vw}$ and $\widetilde \mproj_{\vw}\vv^{(k)} \in \range{\mw^{1/2} \mb}$.

Let $\widetilde{\ml}^{-1}$ denote the approximation of $\ml^{-1}$ from \cref{eq:overview_Linv_approx}, maintained and computable with a \textsc{DynamicSC} data structure. If we define
\[
\widetilde{\mproj}_{\vw} = \mw^{1/2} \mb \widetilde{\ml}^{-1} \mb^\top \mw^{1/2} = 
\mw^{1/2} \mb \mpi^{(0)\top} \cdots \mpi^{(\eta-1)\top} \widetilde\mga \mpi^{(\eta-1)} \cdots \mpi^{(0)} \mb^\top \mw^{1/2}.
\]
then $\widetilde{\mproj}_{\vw} \approx_{\eta \epssc} \mproj_{\vw}$, and $\range{\widetilde{\mproj}_{\vw}}=\range{\mproj_{\vw}}$ by definition, where $\eta$ and $\epssc$ are parameters in \textsc{DynamicSC}.
Hence, this suffices as our approximate slack projection matrix.
In order to use \textsc{MaintainRep} to maintain $\vs$ throughout the IPM, it remains to define a slack tree operator $\mm^{\slack}$ so that
\[
	\mw^{-1/2} \widetilde \mproj_{\vw} \vv^{(k)} = \mm^{\slack} \vz^{(k)},
\]
where $\vz^{(k)} \defeq \widetilde{\mga} \mpi^{(\eta-1)} \cdots \mpi^{(0)} \mb^\top \mw^{1/2} \vv^{(k)}$ at IPM step $k$.
We proceed by defining a tree operator $\mm$ satisfying $\widetilde \mproj_{\vw} \vv^{(k)} = \mm \vz^{(k)}$. Namely, we show that $\mm \defeq \mw^{1/2} \mb \mpi^{(0) \top} \cdots \mpi^{(\eta - 1) \top}$ is indeed a tree operator. 
Then we set $\mm^{\slack} \defeq \mw^{-1/2} \mm$.

For the remainder of the section, we abuse notation and use $\vz$ to mean $\vz^{(k)}$ for one IPM step $k$.

\begin{definition}[Slack projection tree operator] \label{defn:slack-forest-operator}
	Let $\ct$ be the separator tree from data structure \textsc{DynamicSC}, with Laplacians $\ml^{(H)}$ and $\tsc(\ml^{(H)}, \partial H)$ at each node $H \in \ct$.
	We use $\mb[H]$ to denote the adjacency matrix of $G$ restricted to the region.
	
	For a node $H \in \ct$, define $V(H)$ and $F_H$ required by the tree operator as $\bdry{\region} \cup F_H$ and $F_H$ from the separator tree construction respectively.
	Note the slightly confusing fact that $V(H)$ \emph{is not} the set of vertices in region $H$ of the input graph $G$, \emph{unless} $H$ is a leaf node.
	Suppose node $H$ has parent $P$, then define the tree edge operator $\mm_{(H, P)} : \R^{V(P)} \mapsto \R^{V(H)}$ as:
	\begin{equation}
		\label{eq:slack-forest-op-defn}
		\mm_{(H,P)} \defeq \mi_{\bdry{H} \cup F_H}-\left(\ml^{(H)}_{\elim{H},\elim{H}}\right)^{-1} \ml^{(H)}_{\elim{H}, \bdry{H}} = \mi_{\bdry{H} \cup F_H} - \mx^{(H) \top},
	\end{equation}
	where $\mx^{(H)}$ is defined in \cref{def:mx^(H)}.

	At each leaf node $H$ of $\ct$, define the leaf operator $\mj_H=\mw^{1/2} \mb[H]$. 
\end{definition}

The remainder of this section proves the correctness of the tree operator.

\begin{lemma}
\label{lem:slack-operator-correctness}
Let $\mm$ be the tree operator as defined in \cref{defn:slack-forest-operator}. We have $$\mm\vz=\mw^{1/2}\mb \mpi^{(0) \top} \cdots \mpi^{(\eta - 1) \top}\vz.$$ 
\end{lemma}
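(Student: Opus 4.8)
The goal is to verify that the tree operator $\mm$ defined in \cref{defn:slack-forest-operator}, when applied to $\vz$, reproduces the matrix product $\mw^{1/2}\mb\,\mpi^{(0)\top}\cdots\mpi^{(\eta-1)\top}\vz$. Unwinding \cref{defn:mm}, we have
\[
\mm\vz = \sum_{\text{leaf }H,\text{ node }A\,:\,H\in\ct_A} \mj_H \mm_{H\leftarrow A}\,\vz|_{F_A}
= \sum_{\text{leaf }H,\text{ node }A\,:\,H\in\ct_A} \mw^{1/2}\mb[H]\,\mm_{H\leftarrow A}\,\vz|_{F_A},
\]
so after factoring out $\mw^{1/2}$ on the left it suffices to show the $\mb[\cdot]$-weighted sum over leaf--ancestor pairs equals $\mb\,\mpi^{(0)\top}\cdots\mpi^{(\eta-1)\top}\vz$. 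The natural approach is a level-by-level induction on the separator tree, matching the edge operators $\mm_{(H,P)} = \mi - \mx^{(H)\top}$ against the factors $\mpi^{(i)\top} = \mi - \sum_{H\in\ct(i)}\mx^{(H)\top}$.

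\textbf{Key steps.} First I would set $\mpi^{(\eta)}\defeq\mi$ for bookkeeping and prove the following two-part claim by descending induction on the level $i$ from $\eta$ down to $0$: for every $0\le i\le\eta$,
\[
\sum_{\substack{\eta(H)=i\\ A\,:\,H\in\ct_A}} \mm_{H\leftarrow A}\,\vz|_{F_A} \;=\; \mpi^{(i)\top}\cdots\mpi^{(\eta)\top}\!\!\sum_{\eta(H)\ge i}\vz|_{F_H},
\]
and the same identity with $\mb[H]$ inserted in front of $\mm_{H\leftarrow A}$ on the left and $\mb$ inserted in front of $\mpi^{(i)\top}$ on the right. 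The base case $i=\eta$ is immediate since the root $G$ is the only node at level $\eta$ and $\mm_{G\leftarrow G}=\mi$. For the inductive step I would split $\sum_{\eta(H)\ge i}\vz|_{F_H}$ into the level-$i$ part and the higher part; use that $\mpi^{(j)}$ acts as identity on $\vz|_{F_H}$ when $j\ge\eta(H)$ (since $\mx^{(H')}$ for $H'$ at level $\ge i$ is supported on $F_{H'}$, disjoint from $F_H$ for $H$ at a strictly lower level in the relevant sense) to peel off the level-$i$ term cleanly; then apply $\mpi^{(i)\top}$ to the inductive expression; and finally use the orthogonality relation $\mx^{(H)}\mx^{(H')}=\vzero$ for $H,H'$ not in an ancestor relation — which lets $(\mi-\mx^{(H)\top})$ act nontrivially only on the term coming from $H$'s child — together with the telescoping $(\mi-\mx^{(H)\top})\mm_{H'\leftarrow A}=\mm_{H\leftarrow A}$ whenever $H'$ is the parent of $H$ and $A$ an ancestor. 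The $\mb$-decorated identity runs in parallel, with the extra subtlety that $\mb$ may be replaced by $\mb[H]$ whenever it multiplies a vector supported on $V(H)=\partial H\cup F_H$, and that $\range{\mm_{H'\leftarrow A}}\subseteq \partial H'\cup F_{H'}\subseteq \partial H$ for $H$ a child of $H'$, so the row restrictions match up. Setting $i=0$ in the $\mb$-decorated identity gives exactly $\sum_{\text{leaf }H,\,A:\,H\in\ct_A}\mb[H]\mm_{H\leftarrow A}\vz|_{F_A}=\mb\,\mpi^{(0)\top}\cdots\mpi^{(\eta)\top}\vz$, and since $\mpi^{(\eta)}=\mi$ and $\sum_{\eta(H)\ge 0}\vz|_{F_H}=\vz$ (as the $F_H$'s partition $V(G)$ by \cref{lem:F_partitions}), multiplying by $\mw^{1/2}$ yields the lemma.

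\textbf{Main obstacle.} The delicate point is handling the bookkeeping of which $\mx^{(H)}$ factors survive when composed: the commutativity/annihilation facts $\mx^{(H)}\mx^{(H')}=\vzero$ for incomparable $H,H'$ (used already in \cref{lem:project_by_node}) must be invoked carefully to argue that expanding $\mpi^{(i)\top}=\mi-\sum_{H\in\ct(i)}\mx^{(H)\top}$ against the inductively-built product only contributes the single "child" term and not cross terms, and that the level-$i$ summand is genuinely untouched by the higher $\mpi^{(j)\top}$'s. The second, $\mb$-decorated, identity is where most of the care goes, because one must repeatedly justify replacing $\mb$ by the region incidence matrix $\mb[H]$ using the support/range containments from the separator-tree construction (\cref{lem:parent-child-boundry-relation}, \cref{lem:bdry-containment}); everything else is routine linear algebra once the level induction is set up correctly.
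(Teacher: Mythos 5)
Your reduction to showing $\sum_{\text{leaf }H,\,A:\,H\in\ct_A}\mb[H]\mm_{H\leftarrow A}\vz|_{F_A}=\mb\,\mpi^{(0)\top}\cdots\mpi^{(\eta-1)\top}\vz$ is the right target, but the level-indexed induction you propose rests on a false intermediate claim. Consider the smallest nontrivial tree: a root $G$ with two leaf children $H_1,H_2$, so that $F_G=S(G)$ and $\partial H_1=\partial H_2=S(G)$. Your first (undecorated) identity at $i=0$ asserts $\sum_{j}\bigl(\vz|_{F_{H_j}}+(\mi_{\partial H_j\cup F_{H_j}}-\mx^{(H_j)\top})\vz|_{F_G}\bigr)=\mpi^{(0)\top}\vz$. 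The left side contains $\mi_{\partial H_1\cup F_{H_1}}\vz|_{F_G}+\mi_{\partial H_2\cup F_{H_2}}\vz|_{F_G}=2\,\vz|_{F_G}$, while the right side contains $\vz|_{F_G}$ exactly once; the two sides differ by $\vz|_{F_G}$. The defect is structural: siblings overlap on their parent's separator, so $\sum_{\eta(H)=i}\mi_{\partial H\cup F_H}$ double-counts those shared vertices, and the identity blocks $\mi_{\partial H\cup F_H}$ of the edge operators pick this up. This matters for your main argument because the inductive step for the $\mb$-decorated identity substitutes the undecorated one into the term $-\bigl(\sum_{\eta(H)=i}\mb[H]\mx^{(H)\top}\bigr)(\cdots)$, and $\mx^{(H)\top}$ has domain $\partial H$, which meets the support of the error ($\partial H$ contains separator vertices of ancestors), so the error is not annihilated. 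A second, independent problem is that your induction implicitly pairs each level-$i$ node with a parent at level $i+1$, whereas the paper stresses that $\eta(\mathrm{p}(H))$ may exceed $\eta(H)+1$; when, say, one child of the root is a leaf, the nodes at an intermediate level $i$ do not cover all edges, so even the $\mb$-decorated identity at that level is false (its left side vanishes on the edges of that leaf while its right side does not).

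The paper's proof takes a genuinely different route that avoids both issues: it never sums over a level. \cref{lem:slack-edge-operator-correctness} establishes the per-path identity $\mm_{H\leftarrow A}\vz|_{F_A}=\mi_{\partial H\cup F_H}\,\mpi^{(0)\top}\cdots\mpi^{(\eta-1)\top}\vz|_{F_A}$ for a \emph{single} leaf $H$ and a \emph{single} node $A$ with $H\in\ct_A$, keeping the restriction $\mi_{\partial H\cup F_H}$ attached throughout; level-skipping is absorbed by \cref{obs:img-mpi}, which shows the intermediate $\mpi^{(j)\top}$'s only add vectors supported off $\partial H\cup F_H$. Only at the very end are contributions merged across leaves, where $\mb[H]\mi_{\partial H\cup F_H}=\mb[H]$ and $\sum_{\text{leaf }H}\mb[H]=\mb$ combine without double counting because the leaf regions are edge-disjoint (though not vertex-disjoint). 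To salvage your approach you would need to carry the per-node restrictions explicitly and index the induction by the frontier $\{H:\eta(H)\le i<\eta(\mathrm{p}(H))\}$ rather than by exact level; as written, the argument does not go through.
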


We begin with a few observations about the $\mpi^{(i)}$'s:
\begin{observation} \label{obs:img-mpi}
For any $0 \leq i < \eta$, and for any vector $\vx$, we have 
$\mpi^{(i) \top} \vx = \vx + \vy_i$, where $\vy_i$ is a vector supported on $F_i = \cup_{H \in \ct(i)} F_H$.
Extending this observation, for $0 \leq i < j < \eta$,
\[
	\mpi^{(i)\top} \cdots \mpi^{(j-1)\top}\vx = \vx + \vy,
\]
where $\vy$ is a vector supported on $F_i \cup \cdots \cup F_{j-1} = \cup_{H : i \leq \eta(H) < j} F_H$.
Furthermore, if $\vx$ is supported on $F_A$ for $\eta(A) = j$, then $\vy$ is supported on $\cup_{H \in \ct_A} F_H$.
\end{observation}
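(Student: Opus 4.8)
The plan is to prove all three assertions by peeling off the operators one level at a time, using only the definition $\mpi^{(i)} = \mi - \sum_{H \in \ct(i)} \mx^{(H)}$ (\cref{def:approx_mproj}) together with one structural fact about $\mx^{(H)}$. Since $\mx^{(H)} = \ml^{(H)}_{\bdry{H},F_H}(\ml^{(H)}_{F_H,F_H})^{-1}$ has row-support $\bdry{H}$ and column-support $F_H$ (\cref{def:mx^(H)}), its transpose $\mx^{(H)\top}$ sends any vector $\vu$ to a vector supported on $F_H$ that depends only on $\vu|_{\bdry{H}}$. Transposing the definition, $\mpi^{(i)\top} = \mi - \sum_{H\in\ct(i)}\mx^{(H)\top}$, and all of this is happening in the $n\times n$ padded convention.

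For the first assertion I would simply write $\mpi^{(i)\top}\vx = \vx + \vy_i$ with $\vy_i \defeq -\sum_{H \in \ct(i)}\mx^{(H)\top}\vx$; each summand is supported on $F_H$, and since the sets $\{\elim{H} : H \in \ct\}$ partition $V(G)$ (\cref{lem:F_partitions}) the sum is supported on $F_i = \bigcup_{H\in\ct(i)}\elim{H}$. For the extended statement I would induct on $j-i$ in descending order of the leading index: assuming $\mpi^{(i+1)\top}\cdots\mpi^{(j-1)\top}\vx = \vx + \vy'$ with $\vy'$ supported on $F_{i+1}\cup\cdots\cup F_{j-1}$, apply the first assertion to the vector $\vx+\vy'$ to get $\mpi^{(i)\top}(\vx+\vy') = (\vx+\vy') + \vy''$ with $\vy''$ supported on $F_i$; then $\vy \defeq \vy'+\vy''$ has the required support.

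The refinement --- when $\vx$ is supported on $\elim{A}$ with $\heightN{A} = j$ --- is where the tree structure enters. Here I would strengthen the inductive hypothesis: after applying $\mpi^{(l)\top}\cdots\mpi^{(j-1)\top}$, the correction vector should be supported on $\bigcup\{\elim{H} : H \in \ct_A,\ l \le \heightN{H} < j\}$, so that, together with $\vx$ (supported on $\elim{A} \subseteq \bigcup_{H \in \ct_A}\elim{H}$), the running vector is supported on $\bigcup_{H\in\ct_A}\elim{H}$. The one fact driving the induction is: for any node $H$ with $\heightN{H} < j$ and any $\vu$ supported on $\bigcup_{H'\in\ct_A}\elim{H'}$, the vector $\mx^{(H)\top}\vu$ is nonzero only if $H\in\ct_A$. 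Indeed $\mx^{(H)\top}\vu$ depends only on $\vu|_{\bdry{H}}$, so nonvanishing forces $\bdry{H}$ to meet some $\elim{H'}$ with $H'\in\ct_A$; but $\bdry{H}\subseteq\bigcup_{A'\text{ ancestor of }H}\elim{A'}$ (\cref{lem:bdry-containment}) and the $\elim{\cdot}$ sets are pairwise disjoint (\cref{lem:F_partitions}), so $H'$ must be an ancestor of $H$; combined with $H'\in\ct_A$ this gives $H\in\ct_A$ with $H\ne A$ (the latter since $\heightN{H}<\heightN{A}$). Thus only terms $\mx^{(H)\top}$ with $H\in\ct_A$ survive at each level, each landing in $\elim{H}$ with $H\in\ct_A$, which closes the strengthened induction; taking $l = i$ yields the claim.

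The routine part is the descending induction; the only place requiring care --- and the main obstacle --- is keeping the support bookkeeping correct in the refinement, i.e. making precise the interplay between \cref{lem:bdry-containment} (boundaries lie in ancestor $F$-sets) and the partition property \cref{lem:F_partitions} that pins down exactly which $\mx^{(H)\top}$ can be nonzero at each level.
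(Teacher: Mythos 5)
Your proof is correct. The paper states this as an observation without an explicit proof, and your argument supplies exactly the reasoning the surrounding text implicitly relies on: $\mpi^{(i)\top}=\mi-\sum_{H\in\ct(i)}\mx^{(H)\top}$ with each $\mx^{(H)\top}$ mapping into coordinates $F_H$ and depending only on the restriction to $\bdry{H}$, combined with the disjointness of the $F_H$'s (\cref{lem:F_partitions}) and $\bdry{H}\subseteq\bigcup_{\text{ancestor }A'\text{ of }H}F_{A'}$ (\cref{lem:bdry-containment}) to pin down, in the refined claim, that only terms with $H\in\ct_A$ can act nontrivially.
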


The following helper lemma describes a sequence of edge operators from a node to a leaf.

\begin{lemma}
\label{lem:slack-edge-operator-correctness}
For any leaf node $H\in \ct$, and a node $A$ with $H \in \ct_A$ ($A$ is an ancestor of $H$ or $H$ itself), we have 
\begin{equation} \label{eq:slack-edge-path}
\mm_{H\leftarrow A}\vz|_{\elim{A}} = \mi_{\partial H \cup F_H} \mpi^{(0) \top} \cdots \mpi^{(\eta - 1) \top}\vz|_{F_A}.
\end{equation}
\end{lemma}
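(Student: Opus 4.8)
The plan is to prove \cref{lem:slack-edge-operator-correctness} by induction on the length of the tree path from $A$ down to the leaf $H$, which is what makes the telescoping of edge operators $\mm_{(H_1,H_2)}\mm_{(H_2,H_3)}\cdots\mm_{(H_{t-1},H_t)}$ line up with the telescoping product $\mpi^{(0)\top}\cdots\mpi^{(\eta-1)\top}$ of the (transposed) approximate elimination matrices. The base case is $A = H$ (path length $1$), where $\mm_{H\leftarrow A}=\mi$ by definition of a tree path of length one, so the left side is $\vz|_{F_H}$; on the right side, \cref{obs:img-mpi} shows that $\mpi^{(0)\top}\cdots\mpi^{(\eta-1)\top}\vz|_{F_A}$ differs from $\vz|_{F_A}$ only on $\cup_{D\in\ct_A}F_D$, but since $A=H$ is a leaf, $\ct_A=\{H\}$, so the only contribution is on $F_H$ itself, and restricting to $\partial H\cup F_H \supseteq F_H$ recovers exactly $\vz|_{F_H}$. (One must also check that $\mpi^{(j)\top}$ for $\eta(H) \le j$ acts as the identity on $\vz|_{F_H}$; this follows since $\mx^{(H')}$ for $\eta(H')=j$ maps into $F_{H'}$ and out of $\partial H'$, and $F_H \cap F_{H'} = \emptyset$ for distinct nodes at the same or different levels.)

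For the inductive step, write the tree path as $A = H_t \to H_{t-1} \to \cdots \to H_1 = H$, and let $P$ be the child of $A$ on this path (so $P = H_{t-1}$). Then $\mm_{H\leftarrow A} = \mm_{H\leftarrow P}\,\mm_{(P,A)}$, and by the inductive hypothesis applied to the shorter path $P\to\cdots\to H$ it suffices to understand $\mm_{(P,A)}\vz|_{F_A}$. By \cref{eq:slack-forest-op-defn}, $\mm_{(P,A)} = \mi_{\partial P\cup F_P} - \mx^{(P)\top}$, where $\mx^{(P)} = \ml^{(P)}_{\partial P,F_P}(\ml^{(P)}_{F_P,F_P})^{-1}$; note $\mx^{(P)\top}$ maps $\R^{F_P}\to\R^{\partial P}$. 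The key algebraic claim is that $\mm_{(P,A)}\vz|_{F_A}$ equals $\mi_{\partial P\cup F_P}$ applied to $(\mi - \mx^{(P)\top})\mpi^{(0)\top}\cdots\mpi^{(\eta-1)\top}\vz|_{F_A}$, and more to the point that left-multiplying by further $\mi_{\partial H\cup F_H}$ (as the inductive hypothesis for $\mm_{H\leftarrow P}$ does) recovers $\mi_{\partial H\cup F_H}\mpi^{(0)\top}\cdots\mpi^{(\eta-1)\top}\vz|_{F_A}$. This works because: (i) $\partial H \cup F_H \subseteq \partial P \cup F_P$ by \cref{lem:parent-child-boundry-relation} and \cref{lem:bdry-containment}, so the successive restrictions are consistent; (ii) the term $\mx^{(P)\top}$ in $\mpi^{(\eta(P))\top}$ is precisely the one that acts on $\vz|_{F_A}$ when one expands $\mpi^{(\eta(P))\top}$ via \cref{def:approx_mproj} and uses that $\vz|_{F_A}$ is supported on $F_A\subseteq C_{\eta(P)-1}$ with only the $H=P$ summand surviving the restriction to $\partial H\cup F_H$; and (iii) the other elimination factors $\mpi^{(j)\top}$ for $j\neq \eta(P)$ contribute only on coordinates outside $\partial P\cup F_P$ or act as identity there, by the same disjointness of $F$-sets used in the base case, hence vanish after the restriction $\mi_{\partial H\cup F_H}$.

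The main obstacle I anticipate is bookkeeping the support structure carefully enough that the restrictions $\mi_{\partial H\cup F_H}$ at each level genuinely kill all the ``extra'' terms produced by the product $\mpi^{(0)\top}\cdots\mpi^{(\eta-1)\top}$ that do not appear along the single tree path $A\to\cdots\to H$ — i.e., terms coming from sibling subtrees or from nodes off the path. This is exactly the kind of argument already carried out in \cref{claim:partial_project_mpi} and \cref{lem:project_by_node} for the untransposed operators, so the cleanest route is probably to \emph{transpose} those results: observe that $\mpi^{(0)\top}\cdots\mpi^{(\eta-1)\top}\vz|_{F_A} = (\mpi^{(\eta-1)}\cdots\mpi^{(0)})^\top \vz|_{F_A}$, invoke \cref{lem:project_by_node} to rewrite $\mpi^{(\eta-1)}\cdots\mpi^{(0)}$ as a product $(\mi - \mx^{(H_1')})\cdots(\mi-\mx^{(H_r')})$ over $\pathT{\collN}$ with $\collN = \{A\}$ (so $\pathT{\collN}$ is exactly the root-to-$A$ path, and after restriction only the subtree below matters), and then transpose and restrict. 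Then \cref{lem:slack-operator-correctness} follows immediately by summing \cref{eq:slack-edge-path} over all leaves $H$ and all ancestors $A$ with $H\in\ct_A$, applying the leaf operator $\mj_H = \mw^{1/2}\mb[H]$, and using $\sum_{\text{leaf }H}\mb[H]\mi_{\partial H\cup F_H} = \mb$ together with $\sum_A \vz|_{F_A} = \vz$ (since the $F_A$ partition $V(G)$ by \cref{lem:F_partitions}) to collapse the double sum into $\mw^{1/2}\mb\,\mpi^{(0)\top}\cdots\mpi^{(\eta-1)\top}\vz$.
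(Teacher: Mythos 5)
Your overall strategy — induct along the tree path and match each edge operator $\mi_{\bdry{H_k}\cup\elim{H_k}}-\mx^{(H_k)\top}$ against the corresponding factor of $\mpi^{(0)\top}\cdots\mpi^{(\eta-1)\top}$, using support arguments (\cref{obs:img-mpi}, \cref{lem:bdry-containment}) to kill the off-path and intermediate-level contributions — is exactly the paper's, and your base case and ``key algebraic claim'' are the right computations. However, the induction as you have set it up has a genuine gap: you peel off the top edge via $\mm_{H\leftarrow A}=\mm_{H\leftarrow P}\,\mm_{(P,A)}$ and then invoke ``the inductive hypothesis applied to the shorter path $P\to\cdots\to H$.'' But the lemma for the pair $(H,P)$ only describes $\mm_{H\leftarrow P}$ acting on the specific vector $\vz|_{\elim{P}}$, whereas what you need is its action on $\mm_{(P,A)}\vz|_{\elim{A}}$, a different vector supported on all of $\bdry{P}\cup\elim{P}$. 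The statement being inducted on must be strengthened to track this: the paper fixes the pair $(H,A)$ and proves, by downward induction on $k$ along the path $H_t=A,\dots,H_0=H$, that $\mm_{H_k\leftarrow A}\vz|_{\elim{A}}=\mi_{\bdry{H_k}\cup\elim{H_k}}\mpi^{(\eta(H_k))\top}\cdots\mpi^{(\eta(A)-1)\top}\vz|_{\elim{A}}$, which is the form of hypothesis your step actually consumes. This is a repair, not a rewrite, but as written the induction does not close.

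Your proposed ``cleanest route'' via transposing \cref{lem:project_by_node} does not work, for two separate reasons. First, that lemma gives a factorization of $\mpi^{(\eta-1)}\cdots\mpi^{(0)}$ only as an action on the particular input $\vd$ (one whose support is confined to $\elim{H}$ for $H\in\collN$); it is not an operator identity, so you cannot transpose it to conclude anything about $\mpi^{(0)\top}\cdots\mpi^{(\eta-1)\top}\vz|_{\elim{A}}$ — from $\mm\vv=\mathbf{N}\vv$ one cannot infer $\mm^\top\vv=\mathbf{N}^\top\vv$. Second, $\pathT{\{A\}}$ consists of $A$ and its \emph{ancestors} up to the root, whereas the factors you need here are $\mx^{(H_k)\top}$ for the \emph{descendants} of $A$ down to the leaf $H$; the factorization would therefore involve the wrong set of nodes entirely and could not reproduce the edge operators $\mm_{(H_k,H_{k+1})}$. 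The direct induction (with the strengthened hypothesis above) is the correct path; your concluding derivation of \cref{lem:slack-operator-correctness} from the lemma is then fine, modulo noting that terms with $H\notin\ct_A$ vanish after the restriction $\mi_{\bdry{H}\cup\elim{H}}$.
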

\begin{proof}
For simplicity of notation, let $V(H) \defeq \partial H \cup F_H$ for a node $H$.

To start, observe that for a node $A$ at level $\eta(A)$, we have $\mpi^{(i)} \vz|_{F_A} = \vz|_{F_A}$ for all $i \geq \eta(A)$. So it suffices to prove 
\[
\mm_{H\leftarrow A}\vz|_{\elim{A}} = \mi_{V(H)} \mpi^{(0) \top} \cdots \mpi^{(\eta(A)-1) \top} \vz|_{F_A}.
\]

Let the path from leaf $H$ up to node $A$ in $\ct$ be denoted $(H_0 \defeq H, H_1, \ldots, H_t \defeq A)$, for some $t\le \eta(A)$. We will prove by induction for $k$ decreasing from $t$ to $0$:
\begin{equation} \label{eq:slack-op-induction}
\mm_{H_k\leftarrow A}\vz|_{\elim{A}} = \mi_{V(H_k)} \mpi^{(\eta(H_k)) \top} \mpi^{(\eta(H_k) + 1)\top} \cdots \mpi^{(\eta(A)-1) \top}\vz|_{\elim{A}}.
\end{equation}

For the base case of $H_t = A$, we have $\mm_{H_t \leftarrow A} \vz|_{\elim{A}} = \vz|_{\elim{A}} = \mi_{V(H_t)} \vz|_{\elim{A}}$.

For the inductive step at $H_k$, we first apply induction hypothesis for $H_{k+1}$ to get
\begin{align} 
\mm_{H_{k+1}\leftarrow A}\vz|_{\elim{A}} &= \mi_{V(H_{k+1})} \mpi^{(\eta(H_{k+1})) \top} \cdots \mpi^{(\eta(A)-1) \top}\vz|_{\elim{A}}.
\intertext{
	Multiplying by the edge operator $\mm_{(H_{k}, H_{k+1})}$ on both sides gives}
\mm_{H_{k}\leftarrow A}\vz|_{\elim{A}} &= \mm_{(H_k, H_{k+1})} \mi_{V(H_{k+1})} \mpi^{(\eta(H_{k+1})) \top} \cdots \mpi^{(\eta(A)-1) \top}\vz|_{\elim{A}}.
\end{align}
Recall the edge operator $\mm_{(H_k, H_{k+1})}$ maps vectors supported on $V(H_{k+1})$ to vectors supported on $V(H_k)$ and zeros otherwise. So we can drop the $\mi_{V(H_{k+1})}$ term in the right hand side.
Let $\vx \defeq \mpi^{(\eta(H_{k+1})) \top} \cdots \mpi^{(\eta(A)-1) \top}\vz|_{\elim{A}}$.
Now, by the definition of the edge operator, the above equation becomes
\begin{equation} \label{eq:slack-op}
\mm_{H_{k}\leftarrow A}\vz|_{\elim{A}} = (\mi_{V(H_k)} - \mx^{(H_k) \top})\vx.
\end{equation}
On the other hand, we have
 \begin{align*}
\mi_{V(H_k)} \mpi^{(\eta(H_k))\top} \cdots \mpi^{(\eta(H_{k+1})-1)\top} \vx &= 
\mi_{V(H_k)}  \mpi^{(\eta(H_k))\top} \left(\mpi^{(\eta(H_k)+1)\top} \cdots \mpi^{(\eta(H_{k+1})-1)\top} \vx \right)\\
&= \mi_{V(H_k)}  \mpi^{(\eta(H_k))\top} (\vx + \vy),
\intertext{
	where $\vy$ is a vector supported on $\cup F_R$ for nodes $R$ at levels $\eta(H_k)+1, \cdots, \eta(H_{k+1})-1$ by \cref{obs:img-mpi}.
	In particular, $\vy$ is zero on $F_{H_k}$. Also, $\vy$ is zero on $\partial H_k$, since by \cref{lem:bdry-containment}, $\partial H_k \subseteq \cup_{\text{ancestor $A'$ of $H_k$}} F_{A'}$, and ancestors of $H_k$ are at level $\eta(H_{k+1})$ or higher.
	Then $\vy$ is zero on $V(H_k) = \partial H_k \cup F_{H_k}$, and the right hand side is
}
&= (\mi_{V(H_k)} - \mx^{(H_k) \top}) \vx,
 \end{align*}
where we apply the definition of $\mpi^{(\eta(H_k))\top}$ and expand the left-multiplication by $\mi_{V(H_k)}$.
	
Combining with \cref{eq:slack-op} and substituting back the definition of $\vx$, we get
\[
\mm_{H_{k}\leftarrow A}\vz|_{\elim{A}} = \mi_{V(H_{k})} \mpi^{(\eta(H_{k})) \top} \cdots \mpi^{(\eta(A)-1) \top}\vz|_{\elim{A}}.
\]
which completes the induction.

\end{proof}

To prove \cref{lem:slack-operator-correctness}, we apply the leaf operators to the result of the previous lemma and sum over all nodes and leaf nodes.

\begin{proof}[Proof of \cref{lem:slack-operator-correctness}]
Let $H$ be a leaf node. We sum \cref{eq:slack-edge-path} over all $A$ with $H \in \ct_A$ to get
\begin{align*}
	\sum_{A : H \in \ct_A} \mm_{H\leftarrow A}\vz|_{\elim{A}} &= \mi_{\partial H \cup F_H} \sum_{A : H \in \ct_A} \mpi^{(0) \top} \cdots \mpi^{(\eta - 1) \top}\vz|_{F_A}\\
	&= \mi_{\partial H \cup F_H} \mpi^{(0) \top} \cdots \mpi^{(\eta - 1) \top}\vz,
\end{align*}
where we relax the sum in the right hand side to be over all nodes in $\ct$, since by \cref{obs:img-mpi}, for any $A$ with $H \notin \ct_A$, we simply have $\mi_{\partial H \cup F_H} \mpi^{(0)\top} \cdots \mpi^{(\eta-1)\top} \vz|_{F_A} = \vzero$. Next, we apply the leaf operator $\mj_H = \mw^{1/2} \mb[H]$ to both sides to get
\begin{align*}
	\sum_{A : H \in \ct_A} \mj_H \mm_{H\leftarrow A}\vz|_{\elim{A}}  &= \mw^{1/2}\mb[H]  \mi_{\partial H \cup F_H} \mpi^{(0) \top} \cdots \mpi^{(\eta - 1) \top}\vz.
\end{align*}
Since $\mb[H]$ is zero on columns supported on $V(G) \setminus (\partial H \cup F_H)$, we can simply drop the $\mi_{\partial H \cup F_H}$ in the right hand side.

Finally, we sum up the equation above over all leaf nodes. The left hand side is precisely the definition of $\mm \vz$. Recall the regions of the leaf nodes partition the original graph $G$, so we have
\begin{align*}
\sum_{H \in \ct(0)} \sum_{A : H \in \ct_A} \mj_H \mm_{H\leftarrow A}\vz|_{\elim{A}} &= \mw^{1/2} \left(\sum_{H\in \ct(0)} \mb[H] \right) \mpi^{(0) \top} \cdots \mpi^{(\eta - 1) \top}\vz \\
\mm \vz &= \mw^{1/2} \mb \mpi^{(0) \top} \cdots \mpi^{(\eta - 1) \top}\vz.
\end{align*}
\end{proof}

We now examine the slack tree operator complexity.
\begin{lemma} \label{lem:slack_operator_complexity}
	The complexity of the slack tree operator as defined in \cref{defn:flow-forest-operator} is $T(k) = \otilde(\sqrt{mk}\cdot \epssc^{-2})$, where $\epssc$ is the Schur complement approximation factor from data structure \textsc{DynamicSC}.
\end{lemma}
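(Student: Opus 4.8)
The plan is to bound the cost of computing $\{\vu_e^\top \mm_e : e \in S\}$ and $\{\mm_e \vv_e : e \in S\}$ for a set $S$ of $k$ edges in $\ct$, where each $\mm_e = \mm_{(H,P)}$ is the slack edge operator $\mi_{\partial H \cup F_H} - (\ml^{(H)}_{F_H,F_H})^{-1} \ml^{(H)}_{F_H, \partial H}$ for the child endpoint $H$ of $e$, and the leaf operators $\mj_H = \mw^{1/2} \mb[H]$ are constant-time computable. First I would observe that multiplying a vector by $\mm_{(H,P)}$ reduces to: a copy/identity action on $\partial H \cup F_H$ (linear in $|\partial H \cup F_H|$ time), a matrix-vector multiply by $\ml^{(H)}_{F_H, \partial H}$ (or its transpose), and a Laplacian solve $(\ml^{(H)}_{F_H,F_H})^{-1}$ applied to a vector. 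From \cref{thm:apxsc}, the Laplacian $\ml^{(H)}$ is supported on $\partial H \cup F_H$ and has $\otilde(\epssc^{-2} |\partial H \cup F_H|)$ edges, so both the matrix-vector multiply and the SDD solve (via \cref{thm:laplacianSolver}) cost $\otilde(\epssc^{-2}|\partial H \cup F_H|)$ time. The same bound holds for $\vu^\top \mm_{(H,P)}$ by symmetry, since $\mm_{(H,P)}^\top = \mi_{\partial H \cup F_H} - \ml^{(H)}_{\partial H, F_H}(\ml^{(H)}_{F_H,F_H})^{-1}$ has the same structure.

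Next I would sum this per-edge cost over all $e \in S$. Identifying each edge $e$ with its child endpoint $H$, we get a set of $k$ nodes $\collN = \{H : e = (H,P) \in S\}$, and the total cost is
\[
\sum_{H \in \collN} \otilde(\epssc^{-2} |\partial H \cup F_H|) \leq \otilde\Bigl(\epssc^{-2} \sum_{H \in \pathT{\collN}} (|\bdry{H}| + |\elim{H}|)\Bigr) = \otilde(\epssc^{-2} \sqrt{mk}),
\]
where the final bound is \cref{lem:planarBoundChangeCost} applied to the $k$ nodes $\collN$ (we use $\pathT{\collN} \supseteq \collN$, so bounding over the larger set only helps). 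This gives the claimed $T(k) = \otilde(\sqrt{mk} \cdot \epssc^{-2})$. I should also note this function satisfies the required regularity properties from \cref{def:forestcost}: $T(0) = 0$, $T(k) \geq k$ (since $\sqrt{mk} \geq k$ for $k \leq m$, and there are only $O(m)$ edges in $\ct$), and $k \mapsto \sqrt{mk}$ is concave.

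The main obstacle, and the only genuinely non-routine point, is justifying the per-edge runtime bound $\otilde(\epssc^{-2}|\partial H \cup F_H|)$ for applying $\mm_{(H,P)}$ — specifically that the Laplacian solve against $\ml^{(H)}_{F_H,F_H}$ runs in time nearly-linear in the edge count of $\ml^{(H)}$, which requires invoking \cref{thm:apxsc} for the sparsity of $\ml^{(H)}$ together with the SDD solver of \cref{thm:laplacianSolver}. One subtlety is that $\ml^{(H)}_{F_H,F_H}$ need not itself be a graph Laplacian (it is a principal submatrix of one), but it is an SDD matrix, so the solver still applies; alternatively one notes that $\ml^{(H)}_{F_H,F_H}$ is invertible for all non-root $H$ and the solve is well-defined. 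Everything else is bookkeeping: extending operators by zero to matching dimensions does not change the sparsity-proportional cost, and the leaf operators contribute only $O(k)$ additional time since they are constant-size.
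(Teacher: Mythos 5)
Your proof is correct and follows essentially the same route as the paper: bound the cost of applying a single edge operator by $\otilde(\epssc^{-2}|\partial H \cup F_H|)$ via the sparsity of $\ml^{(H)}$ and the SDD solver, note that leaf operators are constant-time, and then sum over the $k$ nodes using \cref{lem:planarBoundChangeCost}. The extra remarks you add (the regularity conditions on $T$ and the observation that $\ml^{(H)}_{F_H,F_H}$ is an SDD principal submatrix) are fine but not needed beyond what the paper records.
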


\begin{proof}
	Let $\mm_{(D,P)}$ be a tree edge operator. 
	Applying $\mm_{(D,P)}=\mi_{\bdry{D}}-\left(\ml^{(D)}_{\elim{D},\elim{D}}\right)^{-1} \ml^{(D)}_{\elim{D}, \bdry{D}}$ 
	to the left or right consists of three steps which are applying $\mi_{\bdry{D}}$, 
	applying $\ml^{(D)}_{\elim{D}, \bdry{D}}$ and solving for $\ml^{(D)}_{\elim{D},\elim{D}}\vv=\vb$ for some vectors $\vv$ and $\vb$. 
	Each of the three steps costs time $O(\epssc^{-2}|\bdry{D}\cup \elim{D}|)$ by \cref{lem:fastApproxSchur} and \cref{thm:laplacianSolver}.
	 
	For any leaf node $H$, $H$ has a constant number of edges, and it takes constant time to compute $\mj_H \vu$ for any vector $\vu$. The number of vertices may be larger but the nonzeros of $\mj_H=\mw^{1/2} \mb[H]$ only depends on the number of edges.
	To bound the total cost over $k$ distinct edges, we apply \cref{lem:planarBoundChangeCost}, which then gives the claimed complexity.
\end{proof}

\subsection{Proof of \crtcref{thm:SlackMaintain}}

Finally, we give the full data structure for maintaining the slack solution.

The tree operator $\mm$ defined in \cref{defn:slack-forest-operator} satisfies $\mm\vz^{(k)} =\widetilde{\mproj}_{\vw}\vv^{(k)}$ at step $k$, by the definition of $\vz^{(k)}$. 
To support the proper update $\vs\leftarrow\vs+\ot h\mw^{-1/2}\widetilde{\mproj}_{\vw} \vv^{(k)}$, we define $\mm^{\slack} \defeq \mw^{-1/2}\mm$ and note it is also a tree operator:
\begin{lemma}
\label{lem:wm}
Suppose $\mm$ is a tree operator supported on $\ct$ with complexity $T(K)$. Let $\md$ be a diagonal matrix in $\R^{E\times E}$ where $E=\bigcup_{\text{leaf }H\in \ct}E(H)$. Then $\md\mm$ can be represented by  a tree operator with complexity $T(K)$.
\end{lemma}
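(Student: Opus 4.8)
The plan is to exhibit an explicit tree operator representing $\md\mm$ that reuses all of the tree structure of $\mm$ and only modifies the leaf operators. Concretely, I would keep the same tree $\ct$, the same associated sets $V(H), F_H, E(H)$ at every node, and the same edge operators $\mm_{(H,P)}$ at every tree edge, but replace each leaf operator $\mj_H$ by
\[
	\mj'_H \defeq \md_{E(H),E(H)}\,\mj_H,
\]
where $\md_{E(H),E(H)}$ is the principal submatrix of $\md$ indexed by $E(H)$ (padded trivially to act on $\R^E$ per the paper's dimension conventions). Call the resulting operator $\mm'$.

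First I would check that $\mm'$ satisfies \ref{def:forest-operator}: the sets $E(H)$ over leaf nodes are still pairwise disjoint, since they are inherited unchanged from $\mm$, and each $\mj'_H : \R^{V(H)} \to \R^{E(H)}$ is still constant-time computable because $|E(H)| = O(1)$, so $\md_{E(H),E(H)}$ is a constant-size diagonal matrix and applying it to the constant-size vector $\mj_H\vu$ costs $O(1)$ time. Then I would verify that $\mm' = \md\mm$: since $\mj_H$ has range contained in $\R^{E(H)}$ and $\md$ is diagonal, $\md\mj_H = \md_{E(H),E(H)}\mj_H = \mj'_H$, so substituting into the defining sum \eqref{defn:mm} for $\mm$ and pulling $\md$ through the (finite) sum gives
\[
	\md\mm = \sum_{\text{leaf } H,\ \text{node } A\,:\, H \in \ct_A} \mj'_H\, \mm_{H\leftarrow A}\,\mi_{F_A},
\]
which is exactly the tree operator $\mm'$ associated to the above data.

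Finally I would argue the complexity is unchanged. By \ref{def:forestcost}, the complexity function of a tree operator is determined entirely by the cost of applying the tree \emph{edge} operators $\mm_{(H,P)}$ to vectors on the left and on the right; the leaf operators, being $O(1)$-computable, do not enter $T$. Since $\mm'$ has exactly the same edge operators as $\mm$, it inherits the same complexity function $T(K)$, and therefore $\md\mm$ is represented by a tree operator with complexity $T(K)$.

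There is no serious obstacle in this lemma. The only points that need a moment's care are confirming that the new leaf operators remain constant-time computable, which is immediate from $|E(H)| = O(1)$ together with $\md$ being diagonal, and that the identity $\md\mj_H = \md_{E(H),E(H)}\mj_H$ is legitimate under the paper's zero-padding convention for operators with mismatched index sets.
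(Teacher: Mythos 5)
Your proposal is correct and matches the paper's proof essentially verbatim: both keep the tree and edge operators unchanged and replace each leaf operator $\mj_H$ by $\md_{E(H),E(H)}\mj_H$, noting that the leaf operators stay constant-size and the complexity is governed by the unchanged edge operators. Your additional check that $\md\mj_H = \md_{E(H),E(H)}\mj_H$ under the padding convention is a minor bit of extra care the paper omits, but not a different argument.
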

\begin{proof}
Suppose $\mm \in \R^{E\times V}$. For any vector $\vz\in \R^{V}$, $\md\mm\vz=\md(\mm\vz)$. Thus, to compute $\md\mm\vz$, we may first compute $\mm\vz$ and then multiply the $i$-th entry of $\mm\vz$ with $\md_{i, i}$. This can be achieved by defining a new tree operator $\mm'$ with leaf operators $\mj'$ such that $\mj'_{H}=\md_{E(H), E(H)}\mj_{H}$ and $\mm'_{(H, P)}=\mm_{(H, P)}$. The size of each leaf operator remains constant. All edge operators do not change from $\mm$. Thus, the new operator $\mm'$ has the same complexity as $\mm$. 
\end{proof}

With the lemma above, we can use \textsc{MaintainRep} (\cref{alg:maintain_representation}) to maintain the implicit representation of $\vs$ and \cref{thm:VectorTreeMaintain} to maintain an approximate vector $\os$ as required in \cref{algo:IPM_impl}.
A single IPM step calls the procedures \textsc{Reweight, Move, Approximate} in this order once. 
Note that we reinitialize the data structure when $\ot$ changes, so within each instantiation, may assume $\bar{t} = 1$ by scaling. $\ot$ changes only $\O(1)$ times in the IPM.
\begin{algorithm}
	\caption{Slack Maintenance, Main Algorithm}\label{alg:slack-maintain-main}
	\begin{algorithmic}[1]
		\State \textbf{data structure} \textsc{MaintainSlack} \textbf{extends} \textsc{MaintainRep}
		\State \textbf{private: member}
		\State \hspace{4mm} \textsc{MaintainRep} $\maintainRep$: data structure to implicitly maintain
		\[
			\vs = \vy + \mw^{-1/2} \mm (c \zprev + \zsum).
		\]
		\Comment $\mm$ is defined by \cref{defn:slack-forest-operator}
		\State \hspace{4mm} \textsc{MaintainApprox} \texttt{bar\_s}: data structure to maintain approximation $\os$ to $\vs$ (\cref{thm:VectorTreeMaintain})
		\State
		\Procedure{Initialize}{$G,\vs^\init \in\R^{m},\vv\in \R^{m}, \vw\in\R_{>0}^{m},\epssc>0,
		\overline{\epsilon}>0$}
			\State Build the separator tree $\ct$ by \cref{thm:separator_tree_construction} 
			\State $\maintainRep.\textsc{Initialize}(G, \ct, \mw^{-1/2}\mm, \vv, \vw, \vs^\init, \epssc)$
			\Comment{initialize $\vs \leftarrow \vs^{\init}$}
			\State $\texttt{bar\_s}.\textsc{Initialize}(\mw^{-1/2}\mm,c, \zprev,\zsum, \vy, \mw, n^{-5}, \overline{\epsilon})$
			\Comment{initialize $\os$ approximating $\vs$}
		\EndProcedure
		
		\State
		\Procedure{Reweight}{$\vw^\new \in \R^m_{>0}$}
			\State $\maintainRep.\textsc{Reweight}(\vw^\new)$
		\EndProcedure

		\State
		\Procedure{Move}{$\alpha,\vv^\new \in \R^m$}
			\State $\maintainRep.\textsc{Move}(\alpha, \vv^\new)$
		\EndProcedure

		\State
		\Procedure{Approximate}{ }
			\State \Comment the variables in the argument are accessed from \maintainRep
			\State \Return $\os=\texttt{bar\_s}.\textsc{Approximate}(\mw^{-1/2}\mm,c, \zprev,\zsum,\vy, \mw)$
		\EndProcedure

		\State
		\Procedure{Exact}{ }
		\State \Return $\maintainRep.\textsc{Exact}()$
		\EndProcedure
	\end{algorithmic}
\end{algorithm}

\SlackMaintain*

\begin{proof}[Proof of \cref{thm:SlackMaintain}]
We prove the runtime and correctness of each procedure separately. 

Recall by \cref{lem:slack-edge-operator-correctness}, the tree operator $\mm$ has complexity $T(K) = O(\epssc^{-2} \sqrt{mK})$.

\paragraph{\textsc{Initialize}:}
	By the initialization of \maintainRep~(\cref{thm:maintain_representation}), the implicit representation of $\vs$ in \maintainRep~is correct and $\vs = \vs^{\init}$. By the initialization of \flowSketch, $\os$ is set to $\vs$ to start.

	Initialization of $\maintainRep$ takes $\otilde(m\epssc^{-2})$ time by \cref{thm:maintain_representation},
	and the initialization of $\slackSketch$ takes $\otilde(m)$ time by \cref{thm:VectorTreeMaintain}. 

\paragraph{\textsc{Reweight}:}
In \textsc{Reweight}, the value of $\vs$ does not change, but all the variables in \texttt{MaintainRep} are updated to depend on the new weights.
The correctness and runtime follow from \cref{thm:maintain_representation}.

\paragraph{\textsc{Move}:}
	$\maintainRep.\textsc{Move}(\alpha, \vv^{(k)})$ updates the implicit representation of $\vs$ by 
	\[
		\vs \leftarrow \vs + \mw^{-1/2} \mm \alpha \vz^{(k)}.
	\]
	By the definition of the slack projection tree operator $\mm$ and \cref{lem:slack-operator-correctness}, this is equivalent to the update
	\[
		\vs \leftarrow \vs + \alpha\mw^{-1/2}\widetilde{\mproj}_{\vw}\vv^{(k)},
	\]
	where $\widetilde{\mproj}_{\vw} = \mw^{1/2} \mb \mpi^{(0)}\cdots \mpi^{(\eta-1)} \widetilde\mga \mpi^{(\eta-1)}\cdots \mpi^{(0)} \mb^{\top} \mw^{1/2}$. 
	By \cref{thm:L-inv-approx}, $\|\widetilde{\mproj}_{\vw}-\mproj_{\vw}\|_{\mathrm{op}}\leq \eta \epssc$. 
	From the definition, $\range{\mw^{1/2} \widetilde \mproj_{\vw}}\subseteq \range \mb$.

	By the guarantees of \maintainRep, if $\vv^{(k)}$ differs from $\vv^{(k-1)}$ on $K$ coordinates, then the runtime is $\O(\epssc^{-2}\sqrt{mK})$. Furthermore, $\zprev$ and $\zsum$ change on $\elim{H}$ for at most $\O(K)$ nodes in $\ct$.
	
\paragraph{\textsc{Approximate}:}
	The returned vector $\os$ satisfies $\|\mw^{1/2}(\os-\vs)\|_{\infty}\leq\overline{\epsilon}$ by the guarantee of \\ \texttt{bar\_s}.\textsc{Approximate} from \cref{thm:VectorTreeMaintain}.
	
\paragraph{\textsc{Exact}:}
	The runtime and correctness directly follow from the guarantee of $\maintainRep.\textsc{Exact}$ given in \cref{thm:maintain_representation}.

Finally, we have the following lemma about the runtime for \textsc{Approximate}. Let $\os^{(k)}$ denote the returned approximate vector at step $k$.

\begin{lemma}\label{lem:slack-approx}
	Suppose $\alpha\|\vv\|_{2}\leq\beta$ for some $\beta$ for all calls to \textsc{Move}.
	Let $K$ denote the total number of coordinates changed in $\vv$ and $\vw$ between the $k-1$-th and $k$-th \textsc{Reweight} and \textsc{Move} calls. Then at the $k$-th \textsc{Approximate} call,
	\begin{itemize}
		\item  The data structure first sets $\os_e\leftarrow \vs^{(k-1)}_e$ for all coordinates $e$ where $\vw_e$ changed in the last \textsc{Reweight}, then sets $\os_e\leftarrow \vs^{(k)}_e$ for $O(N_k\defeq 2^{2\ell_{k}}(\frac{\beta}{\overline{\epsilon}})^{2}\log^{2}m)$ coordinates $e$, where $\ell_{k}$ is the largest integer
		$\ell$ with $k=0\mod2^{\ell}$ when $k\neq 0$ and $\ell_0=0$. 
		\item The amortized time for the $k$-th \textsc{Approximate} call
		is $\widetilde{O}(\epsilon_{\mproj}^{-2}\sqrt{m(K+N_{k-2^{\ell_k}})})$.
	\end{itemize}
\end{lemma}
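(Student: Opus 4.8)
The plan is to prove Lemma~\ref{lem:slack-approx} by instantiating the abstract vector-maintenance machinery of \cref{thm:VectorTreeMaintain} with the slack tree operator $\mm^{\slack} = \mw^{-1/2}\mm$, and then bounding the amortized cost by bounding the set $\mathcal{S}$ of ``changed'' nodes on the separator tree. First I would recall that \textsc{MaintainSlack} maintains $\vs = \vy + \mm^{\slack}(c\zprev + \zsum)$ via \maintainRep, and that $\slackSketch$ is exactly an instance of \textsc{MaintainApprox} from \cref{thm:VectorTreeMaintain} run on the separator tree $\ct$ (constant degree, height $\eta = O(\log m)$), with the diagonal scaling matrix $\md = \mw$. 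So the first item of the lemma --- that $\os_e$ is first reset to $\vs^{(k-1)}_e$ on coordinates where $\vw_e$ changed, and then updated to $\vs^{(k)}_e$ on $O(N_k)$ coordinates with $N_k = 2^{2\ell_k}(\beta/\overline\epsilon)^2\log^2 m$ --- follows directly from the corresponding conclusion of \cref{thm:VectorTreeMaintain}, once we check its hypothesis $\|\vx^{(k+1)} - \vx^{(k)}\|_{\md^{(k+1)}} \le \beta$. Here $\vx = \vs$ and $\md = \mw$, so I need $\|\mw^{1/2}(\vs^{(k)} - \vs^{(k-1)})\|_2 \le \beta$; the IPM step is $\vs \leftarrow \vs + \alpha \mw^{-1/2}\widetilde\mproj_{\vw}\vv$, so $\mw^{1/2}\Delta\vs = \alpha\widetilde\mproj_{\vw}\vv$, and since $\widetilde\mproj_{\vw}$ is (close to) an orthogonal projection we get $\|\mw^{1/2}\Delta\vs\|_2 \le \alpha\|\widetilde\mproj_{\vw}\vv\|_2 \le (1+o(1))\alpha\|\vv\|_2 \le (1+o(1))\beta$, which is $\beta$ up to constants. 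The parameter $\overline\epsilon$ in \textsc{MaintainSlack} plays the role of $\delta$ in \cref{thm:VectorTreeMaintain}, giving $N_k$ with $\overline\epsilon$ in the denominator as claimed.

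Next I would handle the amortized runtime. \cref{thm:VectorTreeMaintain} states the amortized cost of \textsc{Approximate} is $\Theta(\eta^2\log(m/\rho)\log m)\cdot T(\eta\cdot(N_{k-2^{\ell_k}} + |\mathcal{S}|))$, where $\mathcal{S}$ is the set of nodes $H$ where $\mm^{\slack}_{(H,P)}$, $\mj_H$, $\zprev|_{F_H}$, $\zsum|_{F_H}$, $\vy_e$ or $\md_{e,e}$ changed since the previous step. So the main remaining task is to bound $|\mathcal{S}|$ in terms of $K$, the number of coordinates changed in $\vv$ and $\vw$. I would argue: (i) by \cref{thm:maintain_z,thm:maintain_representation}, a \textsc{Reweight}$(\vw^\new)$ with $K$ changed weights touches $\ml^{(H)}$ only for $H \in \pathT{\collN}$ where $\collN$ is the $O(K)$ leaves containing reweighted edges, hence the edge operators $\mm^{\slack}_{(H,P)}$ (which depend only on $\ml^{(H)}$) change only on $\pathT{\collN}$, and $\zprev|_{F_H}, \zsum|_{F_H}$ change for at most $\O(K)$ nodes; (ii) a \textsc{Move} with $K$ changed coordinates in $\vv$ similarly changes $\zprev|_{F_H}, \zsum|_{F_H}$ for $\O(K)$ nodes and changes no edge/leaf operators; (iii) $\vy$ changes only on edges in reweighted regions (via \cref{line:update_y} of \textsc{MaintainRep}), and $\md = \mw$ changes on exactly the $K$ reweighted edges. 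Collecting these, $|\mathcal{S}| = \O(K)$. Then by the concavity and superadditivity of $T$ (so $T(\eta \cdot O(K + N_{k-2^{\ell_k}})) = \O(T(K + N_{k-2^{\ell_k}}))$ up to $\poly(\eta)$ factors, absorbed into the $\O$), and using $T(K) = \O(\epssc^{-2}\sqrt{mK})$ from \cref{lem:slack_operator_complexity}, the amortized cost is $\O(\epssc^{-2}\sqrt{m(K + N_{k-2^{\ell_k}})})$, as claimed.

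I would also double-check the subtle indexing around $N_{k-2^{\ell_k}}$: \cref{thm:VectorTreeMaintain} gives $T(\eta\cdot(N_{k-2^{\ell_k}} + |\mathcal{S}|))$ directly, so there is nothing to massage here beyond noting $\ell_k$ is the number of trailing zeros of $k$ and setting $\ell_0 = N_0 = 0$ by convention, matching the statement. One has to be slightly careful that the ``$\mathcal{S}$'' in \cref{thm:VectorTreeMaintain} is measured against the previous IPM step (the $(k-1)$-th), which matches how $K$ is defined in the lemma statement, so the bound $|\mathcal{S}| = \O(K)$ is exactly what is needed. Finally I would record that the failure probability is set to $\rho = n^{-5}$ in \textsc{Initialize}, so the high-probability and adaptive-adversary guarantees are inherited verbatim from \cref{thm:VectorTreeMaintain}.

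The main obstacle I anticipate is item (iii) above --- carefully tracking which coordinates of $\vy$ get modified during \textsc{Reweight}. In \textsc{MaintainRep}.\textsc{Reweight}, $\vy \leftarrow \vy - \textsc{ComputeMz}(\Delta\mm, c\zprev + \zsum)$, and $\Delta\mm$ is supported on the edge operators that changed, i.e.\ on $\pathT{\collN}$; but $\textsc{ComputeMz}(\Delta\mm, \cdot)$ produces a vector supported on $E(H)$ for leaf nodes $H$ that are descendants of some node in $\pathT{\collN}$ --- potentially many leaves. This seems to threaten $|\mathcal{S}| \gg K$. The resolution is that changes to $\vy_e$ only matter for $\mathcal{S}$ through $\vy|_{E(H)}$ at leaf nodes, and more importantly that these $\vy$-changes are localized to the \emph{leaf descendants of reweighted internal nodes}, whose total edge count is controlled; I would need to invoke \cref{lem:planarBoundChangeCost} (with the set $\collN$ of reweighted leaves) to bound $\sum_{H \in \pathT{\collN}}|E(H)|$ appropriately, or alternatively observe that $\Delta\mm \ne \mzero$ only at $O(K)$ edge operators and the resulting $\vy$ update, fed back through $T$, is already accounted for by the $T(K)$ term in \textsc{MaintainRep}.\textsc{Reweight}'s own runtime; so the contribution to $|\mathcal{S}|$ from $\vy$-changes is subsumed. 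Getting this bookkeeping exactly right, consistent with how $\mathcal{S}$ is defined in \cref{thm:VectorTreeMaintain}, is the delicate part.
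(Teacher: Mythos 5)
Your proposal is correct and follows essentially the same route as the paper's proof: instantiate \cref{thm:VectorTreeMaintain} with $\vx=\vs$ and $\md=\mw$, verify the hypothesis $\|\mw^{1/2}(\vs^{(k)}-\vs^{(k-1)})\|_2\le O(\beta)$ via near-orthogonality of $\widetilde{\mproj}_{\vw}$, and bound $|\mathcal{S}|=\O(K)$ using the \textsc{MaintainRep}/\textsc{MaintainZ} guarantees before plugging in $T(K)=\O(\epssc^{-2}\sqrt{mK})$. The $\vy$-bookkeeping you flag as delicate is resolved exactly as you suggest (and as the paper asserts): the \textsc{Reweight} update to $\vy$ is written only on $E(H)$ for the $\O(K)$ leaf nodes touched by \textsc{ComputeMz}, so it does not inflate $|\mathcal{S}|$.
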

\begin{proof}
	Since $\os$ is maintained by \texttt{bar\_s}, we apply \cref{thm:VectorTreeMaintain} with $\vx = \vs$ and diagonal matrix $\md = \mw$. 
	We need to prove $\|\vx^{(k)}-\vx^{(k-1)}\|_{\md^{(k)}}\le O(\beta)$ for all $k$ first.  The constant factor in $O(\beta)$ does not affect the guarantees in \cref{thm:VectorTreeMaintain}. The left-hand side is
	\begin{align*}
		\norm{\vs^{(k)}-\vs^{(k-1)}}_{\mw^{(k)}} &= \norm{\alpha^{(k)} {\mw^{(k)}}^{-1/2} \widetilde{\mproj}_{\vw}\vv^{(k)}}_{\mw^{(k)}} \tag{by \textsc{Move}}\\
		&= \norm{\alpha^{(k)} \widetilde{\mproj}_{\vw}\vv^{(k)}}_2 \\
		&\le (1+\eta\epssc) \alpha^{(k)} \|\vv^{(k)}\|_2 \tag{by the assumption that $\alpha\|\vv\|_{2}\leq\beta$}\\
		&\le 2 \beta.
	\end{align*}
	Where the second last step follows from $\|\widetilde{\mproj}_{\vw}-\mproj_{\vw}\|_{\mathrm{op}}\leq \eta\epssc$ and the fact that $\mproj_{\vw}$ is an orthogonal projection. 
	Now, we can apply \cref{thm:VectorTreeMaintain} to conclude that at each step $k$, 
	\texttt{bar\_s}.\textsc{Approximate} first sets $\os_e\leftarrow \vs^{(k-1)}_e$ for all coordinates $e$ where $\vw_e$ changed in the last \textsc{Reweight}, then set $\os_e\leftarrow \vs^{(k)}_e$ for $O(N_k\defeq 2^{2\ell_{k}}(\frac{\beta}{\overline{\epsilon}})^{2}\log^{2}m)$ coordinates $e$, where $\ell_{k}$ is the largest integer
		$\ell$ with $k=0\mod2^{\ell}$ when $k\neq 0$ and $\ell_0=0$. 
	
	For the second point, \textsc{Move} updates $\zprev$ and $\zsum$ on $\elim{H}$ for $\O(K)$ different nodes $H \in \ct$ by \cref{thm:maintain_representation}.
	\textsc{Reweight} then updates $\zprev$ and $\zsum$ on $F_H$ for $\O(K)$ different nodes, and updates the tree operator $\mw^{-1/2}\mm$ on $\O(K)$ different edge and leaf operators. In turn, it updates $\vy$ on $E(H)$ for $\O(K)$ leaf nodes $H$. Now, we apply \cref{thm:VectorTreeMaintain} and the complexity of the tree operator to conclude the desired amortized runtime. 
\end{proof}
\end{proof}

\section{Flow projection} \label{sec:flow_projection}

In this section, we define the flow tree operator as required to use \textsc{MaintainRep}.
We then give the full flow maintenance data structure.

During the IPM, we maintain $\vf \defeq \hat{\vf} - \pf$ by maintaining the two terms separately. For IPM step $k$ with direction $\vv^{(k)}$ and step size $h$, we update them as follows:
\begin{align*}
	\uf &\leftarrow \uf + h \mw^{1/2} \vv^{(k)}, \\ 
	\label{eq:tf-update}
	\pf &\leftarrow \pf + h \mw^{1/2} \widetilde{\mproj}'_{\vw} \vv^{(k)},
\end{align*}
where $\widetilde{\mproj}'_{\vw} \vv^{(k)}$ satisfies $\norm{\widetilde{\mproj}'_{\vw} \vv^{(k)} - \mproj_{\vw} \vv^{(k)}}_2 \leq \eps \norm{\vv^{(k)}}_2$ for some factor $\eps$, and $\mb^\top \mw^{1/2} \widetilde{\mproj}'_{\vw} \vv^{(k)} = \mb^\top \mw^{1/2} \vv^{(k)}$. We will include the initial value of $\vf$ in $\uf$. 

Maintaining $\hat\vf$ is straightforward; in the following section, we focus on $\pf$. 

\subsection{Tree operator for flow}

We hope to use \textsc{MaintainRep} to maintain $\pf$ throughout the IPM. In order to do so, it remains to define a flow tree operator $\mm^{\flow}$ so that
\[
\mw^{1/2} \widetilde \mproj'_{\vw} \vv^{(k)} = \mm^{\flow} \vz^{(k)},
\]
where $\widetilde{\mproj}'_{\vw} \vv$ satisfies the constraints mentioned above, 
and $\vz^{(k)} \defeq \widetilde{\mga} \mpi^{(\eta-1)} \cdots \mpi^{(0)} \mb^\top \mw^{1/2} \vv^{(k)}$.
We will define a flow projection tree operator $\mm$ so that $\tf \defeq \mm \vv^{(k)}$ satisfies 
$\norm{\tf - \mproj_{\vw} \vv}_2 \leq O(\eta \epssc) \norm{\vv}_2$
and $\mb^{\top} \mw^{1/2} \tf = \mb^{\top} \mw^{1/2} \vv^{(k)}$.
This means it is feasible to set $\widetilde{\mproj}'_{\vw} \vv^{(k)} = \tf$.
Then, we define $\mm^{\flow} \defeq \mw^{-1/2} \mm$.

For the remainder of the section, we abuse notation and use $\vz$ to mean $\vz^{(k)}$ for one IPM step $k$.

\begin{definition}[Flow projection tree operator] \label{defn:flow-forest-operator}
	Let $\ct$ be the separator tree from data structure \textsc{DynamicSC}, with Laplacians $\ml^{(H)}$ and $\tsc(\ml^{(H)}, \partial H)$ at each node $H \in \ct$.
	We use $\mb[H]$ to denote the adjacency matrix of $G$ restricted to the region.
		
	To define the flow projection tree operator $\mm$, we proceed as follows: 
	The tree operator is supported on the tree $\ct$. 
	For a node $H \in \ct$ with parent $P$, define the tree edge operator $\mm_{(H, P)}$ as:
	\begin{equation}
	\label{eq:flow-forest-op-defn}
	\mm_{(H, P)} \defeq (\ml^{(H)})^{-1} \tsc(\ml^{(H)}, \partial H).
	\end{equation}

	At each node $H$, we let $F_H$ in the tree operator be the set $F_H$ of eliminated vertices defined in the separator tree. 
	At each leaf node $H$ of $\ct$, we have the leaf operator $\mj_H=\mw^{1/2} \mb[H]$.
\end{definition}


Before we give intuition and formally prove the correctness of the flow tree operator, we examine its complexity.
\begin{lemma}
\label{lem:flow _operator_complexity}
	The complexity of the flow tree operator as defined in \cref{defn:flow-forest-operator} is $T(k) = \otilde(\sqrt{mk}\cdot \epssc^{-2})$, where $\epssc$ is the overall approximation factor from data structure \textsc{DynamicSC}.
\end{lemma}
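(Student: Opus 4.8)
The plan is to bound the complexity of the flow tree operator by analyzing the cost of applying a single edge or leaf operator, then summing over a set of $k$ distinct tree edges using the separator-tree boundary bound \cref{lem:planarBoundChangeCost}. This mirrors the argument for the slack tree operator in \cref{lem:slack_operator_complexity}, with the edge operator now being $\mm_{(H,P)} = (\ml^{(H)})^{-1} \tsc(\ml^{(H)}, \partial H)$ instead of $\mi_{\bdry{D}} - (\ml^{(D)}_{\elim{D},\elim{D}})^{-1}\ml^{(D)}_{\elim{D},\bdry{D}}$.

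First I would fix a tree edge $(H,P)$ and bound the cost of computing $\vu^\top \mm_{(H,P)}$ and $\mm_{(H,P)} \vx$ for given vectors $\vu, \vx$. Recall from \cref{thm:apxsc} that $\ml^{(H)}$ is a Laplacian on the vertex set $\bdry{H} \cup \elim{H}$ with $\otilde(\epssc^{-2}|\bdry{H}\cup \elim{H}|)$ edges, and $\tsc(\ml^{(H)}, \bdry{H})$ is a Laplacian on $\bdry{H}$ with $\otilde(\epssc^{-2}|\bdry{H}|)$ edges. Applying $\mm_{(H,P)}$ to a vector thus consists of two steps: first multiplying by $\tsc(\ml^{(H)}, \bdry{H})$, which takes $\otilde(\epssc^{-2}|\bdry{H}|)$ time since it is a sparse matrix, and then solving a Laplacian system $\ml^{(H)} \vv = \vb$, which by \cref{thm:laplacianSolver} takes $\otilde(\epssc^{-2}|\bdry{H} \cup \elim{H}|)$ time (note $\ml^{(H)}_{\elim{H}, \elim{H}}$ may be singular only at the root, but since we only ever apply it to vectors in its range — the relevant vectors are orthogonal to the kernel by construction — the pseudo-inverse solve is valid; alternatively one notes the inverse is interpreted in the Moore-Penrose sense throughout). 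So the total cost for one edge operator, applied on either side, is $\otilde(\epssc^{-2}|\bdry{H}\cup \elim{H}|)$.

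Next, for a leaf node $H$, the leaf operator $\mj_H = \mw^{1/2}\mb[H]$ has a number of nonzeros depending only on $|E(H)| = O(1)$, so computing $\mj_H \vu$ or $\vu^\top \mj_H$ takes $O(1)$ time. Summing the per-edge bound over a set $S$ of $k$ distinct tree edges, the total cost is $\otilde(\epssc^{-2}\sum_{(H,P) \in S}|\bdry{H}\cup \elim{H}|)$; each such edge corresponds to the child node $H$, so this is at most $\otilde(\epssc^{-2}\sum_{H \in \collN}|\bdry{H}| + |\elim{H}|)$ for a set $\collN$ of $k$ nodes. Applying \cref{lem:planarBoundChangeCost} with $|\collN| = k$ (noting $\pathT{\collN} \supseteq \collN$, so the bound on the path sum also bounds the sum over $\collN$ itself) gives $\otilde(\sqrt{mk}\cdot\epssc^{-2})$, which is the claimed complexity $T(k)$; this also satisfies the assumptions $T(0)=0$, $T(k)\geq k$, and concavity (up to logarithmic factors).

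The main obstacle — though it is minor — is being careful about the Laplacian solve when $\ml^{(H)}_{\elim{H},\elim{H}}$, or more precisely the matrix $\ml^{(H)}$ appearing in $\mm_{(H,P)} = (\ml^{(H)})^{-1}\tsc(\ml^{(H)},\partial H)$, is singular; one should argue either that the relevant operator is only ever applied to vectors in the appropriate range (which is what makes the flow construction correct, to be established in the subsequent correctness lemmas), or simply invoke the convention that $(\cdot)^{-1}$ denotes the Moore–Penrose pseudo-inverse and that the SDD solver of \cref{thm:laplacianSolver} handles this within the same runtime. The rest is a routine repetition of the slack operator complexity argument.
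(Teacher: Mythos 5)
Your proof is correct and follows essentially the same route as the paper: bound the per-edge cost by the cost of multiplying by the sparse $\tsc(\ml^{(H)},\partial H)$ plus one Laplacian solve in $\ml^{(H)}$ (both $\otilde(\epssc^{-2}|\partial H \cup F_H|)$ by \cref{lem:fastApproxSchur,thm:laplacianSolver}), note the leaf operators are constant-time, and sum over $k$ distinct tree edges via \cref{lem:planarBoundChangeCost}. Your extra care about the pseudo-inverse at the root is a reasonable addition but not a point of divergence; the paper handles it by the standing Moore--Penrose convention.
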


\begin{proof}
	Let $\mm_{(H, P)}$ be a tree edge operator.  Note that it is a symmetric matrix.
	For any leaf node $H$, $H$ has a constant number of edges, and it takes constant time to compute $\mj_H \vu$ for any vector $\vu$. The number of vertices may be larger but the nonzeros of $\mj_H=\mw^{1/2} \mb[H]$ only depends on the number of edges.
	
	If $H$ is not a leaf node, then $\mm_{(H, P)} \vu$ consists of multiplying with $\tsc(\ml^{(H)}, \partial H)$ and solving the Laplacian system $\ml^{(H)}$.
	By \cref{lem:fastApproxSchur} and \cref{thm:laplacianSolver}, this can be done in $\otilde(\epssc^{-2} \cdot |\partial H|)$ time. 
	To bound the total cost over $k$ distinct edges, we apply \cref{lem:planarBoundChangeCost}, which gives the claimed complexity.
\end{proof}

\begin{theorem}\label{thm:flow-forest-operator-correctness}
	Let $\vv \in \R^m$, and let $\vz = \widetilde{\mga} \mpi^{(\eta-1)} \cdots \mpi^{(0)} \mb^\top \mw^{1/2} \vv$.
	Let $\mm$ be the flow projection tree operator from \cref{defn:flow-forest-operator}. 
	Suppose $\epssc=O(1/ \log m)$ is the overall approximation factor from \textsc{DynamicSC}.
	Then $\tf \defeq \mm \vz$ satisfies $\mb^\top \mw^{1/2} \tf = \mb^\top \mw^{1/2} \vv$ and $\norm{\tf - \mproj_{\vw} \vv}_2 \leq O(\eta \epssc) \norm{\vv}_2$.
\end{theorem}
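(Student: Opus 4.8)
\textbf{Proof proposal for \cref{thm:flow-forest-operator-correctness}.}

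The plan is to prove the two claims separately: exact feasibility $\mb^\top \mw^{1/2} \tf = \mb^\top \mw^{1/2} \vv$, and the approximation bound $\norm{\tf - \mproj_{\vw}\vv}_2 \le O(\eta\epssc)\norm{\vv}_2$. I will first establish a clean structural description of $\tf = \mm\vz$ along the lines of the slack operator analysis in \cref{sec:slack_projection}, namely a demand-decomposition statement: the vertex demand $\vd \defeq \mb^\top \mw^{1/2}\vv$ can be split as $\vd = \sum_{H \in \ct} \vd^{(H)}$ where $\vd^{(H)} = \ml^{(H)} \vz|_{F_H}$ is supported on $V(H) = \partial H \cup F_H$, and that the tree operator, when applied to $\vz$, routes each $\vd^{(H)}$ as an (approximate) electrical flow within the region $H$ only. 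Concretely, $\mm_{(H,P)} = (\ml^{(H)})^{-1}\tsc(\ml^{(H)},\partial H)$ is precisely the operator that takes potentials on $\partial P \cup F_P$, restricts/Schur-complements them onto $\partial H$, and extends to full potentials on $V(H)$ by harmonic extension; composing these down a tree path $H \leftarrow A$ and applying $\mj_H = \mw^{1/2}\mb[H]$ turns potentials into the corresponding weighted electrical flow on the leaf region. Summing over leaves recovers a global weighted flow. I expect to need a lemma analogous to \cref{lem:slack-edge-operator-correctness} showing $\mm_{H \leftarrow A}\vz|_{F_A}$ equals the harmonic extension to $V(H)$ of the level-$\eta(A)$ potential contribution, using \cref{obs:img-mpi}-type facts and the identity $\ml^{(H)}_{F_H,F_H}$ being block-diagonal at a level.

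For feasibility, the key point is that the Schur-complement sparsification commutes with the flow structure exactly (not approximately). Using \cref{lem:sc-decomposition} and the recursive definition $\ml^{(H)} = \tsc(\ml^{(D_1)},\partial D_1) + \tsc(\ml^{(D_2)},\partial D_2)$ from \textsc{DynamicSC}, I would argue that for each node $H$, the weighted flow produced on region $H$ has net vertex-divergence (as seen from $V(H)$, i.e. after contributions from descendants are accounted for) exactly equal to $\vd^{(H)} = \ml^{(H)}\vz|_{F_H}$ restricted appropriately — because $\mm_{(H,P)}$ applied to an incoming potential vector $\vp$ produces full potentials $\vp'$ on $V(H)$ with $\ml^{(H)}\vp' = \tsc(\ml^{(H)},\partial H)\vp$ on $\partial H$ and the demand $\vd^{(H)}$ on $F_H$. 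Telescoping the divergences up the tree, all interior-boundary terms cancel and one is left with $\mb^\top\mw^{1/2}\tf = \sum_H \vd^{(H)} = \vd$. This needs the precise fact that $\vz = \widetilde\mga\,\mpi^{(\eta-1)}\cdots\mpi^{(0)}\vd$ satisfies $\ml^{(H)}_{F_H,F_H}\,\vz|_{F_H} + \ml^{(H)}_{F_H,\partial H}(\text{boundary potential}) = \vd|_{F_H}$, which follows by unwinding the block-Cholesky structure of \cref{eq:Linv_approx} level by level — this bookkeeping will be the most delicate part of the feasibility argument.

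For the approximation bound, I would use that when $\epssc$ were $0$ (exact Schur complements), \cref{thm:L-inv-approx} gives $\mw^{1/2}\mb\,\mpi^{(0)\top}\cdots\mpi^{(\eta-1)\top}\widetilde\mga\,\mpi^{(\eta-1)}\cdots\mpi^{(0)}\mb^\top\mw^{1/2} = \mproj_{\vw}$ exactly, and our $\tf$ would then equal the true electrical flow $\mproj_{\vw}\vv$. More robustly, I'd invoke the orthogonality of the decomposed demands (the referenced \cref{lem:energy-graph-decomposition}): the $\vd^{(H)}$'s live on essentially disjoint ``coordinate'' blocks in the energy inner product, so routing them separately incurs only the per-region Schur-complement error, which is a $(1\pm e^{\pm\epssc})^{O(\eta)}$ factor by \cref{def:approxSchur} and transitivity (\cref{lem:sc_transitivity}). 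Summing $\eta$ levels of $\epssc$-multiplicative error on a quantity orthogonally decomposed gives energy distortion $e^{O(\eta\epssc)} = 1 + O(\eta\epssc)$ since $\epssc = O(1/\log m)$ and $\eta = O(\log m)$, hence $\norm{\tf - \mproj_{\vw}\vv}_{\ml} \le O(\eta\epssc)\norm{\mproj_{\vw}\vv}_{\ml}$; translating from the $\ml$-norm back to the $\ell_2$-norm via $\mw^{1/2}\mb$ and using that $\mproj_{\vw}$ is an orthogonal projection ($\norm{\mproj_{\vw}\vv}_2 \le \norm{\vv}_2$) yields the claim. The main obstacle I anticipate is making the ``orthogonal decomposition of demands $\Rightarrow$ additive-in-$\ell_2$ error'' step rigorous rather than hand-wavy: one must show the cross terms between flows routing different $\vd^{(H)}$ vanish or are controlled, which is exactly where \cref{lem:energy-graph-decomposition} and the harmonic-extension characterization of $\mm_{(H,P)}$ do the work, and carefully tracking that the errors compose multiplicatively across the $\eta$ levels rather than additively-then-blowing-up.
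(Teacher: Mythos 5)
Your plan follows essentially the same route as the paper's proof: the demand decomposition $\vd = \sum_{H}\vd^{(H)}$ with $\vd^{(H)} = \ml^{(H)}\vz|_{F_H}$, exact feasibility via the recursive identity $\mb^\top\mw^{1/2}\mm^{(H)} = \ml^{(H)}$ (which in the paper is a clean induction using $\ml^{(H)} = \tsc(\ml^{(D_1)},\partial D_1)+\tsc(\ml^{(D_2)},\partial D_2)$, needing no property of $\vz$ beyond the separate demand-decomposition lemma), and an energy/orthogonality argument comparing each $\tf^{(H)}$ to the electrical flow routing $\vd^{(H)}$. All the key ingredients you identify (orthogonality of region flows, per-level $\epssc$ distortion under approximate Schur complements, and the orthogonal decomposition $\|\tf\|^2 = \|\vf^\star\|^2 + \|\tf-\vf^\star\|^2$) are exactly the lemmas the paper proves, so the proposal is correct and not a genuinely different argument.
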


The remainder of the section is dedicated to proving this theorem. 

Fix $\vv$ for the remainder of this section. Let $\vd \defeq \mb^\top \mw^{1/2} \vv \in \R^n$; 
since it is supported on the vertices of $G$ and its entries sum to 0, it is a \emph{demand} vector. 
In the first part of the proof, we show that $\tf$ routes the demand $\vd$.
Let $\vf^\star \defeq \mproj_{\vw} \vv = \mw^{1/2}\mb\ml^{-1} \vd$. In the second part of the proof, we show that $\tf$ is close to $\vf^\star$. 
Finally, a remark about terminology:

\begin{remark}
	If $\mb$ is the incidence matrix of a graph, then any vector of the form $\mb \vx$ is a flow by definition. Often in this section, we have vectors of the form $\mw^{1/2} \mb \vx$. In this case, we refer to it as a \emph{weighted flow}. We say a weighted flow $\vf$ routes a demand $\vd$ if $(\mw^{1/2} \mb)^\top \vf = \vd$.
\end{remark}

We proceed with a series of lemmas and their intuition, before tying them together in the overall proof at the end of the section.

\begin{restatable}{lemma}{demandDecomposition}\label{thm:demand-decomposition}
	Let $\vz = \widetilde{\mga} \mpi^{(\eta-1)} \cdots \mpi^{(0)} \mb^\top \mw^{1/2} \vv$ be as given in \cref{thm:flow-forest-operator-correctness}.
	For each node $H \in \ct$, let $\vz|_{\elim{H}}$ be the sub-vector of $\vz$ supported on the vertices $\elim{H}$, and define the demand
	\[
	\vd^{(H)} \defeq \ml^{(H)} \vz|_{\elim{H}}.
	\]
	Then $\vd = \sum_{H \in \ct} \vd^{(H)}$.
\end{restatable}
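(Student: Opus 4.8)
\textbf{Proof proposal for \cref{thm:demand-decomposition}.}

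The plan is to unwind the definition of $\vz$ through the factorization of $\ml^{-1}$ from \cref{thm:L-inv-approx}, and match up the block structure. Recall from \cref{defn:z^k} and the approximation machinery that
\[
	\vz = \widetilde{\mga}\, \mpi^{(\eta-1)}\cdots\mpi^{(0)}\, \vd.
\]
By \cref{thm:L-inv-approx}, applying the ``upper-triangular'' adjoint factors $\mpi^{(0)\top}\cdots\mpi^{(\eta-1)\top}$ on the left of $\vz$ gives $\widetilde{\ml}^{-1}\vd$, an approximation of $\ml^{-1}\vd$; but that is not quite what I want here. Instead I want to go in the \emph{forward} direction: $\vz$ is precisely the ``middle-then-lower'' part of the factorization applied to $\vd$, so the claim $\vd=\sum_H \ml^{(H)}\vz|_{F_H}$ should be the statement that the forward factorization is invertible and that inverting it reconstructs $\vd$. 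Concretely, I would recall that $\widetilde{\mga}$ is block-diagonal with blocks $(\ml^{(H)}_{F_H,F_H})^{-1}$ indexed by the $F_H$'s (which partition $V(G)$ by \cref{lem:F_partitions}), and that $\mpi^{(i)} = \mi - \sum_{H\in\ct(i)} \mx^{(H)}$ with $\mx^{(H)} = \ml^{(H)}_{\partial H, F_H}(\ml^{(H)}_{F_H,F_H})^{-1}$.

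The key computation is then as follows. Write $\vu \defeq \mpi^{(\eta-1)}\cdots\mpi^{(0)}\vd$, so that $\vz = \widetilde{\mga}\vu$, and hence $\vz|_{F_H} = (\ml^{(H)}_{F_H,F_H})^{-1}\vu|_{F_H}$ for every node $H$ (using that $\widetilde{\mga}$ is block-diagonal over the $F_H$'s and $F_H \subseteq C_{\eta(H)-1}$, where the relevant block of $\widetilde\mga$ lives). Therefore
\[
	\ml^{(H)}\vz|_{F_H} = \ml^{(H)} \mi_{F_H}(\ml^{(H)}_{F_H,F_H})^{-1}\vu|_{F_H},
\]
and since $\ml^{(H)}$ is supported on $F_H \cup \partial H$, this vector is supported on $F_H \cup \partial H$, with its $F_H$-part equal to $\vu|_{F_H}$ and its $\partial H$-part equal to $\ml^{(H)}_{\partial H, F_H}(\ml^{(H)}_{F_H,F_H})^{-1}\vu|_{F_H} = \mx^{(H)}\vu|_{F_H}$. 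Summing over all nodes $H$, and using that the $F_H$'s partition $V(G)$,
\[
	\sum_{H\in\ct} \ml^{(H)}\vz|_{F_H} = \sum_{H\in\ct}\vu|_{F_H} + \sum_{H\in\ct}\mx^{(H)}\vu|_{F_H} = \vu + \sum_{H\in\ct}\mx^{(H)}\vu|_{F_H}.
\]
So it remains to show $\vu + \sum_{H\in\ct}\mx^{(H)}\vu|_{F_H} = \vd$, i.e. that the map $\vu \mapsto \vu + \sum_H \mx^{(H)}\mi_{F_H}\vu$ inverts $\mpi^{(\eta-1)}\cdots\mpi^{(0)}$ when applied to $\vu$.

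To finish, I would prove this inversion statement level by level, exactly mirroring the argument in \cref{lem:project_by_node} and \cref{lem:inverse_partial_project}: each $\mi - \mx^{(H)}$ has inverse $\mi + \mx^{(H)}$ because $\mx^{(H)}$ is nilpotent (it maps $\R^{\partial H}$ into $\R^{F_H}$ and $F_H \cap \partial H = \emptyset$, so $(\mx^{(H)})^2 = 0$), factors at the same level commute since $\mx^{(H)}\mx^{(H')} = \mzero$ for distinct $H,H'$ at the same level, and $\mpi^{(i)} = \prod_{H\in\ct(i)}(\mi - \mx^{(H)})$. Hence
\[
	\left(\mpi^{(\eta-1)}\cdots\mpi^{(0)}\right)^{-1} = \left(\prod_{H\in\ct(0)}(\mi+\mx^{(H)})\right)\cdots\left(\prod_{H\in\ct(\eta-1)}(\mi+\mx^{(H)})\right),
\]
and expanding this product, all cross terms $\mx^{(H)}\mx^{(H')}$ vanish unless $H$ is a strict ancestor of $H'$; but in that case, when applied to $\vu$, the term $\mx^{(H)}\mx^{(H')}\vu$ would require $\vu$ restricted to $\partial H' \cap F_H$-type interactions that are already accounted for — more carefully, I'd verify that the only surviving contribution is $\vu + \sum_H \mx^{(H)}\mi_{F_H}\vu$ by checking that $\vu|_{F_H}$ is not modified by any $\mx^{(H')}$ (since $\mathrm{Range}(\mx^{(H')}) \subseteq \R^{F_{H'}}$ and the $F$'s are disjoint). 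This gives $(\mpi^{(\eta-1)}\cdots\mpi^{(0)})^{-1}\vu = \vu + \sum_H \mx^{(H)}\vu|_{F_H}$, and since $\vu = \mpi^{(\eta-1)}\cdots\mpi^{(0)}\vd$ we conclude $\vd = \vu + \sum_H \mx^{(H)}\vu|_{F_H} = \sum_H \ml^{(H)}\vz|_{F_H}$ as desired.

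\textbf{Main obstacle.} The routine-looking but genuinely delicate point is bookkeeping the supports: making sure that $\vz|_{F_H} = (\ml^{(H)}_{F_H,F_H})^{-1}\vu|_{F_H}$ holds exactly (which needs $\widetilde\mga$'s block at level $\eta(H)$ to act on $F_H$-coordinates of $\vu$ with no leakage — this uses $F_i = C_{i-1}\setminus C_i$ and the nesting $C_0 \supset C_1 \supset \cdots$), and that the expansion of the inverse product collapses to exactly the single-term sum with no surviving higher-order terms. I expect that cleanest route to the last point is not to expand the big product at all, but to argue inductively on the level, peeling off one $\mpi^{(i)}$ at a time and tracking which coordinates of the running vector are already in ``final form,'' just as in the proof of \cref{lem:partial_project}. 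That inductive formulation sidesteps the combinatorial expansion entirely and is the version I would actually write up.
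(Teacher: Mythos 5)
Your proposal is correct and is essentially the paper's proof: both arguments split $\vd^{(H)}=\ml^{(H)}\vz|_{F_H}$ into its $F_H$-block $\vu|_{F_H}$ and its $\partial H$-block $\mx^{(H)}\vu|_{F_H}$ where $\vu=\mpi^{(\eta-1)}\cdots\mpi^{(0)}\vd$, and your concluding inverse-product step $(\mi+\mx^{(0)})\cdots(\mi+\mx^{(\eta-1)})\vu=\vu+\sum_H\mx^{(H)}\vu|_{F_H}$ is the paper's telescoping identity $\sum_i(\mi-\mpi^{(i)})\mpi^{(i-1)}\cdots\mpi^{(0)}\vd+\vu=\vd$ written multiplicatively rather than additively. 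The only nit is your parenthetical claiming $\mx^{(H)}$ maps $\R^{\partial H}$ into $\R^{F_H}$ — it is the reverse, since $\mx^{(H)}=\ml^{(H)}_{\partial H,F_H}(\ml^{(H)}_{F_H,F_H})^{-1}$ sends $F_H$-vectors to $\partial H$-vectors — but nilpotency and the vanishing of the cross terms hold either way, so the argument is unaffected.
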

\begin{proof}		
In the proof, note that all $\mi$ are $n \times n$ matrices,
and we implicitly pad all vectors with the necessary zeros to match the dimensions.
For example, $\vz|_{F_H}$ below should be viewed as an $n$-dimensional vector supported on $F_H$.
Define 
\[\mx^{(i)} = \sum_{H \in \ct(i)} \mx^{(H)}.
\]
We have
\[
\mpi^{(i)} = \mi - \mx^{(i)} 
= \mi - \sum_{H \in \ct(i)} \ml^{(H)}_{\bdry{\region}, F_H} \left( \ml^{(H)}_{F_H, F_H}\right)^{-1}.
\]

Suppose $H$ is at level $i$ of $\ct$. We have
\begin{align}
	\vz|_{F_H}
	& =  (\ml_{F_H,F_H}^{(H)})^{-1} \mpi^{(\eta-1)}\cdots\mpi^{(1)}\mpi^{(0)} \vd \nonumber \\
	& =  (\ml_{F_H,F_H}^{(H)})^{-1} \mpi^{(i-1)} \cdots \mpi^{(1)} \mpi^{(0)} \vd,   \label{eq:z_FH-alt}
\end{align}
where we use the fact $\text{Im}(\mx^{(H')}) \cap F_H = \emptyset$ if $\eta(H') \geq i$. 
From this expression for $\vz|_{F_H}$, we have
\begin{align*}
	\vd^{(H)} &\defeq \ml^{(H)} \vz|_{F_H} \\
	&= \ml^{(H)}_{\partial H, F_H} \vz|_{F_H} + \ml^{(H)}_{F_H, F_H} \vz|_{F_H} \\
	&= \mx^{(H)} (\mpi^{(i-1)} \cdots \mpi^{(1)} \mpi^{(0)}\vd)_{F_H} + (\mpi^{(\eta-1)} \cdots \mpi^{(1)} \mpi^{(0)}\vd)|_{F_H},
\end{align*}
where the last line follows from \cref{eq:z_FH-alt}. By padding zeros to $\mx^{(H)}$, we can write the equation above as

$$\vd^{(H)} = \mx^{(H)} \mpi^{(i-1)} \cdots \mpi^{(1)} \mpi^{(0)}\vd + (\mpi^{(\eta-1)} \cdots \mpi^{(1)} \mpi^{(0)}\vd)|_{F_H}.$$

Now, computing the sum, we have
\begin{align*}
	\sum_{H \in \ct} \vd^{(H)}
	&=\sum_{i=0}^{\eta}\sum_{H\in \ct(i)}\mx^{(H)}\mpi^{(i-1)} \cdots \mpi^{(1)} \mpi^{(0)} \vd +\sum_{i=0}^{\eta}\sum_{H\in \ct(i)} (\mpi^{(\eta-1)} \cdots \mpi^{(1)} \mpi^{(0)} \vd)|_{F_H}\\
	&=\left(\sum_{i=0}^{\eta}\mx^{(i)} \mpi^{(i-1)} \cdots \mpi^{(1)} \mpi^{(0)} \vd\right) + \mpi^{(\eta-1)} \cdots \mpi^{(1)} \mpi^{(0)} \vd \tag{$F_H$ partition $V(G)$}\\
	&=\left(\sum_{i=0}^{\eta-1}(\mi-\mpi^{(i)})\mpi^{(i-1)} \cdots \mpi^{(1)} \mpi^{(0)} \vd\right) + \mpi^{(\eta-1)} \cdots \mpi^{(1)} \mpi^{(0)} \vd \\
	&= \vd, \tag{telescoping sum}
\end{align*}
completing our proof.
\end{proof}

Next, we examine the feasibility of $\tf$. To begin, we introduce a decomposition of $\tf$ based on the decomposition of $\vd$, and prove its feasibility.

\begin{definition}
	Let $\mm^{(H)}$ be the flow tree operator supported on the tree $\ct_H \in \cf$ (\cref{defn:subtree-operator}). We define the flow $\tf^{(H)} \defeq \mm^{(H)} \vz = \mm^{(H)} \vz|_{F_H}$.
\end{definition}

\begin{lemma}\label{lem:flow-forest-operator-feasibility}
	We have that $(\mw^{1/2} \mb)^\top \tf^{(H)} = \vd^{(H)}$. In other words, the weighted flow $\tf^{(H)}$ routes the demand $\vd^{(H)}$ \emph{using the edges of the original graph $G$}.
\end{lemma}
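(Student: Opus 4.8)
\textbf{Proof plan for Lemma~\ref{lem:flow-forest-operator-feasibility}.}

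The plan is to prove $(\mw^{1/2}\mb)^\top \tf^{(H)} = \vd^{(H)}$ by downward induction on the height of $H$ in $\ct$, using the recursive structure of the flow tree operator $\mm^{(H)}$. Recall from \cref{defn:subtree-operator} that if $H$ has children $D_1, D_2$, then $\mm^{(H)} = \mm^{(D_1)}\mm_{(D_1,H)} + \mm^{(D_2)}\mm_{(D_2,H)}$, so $\tf^{(H)} = \mm^{(H)}\vz|_{F_H} = \mm^{(D_1)}\mm_{(D_1,H)}\vz|_{F_H} + \mm^{(D_2)}\mm_{(D_2,H)}\vz|_{F_H}$. Since $\mb = \sum_{\text{leaf }D\in\ct_H}\mb[D]$ when restricted to rows indexed by $E(H)$, and $(\mw^{1/2}\mb[D])^\top$ applied to the leaf-operator outputs telescopes through the tree, the induction hypothesis should say something like: for each child $D_i$ and any vector $\vu$ supported on $\partial D_i \cup F_{D_i}$, the weighted flow $\mm^{(D_i)}\vu$ routes the demand $\sc(\ml^{(D_i)}, \partial D_i \cup F_{D_i})$-related quantity; more precisely I expect the cleanest statement is that for any $\vu \in \R^{\partial H \cup F_H}$, the weighted flow $\mm^{(H)}\vu$ routes the demand $\tsc(\ml^{(H)}, \partial H \cup F_H)$ applied appropriately, or rather $\ml^{(H)}$ applied to the eliminated-vertex restriction. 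I would first establish the base case: at a leaf $H$, $\mm^{(H)} = \mj_H = \mw^{1/2}\mb[H]$, so $(\mw^{1/2}\mb)^\top \tf^{(H)} = (\mw^{1/2}\mb[H])^\top \mw^{1/2}\mb[H]\,\vz|_{F_H} = \ml[H]\vz|_{F_H}$, and since at a leaf $\ml^{(H)} = \ml[H]$ (it is the exact Laplacian by \cref{thm:apxsc}), this equals $\ml^{(H)}\vz|_{F_H} = \vd^{(H)}$.

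For the inductive step, the key algebraic identity I need is how $\ml^{(H)}$ decomposes through the edge operators $\mm_{(D_i,H)} = (\ml^{(D_i)})^{-1}\tsc(\ml^{(D_i)},\partial D_i)$ and through the Schur-complement relation $\ml^{(H)} = \tsc(\ml^{(D_1)},\partial D_1) + \tsc(\ml^{(D_2)},\partial D_2)$ from the \textsc{ApproxSchurNode} construction. Writing $\vd^{(H)} = \ml^{(H)}\vz|_{F_H} = \tsc(\ml^{(D_1)},\partial D_1)\vz|_{F_H} + \tsc(\ml^{(D_2)},\partial D_2)\vz|_{F_H}$, I want to show each term $\tsc(\ml^{(D_i)},\partial D_i)\vz|_{F_H}$ is routed by $\mm^{(D_i)}\mm_{(D_i,H)}\vz|_{F_H}$. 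By the induction hypothesis applied with the input vector $\vu_i \defeq \mm_{(D_i,H)}\vz|_{F_H} = (\ml^{(D_i)})^{-1}\tsc(\ml^{(D_i)},\partial D_i)\vz|_{F_H}$, we get $(\mw^{1/2}\mb)^\top \mm^{(D_i)}\vu_i = \ml^{(D_i)}(\vu_i)|_{\text{restricted}}$ — here I need to be careful that the induction hypothesis is stated in the form $(\mw^{1/2}\mb)^\top\mm^{(D)}\vu = \ml^{(D)}\vu$ for $\vu$ supported on $\partial D\cup F_D$ (not just on $F_D$), which follows because $\mm^{(D)}$ only depends on the $\partial D\cup F_D$ coordinates. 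Then $\ml^{(D_i)}\vu_i = \ml^{(D_i)}(\ml^{(D_i)})^{-1}\tsc(\ml^{(D_i)},\partial D_i)\vz|_{F_H} = \tsc(\ml^{(D_i)},\partial D_i)\vz|_{F_H}$, provided $\vu_i$ lies in the range of $\ml^{(D_i)}$, so that $\ml^{(D_i)}(\ml^{(D_i)})^{-1}$ acts as the identity on it; since $\tsc(\ml^{(D_i)},\partial D_i)\vz|_{F_H}$ has coordinates summing to zero on each connected component, it lies in $\range{\ml^{(D_i)}}$ and this is fine. Summing over $i=1,2$ gives $\vd^{(H)}$, completing the induction.

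\textbf{Main obstacle.} The delicate point is formulating the induction hypothesis so it threads through cleanly — in particular, stating it for arbitrary inputs $\vu$ on $\partial D\cup F_D$ rather than the specific $\vz|_{F_D}$, and verifying the pseudo-inverse cancellation $\ml^{(D)}(\ml^{(D)})^{-1}\tsc(\ml^{(D)},\partial D)\vx = \tsc(\ml^{(D)},\partial D)\vx$ holds because the right side lies in the column space of $\ml^{(D)}$ (this is where the demand-summing-to-zero property and the fact that $\tsc$ and $\ml^{(D)}$ share the same connected-component structure on $\partial D\cup F_D$ matter, and where one must handle the root carefully since $\ml^{(G)}$ is singular with kernel spanned by $\vone$). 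I would also double-check that the $\mi_{F_H}$ restrictions in the tree-operator definition are consistent — i.e., that $\mm^{(H)}\vz = \mm^{(H)}\vz|_{F_H}$, which holds by definition since $\mm^{(H)}$ as used in $\tf^{(H)}$ already has domain $\R^{V(H)}$ and the relevant vector fed in is $\vz|_{F_H}$ per the statement. Everything else is routine matrix bookkeeping with the edge-disjoint decomposition $\mb|_{E(H)} = \sum_{\text{leaf }D\in\ct_H}\mb[D]$.
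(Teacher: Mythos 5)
Your plan is essentially the paper's own proof: an induction over $\ct$ establishing the operator identity $\mb^\top\mw^{1/2}\mm^{(H)} = \ml^{(H)}$, with the leaf base case $\ml^{(H)} = \ml[H]$ and the inductive step combining $\mm^{(H)} = \mm^{(D_1)}\mm_{(D_1,H)} + \mm^{(D_2)}\mm_{(D_2,H)}$ with $\ml^{(H)} = \tsc(\ml^{(D_1)},\partial D_1) + \tsc(\ml^{(D_2)},\partial D_2)$ and the cancellation $\ml^{(D_i)}(\ml^{(D_i)})^{-1}\tsc(\ml^{(D_i)},\partial D_i) = \tsc(\ml^{(D_i)},\partial D_i)$. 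Your explicit attention to the pseudo-inverse range condition is a point the paper's proof silently glosses over, and your resolution of it is correct.
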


\begin{proof}
	We will first show inductively that for each $H \in \ct$, we have $\mb^\top \mw^{1/2} \mm^{(H)} = \ml^{(H)}$.
	
	In the base case, if $H$ is a leaf node of $\ct$, then $\cf_H$ is a tree with root $H$ and a single leaf node under it.
	Then $\mm^{(H)} = \mw^{1/2} \mb[H]$.
	It follows that
	\[
	\mb^\top \mw^{1/2} \mm^{(H)} = \mb^\top \mw^{1/2} \mw^{1/2} \mb[H] = \ml^{(H)},
	\]
	by definition of $\ml^{(H)}$ for a leaf $H$ of $\ct$.
	
	In the other case, $H$ is not a leaf node of $\ct$. Let $D_1, D_2$ be the two children of $H$. Then
	\begin{align*}
		\mb^\top \mw^{1/2} \mm^{(H)} &= \mb^\top \mw^{1/2}  
		\left( \mm^{(D_1)} \mm_{(D_1, H)} +   
		\mm^{(D_2)} \mm_{(D_2, H)} \right) \\
		&= \ml^{(D_1)} \mm_{(D_1, H)} +   
		\ml^{(D_2)} \mm_{(D_2, H)}  \tag{by induction} \\
		&=  \ml^{(D_1)} (\ml^{(D_1)})^{-1} \tsc(\ml^{(D_1)}, \partial D_1) +   
		\ml^{(D_2)} (\ml^{(D_2)})^{-1} \tsc(\ml^{(D_2)}, \partial D_2) \\
		&= \tsc(\ml^{(D_1)}, \partial D_1) + \tsc(\ml^{(D_2)}, \partial D_2) \\
		&= \ml^{(H)}.
	\end{align*}
	
	Finally, we conclude that $\mb^\top \mw^{1/2} \tf^{(H)} = \mb^\top \mw^{1/2} \mm^{(H)} \vz|_{F_H} = \ml^{(H)} \vz_{F_H} = \vd^{(H)}$, where the last inequality follows by definition of $\vd^{(H)}$.
\end{proof}

We observe an orthogonality property of the flows, which will become useful later:
\begin{lemma} \label{lem:orthogonal-flows}
	For any nodes $H, H'$ at the same level in $\ct$, $\range{\mm^{(H)}}$ and $\range{\mm^{(H')}}$ are disjoint. 
	Consequently, the flows $\tf^{(H)}$ and $\tf^{(H')}$ are orthogonal.
\end{lemma}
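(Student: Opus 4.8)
The plan is to prove the two claims in Lemma~\ref{lem:orthogonal-flows} in sequence, deriving the orthogonality of the flows as a direct corollary of the range-disjointness.

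First I would analyze $\range{\mm^{(H)}}$ for a node $H$ at level $i$. By definition, $\mm^{(H)} = \sum_{\text{leaf } D \in \ct_H} \mj_D \mm_{D \leftarrow H}$, and since the leaf operators $\mj_D = \mw^{1/2}\mb[D]$ have ranges $\range{\mj_D}$ supported on the \emph{disjoint} edge sets $E(D)$, we have $\range{\mm^{(H)}} \subseteq \bigcup_{\text{leaf }D \in \ct_H} E(D) = E(H)$, the edge set of region $H$. The key structural fact I would invoke is that for two distinct nodes $H, H'$ at the same level $i$ of the separator tree $\ct$, the subtrees $\ct_H$ and $\ct_{H'}$ are \emph{disjoint}, hence the edge sets $E(H)$ and $E(H')$ are disjoint (this is exactly the edge-disjoint partitioning property from \cref{defn:separator-tree}: distinct regions at the same level share only boundary vertices, never edges). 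Therefore $\range{\mm^{(H)}}$ and $\range{\mm^{(H')}}$ are supported on disjoint coordinate sets, and in particular are orthogonal subspaces of $\R^E$.

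Then the orthogonality of the flows is immediate: $\tf^{(H)} = \mm^{(H)} \vz|_{F_H} \in \range{\mm^{(H)}}$ and $\tf^{(H')} = \mm^{(H')} \vz|_{F_{H'}} \in \range{\mm^{(H')}}$, and since these two vectors lie in orthogonal (indeed, disjointly-supported) subspaces, $\langle \tf^{(H)}, \tf^{(H')}\rangle = 0$.

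I do not expect any substantial obstacle here — the lemma is essentially a bookkeeping consequence of the edge-disjoint partition structure of the separator tree combined with the disjointness of the leaf-operator ranges $E(D)$, both of which are already established. The only point requiring a little care is making the reduction "same level $\Rightarrow$ neither is an ancestor of the other $\Rightarrow$ disjoint subtrees" fully explicit, using the fact (from \cref{defn:separator-tree}) that $\ct$ is a binary tree whose same-level nodes occupy disjoint subtrees, so that the corresponding regions partition $E(G)$ at each level.
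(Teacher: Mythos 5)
Your proof is correct and follows essentially the same route as the paper's: same-level nodes have disjoint subtrees, hence disjoint leaf sets, hence the ranges of $\mm^{(H)}$ and $\mm^{(H')}$ live on disjoint edge coordinates, and orthogonality of the flows follows immediately. Your version is slightly more explicit about the "same level $\Rightarrow$ not ancestor-descendant $\Rightarrow$ disjoint subtrees" step, which the paper takes for granted.
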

\begin{proof}
	Recall leaves of $\ct$ correspond to pairwise edge-disjoint, constant-sized regions of the original graph $G$.
	Since $H$ and $H'$ are at the same level in $\ct$, we know $\ct_H$ and $\ct_{H'}$ have disjoint sets of leaves.
	The range of $\mm^{(H)}$ is supported on edges in the regions given by leaves of $\ct_H$, and analogously for the range of $\mm^{(H')}$. 
\end{proof}

Next, we set up the tools for bounding $\norm{\tf - \mproj_{\vw} \vv}_2$, involving an energy analysis drawing inspiration from electric flow routing.
We begin with the canonical definitions and properties of electric-flow energy.

\begin{definition} \label{thm:energy-minimizer} 
	Let $\mw^{1/2} \mb$ be the edge-weighted incidence matrix of some graph $G$, and let $\ml \defeq \mb^\top \mw \mb$ be the Laplacian.
	Let $\vd \defeq \ml \vz$ be a demand and $\vf$ be any weighted flow that routes $\vd$; 
	that is, $(\mw^{1/2} \mb)^\top \vf = \vd$. Then we say $\norm{\vf}_2^2$ is the \emph{energy} of the flow $\vf$.

	There is a unique energy-minimizing flow $\vf^\star$ routing the demand $\vd$ on $G$. 
	From the study of electric flows, we know $\vf^\star = \mw^{1/2} \mb \ml^{-1} \vd$.
	Hence, we can refer to its energy as the energy of the demand $\vd$ on the graph of $\ml$, given by
	\begin{equation} \label{defn:energy-minimizer}
		\mathcal{E}_{\ml}(\vd) \defeq \min_{(\mw^{1/2}\mb)^\top \vf = \vd} \norm{\vf}_2^2  = \vd^\top (\mb^\top \mw \mb)^{-1} \vd = \vd^\top \ml^{-1} \vd = \vz^\top \ml \vz.
	\end{equation}
\end{definition}

We want to understanding how the energy changes when, instead of routing $\vd$ using the edges of $G$, we use edges of some other graphs related to $G$.
In particular, we are interested in the operations of graph decompositions and taking Schur complements. 
It turns out the energy behaves nicely:

\begin{lemma} \label{lem:energy-graph-decomposition}
	Suppose $G$ is a weighted graph that can be decomposed into weighted subgraphs $G_1, G_2$.
	That is, if $\ml$ is the Laplacian of $G$, and $\ml^{(i)}$ is the Laplacian of $G_i$, then $\ml = \ml^{(1)} + \ml^{(2)}$.
	Suppose $\vd \defeq \ml \vz$ is a demand on the vertices of $G$. Then if we decompose $\vd = \vd^{(1)} + \vd^{(2)}$, where $\vd^{(i)} = \ml^{(i)} \vz$, then the energies are related as:
	\[
		\energy{\ml}{\vd} = \energy{\ml^{(1)}}{\vd^{(1)}} + \energy{\ml^{(2)}}{\vd^{(2)}}.
	\]
\end{lemma}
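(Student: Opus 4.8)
The plan is to prove this by expressing all three energies as quadratic forms in the \emph{same} potential vector $\vz$, and then invoking the decomposition $\ml = \ml^{(1)} + \ml^{(2)}$ directly. The key observation is that each $\vd^{(i)}$ is defined via the \emph{same} $\vz$, namely $\vd^{(i)} = \ml^{(i)} \vz$; this is exactly what makes the energies add up. First I would recall from \cref{defn:energy-minimizer} that for a demand of the form $\vd = \ml \vz$, the minimum energy equals $\vd^\top \ml^{-1} \vd = \vz^\top \ml \vz$. The one subtlety here is that $\ml^{(1)}$ and $\ml^{(2)}$ may be singular (they are Laplacians of subgraphs, possibly disconnected), so $\ml^{(i)-1}$ is a pseudo-inverse; I need $\vd^{(i)} = \ml^{(i)} \vz$ to lie in $\range{\ml^{(i)}}$, which it does by construction, so $\energy{\ml^{(i)}}{\vd^{(i)}} = \vd^{(i)\top} \ml^{(i)-1} \vd^{(i)} = \vz^\top \ml^{(i)} \ml^{(i)-1} \ml^{(i)} \vz = \vz^\top \ml^{(i)} \vz$ using $\ml^{(i)}\ml^{(i)-1}\ml^{(i)} = \ml^{(i)}$. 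The same applies to $\ml$ itself (it may be singular if $G$ is disconnected, though here it is connected), giving $\energy{\ml}{\vd} = \vz^\top \ml \vz$.

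Then the computation is immediate:
\[
	\energy{\ml}{\vd} = \vz^\top \ml \vz = \vz^\top (\ml^{(1)} + \ml^{(2)}) \vz = \vz^\top \ml^{(1)} \vz + \vz^\top \ml^{(2)} \vz = \energy{\ml^{(1)}}{\vd^{(1)}} + \energy{\ml^{(2)}}{\vd^{(2)}}.
\]
The only point requiring care — and the main obstacle, though it is a mild one — is justifying that $\energy{\ml^{(i)}}{\vd^{(i)}}$ as defined by the minimization $\min_{(\mw^{1/2}\mb[G_i])^\top \vf = \vd^{(i)}} \norm{\vf}_2^2$ indeed equals $\vz^\top \ml^{(i)} \vz$ rather than something smaller; this is precisely the statement that the energy-minimizing (electric) flow routing $\vd^{(i)}$ on $G_i$ has energy $\vd^{(i)\top} \ml^{(i)-1} \vd^{(i)}$, which holds whenever $\vd^{(i)} \in \range{\ml^{(i)}}$, and this range condition follows since $\vd^{(i)} = \ml^{(i)}\vz$. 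I would state this as the crux and verify the pseudo-inverse identity $\ml^{(i)} \ml^{(i)-1} \ml^{(i)} = \ml^{(i)}$ (standard for Moore-Penrose pseudo-inverses of symmetric PSD matrices), then conclude. If desired, one can also note this lemma extends by induction to decompositions into more than two subgraphs, which is how it will be applied along the separator tree, but the two-piece statement suffices as the inductive step.
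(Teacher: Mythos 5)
Your proof is correct and follows essentially the same route as the paper's: both reduce each energy to the quadratic form $\vz^\top \ml^{(i)} \vz$ via \cref{defn:energy-minimizer} and then sum using $\ml = \ml^{(1)} + \ml^{(2)}$. The paper's proof is a three-line "by definition" computation; your additional care about the pseudo-inverse and the range condition $\vd^{(i)} \in \range{\ml^{(i)}}$ is a valid elaboration of a step the paper leaves implicit, not a different argument.
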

\begin{proof}
	We have, by definition,
	\begin{align*}
		\energy{\ml^{(1)}}{\vd^{(1)}} + \energy{\ml^{(2)}}{\vd^{(2)}} 
		&= \vz^\top \ml^{(1)} \vz + \vz^\top \ml^{(2)} \vz \\
		&= \vz^\top \ml \vz  \\
		&= \mathcal{E}_\ml(\vd).
	\end{align*}
\end{proof}

The following lemma shows if $G'$ is a graph derived from $G$ by taking Schur complement on a subset of the vertices $C$, and $\vd$ is a demand supported on $C$, then the flow routing $\vd$ on $G$ will have lower energy than the flow routing $\vd$ on $G'$.

\begin{lemma}\label{lem:energy-schur-complement}
	Suppose $G$ is a weighted graph with Laplacian $\ml$. 
	Let $C$ be a subset of vertices of $G$.
	Let $\ml' = \tsc(\ml, C)$ be an $\eps$-approximate Schur complement. 
	Then for the demand $\vd = \ml' \vz$ supported on $C$, 
	\[
	 \energy{\ml}{\vd} \leq_\eps \energy{\ml'} {\vd}.
	\]
\end{lemma}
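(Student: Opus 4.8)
The plan is to reduce the statement to two standard facts: that passing to an \emph{exact} Schur complement onto a terminal set $C$ exactly preserves the energy of any demand supported on $C$, and that spectral approximation of positive semidefinite matrices transfers to their pseudoinverses. Throughout, recall from \cref{defn:energy-minimizer} that since $\vd = \ml'\vz$ lies in $\range{\ml'}$, we have $\energy{\ml'}{\vd} = \vd^\top(\ml')^{-1}\vd = \vz^\top\ml'\vz$, where $(\ml')^{-1}$ denotes the Moore--Penrose pseudoinverse.

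The first step is the key identity
\[
\energy{\ml}{\vd} = \energy{\sc(\ml,C)}{\vd}
\]
for any demand $\vd$ supported on $C$ (i.e.\ $\vd|_{V(G)\setminus C} = \vzero$). I would prove this by plugging $\vd$ into the block-Cholesky formula for $\ml^{-1}$ with blocks indexed by $F \defeq V(G)\setminus C$ and $C$ (the inverse of \cref{eq:basic_chol}): the lower-triangular factor fixes $\vd$ because its upper-right block is $\mzero$ and the $F$-part of $\vd$ vanishes; the block-diagonal factor sends $\vd$ to $\sc(\ml,C)^{-1}\vd$ on the $C$-block; and taking the inner product with $\vd$, again supported on $C$, kills everything but the $C\times C$ contribution, leaving $\vd^\top\ml^{-1}\vd = \vd^\top\sc(\ml,C)^{-1}\vd$. (This is the familiar statement that eliminating interior vertices by Gaussian elimination preserves the electrical behavior at the terminals.)

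Next I would invoke \cref{def:approxSchur}: by hypothesis $\ml' = \tsc(\ml,C) \approx_\eps \sc(\ml,C)$, i.e.\ $e^{-\eps}\sc(\ml,C) \preceq \ml' \preceq e^{\eps}\sc(\ml,C)$. Since both are Laplacians of graphs on the vertex set $C$ (and, with $G$ connected, both have kernel exactly $\mathrm{span}(\vone_C)$), inversion reverses the order on the common image, giving $e^{-\eps}(\ml')^{-1} \preceq \sc(\ml,C)^{-1} \preceq e^{\eps}(\ml')^{-1}$ for the pseudoinverses. Evaluating the quadratic form at $\vd$ — which lies in this common image because $\vd = \ml'\vz$ sums to zero — yields $\energy{\sc(\ml,C)}{\vd} = \vd^\top\sc(\ml,C)^{-1}\vd \approx_\eps \vd^\top(\ml')^{-1}\vd = \energy{\ml'}{\vd}$. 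Combining with the identity of the previous step gives $\energy{\ml}{\vd} \approx_\eps \energy{\ml'}{\vd}$, which is precisely $\energy{\ml}{\vd} \leq_\eps \energy{\ml'}{\vd}$.

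The only delicate point is the pseudoinverse and kernel bookkeeping: one must confirm that $\ml$, $\sc(\ml,C)$, and $\ml' = \tsc(\ml,C)$ share the appropriate kernels (namely $\vone_C$ on the space of vectors on $C$), so that spectral domination genuinely passes to the pseudoinverses, and that the demand $\vd$ lives in the common range so the block-Cholesky identity holds on the quotient space. Everything else is a routine computation, so I expect this kernel check to be the main (mild) obstacle.
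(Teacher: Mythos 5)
Your proposal is correct and follows essentially the same route as the paper: pass from $\ml$ to the exact Schur complement $\sc(\ml,C)$ for demands supported on $C$, then transfer the $\approx_\eps$ spectral approximation $\ml'\approx_\eps\sc(\ml,C)$ to the (pseudo)inverse quadratic forms. The only difference is cosmetic — you establish $\vd^\top\ml^{-1}\vd=\vd^\top\sc(\ml,C)^{-1}\vd$ as an exact equality via the block-Cholesky factorization, whereas the paper only invokes the one-sided bound $\sc(\ml,C)\preceq\ml$; your version is in fact slightly cleaner since it yields both directions of the $\leq_\eps$ relation, and your kernel/range bookkeeping is a valid (if routine) point of care.
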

\begin{proof}
	We have, by definition,
	\begin{align*} 
		\energy{\ml}{\ml' \vz} & = \vz^\top \ml' {\ml}^{-1} \ml' \vz \\
		&\leq \vz^\top \ml' \sc(\ml, C)^{-1} \ml' \vz 
		\tag{since $\sc(\ml,C) \preccurlyeq \ml$}\\
		& \approx_{\eps} \vz^\top \ml' \ml'^{-1} \ml' \vz \\
		&= \mathcal{E}_{\ml'}(\ml' \vz).
	\end{align*}
\end{proof}

For any $H \in \ct$, we know $\tf^{(H)}$ routes $\vd^{(H)}$ using the original graph $G$. 
Furthermore, we know the graph of $\ml^{(H)}$ is related to $G$ using the graph operations considered above. Suppose ${\vf^{(H)}}^\star$ is the energy-minimizing flow routing $\vd^{(H)}$ on the graph of $\ml^{(H)}$. Then we want to relate the energies of $\tf^{(H)}$ and ${\vf^{(H)}}^\star$:

\begin{lemma} \label{lem:node-energy-bound}
	Let $H$ be a node at level $i$ in $\ct$. 
	Given \emph{any} $\vz$, 
	let $\vd \defeq \ml^{(H)} \vz$ be a demand.
	Then the weighted flow $\vf \defeq \mm^{(H)} \vz$ satisfies
	$\norm{\vf}_2^2 \leq_{i\epssc} \mathcal{E}_{\ml^{(H)}}(\vd)$. 
	
	Consequently, $\norm{\tf^{(H)}}_2^2 \leq_{i \epssc} \energy{\ml^{(H)}}{\vd^{(H)}}$.
\end{lemma}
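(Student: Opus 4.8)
\textbf{Proof plan for \cref{lem:node-energy-bound}.}

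The plan is to induct on the level $i$ of the node $H$ in $\ct$, peeling off one layer of the recursive flow operator at a time, and tracking how the energy degrades by a factor $e^{\epssc}$ per level. The base case is $i = 0$, where $H$ is a leaf node: then $\mm^{(H)} = \mw^{1/2}\mb[H]$, so $\vf = \mw^{1/2}\mb[H]\vz$ is exactly a weighted flow on the region $H$ routing the demand $\vd = \ml^{(H)}\vz = (\mb[H])^\top\mw_{E(H)}\mb[H]\vz$. Since $\ml^{(H)}$ is the \emph{exact} Laplacian of the constant-sized region at a leaf, the flow $\vf$ may not be energy-minimizing, but in fact we want the reverse: we claim $\norm{\vf}_2^2 = \vz^\top \ml^{(H)}\vz = \energy{\ml^{(H)}}{\vd}$ by \cref{defn:energy-minimizer}, because $\vf = \mw^{1/2}\mb[H]\vz$ and $\ml^{(H)}\vz$ restricted to the range being routed makes $\vz$ the potential vector realizing the energy-minimizer. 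So the base case holds with no loss (consistent with the $i\epssc = 0$ factor).

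For the inductive step, let $H$ be at level $i$ with children $D_1, D_2$ (at levels $i_1, i_2 \le i-1$). Using the recursion $\mm^{(H)} = \mm^{(D_1)}\mm_{(D_1,H)} + \mm^{(D_2)}\mm_{(D_2,H)}$ with $\mm_{(D_j,H)} = (\ml^{(D_j)})^{-1}\tsc(\ml^{(D_j)},\partial D_j)$, we get $\vf = \vf^{(1)} + \vf^{(2)}$ where $\vf^{(j)} \defeq \mm^{(D_j)}\big((\ml^{(D_j)})^{-1}\tsc(\ml^{(D_j)},\partial D_j)\vz\big)$. The key observation is that $\vf^{(1)}$ and $\vf^{(2)}$ are supported on edge-disjoint regions (leaves of $\ct_{D_1}$ versus $\ct_{D_2}$), so $\norm{\vf}_2^2 = \norm{\vf^{(1)}}_2^2 + \norm{\vf^{(2)}}_2^2$ by orthogonality (analogous to \cref{lem:orthogonal-flows}). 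Set $\vz_j \defeq (\ml^{(D_j)})^{-1}\tsc(\ml^{(D_j)},\partial D_j)\vz$ and $\vd_j \defeq \ml^{(D_j)}\vz_j = \tsc(\ml^{(D_j)},\partial D_j)\vz$, which is supported on $\partial D_j$. By the inductive hypothesis applied to $D_j$, $\norm{\vf^{(j)}}_2^2 \le_{i_j \epssc} \energy{\ml^{(D_j)}}{\vd_j} \le_{(i-1)\epssc} \energy{\ml^{(D_j)}}{\vd_j}$. Now apply \cref{lem:energy-schur-complement} with the graph $\ml^{(D_j)}$, vertex subset $\partial D_j$, and approximate Schur complement $\ml' = \tsc(\ml^{(D_j)},\partial D_j)$: since $\vd_j = \ml'\vz$ is supported on $\partial D_j$, we get $\energy{\ml^{(D_j)}}{\vd_j} \le_{\epssc} \energy{\tsc(\ml^{(D_j)},\partial D_j)}{\vd_j} = \vz^\top \tsc(\ml^{(D_j)},\partial D_j)\vz$. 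Chaining these, $\norm{\vf^{(j)}}_2^2 \le_{i\epssc} \vz^\top \tsc(\ml^{(D_j)},\partial D_j)\vz$. Summing over $j=1,2$ and using $\ml^{(H)} = \tsc(\ml^{(D_1)},\partial D_1) + \tsc(\ml^{(D_2)},\partial D_2)$ (the definition of $\ml^{(H)}$ at a non-leaf node) together with \cref{lem:energy-graph-decomposition}, we obtain $\norm{\vf}_2^2 \le_{i\epssc} \vz^\top \ml^{(H)}\vz = \energy{\ml^{(H)}}{\vd}$, completing the induction. The final sentence of the lemma is the special case $\vz \mapsto \vz|_{F_H}$, $\vd \mapsto \vd^{(H)} = \ml^{(H)}\vz|_{F_H}$, $\vf \mapsto \tf^{(H)} = \mm^{(H)}\vz|_{F_H}$.

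\textbf{Main obstacle.} The subtle point is the bookkeeping of the approximation factors: the height $\eta(D_j)$ of a child can be strictly less than $i-1$, yet the inductive hypothesis must be invoked with the \emph{correct} per-node level so that the accumulated factor across any root-to-leaf path never exceeds $\eta\epssc$; I would phrase the induction on $\eta(H)$ rather than on a global depth to make this airtight. The other point needing care is justifying $\norm{\vf}_2^2 = \norm{\vf^{(1)}}_2^2 + \norm{\vf^{(2)}}_2^2$ from edge-disjointness of the supports — this is exactly the content of \cref{lem:orthogonal-flows} and should be cited, not re-derived. Finally, one must check that each $\tsc(\ml^{(D_j)},\partial D_j)$ is indeed an $\epssc$-approximate Schur complement of $\ml^{(D_j)}$ onto $\partial D_j$; this is guaranteed by \cref{thm:apxsc}, so \cref{lem:energy-schur-complement} applies with parameter $\epssc$.
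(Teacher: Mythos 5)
Your proof is correct and follows essentially the same route as the paper's: the identical base-case computation $\norm{\mw^{1/2}\mb[H]\vz}_2^2 = \vz^\top\ml^{(H)}\vz$, the same orthogonal splitting across the two children, the same application of the inductive hypothesis to $\mm_{(D_j,H)}\vz$, and the same chaining of \cref{lem:energy-schur-complement} and \cref{lem:energy-graph-decomposition} via $\ml^{(H)} = \tsc(\ml^{(D_1)},\partial D_1) + \tsc(\ml^{(D_2)},\partial D_2)$. The bookkeeping concern you raise is already handled by the paper's convention that "level $i$" means $\eta(H)=i$, so inducting on the level is inducting on $\eta(H)$.
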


\begin{proof}
	We proceed by induction.
	In the base case, $H$ is a leaf node, and we have
	\begin{align*}
	\norm{\mm^{(H)} \vz}_2^2 = {\vz}^\top (\mb[H])^\top \mw \mb[H] \vz = {\vz}^\top \ml^{(H)} \vz = \energy{\ml^{(H)}}{\vd}.
	\end{align*}
	
	Suppose $H$ is at level $i > 0$ in $\ct$, with children $D_1$ and $D_2$ at level at most $i-1$. Then
	\begin{align*}
		&\phantom{{} = {}} \norm{ \mm^{(H)} \vz }_2^2 \\
		&= \norm{ \left(\mm^{(D_1)} \mm_{(D_1,H)} + \mm^{(D_2)} \mm_{(D_2, H)} \right) \vz}_2^2 
		\intertext{Since $\range{\mm^{(D_1)}}$ and $\range{\mm^{(D_2)}}$ are orthogonal, we have}
		&= \norm{\mm^{(D_1)} \mm_{(D_1,H)} \vz}_2^2 +
		 \norm{\mm^{(D_2)} \mm_{(D_2,H)} \vz}_2^2 \\
		 &\leq_{(i-1)\epssc} \mathcal{E}_{\ml^{(D_1)}} \left(\ml^{(D_1)} \mm_{(D_1,H)}  \vz \right) + 
		 \mathcal{E}_{\ml^{(D_2)}} \left(\ml^{(D_2)} \mm_{(D_2,H)}  \vz \right) 
		 \tag{by inductive hypothesis with $\vz = \mm_{(D_i,H)} \vz$} \\
		 &= \mathcal{E}_{\ml^{(D_1)}} \left(\ml^{(D_1)} (\ml^{(D_1)})^{-1} \tsc(\ml^{(D_1)}, \bdry{D_1}) \vz \right) + \mathcal{E}_{\ml^{(D_2)}} \left(\ml^{(D_2)} (\ml^{(D_2)})^{-1} \tsc(\ml^{(D_2)}, \bdry{D_2}) \vz \right) \\
		 &\leq_\epssc \mathcal{E}_{\tsc(\ml^{(D_1)}, \bdry{D_1})} \left(\tsc(\ml^{(D_1)}, \bdry{D_1}) \vz \right) + 
		 \mathcal{E}_{\tsc(\ml^{(D_2)}, \bdry{D_2})} \left(\tsc(\ml^{(D_2)}, \bdry{D_2}) \vz \right) \\
		 &= \mathcal{E}_{\ml^{(H)}} (\ml^{(H)} \vz),
	\end{align*}
	where the last two inequalities follow from \cref{lem:energy-graph-decomposition,lem:energy-schur-complement}.
\end{proof}

Next, we want to relate the energy of routing $\vd^{(H)}$ on the graph $G$ and the energy on the graph of $\ml^{(H)}$.

\begin{lemma} \label{lem:energy-on-L-vs-LH}
	For a node $H$ at level $i$ in $\ct$, 
	\[
		\energy{\ml}{\vd^{(H)}} \approx_{\epssc} \energy{\ml^{(H)}}{\vd^{(H)}}.
	\]
	\qed
\end{lemma}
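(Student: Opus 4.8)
\textbf{Proof proposal for \cref{lem:energy-on-L-vs-LH}.}

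The plan is to relate the demand $\vd^{(H)}$ to a Schur-complement-like reduction of the original graph $G$, and then invoke the energy-monotonicity lemmas already established. The key observation is that $\vd^{(H)}$ is supported on $\bdry{H} \cup \elim{H}$, so it is a demand only on a bounded vertex set. By \cref{thm:apxsc}, the Laplacian $\ml^{(H)}$ maintained by \textsc{DynamicSC} satisfies $\ml^{(H)} \approx_{\epssc} \sc(\ml[H], \bdry{H} \cup \elim{H})$, and moreover $\ml^{(H)}$ is supported exactly on $\bdry{H} \cup \elim{H}$. Combining this with \cref{lem:sc-decomposition} applied along the separator tree, the exact Schur complement of the full Laplacian $\ml$ onto $\bdry{H} \cup \elim{H}$ should decompose as $\sc(\ml[H], \bdry{H}\cup\elim{H})$ plus Schur complements coming from the portions of $G$ outside region $H$; that extra part only adds edges to the graph, which by \cref{lem:energy-schur-complement}-type reasoning can only decrease energy. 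So $\sc(\ml, \bdry{H}\cup\elim{H}) \succeq \sc(\ml[H], \bdry{H}\cup\elim{H}) \approx_{\epssc} \ml^{(H)}$, and also $\sc(\ml, \bdry{H}\cup\elim{H}) \preceq \ml$ by the standard Schur-complement PSD ordering.

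First I would fix $C \defeq \bdry{H} \cup \elim{H}$ and record that $\vd^{(H)} = \ml^{(H)} \vz|_{\elim{H}}$ is supported on $C$. Then I would write $\energy{\ml}{\vd^{(H)}} = (\vd^{(H)})^\top \ml^{-1} \vd^{(H)}$ and use the identity $(\vd^{(H)})^\top \ml^{-1} \vd^{(H)} = (\vd^{(H)})^\top \sc(\ml, C)^{-1} \vd^{(H)}$, which holds because $\vd^{(H)}$ is supported on $C$ (this is the standard fact that restricting the energy-optimal flow's potential to $C$ gives the Schur complement quadratic form — it follows directly from the block Cholesky decomposition \cref{eq:basic_chol}). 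Next I would establish the two-sided spectral bound $\ml^{(H)} \approx_{2\epssc} \sc(\ml, C)$ on the subspace where $\vd^{(H)}$ lives: the lower direction $\sc(\ml,C) \succeq_{\text{restricted to }C} \sc(\ml[H], C) \approx_{\epssc} \ml^{(H)}$ follows from the edge-addition argument (the complement of $H$ contributes a PSD Laplacian whose Schur complement onto $C$ is also PSD); the comparison is only needed as quadratic forms evaluated at $\vd^{(H)}$, so I don't need the full matrix inequality, just $(\vd^{(H)})^\top \sc(\ml,C)^{-1} \vd^{(H)} \le_{\epssc} (\vd^{(H)})^\top (\ml^{(H)})^{-1} \vd^{(H)} = \energy{\ml^{(H)}}{\vd^{(H)}}$. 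For the reverse direction, $\ml^{(H)} = \tsc(\ml^{(H)}, \bdry{H}\cup\elim{H})$-type bounds from \cref{thm:apxsc} give $\ml^{(H)} \approx_{\epssc} \sc(\ml[H], C)$, and $\sc(\ml[H], C) \preceq \sc(\ml, C)$ would need to be argued carefully — actually I expect the cleaner route is: $\energy{\ml}{\vd^{(H)}} = \energy{\sc(\ml,C)}{\vd^{(H)}}$, and then apply a version of \cref{lem:energy-schur-complement} (with the roles set up so that $\sc(\ml,C)$ plays the role of the ``finer'' graph and $\ml^{(H)}$ the coarser approximate Schur complement) to get both inequalities up to factor $\epssc$.

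The main obstacle I anticipate is carefully justifying the relationship between $\sc(\ml, C)$ and $\ml^{(H)}$ — specifically that the ``outside-$H$'' contribution to $\sc(\ml,C)$ is a genuine graph Laplacian (so that edge-monotonicity of energy applies) and controlling the approximation factor to be $\epssc$ rather than $\eta\epssc$ or worse. The subtlety is that $\ml^{(H)}$ is built recursively from approximate Schur complements of $H$'s subtree, so it approximates $\sc(\ml[H], C)$ with factor $\epssc$ by \cref{thm:apxsc}, and I must make sure the decomposition $\sc(\ml, C) = \sc(\ml[H], C) + (\text{Schur complement of }\ml[G \setminus H]\text{ onto }C)$ holds exactly (via \cref{lem:sc-decomposition}, using that $G$ decomposes into $H$ and its complement sharing only vertices in $\bdry{H} \subseteq C$, together with \cref{lem:bdry-containment}). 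Once that exact decomposition is in hand, the energy comparison is a two-line application of \cref{lem:energy-schur-complement} and the PSD monotonicity, and the factor stays $\epssc$ because only the single approximation $\ml^{(H)} \approx_{\epssc} \sc(\ml[H], C)$ is used. I would also double-check that all the quadratic forms are evaluated on vectors in the correct subspace (orthogonal to the all-ones vector on each connected component) so that the pseudo-inverses behave, but this is routine.
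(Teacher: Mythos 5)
Your first direction is correct and is essentially the paper's: writing $C \defeq \bdry{H}\cup\elim{H}$, using that $\vd^{(H)}$ is supported on $C$, that $(\vd^{(H)})^\top\ml^{-1}\vd^{(H)} = (\vd^{(H)})^\top\sc(\ml,C)^{-1}\vd^{(H)}$, that $\sc(\ml[H],C)\preceq\sc(\ml,C)$ (monotonicity of Schur complements under adding the PSD Laplacian of $G$ outside $H$), and that $\ml^{(H)}\approx_{\epssc}\sc(\ml[H],C)$ from \cref{thm:apxsc}, you get $\energy{\ml}{\vd^{(H)}}\le e^{\epssc}\energy{\ml^{(H)}}{\vd^{(H)}}$. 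That half is fine.

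The gap is the reverse inequality, $\energy{\ml^{(H)}}{\vd^{(H)}}\le e^{\epssc}\energy{\ml}{\vd^{(H)}}$, and your proposed route cannot close it. Every tool you invoke — the decomposition $\sc(\ml,C)=\sc(\ml[H],C)+\sc(\ml[G\setminus H],C)$, edge-monotonicity of energy, and \cref{lem:energy-schur-complement} — is inherently one-sided: the contribution of $G\setminus H$ is an arbitrary PSD Laplacian on $C$ whose size has nothing to do with $\epssc$, so adding it can only decrease $\energy{\cdot}{\vd^{(H)}}$, with no matching lower bound. There is no ``version of \cref{lem:energy-schur-complement} with the roles swapped'': that lemma compares a graph to an approximate Schur complement \emph{of that same graph}, whereas here $\ml^{(H)}$ approximates $\sc(\ml[H],C)$, the Schur complement of the \emph{region} $H$ only, not $\sc(\ml,C)$. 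A two-sided conclusion would be immediate if $\ml^{(H)}\approx_{\epssc}\sc(\ml,C)$ held, but \cref{thm:apxsc} does not give that, and conflating the two Schur complements is exactly where your sketch breaks. What remains to be shown is that routing $\vd^{(H)}$ using only the edges of region $H$ is at most an $e^{O(\epssc)}$ factor more expensive than routing it in all of $G$ — a statement that is false for generic demands on $C$ and so must exploit the specific form $\vd^{(H)}=\ml^{(H)}\vz|_{\elim{H}}$. The paper's proof of this direction is a separate, direct comparison of the quadratic forms $(\vd^{(H)})^\top\sc(\ml[H],C)^{-1}\vd^{(H)}$ and $(\vd^{(H)})^\top\ml^{-1}\vd^{(H)}$ (its ``first direction''), not a consequence of the monotonicity facts you cite; as you yourself flag, this is the step that ``would need to be argued carefully,'' and since the naive monotonicity points the wrong way, your proposal leaves this half of the lemma unproven.
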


\begin{proof}
	For one direction, we have
	\begin{align*}
		\energy{\ml^{(H)}}{\vd^{(H)}} &= {\vd^{(H)}}^\top {\ml^{(H)}}^{-1} \vd^{(H)} \\
		&\approx_{\epssc} {\vd^{(H)}}^\top \sc(\ml[H], \bdry{H} \cup F_H)^{-1} \vd^{(H)} \tag{by \cref{thm:apxsc}} \\
		&\leq {\vd^{(H)}}^\top \ml^{-1} \vd^{(H)} 	\tag{since $\sc(\ml, C) \succcurlyeq \ml$} \\
		&=\energy \ml {\vd^{(H)}}.
	\end{align*}
	In the other direction, we note that $\tf^{(H)}$ is a weighted flow routing $\vd^{(H)}$ on $G$. By \cref{lem:node-energy-bound} and the definition of energy,
	\[
	\energy{\ml}{\vd^{(H)}} \leq \norm{\tf^{(H)}}_2^2 \approx_{i\epssc} \energy{\ml^{(H)}}{\vd^{(H)}}.
	\]
\end{proof}

We need to further bound the sum of energies: 

\begin{restatable}{lemma}{energySumBound}
	\label{lem:energy-sum-bound} 
	We have the following approximation of the energy of $\vd$ on graph $G$:
	\[
		\sum_{H \in \ct} \energy{\ml^{(H)}}{\vd^{(H)}} \approx_{\eta\epssc} \energy \ml \vd.
	\]
\end{restatable}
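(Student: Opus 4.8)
The plan is to prove the claim via a two-sided spectral sandwich, exploiting the demand decomposition $\vd = \sum_{H \in \ct} \vd^{(H)}$ from \cref{thm:demand-decomposition} together with the near-orthogonality of energies established in the preceding lemmas. First I would handle the upper bound $\energy{\ml}{\vd} \le (1 + O(\eta\epssc)) \sum_{H \in \ct} \energy{\ml^{(H)}}{\vd^{(H)}}$. Here the natural witness flow is $\tf = \sum_{H \in \ct} \tf^{(H)}$, which routes $\vd$ on $G$ by \cref{lem:flow-forest-operator-feasibility} and \cref{thm:demand-decomposition}. Since $\energy{\ml}{\vd}$ is the \emph{minimum} energy over all flows routing $\vd$, we get $\energy{\ml}{\vd} \le \norm{\tf}_2^2$. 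To bound $\norm{\tf}_2^2$, I would group the nodes of $\ct$ by level: within a single level $i$, the flows $\{\tf^{(H)} : H \in \ct(i)\}$ are pairwise orthogonal by \cref{lem:orthogonal-flows}, so $\norm{\sum_{H \in \ct(i)} \tf^{(H)}}_2^2 = \sum_{H \in \ct(i)} \norm{\tf^{(H)}}_2^2$. Then $\norm{\tf}_2 \le \sum_{i=0}^\eta \norm{\sum_{H \in \ct(i)} \tf^{(H)}}_2$ by the triangle inequality over the $\eta+1$ levels, and applying \cref{lem:node-energy-bound} ($\norm{\tf^{(H)}}_2^2 \le_{i\epssc} \energy{\ml^{(H)}}{\vd^{(H)}}$) gives the bound up to a $(1 + O(\eta\epssc))$ factor coming from both the level-wise $e^{i\epssc}$ distortions and the squaring of the sum over $\eta+1$ levels (each term contributing a $(1+O(\epssc))$-type factor). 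I should be careful that $\left(\sum_{i=0}^\eta x_i\right)^2 \le (\eta+1)\sum x_i^2$ is too lossy, so instead I would keep the multiplicative form: if $\norm{\sum_{H\in\ct(i)}\tf^{(H)}}_2^2 \le e^{i\epssc} E_i$ where $E_i \defeq \sum_{H\in\ct(i)}\energy{\ml^{(H)}}{\vd^{(H)}}$, then $\norm{\tf}_2 \le \sum_i e^{i\epssc/2}\sqrt{E_i}$, and I want this $\le e^{\eta\epssc/2}\sqrt{\sum_i E_i}$ — this does \emph{not} follow directly, so I may instead need to bound $\norm{\tf}_2^2$ more cleverly, e.g. by using $\energy{\ml}{\vd} \le \norm{\tf}_2^2$ and then splitting the cross terms $\langle \tf^{(H)}, \tf^{(H')}\rangle$ for $H, H'$ at different levels; cross terms between different levels need not vanish, which is the crux of the difficulty.

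For the lower bound $\sum_{H \in \ct} \energy{\ml^{(H)}}{\vd^{(H)}} \le (1 + O(\eta\epssc)) \energy{\ml}{\vd}$, I would use \cref{lem:energy-on-L-vs-LH} to replace each $\energy{\ml^{(H)}}{\vd^{(H)}}$ with $\energy{\ml}{\vd^{(H)}}$ up to an $e^{\epssc}$ factor, reducing to showing $\sum_{H \in \ct} \energy{\ml}{\vd^{(H)}} \lesssim \energy{\ml}{\vd}$. The vector $\vd^{(H)}$ equals $\ml^{(H)} \vz|_{F_H}$, and crucially $\ml^{(H)} \vz|_{F_H} = \ml^{(i)}_{\cdot, F_i}\,(\text{something})$... more usefully, I recall from the proof of \cref{thm:demand-decomposition} that $\vd^{(H)} = \ml^{(H)}\vz|_{F_H}$ where $\vz|_{F_H}$ has a clean expression, and that $\energy{\ml}{\vd^{(H)}} = {\vd^{(H)}}^\top \ml^{-1}\vd^{(H)}$. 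I would like to argue that the vectors $\ml^{-1/2}\vd^{(H)}$ are near-orthogonal across $H$, so that $\sum_H \norm{\ml^{-1/2}\vd^{(H)}}_2^2 \approx \norm{\sum_H \ml^{-1/2}\vd^{(H)}}_2^2 = \energy{\ml}{\vd}$. This orthogonality should come from the fact that $\vd^{(H)} = \ml^{(H)}\vz|_{F_H}$ and, for the \emph{exact} Schur complement case, the $\ml^{(i)}\vz$ decomposition is literally the telescoping one; with approximate Schur complements, orthogonality holds up to $e^{O(\eta\epssc)}$.

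Actually, the cleanest route — and the one I would adopt — combines \cref{lem:energy-graph-decomposition} and \cref{lem:energy-schur-complement} recursively, mirroring the structure of \cref{lem:node-energy-bound}. Define $\overline{\mathcal{E}}(H) \defeq \sum_{D \in \ct_H} \energy{\ml^{(D)}}{\vd^{(D)}}$ for the subtree rooted at $H$, and prove by induction that $\overline{\mathcal{E}}(H) \approx_{\eta(H)\epssc} \energy{\ml^{(H)}}{\sum_{D\in\ct_H}\vd^{(D)}}$. At a node $H$ with children $D_1, D_2$: the total demand into the subtree $\ct_H$ restricted to $\ct_{D_j}$ is routed, up to the edge operator $\mm_{(D_j,H)}$ which involves $(\ml^{(D_j)})^{-1}\tsc(\ml^{(D_j)},\partial D_j)$; applying \cref{lem:energy-schur-complement} converts $\energy{\tsc(\ml^{(D_j)},\partial D_j)}{\cdot}$ to $\energy{\ml^{(D_j)}}{\cdot}$ with an $e^{\epssc}$ loss, and \cref{lem:energy-graph-decomposition} (since $\ml^{(H)} = \tsc(\ml^{(D_1)},\partial D_1) + \tsc(\ml^{(D_2)},\partial D_2)$ by the \textsc{DynamicSC} construction) splits the energy additively, and the inductive hypothesis on $D_1, D_2$ absorbs a further $e^{(\eta(H)-1)\epssc}$. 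Summing the $F_H$-local demand term in as well (this is where $\vz|_{F_H}$ re-enters), the induction closes. Setting $H$ to the root $G$ of $\ct$ gives $\sum_{H\in\ct}\energy{\ml^{(H)}}{\vd^{(H)}} \approx_{\eta\epssc} \energy{\ml^{(G)}}{\vd} = \energy{\ml}{\vd}$, since $\ml^{(G)} \approx_\epssc \ml$ and $\sum_{H\in\ct}\vd^{(H)} = \vd$. The main obstacle I anticipate is bookkeeping the approximation factors precisely so the total loss is $e^{\eta\epssc}$ and not $e^{\eta^2\epssc}$ — this requires that the per-level $\epssc$ losses \emph{add} rather than compound multiplicatively along root-to-leaf paths, which is exactly what the "$\approx_{\eta(H)\epssc}$" inductive invariant encodes (each level of recursion adds one $\epssc$, and the depth is $\le \eta$), but verifying that the \cref{lem:energy-graph-decomposition} step truly incurs \emph{no} additional loss (it is an exact identity) while the \cref{lem:energy-schur-complement} step incurs exactly one $\epssc$ is the delicate point.
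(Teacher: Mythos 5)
Your proposal does not close, and the route you ultimately adopt contains a concrete error. The first two routes you sketch you yourself flag as incomplete (the cross-level inner products $\langle \tf^{(H)},\tf^{(H')}\rangle$ need not vanish, and the claimed near-orthogonality of the vectors $\ml^{-1/2}\vd^{(H)}$ is asserted rather than proved), so the weight falls on the third, ``cleanest'' route. There the inductive invariant $\sum_{D\in\ct_H}\energy{\ml^{(D)}}{\vd^{(D)}}\approx_{\eta(H)\epssc}\energy{\ml^{(H)}}{\sum_{D\in\ct_H}\vd^{(D)}}$ is ill-posed: each $\vd^{(D)}=\ml^{(D)}\vz|_{F_D}$ is supported on $\bdry{D}\cup F_D$, so the accumulated demand $\sum_{D\in\ct_H}\vd^{(D)}$ places mass on vertices eliminated strictly below $H$, which are \emph{not} in the support $\bdry{H}\cup F_H$ of $\ml^{(H)}$; such a demand cannot be routed on the graph of $\ml^{(H)}$, and the pseudo-inverse in the energy quadratic form would silently discard those components. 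The same issue sinks your final step: at the root, $\ml^{(G)}$ is a Laplacian supported on the top-level separator $F_G$ only --- it approximates $\sc(\ml,F_G)$, not $\ml$ --- so the claim ``$\energy{\ml^{(G)}}{\vd}=\energy{\ml}{\vd}$ since $\ml^{(G)}\approx_\epssc\ml$'' is false. Repairing this by replacing $\ml^{(H)}$ with the subgraph Laplacian $\ml[H]$ would demand an analogue of \cref{lem:node-energy-bound} for the accumulated demand of an entire subtree rather than the single level-$\eta(H)$ demand, which is genuinely new work, and it is exactly there that the additive-versus-compounding error accumulation you worry about would have to be controlled.

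The paper sidesteps all of this with a short algebraic computation that your proposal misses. Set $\vbeta\defeq\mpi^{(\eta-1)}\cdots\mpi^{(0)}\vd$, so that $\vz|_{F_H}=(\ml^{(H)}_{F_H,F_H})^{-1}\vbeta|_{F_H}$. The block identity stating that conjugating a symmetric block matrix by the matrix which is $\ma^{-1}$ on its top-left block and zero elsewhere returns that same matrix yields $\energy{\ml^{(H)}}{\vd^{(H)}}=\vz|_{F_H}^{\top}\ml^{(H)}\vz|_{F_H}=\vbeta^{\top}(\ml^{(H)}_{F_H,F_H})^{-1}\vbeta$ \emph{exactly}, with no approximation. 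Summing over $H\in\ct$ reassembles the block-diagonal matrix $\widetilde{\mga}$, so the left-hand side of the lemma equals $\vd^{\top}\mpi^{(0)\top}\cdots\mpi^{(\eta-1)\top}\widetilde{\mga}\mpi^{(\eta-1)}\cdots\mpi^{(0)}\vd$, and the statement is then an immediate corollary of the two-sided spectral bound in \cref{thm:overview-L-inv-approx}. In other words, the ``global orthogonality'' you are groping for in your second route is an exact telescoping identity, and the only approximation needed is the one already established for the factored inverse.
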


\begin{proof}
	We need the following matrix multiplication property:
	For any matrices $\ma,\mb,\md$,
	\begin{equation}\label{eq:matrices-top-left-corner}
		\left[\begin{array}{cc}
			\ma^{-1} & \mzero\\
			\mzero & \mzero
		\end{array}\right]\left[\begin{array}{cc}
			\ma & \mb\\
			\mb^{\top} & \md
		\end{array}\right]\left[\begin{array}{cc}
			\ma^{-1} & \mzero\\
			\mzero & \mzero
		\end{array}\right]=\left[\begin{array}{cc}
			\ma^{-1} & \mzero\\
			\mzero & \mzero
		\end{array}\right].
	\end{equation}
	
	Recall in our setting, all matrices are padded with zeros so that their dimension is $n \times n$, and vectors padded with zeros so their dimension is $n$.
	
	Define $\vbeta \defeq \mpi^{(\eta-1)}\cdots\mpi^{(1)}\mpi^{(0)} \vd$ for simplicity.
	Recall $\vz \defeq \widetilde{\mga} \mpi^{(\eta-1)} \cdots \mpi^{(0)} \mb^\top \mw^{1/2} \vv$.
	We can write
	\[
	\vz|_{F_H} = \left(\ml^{(H)}_{F_H, F_H} \right)^{-1} \vbeta.
	\]
	Then,
	\begin{align*}
		\energy{\ml^{(H)}}{\vd^{(H)}} &= \vz^\top|_{F_H} \ml^{(H)} \vz|_{F_H} \\
		&= \vbeta^\top \left(\ml^{(H)}_{F_H, F_H} \right)^{-1} \ml^{(H)} \left(\ml^{(H)}_{F_H, F_H} \right)^{-1} \vbeta \\
		&= \vbeta^\top \left(\ml^{(H)}_{F_H, F_H} \right)^{-1} \vbeta. \tag{by \cref{eq:matrices-top-left-corner}}
	\end{align*}
	
	Summing over all $H \in \ct$, we get
	\begin{align*}
		\sum_{H \in \ct} \energy{\ml^{(H)}}{\vd^{(H)}} &= \vbeta^\top \sum_{H \in \ct} (\ml^{(H)}_{F_H, F_H})^{-1} \vbeta \\
		&= \vd^\top  \mpi^{(0)\top}\cdots\mpi^{(\eta-1)\top}\left[\sum_H (\ml_{F_H,F_H}^{(H)})^{-1}\right] \mpi^{(\eta-1)}\cdots\mpi^{(0)} \vd \\ 
		&\approx_{\eta\epssc} \vd^\top \ml^{-1} \vd \\
		&= \mathcal{E}_{\ml}(\vd).
	\end{align*}
	where the last second step follows by \cref{thm:L-inv-approx}.
\end{proof}

Lastly, the following lemma shows that our weighted flow $\tf$ routing $\vd$ can be orthogonally decomposed in terms of the unique energy minimizer $\vf^\star$, which in turn allows us to bound $\|\tf - \vf^\star\|_2^2$.

\begin{lemma}\label{lem:f-orthogonal-decomposition}
	Let $\ml$ be a weighted Laplacian as above, and let $\vd$ be a demand.
	Let $\vf^\star = \mw^{1/2} \mb \ml^{-1} \vd$ be the weighted electric flow routing $\vd$ attaining the minimum energy $\energy \ml \vd$.
	For any other weighted flow $\tf$ satisfying $\mb^\top \mw^{1/2} \tf=\vd$, if $\|\tf\|_2^2\leq_\eps \mathcal{E}_{\ml}(\vd)$, then
	\[
	\|\tf - \vf^\star \|_2^2\leq 
	(e^{\eps}-1) \norm{\vf^\star}_2^2.
	\]
\end{lemma}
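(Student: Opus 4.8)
The key observation is that $\vf^\star = \mw^{1/2}\mb\ml^{-1}\vd$ lies in the image of $\mw^{1/2}\mb$, while any \emph{difference} of two weighted flows routing the same demand lies in the kernel of $(\mw^{1/2}\mb)^\top$; these two subspaces are orthogonal complements. So the plan is to establish an exact Pythagorean identity $\|\tf\|_2^2 = \|\vf^\star\|_2^2 + \|\tf - \vf^\star\|_2^2$, and then bound the last term using the energy hypothesis.

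First I would verify the orthogonality $\langle \vf^\star, \tf - \vf^\star\rangle = 0$. Write $\vf^\star = \mw^{1/2}\mb\,\vz^\star$ where $\vz^\star = \ml^{-1}\vd$ (here $\vz^\star$ is the electric potential). Then
\[
\langle \vf^\star, \tf - \vf^\star\rangle = (\vz^\star)^\top \mb^\top \mw^{1/2}(\tf - \vf^\star) = (\vz^\star)^\top\big((\mw^{1/2}\mb)^\top\tf - (\mw^{1/2}\mb)^\top\vf^\star\big) = (\vz^\star)^\top(\vd - \vd) = \vzero,
\]
using $\mb^\top\mw^{1/2}\tf = \vd$ by hypothesis and $\mb^\top\mw^{1/2}\vf^\star = \mb^\top\mw\mb\ml^{-1}\vd = \vd$. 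Hence by the Pythagorean theorem $\|\tf\|_2^2 = \|\vf^\star\|_2^2 + \|\tf - \vf^\star\|_2^2$, so $\|\tf - \vf^\star\|_2^2 = \|\tf\|_2^2 - \|\vf^\star\|_2^2$.

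Next I would apply the two facts we are given: $\|\vf^\star\|_2^2 = \energy{\ml}{\vd}$ (this is \cref{defn:energy-minimizer}, since $\vf^\star$ is precisely the energy-minimizing flow) and $\|\tf\|_2^2 \leq_\eps \energy{\ml}{\vd}$, which by definition of $\leq_\eps$ means $\|\tf\|_2^2 \le e^{\eps}\energy{\ml}{\vd} = e^{\eps}\|\vf^\star\|_2^2$. Substituting,
\[
\|\tf - \vf^\star\|_2^2 = \|\tf\|_2^2 - \|\vf^\star\|_2^2 \le e^{\eps}\|\vf^\star\|_2^2 - \|\vf^\star\|_2^2 = (e^{\eps} - 1)\|\vf^\star\|_2^2,
\]
which is exactly the claimed bound. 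There is essentially no obstacle here — the whole content is the orthogonal-decomposition identity, and everything else is a one-line substitution; the only care needed is to make sure the hypothesis $\|\tf\|_2^2 \le_\eps \energy{\ml}{\vd}$ is read in the correct direction (upper bound on $\|\tf\|_2^2$), which is what makes the subtraction $\|\tf\|_2^2 - \|\vf^\star\|_2^2$ nonnegative and bounded. In the broader argument (the proof of \cref{thm:flow-forest-operator-correctness}) this lemma gets invoked with $\eps = O(\eta\epssc)$ coming from summing \cref{lem:energy-sum-bound}, \cref{lem:node-energy-bound}, and \cref{lem:energy-on-L-vs-LH}, and then $e^{O(\eta\epssc)} - 1 = O(\eta\epssc)$ when $\epssc = O(1/\log m)$, yielding $\|\tf - \vf^\star\|_2 = O(\sqrt{\eta\epssc})\|\vf^\star\|_2 \le O(\sqrt{\eta\epssc})\|\vv\|_2$; but that final assembly is outside the scope of this particular lemma.
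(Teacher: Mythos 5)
Your proof is correct and follows exactly the paper's argument: establish $\langle \vf^\star, \tf - \vf^\star\rangle = 0$ from the fact that both flows route $\vd$, apply the Pythagorean identity, and substitute the energy bound $\|\tf\|_2^2 \le e^{\eps}\|\vf^\star\|_2^2$. Your write-up is in fact slightly more explicit than the paper's about the direction in which $\leq_\eps$ is used.
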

\begin{proof}
	Observe that
	\[
	\vf^{\star\top}(\tf-\vf^\star)= \vd^\top \ml^{-1} \mb^\top \mw^{1/2} (\tf - \vf^\star) = \vd^\top \ml^{-1}(\vd - \vd) = \vzero.
	\]
	Hence, we have an orthogonal decomposition of $\tf$:
	\[
	\|\tf\|_2^{2} =\|\tf^\star\|_2^{2}+\|\tf-\vf^\star\|_2^{2}.
	\]
	It follows that
	\[
	\|\vf-\vf^\star\|^{2} \leq (e^{\eps}-1) \cdot \norm{\vf^*}_2^2.
	\]
\end{proof}

Finally, we put all the lemmas together for the overall proof that $\tf$ is the desired weighted flow.

\begin{proof}[Proof of \cref{thm:flow-forest-operator-correctness}]
	
	We first decompose $\vd = \sum_{H \in \ct} \vd^{(H)}$ according to \cref{thm:demand-decomposition}.
	By definition of the flow tree operator, 
	\[
	\tf \defeq \mm \vz \defeq \sum_{H \in \ct} \mm^{(H)} \vz|_{F_H} = \sum_{H \in \ct} \tf^{(H)},
	\]
	where $\tf^{(H)} \defeq \mm^{(H)} \vz|_{F_H}$ routes demand $\vd^{(H)}$ by \cref{lem:flow-forest-operator-feasibility}.
	Hence, 
	\[
	 (\mw^{1/2} \mb)^\top \tf = \sum_{H \in \ct} (\mw^{1/2} \mb)^\top \tf^{(H)} = \sum_{H \in \ct} \vd^{(H)} = \vd,
	\]
	meaning $\tf$ is feasible for routing $\vd$ on $G$.
	
	For each demand term $\vd^{(H)}$, let $\vf^{(H)\star}$ be the weighted flow on $G$ that attains the minimum energy $\mathcal{E}_{\ml}(\vd^{(H)})$ for routing it. 
	By \cref{thm:energy-minimizer} , $\vf^{(H)\star} = \mw^{1/2} \mb \ml^{-1} \vd^{(H)}$.	
	Recall $\vf^\star \defeq \mproj_{\vw} \vv = \mw^{1/2}\mb \ml^{-1} \vd$.
	Hence, 
	\[
	\vf^\star = \sum_{H \in \ct} \vf^{(H)\star}.
	\]
	
	By \cref{lem:energy-on-L-vs-LH}, 
	we know if $H$ is at level $i$ in $\ct$, then $\tf^{(H)}$ satisfies
	\begin{equation} \label{eq:tf^H-approxes-f^Hstar}
	\norm{\tf^{(H)}}_2^2 \leq_{i\epssc} \energy {\ml^{(H)}}{\vd^{(H)}} \approx_{i \epssc} \energy \ml {\vd^{(H)}} = \norm{\vf^{(H)\star}}_2^2.
	\end{equation}
	This shows that in the flow tree operator, the output $\tf^{(H)}$ of each tree operator $\mm^{(H)}$ is close to the natural corresponding term ${\vf^{(H)}}^\star$.
	Finally, we bound the overall approximation error:
	\begin{align*}
		\norm{\tf - \vf^\star}_2^2 &= \norm{ \sum_{H \in \ct} \left( \tf^{(H)}-\vf^{(H) \star} \right) }_2^{2} \\
		&\leq \left( \sum_{H \in \ct} \norm{\tf^{(H)}-\vf^{(H) \star} }_2  \right)^2 \\		
		&= \sum_{i=0}^{\eta}  \sum_{H \in \ct(i)} (e^{2i\epssc}-1) \mathcal{E}_{\ml} (\vd^{(H)})
		\tag{by \cref{lem:f-orthogonal-decomposition} and  \cref{eq:tf^H-approxes-f^Hstar}}\\
		&\leq \sum_{i=0}^{\eta}  \sum_{H \in \ct(i)} (e^{2i\epssc}-1) e^{i \epssc} \mathcal{E}_{\ml^{(H)}} (\vd^{(H)}) 
		\tag{by \cref{lem:energy-on-L-vs-LH}}\\		
		&\leq  e^{4 \eta \epssc} \sum_{H \in \ct} \mathcal{E}_{\ml} (\vd) 
		\tag{by \cref{lem:energy-sum-bound}}\\
		&=  O(\eta \epssc) \norm{\vf^\star}^2,
	\end{align*}
	which concludes the overall proof.
\end{proof}

\subsection{Proof of \crtcref{thm:FlowMaintain}}

Finally, we present the overall flow maintenance data structure. 
It is analogous to slack, except during each \textsc{Move} operation, there is an additional term of $\alpha \mw^{1/2} \vv$.

\label{subsec:flow-main-proof}
\FlowMaintain*
\begin{algorithm}
	\caption{Flow Maintenance, Main Algorithm}\label{alg:flow-maintain-main}
	\begin{algorithmic}[1]
		\State \textbf{data structure} \textsc{MaintainFlow} \textbf{extends} \textsc{MaintainZ}
		\State \textbf{private: member}
		\State \hspace{4mm} $\vw \in \R^m$: weight vector \Comment we use the diagonal matrix $\mw$ interchangeably
		\State \hspace{4mm} $\vv \in \R^m$: direction vector
		
		\State \hspace{4mm} \textsc{MaintainRep} $\maintainRep$: data structure to implicitly maintain
		\[
		\pf \defeq \vy + \mw^{1/2} \mm (c \zprev + \zsum).
		\]
		\Comment $\mm$ is defined by \cref{defn:flow-forest-operator}
		\State \hspace{4mm} $\hat c \in \R, \uf_0 \in \R^m$: scalar and vector to implicitly maintain
		\[
			\uf \defeq \uf_0 + \hat c \cdot  \mw \vv.
		\]
		\State \hspace{4mm} \textsc{MaintainApprox} \texttt{bar\_f}: data structure to maintain approximation $\of$ to $\vf$ (\cref{thm:VectorTreeMaintain})
		\State
		\Procedure{Initialize}{$G,\vf^{\init} \in\R^{m},\vv\in \R^{m}, \vw\in\R_{>0}^{m},\epssc>0,
		\overline{\epsilon}>0$}
			\State Build the separator tree $\ct$ by \cref{thm:separator_tree_construction} 
			\State $\maintainRep.\textsc{Initialize}(G, \ct, \mw^{1/2}\mm, \vv, \vw, \vzero, \epssc)$
			\Comment{initialize $\pf \leftarrow \vzero$}
			\State $\vw \leftarrow \vw, \vv \leftarrow \vv$
			\State $\hat{c}\leftarrow 0, \uf_0\leftarrow \vf^\init$ \Comment initialize $\uf  \leftarrow \vf^{\init}$\label{line:flowinitf0}
			\State $\texttt{bar\_f}.\textsc{Initialize}(-\mw^{1/2}\mm,c, \zprev,\zsum, -\vy+\uf_0+\hat{c}\cdot \mw \vv, \mw^{-1}, n^{-5}, \overline{\epsilon})$
			\State \Comment{initialize $\of \leftarrow \vf^{\init}$}

		\EndProcedure
		
		\State
		\Procedure{Reweight}{$\vw^\new \in \R^m_{>0}$}
			\State $\maintainRep.\textsc{Reweight}(\vw^\new)$
			\State $\Delta \vw \leftarrow \vw^{\new} - \vw$
			\State $\vw \leftarrow \vw^\new$
			\State $\uf_0 \leftarrow \uf_0 - \hat{c} (\Delta \mw)^{1/2} \vv$
		\EndProcedure

		\State
		\Procedure{Move}{$\alpha,\vv^\new \in \R^m$}
			\State $\maintainRep.\textsc{Move}(\alpha, \vv^\new)$
			\State $\Delta \vv \leftarrow \vv^\new - \vv$
			\State $\vv\leftarrow \vv^\new$
			\State $\uf_0 \leftarrow \uf_0 - \hat{c} \mw^{1/2} \Delta \vv$
			\State $\hat{c}\leftarrow\hat{c}+\alpha$
		\EndProcedure

		\State
		\Procedure{Approximate}{ }
		\State \Comment the variables in the argument are accessed from \maintainRep
		\State	\Return $\texttt{bar\_f}.\textsc{Approximate}({-\mw^{1/2}\mm},c, \zprev,\zsum,-\vy+\uf_0+\hat{c}\cdot \mw \vv, \mw^{-1})$
		\EndProcedure

		\State
		\Procedure{Exact}{ }
		\State  $\pf \leftarrow \maintainRep.\textsc{Exact}()$
		\State \Return $(\uf_0+\hat{c}\cdot \mw \vv)- \pf$
		\EndProcedure
	\end{algorithmic}
\end{algorithm}

\begin{proof}[Proof of \cref{thm:FlowMaintain}]
We have the additional invariant that the IPM flow solution $\vf$ can be recovered in the data structure by the identity
\begin{align} \label{eq:vf-decomp}
	\vf &= \uf - \pf,
\end{align}
where $\pf$ is implicit maintained by \maintainRep, and $\uf$ is implicitly maintained by the identity $\uf = \uf_0 + \hat c \mw \vv$.

We prove the runtime and correctness of each procedure separately. 
Recall by \cref{lem:slack-edge-operator-correctness}, the tree operator $\mm$ has complexity $T(K) = O(\epssc^{-2} \sqrt{mK})$.

\paragraph{\textsc{Initialize}:}
By the initialization of \maintainRep~(\cref{thm:maintain_representation}), the implicit representation of $\pf$ in \maintainRep~is correct and $\pf = \vzero$. 
We then set $\uf \defeq \uf_0 + \hat c \mw \vv = \vf^\init$. So overall, we have $\vf \defeq \uf + \pf = \vf^\init$.
By the initialization of \flowSketch, $\of$ is set to $\vf = \vf^\init$ to start.

Initialization of $\maintainRep$ takes $\otilde(m\epssc^{-2})$ time by \cref{thm:maintain_representation},
and the initialization of $\flowSketch$ takes $\otilde(m)$ time by \cref{thm:VectorTreeMaintain}. 

\paragraph{\textsc{Reweight}:}
The change to the representation in $\pf$ is correct via \maintainRep~in exactly the same manner as the proof for the slack solution.
For the representation of $\uf$, the change in value caused by the update to $\vw$ is subtracted from the $\uf_0$ term, so that the representation is updated while the overall value remains the same.

\paragraph{\textsc{Move}:}
	This is similar to the proof for the slack solution.
	$\maintainRep.\textsc{Move}(\alpha, \vv^{(k)})$ updates the implicit representation of $\pf$ by 
	\[
	\pf \leftarrow \pf + \mw^{1/2} \mm \alpha \vz^{(k)},
	\]
	where $\mm$ is the flow projection tree operator defined in \cref{defn:flow-forest-operator}.
	By \cref{lem:slack-operator-correctness}, this is equivalent to the update
	\[
	\pf \leftarrow \pf + \alpha \mw^{1/2} \tf,
	\]
	where $\norm{\tf - \mproj_{\vw} \vv^{(k)}}_2 \leq O(\eta \epssc) \norm{\vv^{(k)}}_2$ and $\mb^{\top} \mw^{1/2} \tf = \mb^{\top} \mw^{1/2} \vv^{(k)}$ by \cref{thm:flow-forest-operator-correctness}.
	
	For the $\uf$ term, let $\uf_0', \hat c', \vv'$ be the state of $\uf_0, \hat c$ and $\vv$ at the start of the procedure, and similarly let $\uf'$ be the state of $\uf$ at the start.
	At the end of the procedure, we have
	\[
		\uf \defeq \uf_0 + \hat c \mw \vv = \uf_0' - \hat c' \mw^{1/2} \Delta \vv + (\hat c' + \alpha) \mw \vv = \uf_0' + \hat c' \mw^{1/2} \vv' + \alpha \mw^{1/2} \vv = \uf' + \alpha \mw^{1/2} \vv,
	\]
	so we have the correct update $\uf \leftarrow \uf + \alpha \mw^{1/2} \vv$. Combined with $\pf$, the update to $\vf$ is
	\[
		\vf \leftarrow \vf + \alpha \mw^{1/2} \vv - \alpha \mw^{1/2} \tf.
	\]
	
	By \cref{thm:maintain_representation}, if $\vv^{(k)}$ differs from $\vv^{(k-1)}$ on $K$ coordinates, then the runtime of \maintainRep~is $\O(\epssc^{-2}\sqrt{mK})$. Furthermore, $\zprev$ and $\zsum$ change on $\elim{H}$ for at most $\O(K)$ nodes in $\ct$. Updating $\uf$ takes $O(K)$ time where $K \leq O(m)$, giving us the overall claimed runtime.

\paragraph{\textsc{Approximate}:}
By the guarantee of \texttt{bar\_f}.\textsc{Approximate} from \cref{thm:VectorTreeMaintain},
the returned vector satisfies 
$\|\mw^{-1/2}\left( \of - (\uf - \pf)\right)\|_{\infty}\leq\overline{\epsilon}$,
where $\uf$ and $\pf$ are maintained in the current data structure.

\paragraph{\textsc{Exact}:}
The runtime and correctness follow from the guarantee of $\maintainRep.\textsc{Exact}$ given in \cref{thm:maintain_representation} and the invariant that $\vf = \uf - \pf$.

Finally, we have the following lemma about the runtime for \textsc{Approximate}. Let $\of^{(k)}$ denote the returned approximate vector at step $k$.
\begin{lemma}
	Suppose $\alpha\|\vv\|_{2}\leq\beta$ for some $\beta$ for all calls to \textsc{Move}.
	Let $K$ denote the total number of coordinates changed in $\vv$ and $\vw$ between the $k-1$-th and $k$-th \textsc{Reweight} and \textsc{Move} calls. Then at the $k$-th \textsc{Approximate} call,
	\begin{itemize}
		\item The data structure first sets $\of_e\leftarrow \vf^{(k-1)}_e$ for all coordinates $e$ where $\vw_e$ changed in the last \textsc{Reweight}, then sets $\of_e\leftarrow \vf^{(k)}_e$ for $O(N_k\defeq 2^{2\ell_{k}}(\frac{\beta}{\overline{\epsilon}})^{2}\log^{2}m)$ coordinates $e$, where $\ell_{k}$ is the largest integer
		$\ell$ with $k=0\mod2^{\ell}$ when $k\neq 0$ and $\ell_0=0$. 
		\item The amortized time for the $k$-th \textsc{Approximate} call
		is $\widetilde{O}(\epsilon_{\mproj}^{-2}\sqrt{m(K+N_{k-2^{\ell_k}})})$.
	\end{itemize}
\end{lemma}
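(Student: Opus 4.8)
The plan is to prove this final lemma almost identically to its slack counterpart (\cref{lem:slack-approx}), since the flow maintenance data structure is built on the same \textsc{MaintainRep} and \textsc{MaintainApprox} machinery, with only the extra bookkeeping for the $\uf$ term. First I would instantiate \cref{thm:VectorTreeMaintain} with $\vx \defeq \vf = \uf - \pf$ and the diagonal scaling matrix $\md = \mw^{-1}$, matching the representation passed to \texttt{bar\_f} in \cref{alg:flow-maintain-main}. The hypothesis of \cref{thm:VectorTreeMaintain} requires a bound $\|\vx^{(k)} - \vx^{(k-1)}\|_{\md^{(k)}} \leq O(\beta)$ on the per-step movement; establishing this is the one genuinely new computation here.

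To get that bound, I would use the update rule from \textsc{Move}: $\vf^{(k)} - \vf^{(k-1)} = \alpha^{(k)} \mw^{1/2} \vv^{(k)} - \alpha^{(k)} \mw^{1/2} \tf^{(k)}$, where $\tf^{(k)} = \widetilde{\mproj}'_{\vw}\vv^{(k)}$. Then
\[
	\|\vf^{(k)} - \vf^{(k-1)}\|_{\mw^{-1}} = \|\alpha^{(k)}(\vv^{(k)} - \tf^{(k)})\|_2 \leq \alpha^{(k)}\|\vv^{(k)}\|_2 + \alpha^{(k)}\|\tf^{(k)}\|_2.
\]
The first term is at most $\beta$ by the assumption $\alpha\|\vv\|_2 \leq \beta$. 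For the second term, \cref{thm:flow-forest-operator-correctness} gives $\|\tf^{(k)} - \mproj_{\vw}\vv^{(k)}\|_2 \leq O(\eta\epssc)\|\vv^{(k)}\|_2$, and since $\mproj_{\vw}$ is an orthogonal projection, $\|\mproj_{\vw}\vv^{(k)}\|_2 \leq \|\vv^{(k)}\|_2$, so $\|\tf^{(k)}\|_2 \leq (1 + O(\eta\epssc))\|\vv^{(k)}\|_2$. Using $\epssc = O(1/\log m)$ and $\eta = O(\log m)$, this is $O(\|\vv^{(k)}\|_2)$, hence $\alpha^{(k)}\|\tf^{(k)}\|_2 = O(\beta)$. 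Combining, $\|\vf^{(k)} - \vf^{(k-1)}\|_{\mw^{-1}} = O(\beta)$, and the constant factor does not affect the guarantees of \cref{thm:VectorTreeMaintain}. This immediately yields the first bullet: \texttt{bar\_f}.\textsc{Approximate} first resets $\of_e \leftarrow \vf^{(k-1)}_e$ on coordinates where $\mw_e$ changed during \textsc{Reweight} (these are exactly the coordinates where $\md = \mw^{-1}$ changed), then updates $\of_e \leftarrow \vf^{(k)}_e$ on $O(N_k)$ coordinates with $N_k = 2^{2\ell_k}(\beta/\overline{\epsilon})^2\log^2 m$.

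For the second bullet (amortized runtime), I would trace the changes propagated into \texttt{bar\_f} exactly as in the slack proof: \textsc{Move} updates $\zprev, \zsum$ on $F_H$ for $\O(K)$ nodes (\cref{thm:maintain_representation}); \textsc{Reweight} updates $\zprev, \zsum$ on $F_H$ for $\O(K)$ nodes, updates the tree operator $\mw^{1/2}\mm$ on $\O(K)$ edge and leaf operators, and updates $\vy$ on $E(H)$ for $\O(K)$ leaves. One extra point specific to flow: the offset vector passed to \texttt{bar\_f} is $-\vy + \uf_0 + \hat{c}\cdot\mw\vv$, but the updates to $\uf_0$ and $\vv$ in \textsc{Reweight}/\textsc{Move} touch only $O(K)$ coordinates (restricted to where $\vw$ or $\vv$ changed), so the set $\mathcal{S}$ of changed nodes in the tree is still $\O(K)$; combined with the $N_{k-2^{\ell_k}}$ coordinates reinstated into $\overline{\md}$, the argument to the complexity function $T$ is $\eta\cdot\O(K + N_{k-2^{\ell_k}})$. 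Plugging the flow tree operator complexity $T(j) = \otilde(\epssc^{-2}\sqrt{mj})$ (\cref{lem:flow _operator_complexity}) into the amortized cost $\Theta(\eta^2\log(m/\rho)\log m)\cdot T(\cdot)$ from \cref{thm:VectorTreeMaintain} gives the claimed $\widetilde{O}(\epssc^{-2}\sqrt{m(K + N_{k-2^{\ell_k}})})$.

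The main obstacle, such as it is, is purely the feasibility/energy bound $\alpha\|\tf\|_2 = O(\beta)$ — this is where the flow case genuinely differs from slack (for slack we used $\|\widetilde{\mproj}_{\vw}\vv\|_2 \leq (1+\eta\epssc)\|\vv\|_2$ directly via operator-norm closeness to an orthogonal projection, whereas here $\tf$ is not given as an explicit near-projection but is only guaranteed by \cref{thm:flow-forest-operator-correctness} to be close to $\mproj_{\vw}\vv$). Once that bound is in hand, everything else is a line-by-line transcription of \cref{lem:slack-approx} with $\mw$ replaced by $\mw^{-1}$ and the slack tree operator replaced by the flow tree operator, plus the observation that the $\uf$-bookkeeping contributes no asymptotic overhead.
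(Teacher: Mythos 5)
Your proof is correct and follows essentially the same route as the paper's: instantiate \cref{thm:VectorTreeMaintain} with $\md = \mw^{-1}$, bound $\|\vf^{(k)}-\vf^{(k-1)}\|_{\mw^{-1}} = \|\alpha^{(k)}(\vv^{(k)} - \tf^{(k)})\|_2 \le (2+O(\eta\epssc))\alpha^{(k)}\|\vv^{(k)}\|_2 = O(\beta)$ via the triangle inequality and \cref{thm:flow-forest-operator-correctness}, and then trace the $\O(K)$ changed nodes (including the $\uf_0 + \hat{c}\mw\vv$ offset) into the tree-operator complexity bound. No gaps.
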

\begin{proof}

	The proof is similar to the one for slack.
	Since ${\of}$ is maintained by \texttt{bar\_f}, we apply \cref{thm:VectorTreeMaintain} with $\vx = \of$ and diagonal matrix $\md = \mw^{-1}$. 
	We need to prove $\|\vx^{(k)}-\vx^{(k-1)}\|_{\md^{(k)}}\le O(\beta)$ for all $k$ first. The constant factor in $O(\beta)$ does not affect the guarantees in \cref{thm:VectorTreeMaintain}. The left-hand side is
	\begin{align*}
		\norm{{\vf}^{(k)}-{\vf}^{(k-1)}}_{{\mw^{(k)}}^{-1}} &= \norm{-\alpha^{(k)} \mm \vz^{(k)}+\alpha^{(k)}\vv^{(k)}}_2 \tag{by \textsc{Move}}\\
		&\le  \norm{-\alpha^{(k)} \mm \vz^{(k)}}_2+\norm{\alpha^{(k)}\vv^{(k)}}_2\\
		&\le (2+O(\eta\epssc)) \alpha^{(k)} \|\vv^{(k)}\|_2 \tag{by the assumption that $\alpha\|\vv\|_{2}\leq\beta$}\\
		&\le 3 \beta.
	\end{align*}
	Now, we can apply the conclusions from \cref{thm:VectorTreeMaintain} to get that at the $k$-th step, the data structure first sets $\of_e\leftarrow \vf^{(k-1)}_e$ for all coordinates $e$ where $\vw_e$ changed in the last \textsc{Reweight}, then sets $\of_e\leftarrow \vf^{(k)}_e$ for $O(N_k\defeq 2^{2\ell_{k}}(\frac{\beta}{\overline{\epsilon}})^{2}\log^{2}m)$ coordinates $e$, where $\ell_{k}$ is the largest integer
		$\ell$ with $k=0\mod2^{\ell}$ when $k\neq 0$ and $\ell_0=0$. 
	
	For the second point, \textsc{Move} updates $\zprev$ and $\zsum$ on $\elim{H}$ for $\O(K)$ different nodes $H \in \ct$ by \cref{thm:maintain_representation}.
	\textsc{Reweight} then updates $\zprev$ and $\zsum$ on $F_H$ for $\O(K)$ different nodes, and updates the tree operator $\mw^{-1/2}\mm$ on $\O(K)$ different edge and leaf operators. In turn, it updates $\vy$ on $E(H)$ for $\O(K)$ leaf nodes $H$. The changes of $\uf$ cause $O(K)$ changes to the vector $-\vy+\uf_0+\hat{c}\cdot \mw \vv$, which is the parameter $\vy$ of \cref{thm:VectorTreeMaintain}. Now, we apply \cref{thm:VectorTreeMaintain} and the complexity of the tree operator to conclude the desired amortized runtime. 
\end{proof}

\end{proof}

\section{Min-Cost Flow for Separable Graphs}
\label{sec:separable}
In this section, we extend our result to $\sepConst$-separable graphs.

\separableMincostflow*

The change in running time essentially comes from the parameters of the separator tree which we shall discuss in \cref{subsec:separable_separator_tree}. We then calculate the total running time and prove \cref{thm:separable_mincostflow} in \cref{subsec:separable_main}.

\subsection{Separator Tree for Separable Graphs}
\label{subsec:separable_separator_tree}

Since our algorithm only exploits the separable property of the planar graphs, it can be applied to other separable graphs directly and yields different running times. Similar to the planar case, by adding two extra vertices to any $\sepConst$-separable graph, it is still $\sepConst$-separable with the constant $c$ in \cref{defn:separable-graph} increased by $2$. 

Recall the definition of separable graphs: 

\defSeparableGraph*

We define a separator tree $\ct$ for an $\sepConst$-separable graph $G$ in the same way as for a planar graph.
\begin{definition} [Separator tree $\ct$ for $\sepConst$-separable graph] \label{defn:separator-tree-separable}
Let $G$ be an $\sepConst$-separable graph. 
A separator tree $\ct$ is a binary tree whose nodes represent subgraphs of $G$ such that the children of each node $H$ form a balanced partition of $H$.

Formally, each node of $\ct$ is a \emph{region} (edge-induced subgraph) $\region$ of $G$; we denote this by $\region \in \ct$. 
At a node $\region$, we store subsets of vertices $\bdry{\region}, \sep{\region}, \elim{\region} \subseteq V(\region)$, 
where $\bdry{\region}$ is the set of \emph{boundary vertices} that are incident to vertices outside $\region$ in $G$;
$\sep{\region}$ is the balanced vertex separator of $\region$;
and $\elim{\region}$ is the set of \emph{eliminated vertices} at $\region$. 
Concretely, the nodes and associated vertex sets are defined recursively in a top-down way as follows: 

	\begin{enumerate}
		\item The root of $\ct$ is the node $\region = G$, with $\bdry{\region} = \emptyset$ and $\elim{\region} = \sep{\region}$.
		\item A non-leaf node $\region \in \ct$ has exactly two children $D_1, D_2 \in \ct$ that form an edge-disjoint partition of $\region$, and their vertex sets intersect on the balanced separator $\sep{\region}$ of $\region$.
		Define $\bdry{D_1} = (\bdry{\region} \cup \sep{\region}) \cap V(D_1)$, and similarly $\bdry{D_2} = (\bdry{\region} \cup \sep{\region}) \cap V(D_2)$.
		Define $\elim{\region} = \sep{\region} \setminus \bdry{\region}$.\label{property: boundary-separable}
		
		\item If a region $\region$ contains a constant number of edges, then we stop the recursion and $\region$ becomes a leaf node. Further, we define $\sep{\region} = \emptyset$ and $\elim{\region} = V(\region) \setminus \bdry{\region}$. Note that by construction, each edge of $G$ is contained in a unique leaf node. 
	\end{enumerate}

Let $\eta(H)$ denote the height of node $H$ which is defined as the maximum number of edges on a tree path from $H$ to one of its descendants. $\eta(H)=0$ if $H$ is a leaf. Note that the height difference between a parent and child node could be greater than one. Let $\eta$ denote the height of $\ct$ which is defined as the maximum height of nodes in $\ct$. We say $H$ is at \emph{level} $i$ if $\eta(H)=i$. 
\end{definition}

The only two differences between the separator trees for planar and $\sepConst$-separable graphs are their construction time and update time (for $k$-sparse updates). For the planar case, these are bounded by \cref{thm:separator_tree_construction} and \cref{lem:planarBoundChangeCost} respectively. We shall prove their analogs \cref{thm:separableDecompT} and \cref{lem:separableBoundChangeCost}.

\cite{GHP18} showed that the separator tree can be constructed in $O(s(n)\log n)$ time for any class of $1/2$-separable graphs where $s(n)$ is the time for computing the separator. The proof can be naturally extended to $\sepConst$-separable graphs. We include the extended proofs in \cref{sec:appendix} for completeness. 

\begin{restatable}{lemma}{separableDecompT} \label{thm:separableDecompT} Let $\mathcal{C}$ be an $\sepConst$-separable class such that we can compute a balanced separator for any graph in $\mathcal{C}$ with $n$ vertices and $m$ edges in $s(m)$ time for some convex function $s(m)\ge m$. Given an $\sepConst$-separable graph, there is an algorithm that computes a separator tree $\ct$ in $O(s(m) \log m)$ time.
\end{restatable}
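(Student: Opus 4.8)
\textbf{Proof proposal for \cref{thm:separableDecompT}.}

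The plan is to mimic the divide-and-conquer construction used for planar graphs, but be careful about how the separator balance constant $b$ and the separator size $cm^\sepConst$ enter the recursion. First I would set up the recursive procedure: given a subgraph $H$ with $m_H = |E(H)|$ edges, if $m_H$ is below a fixed constant, return $H$ as a leaf; otherwise, invoke the separator oracle in time $s(m_H)$ to obtain a balanced vertex separator $\sep{H}$ with $|\sep{H}| \le c\lceil m_H^\sepConst\rceil$, splitting the edges of $H$ into two parts $E(D_1), E(D_2)$ each of size at most $b\,m_H$, recurse on $D_1$ and $D_2$, and attach the returned subtrees as children of the node for $H$. The vertex sets $\bdry{\cdot}$, $\sep{\cdot}$, $\elim{\cdot}$ are bookkeeping assigned exactly as in \cref{defn:separator-tree-separable}, and computing them at a node costs $O(m_H)$ time, which is dominated by $s(m_H)$ since $s(m)\ge m$.

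The key structural step is bounding the total work. Because each child has at most a $b$-fraction of the edges (with $b<1$ a constant), the recursion has depth $O(\log m)$: after $i$ levels of recursion every region has at most $b^i m$ edges. At recursion depth $i$, the regions form an edge-disjoint partition of (a subgraph of) $G$, so their edge counts sum to at most $m$; since $s$ is convex and $s(m)\ge m$ (hence superadditive up to the linear term — more precisely, $\sum_j s(m_j) \le s(\sum_j m_j)$ when $\sum_j m_j \le m$ and $s$ is convex with $s(0)\ge 0$, which holds after the standard normalization), the total separator-oracle cost at each depth is $O(s(m))$. Summing over the $O(\log m)$ depths gives the claimed $O(s(m)\log m)$ bound. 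I would also remark that $\eta = O(\log m)$, since the height of the tree is at most the recursion depth.

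One subtlety I would address carefully: the depth bound uses the edge-balance condition $|E(D_i)| \le b\,|E(H)|$ from \cref{defn:separable-graph}, not a vertex-balance condition, so the recursion terminates in $O(\log m)$ levels regardless of how vertices are distributed; and the node count of $\ct$ is $O(m)$, since each leaf holds $\Theta(1)$ edges and the $E(H)$ of leaves partition $E(G)$. The main obstacle — really the only non-routine point — is the convexity manipulation $\sum_j s(m_j) \le s(m)$ for an edge-disjoint partition: one must handle the constant term of $s$ (the normalization ensuring superadditivity), and make sure the bound is applied level-by-level rather than globally, since the $O(\log m)$ factor comes precisely from paying $s(m)$ once per level. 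Everything else follows the planar proof of \cref{thm:separator_tree_construction} verbatim, with $O(n\log n)$ replaced by $O(s(m)\log m)$. I would defer the full details to \cref{sec:appendix} as the excerpt indicates, since the argument is a direct generalization of the one in \cite{GHP18}.
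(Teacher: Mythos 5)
Your proposal is correct and follows essentially the same approach as the paper's proof in \cref{sec:appendix}: recursively split via the separator oracle, bound the height by $O(\log m)$ using the edge-balance constant $b$, and charge $O(s(m))$ per level using edge-disjointness of same-level regions together with convexity of $s$. Your level-by-level application of the superadditivity bound (rather than summing edges globally over all nodes) is in fact the cleaner way to phrase the convexity step that the paper states somewhat tersely.
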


Note that $s(\cdot)$ does not depend on $n$ because we may assume the graph is connected so that $n=O(m)$.

We then prove the update time. Same as the planar case, we define $\pathT{\region}$ to be the set of all ancestors of $\region$ in the separator tree and $\pathT{\collN}$ to be the union of $\pathT{\region}$ for all $\region \in \collN$. Then we have the following bound:
\begin{restatable}{lemma}{separableBoundChangeCost}
\label{lem:separableBoundChangeCost}
Let $G$ be an $\alpha$-separable graph with separator tree $\ct$. Let $\collN$ be a set of $K$ nodes in $\ct$. Then 
\begin{align*}
	\sum_{\region \in \pathT{\collN}}| \bdry{\region}| +|\sep{\region}| \leq \otilde(\septime). 
\end{align*}
\end{restatable}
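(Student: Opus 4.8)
\textbf{Proof proposal for \cref{lem:separableBoundChangeCost}.}

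The plan is to mirror the structure of the analogous planar bound \cref{lem:planarBoundChangeCost}, but bookkeeping the separator sizes using the $\sepConst$-separability parameters rather than the $\sqrt{\cdot}$ coming from planarity. First I would set up the level structure of the separator tree: recall that a node $\region$ at \emph{level} $i$ has the property that its height in $\ct$ is $i$, and that (by the balanced-partition property, with fraction $b \in (0,1)$) the number of edges in a region shrinks geometrically as one descends. More precisely, since every non-leaf region is split into two children each with at most a $b$-fraction of its edges, a region $\region$ whose subtree has height $i$ satisfies $|E(\region)| \le b^{\,\Omega(i)} m$ up to constants — equivalently, the regions at a given "depth from the root" have geometrically decreasing edge counts, and the total number of leaf regions (hence total nodes) is $O(m)$. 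The key consequence I need is: for a node $\region$ with $|E(\region)| = m_\region$, the separator $\sep{\region}$ has size $O(\lceil m_\region^{\sepConst}\rceil)$ by \cref{defn:separable-graph}, and $\bdry{\region}$ is built up as the (disjoint-ish) union of separators of the ancestors of $\region$, so $|\bdry{\region}| = O\big(\sum_{A \in \pathT{\region},\, A \neq \region} m_A^{\sepConst}\big)$.

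Next I would bound $\sum_{\region \in \pathT{\collN}} |\bdry{\region}| + |\sep{\region}|$ by charging. Since $|\bdry{\region}|$ for $\region \in \pathT{\collN}$ is dominated by a sum of $|\sep{A}|$ over ancestors $A$ of $\region$, and every such ancestor $A$ is itself in $\pathT{\collN}$, the whole quantity is $\otilde\big(\sum_{\region \in \pathT{\collN}} m_\region^{\sepConst}\big)$ (the $\otilde$ absorbing the $\eta = O(\log m)$ factor from re-counting each separator across its descendants on the paths). So it suffices to bound $\sum_{\region \in \pathT{\collN}} m_\region^{\sepConst}$. Now I partition $\pathT{\collN}$ by level $i$ from $0$ to $\eta$. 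At level $i$, there are at most $\min(K, |\ct(i)|)$ relevant nodes (at most $K$ because $\pathT{\collN}$ restricted to any single level has at most one node per element of $\collN$ lying at-or-below it, hence $\le K$ nodes; and at most $|\ct(i)| = O(m / b^{-\Omega(i)}) = O(m b^{\Omega(i)})$ total nodes at that level), and each has $m_\region = O(m b^{\Omega(i)})$. Hence the level-$i$ contribution is $O\big(\min(K, m b^{\Omega(i)}) \cdot (m b^{\Omega(i)})^{\sepConst}\big)$.

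Then I would evaluate the geometric sum $\sum_{i=0}^{\eta} \min(K, m b^{\Omega(i)}) \cdot (m b^{\Omega(i)})^{\sepConst}$. Writing $q \defeq b^{\Omega(1)} < 1$ and $s_i \defeq m q^i$, this is $\sum_i \min(K, s_i) \cdot s_i^{\sepConst}$. Split at the threshold level $i^\star$ where $s_{i^\star} \approx K$: for $i \le i^\star$ (so $s_i \ge K$) the term is $K \cdot s_i^{\sepConst}$, and since $s_i^{\sepConst}$ is decreasing geometrically in $i$ with ratio $q^{\sepConst} < 1$, this part sums to $O(K \cdot s_{i^\star}^{\sepConst}) = O(K \cdot K^{\sepConst}) = O(K^{1+\sepConst})$; wait, I need $K^{1-\sepConst} m^{\sepConst}$, so let me recheck — at $i^\star$, $s_{i^\star} \approx K$ gives $K \cdot K^{\sepConst}$, but actually the dominant term of the first part is at $i=0$: $K \cdot m^{\sepConst}$. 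Hmm, that is too big unless $K$ is large. Let me instead bound the $i \le i^\star$ part by $K \cdot \sum_{i\le i^\star} s_i^\sepConst$; since $\sepConst < 1$ and $s_i = mq^i$, $\sum_i s_i^\sepConst = m^\sepConst \sum_i q^{\sepConst i} = O(m^\sepConst)$, giving $O(K m^\sepConst)$ — still seems to not match $\mk{K}{m} = K^\sepConst m^{1-\sepConst}$. The correct accounting must instead use that at level $i$ the number of \emph{nodes} times their size$^\sepConst$: $\min(K, s_i) s_i^{\sepConst}$ where for \emph{small} $i$ we use $\min = s_i$ (since $s_i \le$ total nodes but could exceed $K$... actually $\min(K, \#\text{nodes at level } i)$, and $\#$nodes at level $i$ is $O(m/s_i)$ not $s_i$!). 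I would redo this with $\#\text{nodes at level }i = O(m/s_i)$: then level-$i$ contribution is $O(\min(K, m/s_i) \cdot s_i^\sepConst)$, split at level $i^\star$ where $m/s_{i^\star} \approx K$, i.e. $s_{i^\star} \approx m/K$; for $i \ge i^\star$ use $\min = K$ giving $K \sum_{i \ge i^\star} s_i^\sepConst = O(K (m/K)^\sepConst) = O(K^{1-\sepConst} m^\sepConst) = O(\mk{K}{m})$, and for $i < i^\star$ use $\min = m/s_i$ giving $m \sum_{i<i^\star} s_i^{\sepConst - 1} = O(m \cdot s_{i^\star}^{\sepConst-1}) = O(m (m/K)^{\sepConst-1}) = O(K^{1-\sepConst}m^\sepConst)$, again $O(\mk{K}{m})$. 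So both parts are $O(\septime)$, and with the $\otilde$ factor from the ancestor re-counting we get $\otilde(\septime)$ as claimed. The main obstacle is the careful two-sided geometric-sum estimate and correctly tracking that level $i$ has $O(m/s_i)$ nodes each of separator-size $O(s_i^\sepConst)$; I would present it as a clean lemma-internal computation, referencing \cref{defn:separator-tree-separable} for the node/size structure and \cref{defn:separable-graph} for the separator size.
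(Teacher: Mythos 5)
Your first step --- charging the boundary vertices of each node in $\pathT{\collN}$ to the separators of its ancestors (which are themselves in $\pathT{\collN}$), absorbing an $\eta=O(\log m)$ factor, and thereby reducing the claim to bounding $\sum_{\region\in\pathT{\collN}}|E(\region)|^{\sepConst}$ --- is exactly the paper's first step. The divergence, and the gap, is in how you bound that sum. You bound the level-$i$ contribution by $(\text{number of nodes})\cdot(\text{common size }s_i)^{\sepConst}$ with $s_i=mq^i$ and $O(m/s_i)$ nodes per level. This tacitly assumes all regions at a given level have essentially the same size and that (count)$\times$(size)$\approx m$. But \cref{defn:separable-graph} only guarantees $|E(\region_j)|\le b|E(H)|$ for some constant $b\in(0,1)$, so at a fixed level the region sizes can be heterogeneous, ranging over a polynomial window (roughly $[b^{-i},2^{i}]$ at height $i$, or $[(1-b)^{d}m,\,b^{d}m]$ at depth $d$). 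If you redo your crossover calculation honestly with count $\le (1/(1-b))^{i}$ and max size $\le b^{i}m$, the exponents only cancel to give $K^{1-\sepConst}m^{\sepConst}$ when $b=1/2$; for, say, $b=2/3$ and $\sepConst=1/2$ you get $K^{0.82}m^{0.5}$ rather than $K^{0.5}m^{0.5}$. (There is also a secondary confusion: the paper's ``level'' is the \emph{height} $\eta(\region)$, under which parent--child levels can differ by more than one, and your statements about the number and size of nodes at level $i$ flip monotonicity partway through the paragraph.)

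The fix is the paper's: for each level $j$, apply H\"older's (equivalently, the power-mean) inequality,
\[
\sum_{\region\in\pathT{\collN,j}}|E(\region)|^{\sepConst}\;\le\;|\pathT{\collN,j}|^{1-\sepConst}\Bigl(\sum_{\region\in\pathT{\collN,j}}|E(\region)|\Bigr)^{\sepConst}\;\le\;K^{1-\sepConst}m^{\sepConst},
\]
using only the two facts that $|\pathT{\collN,j}|\le K$ (each node of $\collN$ has at most one ancestor per level) and that nodes at the same level are edge-disjoint so their edge counts sum to at most $m$. This needs no assumption about how the edge mass is distributed among the nodes at a level, and summing over the $O(\log m)$ levels gives $\otilde(\septime)$. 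Your approach as written does not establish the claimed bound for general $\sepConst$-separable classes.
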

By setting $\sepConst$ as $1/2$, we get \cref{lem:planarBoundChangeCost} for planar graphs as a corollary.

\subsection{Proof of Running time}
In this section, we prove \cref{thm:separable_mincostflow}. The data structures (except for the construction of the separator tree) will use exactly the same pseudocode as for the planar case. Thus, the correctness can be proven in the same way. We prove the runtimes only. 

For the planar case, after constructing the separator tree by \cref{thm:separator_tree_construction}, \cref{lem:planarBoundChangeCost} is the lemma that interacts with other parts of the algorithm. For $\sepConst$-separable graphs, we first construct the separator tree in $O(s(m)\log m)$ time by \cref{thm:separableDecompT}. Then we propagate the change in runtime 
($\otilde(\sqrt{mK})$ from \cref{lem:planarBoundChangeCost} to $\otilde(m^\sepConst K^{1-\sepConst})$ from \cref{lem:separableBoundChangeCost}) 
to all the data structures and to the complexity $T(\cdot)$ of the flow and slack tree operators.

We first propagate the change to the implicit representation maintenance data structure, which is the common component for maintaining the flow and the slack vectors. 

\begin{theorem} \label{thm:separable_maintain_representation}
	Given an $\sepConst$-separable graph $G$ with $n$ vertices and $m$ edges, and its separator tree $\ct$ with height $\eta$,
	the deterministic data structure \textsc{MaintainRep} (\cref{alg:maintain_representation})
	 maintains the following variables correctly at the end of every IPM step:
	\begin{itemize}
		\item the dynamic edge weights $\vw$ and step direction $\vv$ from the current IPM step,
		\item a \textsc{DynamicSC} data structure on $\ct$ based on the current edge weights $\vw$,
		\item an implicitly represented tree operator $\mm$ supported on $\ct$ with complexity $T(K)$, \emph{computable using information from \textsc{DynamicSC}},
		\item scalar $c$ and vectors $\zprev, \zsum$, which together represent $\vz = c \zprev + \zsum$,
		such that at the end of step $k$,
		\[
		\vz = \sum_{i=1}^{k} \alpha^{(i)} \vz^{(i)},
		\]
		where $\alpha^{(i)}$ is the step size $\alpha$ given in \textsc{Move} for step $i$,
		\item $\zprev$ satisfies $\zprev = \widetilde{\mga} \mpi^{(\eta-1)} \cdots \mpi^{(0)} \mb^{\top} \mw^{1/2} \vv$,
		\item an offset vector $\vy$ which together with $\mm, \vz$ represent $\vx=\vy+\mm\vz$, such that after step $k$,
		\[
			\vx = \vx^{\init}+\sum_{i=1}^{k} \mm^{(i)} (\alpha^{(i)} \vz^{(i)}),
		\]
		where $\vx^{\init}$ is an initial value from \textsc{Initialize}, and $\mm^{(i)}$ is the state of $\mm$ after step $i$.
	\end{itemize}
	The data structure supports the following procedures:
	\begin{itemize}
		\item $\textsc{Initialize}(G, \ct, \mm, \vv\in\R^{m},\vw\in\R_{>0}^{m}, \vx^{\init} \in \R^m, \epsilon_{\mproj} > 0)$:
		Given a graph $G$, its separator tree $\ct$, a tree operator $\mm$ supported on $\ct$ with complexity $T$,
		initial step direction $\vv$, initial weights $\vw$, initial vector $\vx^{\init}$, and target projection matrix accuracy $\epsilon_{\mproj}$, preprocess in $\widetilde{O}(\epssc^{-2}m+T(m))$ time and set $\vx \leftarrow \vx^{\init}$.

		\item $\textsc{Reweight}(\vw \in\R_{>0}^{m}$ given implicitly as a set of changed coordinates):
		Update the weights to $\vw^\new$.
		Update the implicit representation of $\vx$ without changing its value, so that all the variables in the data structure are based on the new weights.

		The procedure runs in 
		$\widetilde{O}(\epsilon_{\mproj}^{-2}\septime+T(K))$ total time,
		where $K$ is an upper bound on the number of coordinates changed in $\vw$ and the number of leaf or edge operators changed in $\mm$.
		There are most $\O(K)$ nodes $\region\in \ct$ for which $\zprev|_{F_H}$ and $\zsum|_{F_H}$ are updated.

		\item $\textsc{Move}(\alpha \in \R$, $\vv \in \R^{n}$ given implicitly as a set of changed coordinates):
		Update the current direction to $\vv$, and then $\zprev$ to maintain the claimed invariant.
		Update the implicit representation of $\vx$ to reflect the following change in value:
		\[
			\vx \leftarrow \vx + \mm (\alpha \zprev).
		\]
		The procedure runs in $\widetilde{O}(\epsilon_{\mproj}^{-2} \septime)$ time,
		where $K$ is the number of coordinates changed in $\vv$ compared to the previous IPM step.

		\item $\textsc{Exact}()$:
		Output the current exact value of $\vx=\vy + \mm \vz$ in $\O(T(m))$ time.
	\end{itemize}
\end{theorem}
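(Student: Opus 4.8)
The plan is to follow the same strategy as the proof of \cref{thm:maintain_representation} for the planar case, changing nothing in the algorithm or the correctness argument, and only re-deriving the runtimes with the separable analogs of the separator-tree facts. The correctness of each of \textsc{Initialize}, \textsc{Reweight}, \textsc{Move}, \textsc{Exact} was established purely from the structural properties of the separator tree (the recursive partition of edges, the boundary/separator/eliminated vertex sets, \cref{lem:sc-decomposition}, \cref{lem:parent-child-boundry-relation}, \cref{lem:bdry-containment}, \cref{lem:F_partitions}, and transitivity of Schur complements), all of which hold verbatim for the separator tree of an $\sepConst$-separable graph by \cref{defn:separator-tree-separable}. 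So the only task is the runtime bookkeeping, and the only input that changes is the bound in \cref{lem:planarBoundChangeCost}, which becomes \cref{lem:separableBoundChangeCost}: for any set $\collN$ of $K$ nodes, $\sum_{\region \in \pathT{\collN}} |\bdry{\region}| + |\sep{\region}| = \otilde(m^{\sepConst} K^{1-\sepConst})$.

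First I would observe that every runtime bound in \cref{thm:apxsc} (\textsc{DynamicSC}), \cref{thm:maintain_z} (\textsc{MaintainZ}), and the tree-operator complexity lemmas (\cref{lem:slack_operator_complexity}, \cref{lem:flow _operator_complexity}) was obtained by the same template: each visited node $H$ contributes $\otilde(\epssc^{-2} |\bdry{H} \cup \elim{H}|)$ work via a Laplacian solve / approximate Schur complement call (using \cref{lem:fastApproxSchur} and \cref{thm:laplacianSolver}), and the total is bounded by summing over $\pathT{\collN}$. Replacing \cref{lem:planarBoundChangeCost} by \cref{lem:separableBoundChangeCost} throughout turns every $\otilde(\epssc^{-2}\sqrt{mK})$ into $\otilde(\epssc^{-2} m^{\sepConst} K^{1-\sepConst})$, and every $\otilde(\epssc^{-2} m)$ initialization bound is unchanged (set $K = O(m)$: $m^{\sepConst} \cdot m^{1-\sepConst} = m$). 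In particular \textsc{MaintainZ}'s \textsc{Reweight} and \textsc{Move} now run in $\otilde(\epssc^{-2} m^{\sepConst} K^{1-\sepConst})$, and the slack and flow tree operators have complexity $T(k) = \otilde(\epssc^{-2} m^{\sepConst} k^{1-\sepConst})$, which is still concave with $T(0)=0$, $T(k) \ge k$, so the abstract framework of \cref{def:forestcost} applies unchanged. There is a subtlety I would flag regarding \textsc{Initialize}: in the planar case the separator tree is built by \cref{thm:separator_tree_construction} in $O(n\log n)$ time; here it is built by \cref{thm:separableDecompT} in $O(s(m)\log m)$ time, so this cost should be added — however, inside \cref{thm:separable_maintain_representation} we are \emph{given} $\ct$ as input, so the construction cost does not enter this theorem and will only be accounted for once in the final proof of \cref{thm:separable_mincostflow}.

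Concretely, the proof of \cref{thm:separable_maintain_representation} would be: the statement of the invariants and their correctness is identical to \cref{thm:maintain_representation}, citing that proof and noting that every structural lemma used there holds for the separator tree of \cref{defn:separator-tree-separable}; then for the runtimes, \textsc{Initialize} calls $\texttt{Super}.\textsc{Initialize}$ ($\otilde(\epssc^{-2}m)$ by the separable version of \cref{thm:maintain_z}) plus $O(m)$ to store the implicit tree operator, giving $\otilde(\epssc^{-2} m + T(m))$; \textsc{Reweight} calls $\texttt{Super}.\textsc{Reweight}$ ($\otilde(\epssc^{-2} m^{\sepConst}K^{1-\sepConst})$) and $\textsc{ComputeMz}(\Delta\mm, \vz)$ on $O(K)$ changed operators ($O(T(K))$ by \cref{cor:exacttime} applied to the restricted operator), monitoring \texttt{dynamicSC} for old/new matrix values within the already-charged reweight time, for a total of $\otilde(\epssc^{-2} m^{\sepConst}K^{1-\sepConst} + T(K))$; \textsc{Move} is just $\texttt{Super}.\textsc{Move}$, so $\otilde(\epssc^{-2} m^{\sepConst}K^{1-\sepConst})$; and \textsc{Exact} is $\textsc{ComputeMz}(\mm,\vz)$ plus $O(m)$ vector addition, so $O(T(m))$ by \cref{cor:exacttime}.

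The main obstacle — or rather, the one place where genuine new content is needed rather than mechanical substitution — is verifying \cref{lem:separableBoundChangeCost} itself and, upstream, \cref{thm:separableDecompT}; but these are stated as separate restatable lemmas to be proven in \cref{sec:appendix}, so within the proof of \cref{thm:separable_maintain_representation} they may be invoked as black boxes. A secondary point to be careful about is that the separable version of \cref{thm:maintain_z} must itself be re-stated with the $m^{\sepConst}K^{1-\sepConst}$ bound before it can be cited here; I would either state it explicitly (a one-line variant of \cref{thm:maintain_z} with the same proof modulo the lemma swap) or fold the argument inline. Everything else is routine: concavity of $m^{\sepConst} k^{1-\sepConst}$ in $k$, and the identity $m^{\sepConst} \cdot m^{1-\sepConst} = m$ at $k=m$, ensure the framework lemmas (\cref{cor:exacttime}, \cref{def:forestcost}, \cref{thm:VectorTreeMaintain}) all continue to apply with the new $T$.
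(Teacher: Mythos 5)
Your proposal is correct and matches the paper's own proof: the paper likewise keeps the algorithm and correctness argument verbatim from \cref{thm:maintain_representation} and only redoes the runtime accounting by substituting \cref{lem:separableBoundChangeCost} for \cref{lem:planarBoundChangeCost} in the cost of \textsc{PartialProject}/\textsc{InversePartialProject} (hence of \textsc{Move} and \textsc{Reweight}), with \textsc{ComputeMz} still charged $O(T(K))$. Your additional observations (initialization unchanged since $m^{\sepConst}\cdot m^{1-\sepConst}=m$, concavity of the new $T$, and the separator-tree construction cost being charged only once in \cref{thm:separable_mincostflow}) are all consistent with how the paper handles these points.
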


\begin{proof}
The bottlenecks of \textsc{Move} is \textsc{PartialProject}. For each $H \in \mathcal{P}_{\ct}(\mathcal{H})$, recall from \cref{thm:apxsc} that $\ml^{(\region)}$ is supported on the vertex set $\elim{\region} \cup \bdry{\region}$ and has $\O(\epssc^{-2}|\elim{\region}\cup \bdry{\region}|)$ edges. Hence,
	$(\ml^{(\region)}_{\elim{\region},\elim{\region}})^{-1} \vu|_{\elim{\region}}$ can be computed by an exact Laplacian solver in $\O(\epssc^{-2}|\elim{\region}\cup \bdry{\region}|)$ time, and the subsequent left-multiplying by $\ml^{(\region)}_{\bdry{\region}, \elim{\region}}$ also takes $\O(\epssc^{-2}|\elim{\region}\cup \bdry{\region}|)$ time. By \cref{lem:separableBoundChangeCost}, \textsc{PartialProject} takes $\O(\epssc^{-2}\septime)$ time. \textsc{Move} also runs in $\O(\epssc^{-2}\septime)$ time.
	
\textsc{Reweight} calls \textsc{PartialProject} and \textsc{ReversePartialProject} for $O(1)$ times and \textsc{ComputeMz} once. \textsc{ReversePartialProject} costs the same as \textsc{PartialProject}. The runtime of \textsc{ComputeMz} is still bounded by the complexity of the tree operator, $O(T(K))$. Thus, \textsc{PartialProject} takes $\O(\epssc^{-2}\septime)$ time. \textsc{Move} also runs in $\O(\epssc^{-2}\septime+T(K))$ time.

 Runtimes of other procedures and correctness follow from the same argument as in the proof for \cref{thm:maintain_representation}.
\end{proof}
Then we may use \cref{thm:separable_maintain_representation} and \cref{thm:VectorTreeMaintain} to maintain vectors $\of, \os$, with the updated complexity of the operators.
\begin{lemma}
\label{lem:separable_operator_complexity}
For any $\sepConst$-separable graph $G$ with separator tree $\ct$, the flow and slack operators defined in \cref{defn:flow-forest-operator,defn:slack-forest-operator} both have complexity $T(K)=O(\epssc^{-2}\septime)$.
\end{lemma}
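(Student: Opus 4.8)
The plan is to mirror, essentially verbatim, the proofs of the operator complexity lemmas already given for the planar case (\cref{lem:slack_operator_complexity} and \cref{lem:flow _operator_complexity}), substituting the separable-graph boundary bound \cref{lem:separableBoundChangeCost} for the planar bound \cref{lem:planarBoundChangeCost} at the single point where they are invoked. Concretely, for the slack operator from \cref{defn:slack-forest-operator}, an edge operator $\mm_{(D,P)} = \mi_{\bdry{D}} - (\ml^{(D)}_{\elim{D},\elim{D}})^{-1}\ml^{(D)}_{\elim{D},\bdry{D}}$ is applied to a vector in three steps: restricting by $\mi_{\bdry{D}}$, multiplying by $\ml^{(D)}_{\elim{D},\bdry{D}}$, and solving a Laplacian system in $\ml^{(D)}_{\elim{D},\elim{D}}$. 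By \cref{lem:fastApproxSchur}, $\ml^{(D)}$ has $\otilde(\epssc^{-2}|\bdry{D}\cup\elim{D}|)$ edges, so by \cref{thm:laplacianSolver} each of these costs $\otilde(\epssc^{-2}|\bdry{D}\cup\elim{D}|)$ time; the leaf operator $\mj_H = \mw^{1/2}\mb[H]$ costs $O(1)$ per leaf since leaf regions have constant size. For the flow operator from \cref{defn:flow-forest-operator}, the edge operator $\mm_{(H,P)} = (\ml^{(H)})^{-1}\tsc(\ml^{(H)},\partial H)$ similarly reduces to one multiplication by the sparse matrix $\tsc(\ml^{(H)},\partial H)$ (which has $\otilde(\epssc^{-2}|\partial H|)$ edges) and one Laplacian solve in $\ml^{(H)}$, each $\otilde(\epssc^{-2}|\partial H \cup \elim H|)$ time, with the leaf operator again $O(1)$.

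Having established the per-edge cost, I would then sum over a set $S$ of $k$ distinct tree edges. Each tree edge $(D,P)$ is charged to its lower endpoint $D$, so the total cost is $\otilde\bigl(\epssc^{-2}\sum_{D \in S}(|\bdry{D}| + |\elim{D}|)\bigr)$, and since $S$ (viewed as a set of at most $k$ nodes) satisfies $S \subseteq \pathT{S}$, this is at most $\otilde\bigl(\epssc^{-2}\sum_{D \in \pathT{S}}(|\bdry{D}| + |\sep{D}|)\bigr)$. Applying \cref{lem:separableBoundChangeCost} with $K = k$ bounds this by $\otilde(\epssc^{-2}\mk{m}{k}) = \otilde(\epssc^{-2}\septime[k])$, which is exactly the claimed complexity $T(k) = O(\epssc^{-2}\septime)$ (up to the polylog factors absorbed in $\otilde$). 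I would write this out once and note it applies identically to both the slack and flow tree operators since the structural argument (per-node cost proportional to $|\bdry{D}\cup\elim{D}|$, leaf cost $O(1)$) is the same for both.

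I do not expect a genuine obstacle here: the lemma is a routine transplant of existing arguments with one citation swapped. The only point requiring a sentence of care is confirming that $\pathT{S}$, rather than $S$ itself, is the correct object to feed into \cref{lem:separableBoundChangeCost} — but in fact it suffices to bound $\sum_{D\in S}(|\bdry D|+|\elim D|)$ directly, and since $S$ is a set of $k$ nodes one can also invoke \cref{lem:separableBoundChangeCost} with $\collN = S$ and use $S \subseteq \pathT{S}$ for the inequality; either framing works. A secondary bookkeeping point is that $\septime$ as defined is $\mk{K}{m}$ with $K$ the number of updated edges, so I should be consistent that in the statement $K$ plays the role of the argument to $T$. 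With these conventions fixed, the proof is three or four sentences.

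\begin{proof}
The argument is identical for the slack operator (\cref{defn:slack-forest-operator}) and the flow operator (\cref{defn:flow-forest-operator}), so we treat them together. For a non-leaf node $D$ with parent $P$, applying the edge operator $\mm_{(D,P)}$ to a vector (on either side) reduces to a constant number of the following operations: a coordinate restriction; a multiplication by a submatrix of $\ml^{(D)}$ or by $\tsc(\ml^{(D)},\bdry{D})$; and a solve of a Laplacian system in a principal submatrix of $\ml^{(D)}$. By \cref{lem:fastApproxSchur}, the graphs underlying $\ml^{(D)}$ and $\tsc(\ml^{(D)},\bdry D)$ have $\otilde(\epssc^{-2}|\bdry D\cup\elim D|)$ edges, so each such operation costs $\otilde(\epssc^{-2}|\bdry D\cup\elim D|)$ time by \cref{thm:laplacianSolver}. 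For a leaf node $H$, the leaf operator $\mj_H = \mw^{1/2}\mb[H]$ has a constant number of nonzeros since $H$ has $O(1)$ edges, so $\mj_H\vu$ is computable in $O(1)$ time.

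Now let $S$ be a set of $k$ distinct edges of $\ct$, and for each edge charge its cost to its lower endpoint; let $\collN$ be the resulting set of at most $k$ nodes. Then the total cost of computing $\{\vu_e^\top\mm_e : e\in S\}$ and $\{\mm_e\vv_e : e\in S\}$ is
\[
\otilde\Bigl(\epssc^{-2}\sum_{D\in\collN}(|\bdry D|+|\elim D|)\Bigr)
\;\le\;
\otilde\Bigl(\epssc^{-2}\sum_{D\in\pathT{\collN}}(|\bdry D|+|\sep D|)\Bigr),
\]
using $\collN\subseteq\pathT{\collN}$ and $\elim D\subseteq\sep D$. Applying \cref{lem:separableBoundChangeCost} with the $k$-node set $\collN$ bounds the right-hand side by $\otilde(\epssc^{-2}\mk{m}{k})$, i.e. $T(k)=O(\epssc^{-2}\septime)$ with $K=k$, as claimed.
\end{proof}
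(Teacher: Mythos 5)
Your proof is correct and follows essentially the same route as the paper's: bound each edge-operator application by $\otilde(\epssc^{-2}|\bdry{D}\cup\elim{D}|)$ via \cref{lem:fastApproxSchur} and \cref{thm:laplacianSolver}, note leaf operators are $O(1)$, and sum over the $k$ edges using \cref{lem:separableBoundChangeCost} in place of the planar bound. Your write-up is in fact slightly more careful than the paper's (explicitly charging each tree edge to its lower endpoint and noting $\collN\subseteq\pathT{\collN}$ and $\elim{D}\subseteq\sep{D}$), but the argument is the same.
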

\begin{proof}
	The leaf operators of both the flow and slack tree operators has constant size. 
	Let $\mm_{(H, P)}$ be a tree edge operator. Note that it is a symmetric matrix.
	For the slack operator, 
	Applying $\mm_{(D,P)}=\mi_{\bdry{D}}-\left(\ml^{(D)}_{\elim{D},\elim{D}}\right)^{-1} \ml^{(D)}_{\elim{D}, \bdry{D}}$ 
	to the left or right consists of three steps which are applying $\mi_{\bdry{D}}$, 
	applying $\ml^{(D)}_{\elim{D}, \bdry{D}}$ and solving for $\ml^{(D)}_{\elim{D},\elim{D}}\vv=\vb$ for some vectors $\vv$ and $\vb$. 
	For the flow operator, $\mm_{(H, P)} \vu$ consists of multiplying with $\tsc(\ml^{(H)}, \partial H)$ and solving the Laplacian system $\ml^{(H)}$.
	
	Each of the steps costs time $O(\epssc^{-2}|\bdry{D}\cup \elim{D}|)$ by \cref{lem:fastApproxSchur} and \cref{thm:laplacianSolver}.
	To bound the total cost over $K$ distinct edges, we apply \cref{lem:separableBoundChangeCost} instead of \cref{lem:planarBoundChangeCost}, which gives the claimed complexity.
\end{proof}
We then have the following lemmas for maintaining the flow and slack vectors:

\begin{theorem}[Slack maintenance for $\alpha$-separable graphs]
	  \label{thm:separableSlackMaintain}
	Given a modified planar graph $G$ with $n$ vertices and $m$ edges, and its separator tree $\ct$ with height $\eta$,
	the randomized data structure \textsc{MaintainSlack} (\cref{alg:slack-maintain-main})
	implicitly maintains the slack solution $\vs$ undergoing IPM changes,
	and explicitly maintains its approximation $\os$,
	and supports the following procedures with high probability against an adaptive adversary:
	\begin{itemize}
		\item $\textsc{Initialize}(G,\vs^{\init} \in\R^{m},\vv\in \R^{m}, \vw\in\R_{>0}^{m},\epsilon_{\mproj}>0,\overline{\epsilon}>0)$:
		Given a graph $G$, initial solution $\vs^{\init}$, initial direction $\vv$, initial weights $\vw$,
		target step accuracy $\epsilon_{\mproj}$ and target approximation accuracy
		$\overline{\epsilon}$, preprocess in $\widetilde{O}(m \epsilon_{\mproj}^{-2})$ time, 
		and set the representations $\vs \leftarrow \vs^{\init}$ and $\ox \leftarrow \vs$.
		
		\item $\textsc{Reweight}(\vw\in\R_{>0}^{m},$ given implicitly as a set of changed weights): 
		Set the current weights to $\vw$ in $\widetilde{O}(\epsilon_{\mproj}^{-2}\septime)$ time,
		where $K$ is the number of coordinates changed in $\vw$.
		
		\item $\textsc{Move}(t \in\mathbb{R},\vv\in\R^{m} $ given implicitly as a set of changed coordinates): 
		Implicitly update  $\vs \leftarrow \vs+t \mw^{-1/2}\widetilde{\mproj}_{\vw} \vv$ for some
		$\widetilde{\mproj}_{\vw}$ with	$\|(\widetilde{\mproj}_{\vw} -\mproj_{\vw}) \vv \|_2 \leq\eta\epssc \norm{\vv}_2$,
		and $\widetilde{\mproj}_{\vw} \vv \in \range{\mb}$. 
		The total runtime is $\widetilde{O}(\epsilon_{\mproj}^{-2}\septime)$ where $K$ is the number of coordinates changed in $\vv$.
		
		\item $\textsc{Approximate}()\rightarrow\R^{m}$: Return the vector $\os$
		such that $\|\mw^{1/2}(\os-\vs)\|_{\infty}\leq\overline{\epsilon}$
		for the current weight $\vw$ and the current vector $\vs$. 
		
		\item $\textsc{Exact}()\rightarrow\R^{m}$: 
		Output the current vector $\vs$ in $\O(m \epssc^{-2})$ time.
	\end{itemize}
	Suppose $t\|\vv\|_{2}\leq\beta$ for some $\beta$ for all calls to \textsc{Move}.
	Suppose in each step, \textsc{Reweight}, \textsc{Move} and \textsc{Approximate} are called in order. Let $K$ denote the total number of coordinates changed in $\vv$ and $\vw$ between the $(k-1)$-th and $k$-th \textsc{Reweight} and \textsc{Move} calls. Then at the $k$-th \textsc{Approximate} call,
	\begin{itemize}
		\item the data structure first sets $\os_e\leftarrow \vs^{(k-1)}_e$ for all coordinates $e$ where $\vw_e$ changed in the last \textsc{Reweight}, then sets $\os_e\leftarrow \vs^{(k)}_e$ for $O(N_k\defeq 2^{2\ell_{k}}(\frac{\beta}{\overline{\epsilon}})^{2}\log^{2}m)$ coordinates $e$, where $\ell_{k}$ is the largest integer
		$\ell$ with $k=0\mod2^{\ell}$ when $k\neq 0$ and $\ell_0=0$. 
		\item The amortized time for the $k$-th \textsc{Approximate} call
		is $\widetilde{O}(\epsilon_{\mproj}^{-2}(m^{\sepConst}(K+N_{k-2^{\ell_k}})^{1-\sepConst}))$.
	\end{itemize}
\end{theorem}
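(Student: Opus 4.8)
The plan is to follow exactly the structure of the proof of \cref{thm:SlackMaintain}, but replace every invocation of \cref{lem:planarBoundChangeCost} with \cref{lem:separableBoundChangeCost}, and every invocation of \cref{thm:maintain_representation} with its $\alpha$-separable analog \cref{thm:separable_maintain_representation}. Since the pseudocode of \textsc{MaintainSlack} (\cref{alg:slack-maintain-main}) is unchanged, correctness of each procedure follows verbatim from the corresponding arguments in the proof of \cref{thm:SlackMaintain}; in particular the identity $\vs = \vy + \mw^{-1/2}\mm(c\zprev+\zsum)$, the fact that \textsc{Move} implements $\vs \leftarrow \vs + t\mw^{-1/2}\widetilde\mproj_{\vw}\vv$ with the claimed spectral guarantee via \cref{thm:L-inv-approx} and $\range{\mw^{-1/2}\widetilde\mproj_{\vw}}\subseteq\range{\mb}$, and the $\|\mw^{1/2}(\os-\vs)\|_\infty\le\overline\epsilon$ bound via \cref{thm:VectorTreeMaintain} all carry over without change. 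So the real content is purely the runtime bookkeeping.

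First I would record that by \cref{lem:separable_operator_complexity} the slack tree operator now has complexity $T(K) = \otilde(\epssc^{-2}\,\septime)$, and by \cref{lem:wm} so does $\mw^{-1/2}\mm$. Then I would go procedure by procedure: \textsc{Initialize} calls $\maintainRep.\textsc{Initialize}$, which by \cref{thm:separable_maintain_representation} costs $\otilde(\epssc^{-2}m + T(m)) = \otilde(\epssc^{-2}m)$ (absorbing $T(m)=\otilde(\epssc^{-2}m)$), plus $\otilde(m)$ for \flowSketch.\textsc{Initialize}; \textsc{Reweight} and \textsc{Move} cost $\otilde(\epssc^{-2}\septime)$ directly from \cref{thm:separable_maintain_representation}; \textsc{Exact} costs $\O(T(m)) = \O(m\epssc^{-2})$; and \textsc{Approximate}'s output guarantee is immediate. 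The only nontrivial piece is the amortized runtime of \textsc{Approximate}, which I would prove as a separate lemma mirroring \cref{lem:slack-approx}: apply \cref{thm:VectorTreeMaintain} with $\vx=\vs$, $\md=\mw$, first verifying $\|\vs^{(k)}-\vs^{(k-1)}\|_{\mw^{(k)}} = \|t^{(k)}\widetilde\mproj_{\vw}\vv^{(k)}\|_2 \le (1+\eta\epssc)t^{(k)}\|\vv^{(k)}\|_2 \le 2\beta$ exactly as before (using that $\mproj_{\vw}$ is an orthogonal projection and $\|\widetilde\mproj_{\vw}-\mproj_{\vw}\|_{\op}\le\eta\epssc$). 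This gives the claimed count of coordinate updates to $\os$ (which is independent of the separator parameter) and an amortized cost of $\Theta(\eta^2\log(m/\rho)\log m)\cdot T(\eta\cdot(K+N_{k-2^{\ell_k}}))$; plugging in $T(\cdot)=\otilde(\epssc^{-2}\,\septime)$ and absorbing the $\eta^{O(1)} = \polylog(m)$ factors into $\otilde(\cdot)$, and using that $(\eta K)^{1-\alpha} = \otilde(K^{1-\alpha})$, yields the stated $\otilde(\epssc^{-2}\,m^\alpha(K+N_{k-2^{\ell_k}})^{1-\alpha})$.

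I do not anticipate a genuine obstacle here — this theorem is an essentially mechanical transcription of \cref{thm:SlackMaintain} with the exponent $1/2$ on $K$ relaxed to $1-\alpha$ and $1/2$ on $m$ raised to $\alpha$. The one place that needs a moment's care is making sure the complexity function $T$ of the operator is genuinely controlled by \cref{lem:separableBoundChangeCost} rather than by the planar bound: since each edge operator $\mm_{(D,P)}$ acts in time $O(\epssc^{-2}|\bdry D\cup\elim D|)$ and a $K$-sparse update touches exactly the nodes in $\pathT{\collN}$ for $|\collN|=O(K)$, the total is $\otilde(\epssc^{-2}\sum_{D\in\pathT{\collN}}(|\bdry D|+|\elim D|)) = \otilde(\epssc^{-2}\,\septime)$ by \cref{lem:separableBoundChangeCost} — this is already handled in \cref{lem:separable_operator_complexity}, so I would just cite it. A secondary subtlety: one must check that concavity of $T$ (assumed in \cref{def:forestcost}) is preserved, i.e. that $k \mapsto k^{1-\alpha}$ is concave, which holds since $0<\alpha<1$; this is what lets the charging argument inside \cref{thm:VectorTreeMaintain} go through unchanged. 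Everything else is a direct copy of the planar proof.
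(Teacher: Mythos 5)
Your proposal is correct and matches the paper's own proof, which likewise argues that correctness carries over verbatim from \cref{thm:SlackMaintain} and that the runtimes follow by substituting \cref{thm:separable_maintain_representation} and the operator complexity $T(K)=\otilde(\epssc^{-2}\septime)$ from \cref{lem:separable_operator_complexity} into \cref{thm:VectorTreeMaintain}. Your additional checks (the $2\beta$ bound for the \textsc{Approximate} amortization and the concavity of $k\mapsto k^{1-\alpha}$) are consistent with, and slightly more explicit than, what the paper writes.
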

\begin{proof}
Because $T(m)=\O(\epssc^{-2}m)$ (\cref{lem:separable_operator_complexity}), the runtime of \textsc{Initialize} is still $\O(\epssc^{-2}m)$ by \cref{thm:separable_maintain_representation} and \cref{thm:VectorTreeMaintain}. The runtime of \textsc{Reweight}, \textsc{Move}, and \textsc{Exact} follow from the guarantees of \cref{thm:separable_maintain_representation}. The runtime of \textsc{Approximate} follows from \cref{thm:VectorTreeMaintain} with $T(K)=O(\septime)$ (\cref{lem:separable_operator_complexity}).
\end{proof}

\begin{theorem}[Flow maintenance for $\alpha$-separable graphs]
  \label{thm:separableFlowMaintain}
	Given a $\alpha$-separable graph $G$ with $n$ vertices and $m$ edges, and its separator tree $\ct$ with height $\eta$,
	the randomized data structure \textsc{MaintainFlow} (\cref{alg:flow-maintain-main})
	implicitly maintains the flow solution $\vf$ undergoing IPM changes,
	and explicitly maintains its approximation $\of$, 
	and supports the following procedures with high probability against an adaptive adversary:
	\begin{itemize}
		\item
		$\textsc{Initialize}(G,\vf^{\init} \in\R^{m},\vv \in \R^{m}, \vw\in\R_{>0}^{m},\epsilon_{\mproj}>0,
		\overline{\epsilon}>0)$: Given a graph $G$, initial solution $\vf^\init$, initial direction $\vv$, 
		initial weights $\vw$, target step accuracy $\epsilon_{\mproj}$,
		and target approximation accuracy $\overline{\epsilon}$, 
		preprocess in $\widetilde{O}(m \epsilon_{\mproj}^{-2})$ time and set the internal representation $\vf \leftarrow \vf^{\init}$ and $\of \leftarrow \vf$.
		\item $\textsc{Reweight}(\vw\in\R_{>0}^{m}$ given implicitly as a set of changed weights): Set the current weights to $\vw$ in
		$\widetilde{O}(\epsilon_{\mproj}^{-2}\sepConst)$ time, where $K$ is
		the number of coordinates changed in $\vw$.
		\item $\textsc{Move}(t\in\mathbb{R},\vv\in\R^{m}$ given
		implicitly as a set of changed coordinates): 
		Implicitly update
		$\vf \leftarrow \vf+ t \mw^{1/2}\vv - t \mw^{1/2} \widetilde{\mproj}'_{\vw} \vv$ for
		some $\widetilde{\mproj}'_{\vw} \vv$, 
		where 
		$\|\widetilde{\mproj}'_{\vw} \vv - \mproj_{\vw} \vv \|_2 \leq O(\eta \epssc) \norm{\vv}_2$ and
		$\mb^\top \mw^{1/2}\widetilde{\mproj}'_{\vw}\vv= \mb^\top \mw^{1/2} \vv$.
		The runtime is $\widetilde{O}(\epsilon_{\mproj}^{-2} \septime)$, where $K$ is
		the number of coordinates changed in $\vv$.
		\item $\textsc{Approximate}()\rightarrow\R^{m}$: Output the vector
		$\of$ such that $\|\mw^{-1/2}(\of-\vf)\|_{\infty}\leq\overline{\epsilon}$ for the
		current weight $\vw$ and the current vector $\vf$. 
		\item $\textsc{Exact}()\rightarrow\R^{m}$: 
		Output the current vector $\vf$ in $\O(m \epssc^{-2})$ time.
	\end{itemize}
	Suppose $t\|\vv\|_{2}\leq\beta$ for some $\beta$ for all calls to \textsc{Move}.
	Suppose in each step, \textsc{Reweight}, \textsc{Move} and \textsc{Approximate} are called in order. Let $K$ denote the total number of coordinates changed in $\vv$ and $\vw$ between the $(k-1)$-th and $k$-th \textsc{Reweight} and \textsc{Move} calls. Then at the $k$-th \textsc{Approximate} call,
	\begin{itemize}
	\item the data structure first sets $\of_e\leftarrow \vf^{(k-1)}_e$ for all coordinates $e$ where $\vw_e$ changed in the last \textsc{Reweight}, then sets $\of_e\leftarrow \vf^{(k)}_e$ for $O(N_k\defeq 2^{2\ell_{k}}(\frac{\beta}{\overline{\epsilon}})^{2}\log^{2}m)$ coordinates $e$, where $\ell_{k}$ is the largest integer
		$\ell$ with $k=0\mod2^{\ell}$ when $k\neq 0$ and $\ell_0=0$. 
	\item The amortized time for the $k$-th \textsc{Approximate} call
	is $\widetilde{O}(\epsilon_{\mproj}^{-2}(m^{\sepConst}(K+N_{k-2^{\ell_k}})^{1-\sepConst}))$.
	\end{itemize}
\end{theorem}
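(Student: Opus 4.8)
\textbf{Proof proposal for \cref{thm:separableFlowMaintain}.}
The plan is to mirror the proof of \cref{thm:FlowMaintain} verbatim, changing only the runtime accounting, exactly as was done to derive \cref{thm:separableSlackMaintain} from \cref{thm:SlackMaintain}. The key observation is that \textsc{MaintainFlow} (\cref{alg:flow-maintain-main}) extends \textsc{MaintainRep}, whose separable version is \cref{thm:separable_maintain_representation}, together with a \textsc{MaintainApprox} data structure whose runtime guarantees from \cref{thm:VectorTreeMaintain} are stated in terms of the tree operator complexity $T$. Since the correctness arguments in \cref{subsec:flow-main-proof} (the invariant $\vf = \uf - \pf$, the correctness of the flow tree operator via \cref{thm:flow-forest-operator-correctness}, and the bookkeeping of $\uf_0, \hat c$) never use planarity — only the abstract separator tree structure of \cref{defn:separator-tree-separable} — they carry over unchanged. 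So I would state that correctness follows exactly as in \cref{thm:FlowMaintain}, and devote the proof to the runtimes.

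First I would record that the flow tree operator from \cref{defn:flow-forest-operator} has complexity $T(K) = O(\epssc^{-2}\septime)$ by \cref{lem:separable_operator_complexity}; this is the only place the new bound \cref{lem:separableBoundChangeCost} enters the operator analysis, replacing \cref{lem:planarBoundChangeCost}. Then: \textsc{Initialize} runs in $\widetilde{O}(\epssc^{-2} m + T(m)) = \widetilde{O}(\epssc^{-2} m)$ by \cref{thm:separable_maintain_representation} and \cref{thm:VectorTreeMaintain}, since $T(m) = \widetilde{O}(\epssc^{-2} m)$. \textsc{Reweight} and \textsc{Move} inherit the runtime $\widetilde{O}(\epssc^{-2}\septime + T(K)) = \widetilde{O}(\epssc^{-2}\septime)$ directly from \cref{thm:separable_maintain_representation}, plus the $O(K)$-time updates to the $\uf_0, \hat c, \vv$ bookkeeping, which is dominated. \textsc{Exact} is $\widetilde{O}(T(m)) = \widetilde{O}(m\epssc^{-2})$ plus $O(m)$ for reconstructing $\uf_0 + \hat c \mw\vv$.

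For the amortized \textsc{Approximate} bound, I would reproduce the helper lemma at the end of \cref{subsec:flow-main-proof}: the bound $\|\vf^{(k)} - \vf^{(k-1)}\|_{{\mw^{(k)}}^{-1}} \le 3\beta$ is unchanged (it only uses $\alpha\|\vv\|_2 \le \beta$ and $\|\widetilde\mproj_{\vw} - \mproj_{\vw}\|_{\op} \le O(\eta\epssc)$), so \cref{thm:VectorTreeMaintain} applies with $\md = \mw^{-1}$ and gives the same count $O(N_k)$ of coordinates updated in $\of$. The amortized time cost from \cref{thm:VectorTreeMaintain} is $\Theta(\eta^2 \log(\tfrac{m}{\rho})\log m)\cdot T(\eta\cdot(|\mathcal{S}^{(k)}| + N_{k-1} + N_{k-2^{\ell_k}}))$, and since \textsc{Move}/\textsc{Reweight} touch $\O(K)$ nodes/operators (so $|\mathcal{S}^{(k)}| = \O(K)$) and $N_{k-1}, N_{k-2^{\ell}} = O(N_{k-2^{\ell_k}})$, plugging in $T(K') = \O(\epssc^{-2}\mk{m}{K'})$ and absorbing the polylog-in-$\eta$ factors yields the claimed amortized runtime $\widetilde O(\epssc^{-2}(m^{\sepConst}(K + N_{k-2^{\ell_k}})^{1-\sepConst}))$. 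The main (minor) obstacle is simply being careful that the concavity and subadditivity of $T$ — used to split $T(\eta(|\mathcal S^{(k-2^\ell)}| + \cdots))$ and charge the $|\mathcal S^{(k-2^\ell)}|$ part back to step $k-2^\ell$, exactly as in the proof of \cref{thm:VectorTreeMaintain} — still hold for $T(K) = \Theta(K^{1-\sepConst})$, which it does since $0 < \sepConst < 1$; no genuinely new difficulty arises.
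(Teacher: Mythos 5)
Your proposal is correct and follows essentially the same route as the paper: the paper's proof of \cref{thm:separableFlowMaintain} is literally a one-line reference to the proof of \cref{thm:separableSlackMaintain}, which in turn just substitutes the separable tree-operator complexity $T(K)=O(\epssc^{-2}\septime)$ from \cref{lem:separable_operator_complexity} into \cref{thm:separable_maintain_representation} and \cref{thm:VectorTreeMaintain}, with correctness carried over unchanged from \cref{thm:FlowMaintain}. Your additional checks (re-deriving the $3\beta$ bound for the \textsc{Approximate} analysis and verifying concavity of $T(K)=\Theta(K^{1-\sepConst})$) are sound and, if anything, slightly more careful than the paper's own write-up.
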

The proof is the same as \cref{thm:separableSlackMaintain}.

Finally, we can prove \cref{thm:separable_mincostflow}.
\begin{proof}[Proof of \cref{thm:separable_mincostflow}]
	
The correctness is exactly the same as the proof for \cref{thm:mincostflow}.

For the runtime, we use the data structure runtimes given in \cref{thm:separableSlackMaintain} and \cref{thm:separableFlowMaintain}. We may assume $\alpha>1/2$ because otherwise the graph is $1/2$-separable and the runtime follows from \cref{thm:mincostflow}.
The amortized time for the $k$-th IPM step is
\[
	\otilde(\epssc^{-2} m^{\sepConst}(K + N_{k-2^{\ell_k}})^{1-\sepConst}).
\]
where $N_{k} \defeq 2^{2\ell_k} (\beta/\alpha)^2 \log^2 m = O(2^{2\ell_k} \log^2 m)$,
where $\alpha = O(1/\log m)$ and $\epsilon_{\mproj}= O(1/\log m)$ are defined in \textsc{CenteringImpl}.

Observe that $K + N_{k-2^{\ell_k}} = O(N_{k-2^{\ell_k}})$. Now, summing over all $T$ steps, the total time is
\begin{align}
	O({m}^{\sepConst} \log m)  \sum_{k=1}^T (N_{k-2^{\ell_k}})^{1-\sepConst} &= 
	O({m}^{\sepConst} \log^2 m)  \sum_{k=1}^T 2^{2(1-\sepConst)\ell_{(k - 2^{\ell_k})}}  \nonumber\\
	&= O({m}^{\sepConst} \log^2 m) \sum_{k'=1}^T  \nonumber 2^{2(1-\sepConst)\ell_{k'}}\sum_{k=1}^{T}[k-2^{\ell_k}=k'], \nonumber\\
	&= O(m^{\sepConst} \log^2 m \log T) \sum_{k'=1}^T  2^{2(1-\sepConst)\ell_{k'}}. \label{eq:separableSum}
\end{align}
Without $1-\alpha$ in the exponent, recall from the planar case that
\[
\sum_{k'=1}^T  2^{\ell_{k'}} = \sum_{i=0}^{\log T} 2^{i} \cdot T/2^{i+1} = O(T \log T).
\]
The summation from \cref{eq:separableSum} is
\begin{align*}
\sum_{k=1}^{T} 2^{2(1-\sepConst) \ell_k} &= \sum_{k=1}^{T}(2^{\ell_{k}})^{2-2\sepConst} \\
&\le \left(\sum_{k=1}^T 1^{1/(2\sepConst-1)}\right)^{2\sepConst-1}\left(\sum_{k=1}^T\left(\left(2^{\ell_{k}}\right)^{2-2\sepConst}\right)^{1/(2-2\sepConst)}\right)^{2-2\sepConst} \tag{by H\"older's Inquality}\\
&= \otilde\left(T^{2\sepConst-1} (T \log T)^{2-2\sepConst}\right)\\
&= \otilde(\sqrt{m}\log M \log T),
\end{align*}
where we use $T=O(\sqrt{m}\log n\log(nM))$ from \cref{thm:IPM}.
So the runtime for \textsc{CenteringImpl} is $\O(m^{1/2+\sepConst} \log M)$. 
By \cref{thm:separableDecompT}, the overall runtime is $\O(m^{1/2+\sepConst} \log M +s(m))$.
\end{proof}
\label{subsec:separable_main}

\bibliographystyle{alpha}
\bibliography{references}

\newpage

\appendix
\section{Appendix}
\label{sec:appendix}

\planarBoundChangeCost*
\begin{proof}
Note that $\elim{\region}$ is always a subset of $\sep{\region}$. We will instead prove 
	\begin{align*}
		\sum_{\region \in \pathT{\collN}}| \bdry{\region}| +|\sep{\region}| \leq \otilde(\sqrt{mK} ). 
	\end{align*}

First, we decompose the quantity we want to bound by levels in $\ct$:
\begin{equation} 
\sum_{\region \in \pathT{\collN}} | \bdry{\region} | + |\sep{\region}| = \sum_{i=0}^{\eta} \sum_{\region \in \pathT{\collN,i}} |\bdry{\region} | + |\sep{\region}|.
\end{equation}

We first bound $\sum_{\region \in \pathT{\collN, i}} | \bdry{\region} | + |\sep{\region}|$ for a fixed $i$.
Our main observation is that we can bound the total number of boundary vertices of nodes at level $i$ by the number of boundary and separator vertices of nodes at level $(i+1)$. 
Formally, our key claim is the following
\begin{equation} \label{eq: bdryNodeParent}
	\sum_{\region \in \pathT{\collN, i}} | \bdry{\region} | \leq   \sum_{\region' \in \pathT{\collN, i+1}} \left( | \bdry{\region'} |  + 2 | \sep{\region'} |\right).
\end{equation}
Without loss of generality, we may assume that if node $H$ is included in the left hand sum, then its sibling is included as well.
Next, recall by the definition of $\ct$, for siblings $H_1, H_2$ with parent $H'$, 
their boundaries are defined as 
\[
\bdry{H}_i= \left( \sep{H'} \cup \bdry{H'} \right) \cap V(H_i) = (S(H') \cap V(H_i)) \cup ((\partial H' \setminus S(H')) \cap V(H_i)),
\]
for $i = 1,2$. Furthermore, $V(H_1) \cup V(H_2) = V(H)$.
Another crucial observation is that a vertex from $\bdry H'$ exists in both $H_1$ and $H_2$ if and only if that vertex belongs to the separator $S(H')$.
\begin{align}
	|\bdry{\region_1}| + |\bdry{\region_2}| & \leq |\sep{\region'}| + |(\bdry{\region'} \setminus \sep{\region'}) \cap V(H_1)| +  |\sep{\region'}| + |(\bdry{H'} \setminus \sep{\region'}) \cap V(H_2')| \nonumber \\
	& \leq |\bdry{\region'}| + 2 |\sep{\region'}|. \label{eq: bdrySiblings}
\end{align}

By summing \cref{eq: bdrySiblings} over all pairs of siblings in
$\pathT{\collN, i}$, we get \cref{eq: bdryNodeParent}.
By repeatedly applying~\cref{eq: bdryNodeParent} until we reach the root
at height $\eta$, we have
\begin{equation} \label{eq: inductiveClaim}
	\sum_{\region \in \pathT{\collN, i}} | \bdry{\region}| \leq 2 \sum_{j=i+1}^{\eta} \sum_{\region' \in \pathT{\collN,j}} |S(\region')|.
\end{equation}

Summing over all the levels in $\ct$, we have
\begin{align}
  \sum_{i=0}^{\eta} \sum_{\region \in \pathT{\collN, i}} (|
  \bdry{\region}| + |S(\region)| )
  & \leq 2 \sum_{j=0}^{\eta} (j+1) \sum_{\region' \in \pathT{\collN,j}} |S(\region')|
   \tag{by \text{\cref{eq: inductiveClaim}}}\\
  & \leq 2 c \sum_{j=0}^{\eta} (j+1) \sum_{\region' \in \pathT{\collN,j}}\sqrt{|E(\region')|},
  \label{eq:boundChangeCostSum}
\end{align}
where $c$ is the constant such that $|S(\region')| \le c \left(|E(\region')|\right)^{1/2}$ in the definition of being 1/2-separable.
Furthermore, the set of ancestors of $\collN$ at level $j$ has size $| \pathT{\collN,j}| \le |\collN| =K$.
Applying the Cauchy-Schwarz inequality, we get that
\begin{align*}
    \sum_{\region \in \pathT{\collN}} \left( | \bdry{\region}| + |S(\region)| \right)
  & \leq  2 c \sum_{j=0}^{\eta} (j+1) \sqrt{|\pathT{\collN,j}|} \cdot \left(\sum_{\region' \in \pathT{\collN,j}} {|E(\region')|}\right)^{1/2} \\
  & \leq  2 c \sum_{j=0}^{\eta} (j+1) \sqrt{K} \cdot \left(\sum_{\region' \in \pathT{\collN,j}} {|E(\region')|}\right)^{1/2}\\
  & \leq  2 c  \eta \sqrt{K} \sum_{j=0}^{\eta}\left(\sum_{\region' \in \pathT{\collN,j}} {|E(\region')|}\right)^{1/2} \\
  & \leq O(\eta^2 \sqrt{mK}),
\end{align*}
where the final inequality follows from the fact that nodes at the same level form an edge partition of $G$. As $\eta = O(\log m)$, the lemma follows.
\end{proof}

\separableDecompT*
\begin{proof}
	First, we let $G$ be the root node of $\mathcal{T}(G)$. Let $G_1$ and $G_2$ be the two disjoint components of $G$ obtained after the removal of the vertices in $S(G)$.  We define the children $\child{G}{1}, \child{G}{2}$ of $G$ as follows: $V(\child{G}{i}) = V(G_i) \cup S(G)$ and $E(\child{G}{i}) = E(G_i)$, for $i =1, 2$. Edges connecting some vertex in $G_i$ and another vertex in $S(G)$ are added to $E(\child{G}{i})$. For each edge connecting two vertices in $S(G)$, we append it to $E(\child{G}{1})$ or $E(\child{G}{2})$, whichever has less edges. By construction, property \ref{property: boundary-separable} in the definition of $\mathcal{T}(G)$ holds. We continue by repeatedly splitting each child $\child{G}{i}$ that has at least one edge in the same way as we did for $G$, whenever possible. There are $O(m)$ components, each containing exactly $1$ edge. The components containing exactly $1$ edge form the \emph{leaf nodes} of $\mathcal{T}(G)$. Note that the height of $\mathcal{T}(G)$ is bounded by $O(\log m)=O(\log m)$ as for any child $H'$ of a node $H$, $|E(H')|\le b|E(H)|$.
	
	
	
	The running time of the algorithm is bounded by the total time to construct the separator for all nodes in the tree. Because the tree has height $O(\log m)$ and nodes with the same depth does not share any edge, the sum of edges over all tree nodes is $O(m\log m)$. Since $s(m)$ is convex, the algorithm runs in no more than $O(s(m)\log m)$ time.
	
\end{proof}

\separableBoundChangeCost*
\begin{proof}

	Using the separator tree, we have \cref{eq:boundChangeCostSum} in exactly the same way as for the planar case.
	\begin{align*}
		\sum_{\region \in \pathT{\collN}} (|
		\bdry{\region}| + |S(\region)| )
		& \leq 2 c \sum_{j=0}^{\eta} (j+1) \sum_{\region' \in \pathT{\collN,j}}\sqrt{|E(\region')|} \\
		\intertext{Applying H\"older's Inequality instead of Cauchy-Schwarz for the planar case, we get}
		& \leq  2 c \sum_{j=0}^{\eta} (j+1) |\pathT{\collN,j}|^{1-\sepConst} \cdot \left(\sum_{\region' \in \pathT{\collN,j}} {|E(\region')|}\right)^{\sepConst} \\
		& \leq  2 c \sum_{j=0}^{\eta} (j+1) {K}^{1-\sepConst} \cdot \left(\sum_{\region' \in \pathT{\collN,j}} {|E(\region')|}\right)^{\sepConst}\\
		& \leq  2 c  \eta {K}^{1-\sepConst} \sum_{j=0}^{\eta}\left(\sum_{\region' \in \pathT{\collN,j}} {|E(\region')|}\right)^{\sepConst} \\
		& \leq O(\eta^2 \septime),
	\end{align*}
	where the final inequality follows from the fact that nodes at the same level form an edge partition of $G$. As $\eta = O(\log m)$, the lemma follows.
\end{proof}

\end{document}